\newcolumntype{R}[1]{>{\RaggedRight\arraybackslash}p{#1}}
\newtheorem{lem}{Lemma}
\newtheorem{defn}{Definition}
\newtheorem{example}{Example}
\newtheorem{cexample}{Canonical Example}
\newtheorem{scenario}{Scenario}
\newtheorem{rem}{Remark}
\newtheorem{candmeas}{Candidate Measure of Non-Exempt Disparity}
\newcommand{\iid}[0]{i.i.d.}
\newcommand{\paren}[1]{\left( #1 \right)}
\newcommand{\uni}[2]{\mathrm{Uni}({#1: #2})}
\newcommand{\rd}[2]{\mathrm{Red}({#1: #2})}
\newcommand{\syn}[2]{\mathrm{Syn}({#1: #2})}
\newcommand{\mut}[2]{\mathrm{I}({#1; #2})}
\newcommand{\ci}[2]{\mathrm{CCI}({#1 \rightarrow #2})}
\newcommand{\sgn}[1]{\mathrm{sgn}\paren{#1}}
\newcommand{\given}[0]{\mid}
\newcommand{\expect}[0]{\mathbb{E}}
\newcommand{\E}[2]{\expect_{{#1}}\left[{#2}\right]}
\newcommand\independent{\protect\mathpalette{\protect\independenT}{\perp}}
\def\independenT#1#2{\mathrel{\rlap{$#1#2$}\mkern2mu{#1#2}}}
\begin{document}
\bstctlcite{IEEEexample:BSTcontrol}

\title{Fairness Under Feature Exemptions: Counterfactual and Observational Measures}

\author{Sanghamitra Dutta, Praveen Venkatesh, Piotr Mardziel, Anupam Datta, Pulkit Grover\\
Carnegie Mellon University
\thanks{Accepted for publication at the IEEE Transactions on Information Theory; Some of these results have appeared in part at AAAI 2020~\cite{dutta2020information} (oral presentation).}
\thanks{The authors are with the Department of Electrical and Computer Engineering, Carnegie Mellon University, Pittsburgh, USA. Author Contacts: S. Dutta (sanghamd@andrew.cmu.edu), P. Venkatesh (vpraveen@cmu.edu), P. Mardziel (piotrm@cmu.edu), A. Datta (danupam@cmu.edu), P. Grover (pulkit@cmu.edu).}
}

\markboth{Fairness Under Feature Exemptions: Counterfactual and Observational Measures}%
{Dutta \MakeLowercase{\textit{et al.}} 2019.}

\maketitle

\begin{abstract}
With the growing use of machine learning algorithms in highly consequential domains, the quantification and removal of disparity in decision making with respect to protected attributes, such as gender, race, etc., is becoming increasingly important. While quantifying disparity is essential, sometimes the needs of a business (e.g., hiring) may require the use of certain features that are critical in a way that any disparity that can be explained by them might need to be exempted. For instance, in hiring a software engineer for a safety-critical application, a coding-test score may be a critical feature that is weighed strongly in the decision even if it introduces disparity, whereas other features, such as name, zip code, or reference letters may be used to improve decision-making, but only to the extent that they do not add disparity. In this work, we propose a novel information-theoretic decomposition of the total disparity (a quantification inspired from counterfactual fairness) into two components: a non-exempt component which quantifies the part of the disparity that cannot be accounted for by the critical features, and an exempt component which quantifies the remaining disparity. This decomposition is important: it allows one to check if the disparity arose purely due to the critical features (inspired from the business necessity defense of disparate impact law) and also enables selective removal of the non-exempt component of disparity if desired. We arrive at this decomposition through canonical examples that lead to a set of desirable properties (axioms) that any measure of non-exempt disparity should satisfy. We then demonstrate that our proposed counterfactual measure of non-exempt disparity satisfies all of them. Our quantification bridges ideas of causality, Simpson's paradox, and a body of work from information theory called Partial Information Decomposition (PID). We also obtain an impossibility result showing that no observational measure of non-exempt disparity can satisfy all of the desired properties, which leads us to relax our goals and examine alternative observational measures that satisfy only some of these properties. We perform case studies to show how one can audit existing models as well as train new models while reducing non-exempt disparity.
\end{abstract}


%
\IEEEpeerreviewmaketitle

\section{Introduction}
\label{sec:introduction}

As artificial intelligence becomes ubiquitous, it is important to understand whether the output of a machine-learnt model is unfairly biased with respect to \emph{protected attributes} such as gender, race, etc., and if so, how we can engineer fairness into such a model. The field of fair machine learning provides several measures for fairness~\cite{zliobaite2011handling,dwork2012fairness,kamiran2013quantifying,calders2013controlling,agarwal2018reductions,hardt2016equality,calmon2017optimized,menon2018cost,kamishima2012fairness,donini2018empirical,ghassami2018fairness,zafar2015fairness,lipton2018does,varshney2019trustworthy,kusner2017counterfactual,kilbertus2017avoiding,russell2017worlds,chiappa2018path,datta2017use,liao2019learning,zemel2013learning,yeom2018hunting,speicher2018unified,wang2020split,kearns2018preventing,fairMI,kusner2018causal,xu2020algorithmic}, and uses them to reduce disparity, e.g., as a regularizer during training~\cite{agarwal2018reductions,kamishima2012fairness}. In several applications, there are some features that are \emph{critical} in a way that they are required to be weighed strongly in the decision even if they give rise to disparity. Examples of such critical features might be weightlifting ability for a firefighter's job, educational qualification for an academic job, coding skills for a software engineering job, merit and seniority in deciding salary, etc. In an attempt to preserve the importance of the critical features in the decision making, one might choose to exempt the disparity created by them. On the other hand, racial disparity in mortgage lending decisions arising due to zip code (a non-critical feature)~\cite{hinnefeld2018evaluating}, or disparity in promotion/transfer decisions arising from aptitude tests\footnote{In the landmark employment discrimination court-case of Griggs v. Duke Power~\cite{grover1995business}, the US Supreme Court deemed certain aptitude tests as not job-related and hence not business necessities, ruling against the employer.} are examples of non-exempt disparity. In this work, our goal is to formalize and quantify the \emph{non-exempt disparity}, i.e., the part of the disparity that cannot be accounted for by the critical features. This quantification is important for two reasons: (i) it allows one to check if the disparity arose purely due to the critical features (inspired from the ``business necessity defense'' in the disparate impact law, i.e., Title VII of the Civil Rights Act of 1964~\cite{barocas2016big}); and (ii) it enables selective removal of the non-exempt component if desired. 

In this work, we assume that the critical features or business necessities are known (similar to \cite{kamiran2013quantifying,kilbertus2017avoiding}; this discussion is revisited in Section~\ref{sec:conclusion}). We let $X_c$ and $X_g$ denote the critical and the non-critical (or general) features, and $X$ denote the entire set of features. We also denote the protected attribute(s) by $Z$, the true label by $Y$, and the model output by $\hat{Y}$ which is a function of the entire feature vector $X$. While we acknowledge that such categorization of features is application-dependent and might require domain knowledge and ethical evaluation, such exemptions do exist in law. E.g., the US Equal Pay Act~\cite{USEqualPay} exempts for difference in salary based on gender that can be explained by merit and seniority. Similarly, the US employment discrimination law contains a business necessity defense~\cite{grover1995business} where disparity about protected attributes may be exempted if the disparity can be justified as ``necessary to the normal operation of that particular business.'' For example, a standardized coding-test score may be a critical feature in hiring software engineers for a safety-critical application. Similarly, weightlifting ability might be a critical feature in hiring firefighters so that they are able to carry fire victims out of a burning building. The critical feature is therefore required to be weighed strongly in hiring even if it is correlated with some protected attributes. 

\emph{Why should we use the ``general'' features at all for prediction if they are not critical?} General features can improve performance metrics such as accuracy of the model, or even help reduce the candidate pool,~e.g., if 60\% applicants clear a test, but resources are available to interview only 10\%. Not using the general features at all can reduce accuracy, or produce a very large candidate pool. In this work, our proposition is to use both critical and general features in a way that maximizes accuracy (to the extent possible) while preventing non-exempt disparity. For instance (inspired from \cite{barocas2016big}), to choose a ``good'' employee, an employer could evaluate standardized test scores and also reference letters (human-graded performance reviews). All these features are ``job-related'' in that they have statistical correlation with the prediction goal, and can help improve the accuracy. However, test scores, a critical feature, may need to be weighed strongly in the decision, even if they introduce disparity, whereas, reference letters may be used only to the extent that they do not discriminate. 

This work treads a middle ground between two popular measures of fairness that do not use domain knowledge, namely, \emph{statistical parity}~\cite{agarwal2018reductions,ghassami2018fairness,fairMI,dwork2012fairness}, which enforces the criterion $Z\independent\hat{Y}$, and \emph{equalized odds}~\cite{hardt2016equality,ghassami2018fairness,fairMI}, which enforces  $Z\independent\hat{Y}|Y$ (directly or through practical relaxations).  Our selective quantification of non-exempt disparity (using domain knowledge to identify critical features) helps address one of the major criticisms against statistical parity. The criticism is that it can lead to the selection of unqualified members from the protected group~\cite{zemel2013learning,hardt2016equality}, e.g., by disregarding the critical features if they are correlated with the protected attribute $Z$. In fact, in our case study in Section~\ref{sec:case_study}, we observe that the weight of the critical feature is significantly reduced in the decision making when one uses statistical parity as a regularizer with the loss function because the critical feature is correlated with $Z$ (also see Canonical Example~\ref{cexample:exemption} in Section~\ref{subsec:contrast}). On the other hand, equalized odds suffers from label bias~\cite{hinnefeld2018evaluating,hinnefeld2018bias,yeom2018discriminative,kearns2018preventing} because it is based on agreement with the true labels. In fact, we demonstrate (Canonical Example~\ref{cexample:equalized_odds} in Section~\ref{subsec:contrast}) that if the historic labels themselves reinforce disparity from the non-critical features, then even if we obtain a perfect classifier after training on the historic data, which satisfies equalized odds, it can reinforce undesirable non-exempt disparity\footnote{Our quantification does not use the true labels for fairness (unlike equalized odds), addressing the criticism in~\cite{barocas2016big} which says that `` [...] often the best labels for different classifications will be open to debate.''}.

\begin{table*}[t]
	\centering
	\caption{Observational Measures ($M_{NE}$) of Non-Exempt Disparity (Utility and Limitations)} 
	\label{table:examples}
	\begin{tabular}{llccc}
		\toprule
		& Desirable  Properties & $\uni{Z}{\hat{Y} \given X_{c}}$ & $\mut{Z}{\hat{Y} \given X_{c}}$ & $\mut{Z}{\hat{Y} \given X_{c},X'}$ \\
		\midrule
		{\bf 1.} & No counterfactual causal influence from $Z$ to $\hat{Y}\;\Rightarrow\; M_{NE}=0.$ & Yes & Not Always & Not Always  \\
			{\bf 2.} & $M_{NE}$ detects unique information about $Z$ in $\hat{Y}$ not in $X_c$. & Yes & Yes & Not Always \\
		{\bf 3.} & $M_{NE}$ detects non-exempt masked disparity. & No & Masked by  $g(X_{c})$ & Masked by  $g(X_{c},X')$  \\

			{\bf 4.} & $M_{NE}$ equals total disparity if $X_c=\phi$ and $X_g=X$. & No & No & No  \\
							{\bf 5.} & $M_{NE}$ is non-increasing as more features are added to $X_c$ from $X_g$. & Yes & No & No  \\
		{\bf 6.} & $M_{NE}$ is $0$ (complete exemption) if $X_c=X$ and $X_g=\phi$. & Yes & Yes & Yes  \\

		\bottomrule
	\end{tabular}
\end{table*}

\subsection{Contributions}

Our main contribution in this work is the quantification of non-exempt disparity based on a rigorous axiomatic approach. As a first step towards this quantification, we propose an information-theoretic quantification (see Definition~\ref{defn:total_disparity} in Section~\ref{subsec:sys_model}) of the total disparity (exempt and non-exempt) that is $0$ if and only if the model is \emph{counterfactually fair}~\cite{kusner2017counterfactual}. Counterfactual fairness~\cite{kusner2017counterfactual,russell2017worlds} is a causal notion of fairness where the features $X$, the protected attribute $Z$ and the model output $\hat{Y}$ are assumed to be observables in a Structural Causal
Model (SCM) (defined formally in Section~\ref{sec:sys_model}; see Definition~\ref{defn:scm}). The model is deemed \emph{counterfactually fair} if $Z$ has no \emph{counterfactual causal influence} on $\hat{Y}$, i.e., $\hat{Y}$ does not change if we are able to vary $Z$ in the SCM in a manner that other independent latent factors remain constant (defined formally in Section~\ref{sec:sys_model}; see Definition~\ref{defn:cci}). 

Interestingly, note that the total disparity (in a counterfactual sense) may not exhibit itself entirely in the mutual information $\mut{Z}{\hat{Y}}$,  which is the \emph{statistically visible information}\footnote{This is a quantification of disparity inspired from statistical parity which deems a model fair if and only if $\hat{Y}\independent Z$. Note that, $\mut{Z}{\hat{Y}}=0$ if and only if $\hat{Y}\independent Z$.} about $Z$ in $\hat{Y}$, because of ``statistical masking effects'' (also relates to Simpson's paradox~\cite{peters2017elements}). Consider an example inspired from~\cite{datta2017use,kusner2017counterfactual,kearns2018preventing} where a software engineering job advertisement is shown only to a) men with coding skills above a threshold, and b) women with coding skills below a threshold. That is, the decision $\hat{Y}=Z \oplus G$  where $\oplus$ denotes XOR, $G$ is the binary variable denoting whether coding skills are above a threshold (that does not have a causal influence of $Z$ in this example), and $G,Z$ are \iid{} Bern(\nicefrac{1}{2}). This decision is biased against the high-skilled women for whom the ad is relevant, but $\mut{Z}{\hat{Y}}=0$ here, thus failing to capture this bias. Intuitively, our quantification of total disparity also extends the idea of \emph{proxy-use}~\cite{datta2017use} from \emph{white-box models}\footnote{White-box models~\cite{datta2017use} are the type of models where one can clearly explain how they behave, how they produce predictions and what the influencing variables or sub-components of the model are, e.g., decision trees, linear regression, etc.} to black-box models. Proxy-use~\cite{datta2017use} examines ``white-box'' models, i.e., models with clearly defined constituents (e.g., decision trees) and regards a model as having disparity if (i) there is a constituent that has high mutual information about $Z$ (a proxy of $Z$); and (ii) this constituent also causally influences the output $\hat{Y}$ (i.e., varying the constituent while keeping other constituents constant does not change the output). In this work, the total disparity captures the intuitive notion of a virtual constituent or proxy of $Z$ that causally influences the final output $\hat{Y}$ (this intuition is revisited to understand Scenario~\ref{scenario:cf} in Section~\ref{subsec:sys_model}). For instance, a virtual constituent $Z$ is formed in the example of masked disparity in ads that causally influences $\hat{Y}$ even though $\mut{Z}{\hat{Y}}=0$. 

Next, we quantify the \textit{non-exempt} part of this total disparity, i.e., the part that cannot be explained by the critical features $(X_c)$. Building on the extension of proxy-use~\cite{datta2017use} for black-box models as discussed above, we aim to quantify the influence of a  discriminatory virtual constituent or proxy of $Z$, if  formed inside the black-box model, on the model output $\hat{Y}$, and that cannot be attributed entirely to the critical features (this idea is revisited for an intuitive understanding of the canonical examples in Section~\ref{subsec:sys_model}.). To quantify this \textit{non-exempt disparity}, we consider toy examples and thought experiments to first arrive at a set of desirable properties (axioms) that any measure of non-exempt disparity should satisfy, and then provide a measure that satisfies them (see Theorem~\ref{thm:satisfythm}).  These desirable properties can be intuitively described as follows. If the model is counterfactually fair, e.g., if the virtual constituents or proxies of $Z$ cancel each other leading to a final model output that has no counterfactual causal influence of $Z$, then it is desirable that the non-exempt disparity is also $0$. Next, it is desirable that the measure be non-zero if $\hat{Y}$ has any ``unique'' statistically visible information about $Z$ that is not present in $X_{c}$ because then that information content is also attributed to $X_g$. However, because of statistical masking effects, even if this unique information is $0$, there may still be \emph{non-exempt masked disparity} that needs to be captured, e.g., in the aforementioned example of software-engineering-job ads (also revisited in Canonical Example~\ref{cexample:masking} in Section~\ref{subsec:rationale} where we discuss our rationale for the properties).  The next three properties are more intuitive. If all the features are in the non-critical set, then the measure should be equal to the total disparity since no disparity is exempt. For a fixed set of features $X$ and a fixed model, as more features become categorized as critical, the measure of non-exempt disparity should not increase, i.e., it either decreases or stays the same. Ultimately, if all the features are in the critical set $X_{c}$, then we require the measure of non-exempt disparity to be $0$ since then the total disparity is exempt. 

Our proposed measure of non-exempt disparity, that satisfies all these desirable properties, is \emph{counterfactual} in nature, i.e.,
it depends on the true SCM, and hence, is not \emph{observational}\footnote{Observational measures are those that can be estimated from the probability distribution of the data without knowledge of the underlying SCM.} in general. We also show the theoretical impossibility of any observational measure in satisfying all the desirable properties together (see Theorem~\ref{thm:impossibility}). We note that in some applications, counterfactual measures can be realized or approximated with assumptions on the causal model. However, for more general use in practical applications, we also propose several observational
relaxations of our measure that satisfy only some of these properties. 
Nevertheless, we believe that a counterfactual measure and
its properties are crucial in understanding the utility and the
limitations of different observational measures and informing
which measure to choose in practice (summarized in Table~\ref{table:examples}; detailed discussion in Section~\ref{sec:observational}). 

To summarize, our contributions in this work are as follows:\\
\textbf{1.~Quantification of Non-Exempt Disparity}: We propose a novel counterfactual measure of non-exempt disparity that captures the disparity that cannot be explained by the critical features. Our quantification attempts to capture the intuitive notion of whether a discriminatory virtual constituent or proxy~\cite{datta2017use} of $Z$ is formed inside the black-box model that influences the output $\hat{Y}$ and that cannot be attributed entirely to the critical features ($X_c$). We adopt a rigorous axiomatic approach where we first arrive at a set of desirable properties that any measure of non-exempt disparity should satisfy by analyzing several canonical examples (thought experiments). Next, we show that the proposed measure satisfies these properties (see Theorem~\ref{thm:satisfythm}). Our quantification leverages a body of work in information theory called Partial Information Decomposition (PID), as well as, causality.\\
\textbf{2. Overall Decomposition of Total Disparity into Statistically Visible and Masked components}: Our quantification finally leads us to an overall decomposition of the total disparity into four non-negative components, namely, exempt and non-exempt \emph{statistically visible} disparity and exempt and non-exempt \emph{masked} disparity (see Theorem~\ref{thm:measure_decomposition}). The exempt and non-exempt \emph{statistically visible} disparities add up to give $\mut{Z}{\hat{Y}}$ which is the total statistically visible disparity. \\
\textbf{3. An Impossibility Result}: We show that no purely observational measure of non-exempt disparity can satisfy all our desirable properties (see Theorem~\ref{thm:impossibility}). 
\\
\textbf{4. Observational Relaxations}: Relaxing our requirements, we obtain purely observational measures that satisfy some of the desirable properties (summarized in Table~\ref{table:examples}) and then use  them in case studies to demonstrate how to (i) audit existing models; and also (ii) train new models that selectively reduce non-exempt disparity.\\

\noindent \textbf{Our contribution in the context of related works:} Causal approaches for fairness have been explored in~\cite{kusner2017counterfactual,kilbertus2017avoiding,russell2017worlds,chiappa2018path,datta2017use,zhang2018fairness,nabi2018fair}, including impossibility results on purely observational measures \cite{kilbertus2017avoiding,datta2017use}. Our main novelty lies in using a rigorous axiomatic approach based on realistic examples and thought experiments for quantifying non-exempt and exempt  disparity separately, thereby allowing for exemptions due to critical features. The decomposition of total disparity into exempt and non-exempt components is tricky. For instance, following the ideas of path-specific counterfactual fairness~\cite{chiappa2018path}, one might be tempted to examine specific causal paths from $Z$ to $\hat{Y}$ that pass (or do not pass) through $X_{c}$, and deem those influences as the two (exempt and non-exempt) measures. However, we provide a counterexample (see Canonical Example~\ref{cexample:synergy} in Section~\ref{subsec:rationale}) to show that disparity can also arise from synergistic information about $Z$ in both $X_{c}$ and $X_{g}$, that cannot be attributed to any one of them alone, \textit{i.e.}, $\mut{Z}{X_{c}}$ and $\mut{Z}{X_{g}}$ may both be $0$ but $\mut{Z}{X_{c},X_{g}}$ may not be. Purely causal measures (that do not rely on the PID framework) can attribute such disparity entirely to $X_{c}$. We contend that such synergistic information, if influencing the decision, must be included in the \textit{non-exempt} component of disparity because both $X_{c}$ and $X_{g}$ are contributors.  We note that identifying synergy is important: synergy arises frequently in machine-learning and other related applications~\cite{tax2017partial,venkatesh2019information,peters2017elements}.

Some observational measures for quantifying non-exempt disparity have been introduced previously in \cite{kamiran2013quantifying,zliobaite2011handling} where the authors propose a decomposition of statistically visible discrimination (statistical parity) into explainable and non-explainable components (see also subsequent works  \cite{calders2013controlling,corbett2017,interventional_fairness,debiasing,xu2020algorithmic} that build on this idea). They examine the difference in the expected model output ($\hat{Y}$) for candidates of different races/genders ($Z$) after conditioning on specific subsets of features\footnote{Conditional mutual information (conditioned on the critical feature(s)) as a measure of non-exempt disparity has surfaced in \cite{debiasing} with a focus on novel estimators.} (this relates to dependence between $Z$ and $\hat{Y}$ after conditioning on specific features; also referred to as conditional statistical parity~\cite{corbett2017}). In this context, in this work, we provide simple yet relevant counterexamples showing that conditioning may not always faithfully capture non-exempt disparity. E.g., Canonical Example~\ref{cexample:college} in Section~\ref{subsec:rationale}) is deemed \emph{unfair} by conditional mutual information (or conditional statistical parity), but is  \emph{fair} by counterfactual fairness~\cite{kusner2017counterfactual,russell2017worlds}. We use these examples as motivation to  decompose conditional mutual information into unique and synergistic information using PID, separating two kinds of ``statistical dependence'' which conditioning alone fails to do (see Section~\ref{subsec:background}). We refer to Section~\ref{subsec:contrast} for more detailed discussion on existing measures that have some provision for exemption, namely, conditional statistical parity~\cite{corbett2017,debiasing}, justifiable fairness~\cite{interventional_fairness}, as well as a related causal measure of path-specific counterfactual fairness~\cite{chiappa2018path}. Our problem also differs from \emph{sub-group fairness}~\cite{kearns2018preventing} where the sub-populations in consideration are based on the protected attributes alone, e.g., $Z=(Z_1,Z_2)$ with $Z_1$ being gender, and $Z_2$ being race, and does not consider exemptions with respect to the other (non-protected) attributes. Another interesting related work is \cite{galhotra2020fair} which approaches the problem of fairness from the perspective of feature selection while allowing for a set of admissible attributes/features. In \cite{galhotra2020fair}, the authors propose conditional independence tests (observational) with respect to the admissible attributes for feature selection while using group testing to improve the complexity of the technique, and demonstrate that the proposed technique satisfies the interventional fairness definition in \cite{interventional_fairness}.

We also note that the idea of using correlation-based observational approximations of disparity (e.g., correlation between $Z$ and $\hat{Y}$ to represent statistical parity) as a regularizer during training has been proposed earlier \cite{kamishima2012fairness}. In this context, our main contribution here is on first arriving at a measure of non-exempt disparity (that happens to be non-observational), and then proposing 3 observational measures for applications in \emph{both} auditing existing models and training new models with reduced non-exempt disparity. For auditing, we use alternate non-correlation-based estimators for unique information, mutual information, and conditional mutual information from the \texttt{dit} package~\cite{dit}. For training, we rely on simplistic correlation-based approximations for mutual information and conditional mutual information along the lines of \cite{kamishima2012fairness} for ease of computation. For unique information, we introduce novel correlation-based regularizers for training in Section~\ref{sec:case_study}, leveraging a Gaussian approximation for PID \cite{barrett2015exploration}. 


\subsection{Paper Outline}
The rest of the paper is organized as follows. Section~\ref{sec:sys_model} introduces the background, system model and assumptions underlying our problem formulation, i.e., how to quantify the non-exempt disparity. Section~\ref{subsec:main_results} first states all the desirable properties that a measure of non-exempt disparity should satisfy, and then introduces our proposed counterfactual measure that satisfies all of them (Theorem~\ref{thm:satisfythm} in Section~\ref{subsec:main_results}). This is followed by a rationale behind the desirable properties through canonical examples and thought experiments in Section~\ref{subsec:rationale}. We also discuss the utility and limitations of some existing measures, namely, path-specific counterfactual fairness~\cite{chiappa2018path}, conditional statistical parity~\cite{corbett2017}, and justifiable fairness~\cite{interventional_fairness} in Section~\ref{subsec:contrast}. Next, Section~\ref{sec:measure_decomposition} provides insights on the overall decomposition of the total disparity (in a counterfactual sense) into exempt and non-exempt components, with each of them being further decomposed into \emph{statistically visible} and \emph{masked} components (Theorem~\ref{thm:measure_decomposition} in Section~\ref{sec:measure_decomposition}). Section~\ref{sec:impossibility} provides an impossibility result on observational measures, stating that no observational measure can satisfy all of the desirable properties. Nonetheless, since counterfactual measures are often difficult to realize in practice, we propose several observational relaxations of our proposed counterfactual measure in Section~\ref{sec:observational} (that only satisfy some of the desirable properties), and discuss their utility and limitations. Next, in Section~\ref{sec:case_study}, we use our observational measures to conduct case studies on both artificial and real datasets to demonstrate practical application in training. Finally, we conclude with a discussion in Section~\ref{sec:conclusion}.

\section{Preliminaries}
\label{sec:sys_model}

Here, we first provide a brief background on Partial Information Decomposition (PID)  in Section~\ref{subsec:background} to help follow the paper. Appendix~\ref{app:pid_properties} provides more details on the specific properties used in the proofs. Next, we introduce our system model and assumptions in Section~\ref{subsec:sys_model}. We use the following notations: (i) $X=(X_1,X_2,\ldots,X_n)$ denotes a tuple~\cite{tuple}, i.e., an ordered set of elements $X_1,X_2,\ldots,X_n$; (ii) $\phi$ denotes the empty tuple (no elements); (iii) For tuple with a single element, the bracket is omitted for brevity, i.e., $(X_1)=X_1$; (iv) $(X,A)$ is equivalent to the new tuple $(X_1,X_2,\ldots,X_n,A)$ formed by appending the element $A$ at the end of tuple $X$; (v) $X_1\in X$ means $X_1$ is an element of tuple $X$; (vi) $S \subseteq X$ means the set of elements in tuple $S$ form a subset of the set of elements in tuple $X$; and (vii) $X\backslash X_2$ denotes a new tuple formed by removing element $X_2$ from $X$ without changing the order of other elements, i.e., $(X_1,X_3,X_4,\ldots,X_n)$.

\subsection{Background on Partial Information Decomposition (PID)} 
\label{subsec:background}


\begin{figure*}
\begin{subfigure}[b]{0.43\linewidth}
\centering
{\centering \includegraphics[height=3.8cm]{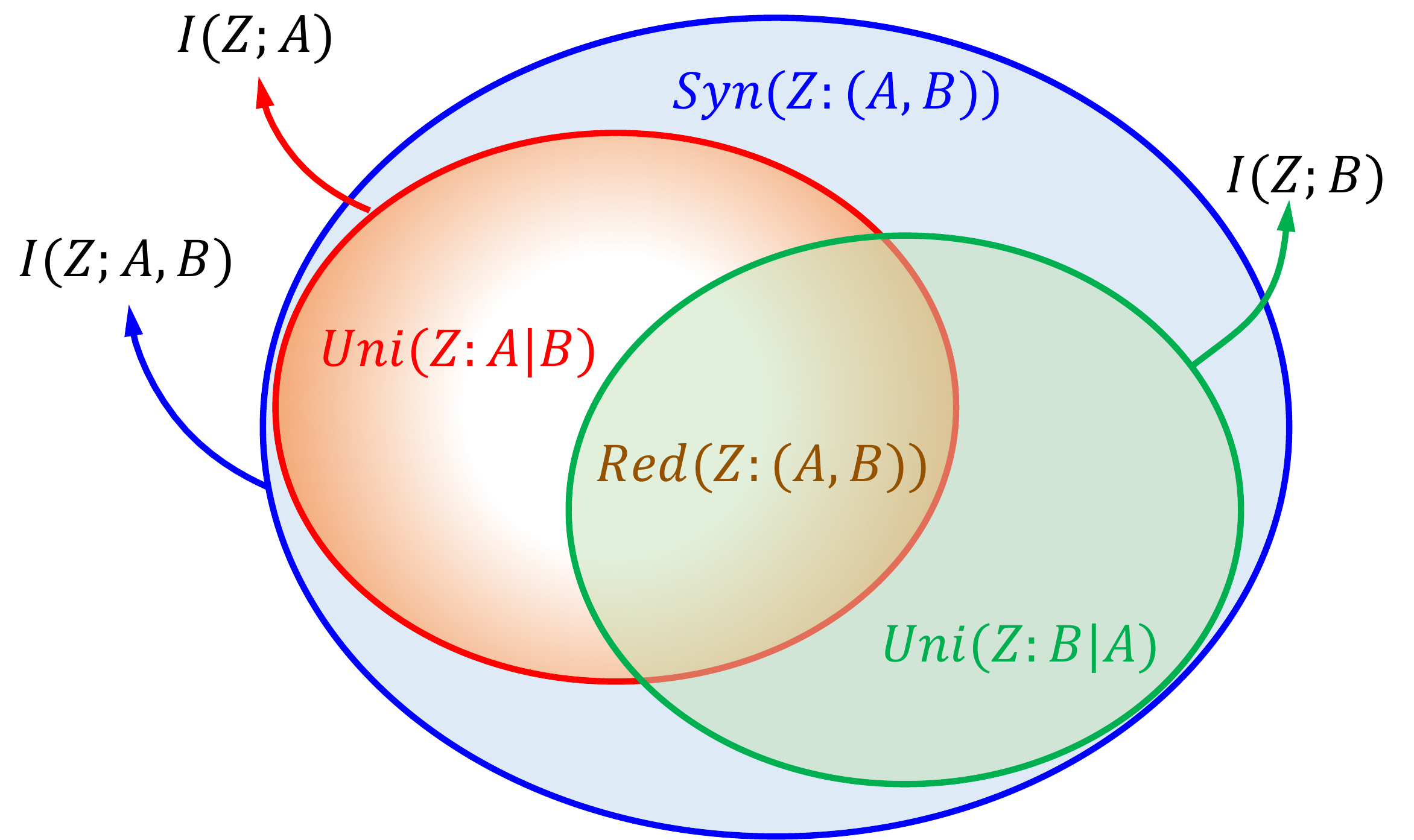}}
\caption{Venn diagram showing PID of $\mathrm{I}(Z;(A,B))$ }\label{fig:pid1}
\end{subfigure}
\begin{subfigure}[b]{0.56\linewidth}
\centering
{\centering \includegraphics[height=3.8cm]{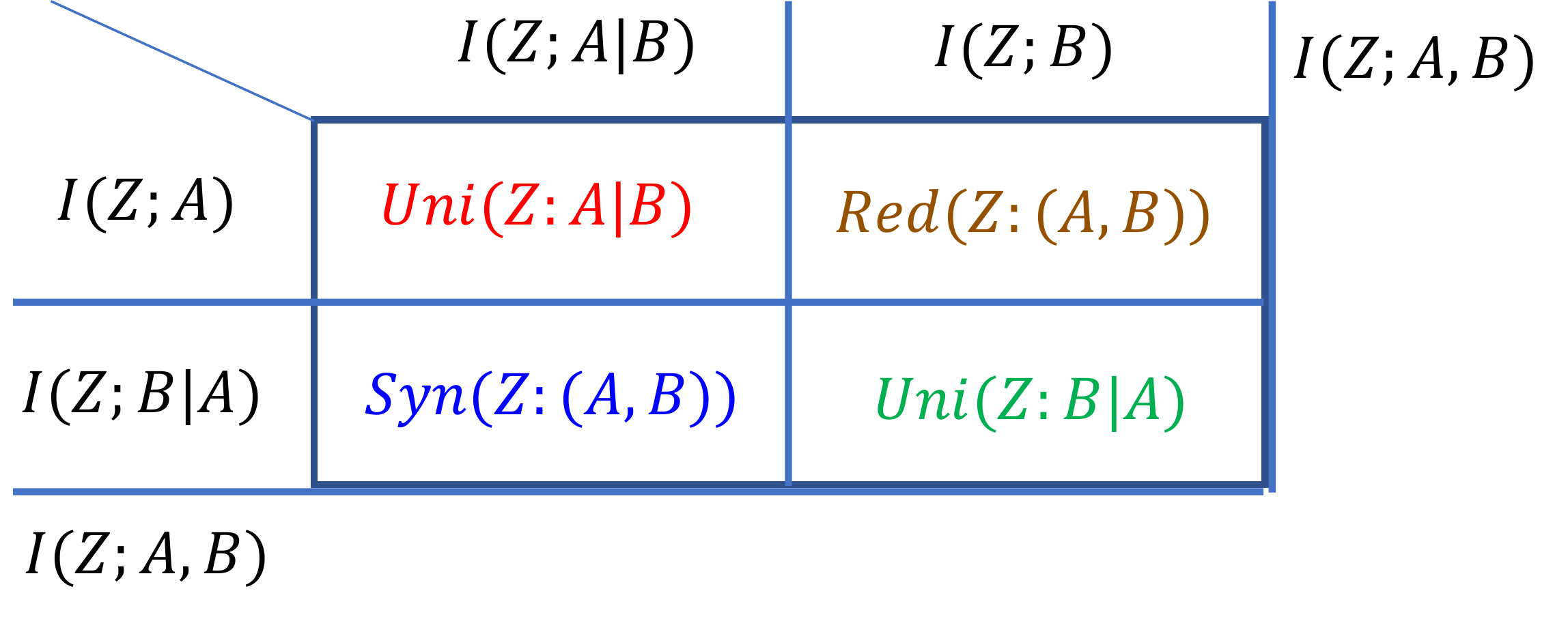}}
\caption{Tabular Representation of PID of $\mathrm{I}(Z;(A,B))$ }\label{fig:pid2}
\end{subfigure}
\caption{Mutual information $\mathrm{I}(Z;(A,B))$ is decomposed into $4$ non-negative terms, namely, $\uni{Z}{A| B}$, $\uni{Z}{B| A}$, $\rd{Z}{(A, B)}$ and $\syn{Z} {(A, B)}$. Also note that, $\mathrm{I}(Z;(A,B))=\mathrm{I}(Z;B)+\mathrm{I}(Z;A\given B),$ each of which is in turn a sum of two PID terms. $\rd{Z}{(A, B)}$ is the sub-volume between $\mathrm{I}(Z;A)$ and $\mathrm{I}(Z;B)$, and $\uni{Z}{A| B}$ is the sub-volume between $\mathrm{I}(Z;A\given B)$ and $\mathrm{I}(Z;A)$.  }\label{fig:pid}
\end{figure*}

The PID framework~\cite{bertschinger2014quantifying,williams2010nonnegative,griffith2014quantifying} decomposes the mutual information $\mut{Z}{(A,B)}$ about a random variable $Z$ contained in the tuple $(A,B)$ into four \emph{non-negative} terms as follows (also see Fig.~\ref{fig:pid}):
\begin{align}
 \mut{Z}{(A,B)}  = \uni{Z}{A| B} + \uni{Z}{B| A}  + \rd{Z}{(A, B)} + \syn{Z}{(A, B)}. \label{eq:pid1}
\end{align}
Here, $\uni{Z}{A| B}$ denotes the unique information about $Z$ that is present only in $A$ and not in $B$. Likewise, $\uni{Z}{B| A}$ is the unique information about $Z$ that is present only in $B$ and not in $A$. The term $\rd{Z}{(A, B)}$ denotes  the redundant information about $Z$ that is present in both $A$ and $B$, and $\syn{Z}{(A, B)}$ denotes the synergistic information not present in either of $A$ or $B$ individually, but present jointly in $(A,B)$. \emph{All four of these terms are non-negative. Also notice that, $\rd{Z}{(A, B)}$ and $\syn{Z}{(A, B)}$ are symmetric in $A$ and $B$.}  Before defining these PID terms formally, let us understand them through an intuitive scenario.
\begin{scenario}[Understanding Partial Information Decomposition]
Let $Z=(Z_1,Z_2,Z_3)$ with $Z_1,Z_2,Z_3\sim$ \iid{} Bern(\nicefrac{1}{2}). Let $A=(Z_1,Z_2,Z_3\oplus N)$, $B=(Z_2,N)$, $N\sim $  Bern(\nicefrac{1}{2}) is independent of $Z$. Here,  $\mathrm{I}(Z; (A,B))=3$ bits. 
\end{scenario}

The unique information about $Z$ that is contained only in $A$ and not in $B$ is effectively contained in $Z_1$ and is given by $\uni{Z}{A| B} = \mut{Z}{Z_1} = 1$ bit. The redundant information about $Z$ that is contained in both $A$ and $B$ is effectively contained in $Z_2$ and is given by $\mathrm{Red}(Z: (A, B))=\mathrm{I}(Z;Z_2)=1$ bit. Lastly, the synergistic information about $Z$ that is not contained in either $A$ or $B$ alone, but is contained in both of them together is effectively contained in the tuple $(Z_3\oplus N,N)$, and is given by $\syn{Z}{(A,B)} = \mut{Z}{(Z_3\oplus N,N)}=1 $ bit. This accounts for the $3$ bits in $\mut{Z}{ (A,B)}$. Here, $B$ does not have any unique information about $Z$ that is not contained in $A$, i.e., $\uni{Z}{B| A} =0.$

Irrespective of the formal definition of these individual terms, the following identities also hold (see Fig.~\ref{fig:pid2}):
\begin{align}
&\mut{Z}{A}=\uni{Z}{A| B} + \rd{Z}{(A, B)}. \label{eq:pid2}\\
&\mut{Z}{A \given B}=\uni{Z}{A| B} + \syn{Z}{(A, B)}.  \label{eq:pid3} 
\end{align}
\begin{rem}[An Interpretation of PID as Information-Theoretic Sub-Volumes] Equations~\eqref{eq:pid1}, \eqref{eq:pid2} and \eqref{eq:pid3} have been represented in a tabular fashion in Fig.~\ref{fig:pid2}. Notice that, $\uni{Z}{A| B}$ can be viewed as the information-theoretic sub-volume of the intersection between $\mut{Z}{A}$ and $\mut{Z}{A \given B}$. Similarly, $\rd{Z}{(A, B)}$ is the sub-volume between $\mut{Z}{A}$ and $\mut{Z}{B}$.
\label{rem:sub-volume}
\end{rem}

These equations also demonstrate that $\uni{Z}{A| B}$ and  $\rd{Z}{(A, B)}$ are the information contents that exhibit themselves in $\mut{Z}{A}$ which is the statistically visible information content about $Z$ present in $A$. Because both these PID terms are non-negative, if any one of them is non-zero, we will have $\mut{Z}{A}>0$. Similarly, $\uni{Z}{B| A}$ and  $\rd{Z}{(A, B)}$ also exhibit themselves in $\mut{Z}{B}$. On the other hand, $\syn{Z}{(A, B)}$ is the information content that does not exhibit itself in $\mut{Z}{A}$ or $\mut{Z}{B}$ individually, i.e., these terms can still be $0$ even if $\syn{Z}{(A, B)}>0$. But, $\syn{Z}{(A, B)}$ exhibits itself in $\mut{Z}{(A, B)}$. Notice that,
\begin{align}
\mut{Z}{(A, B)} &= \underbrace{\uni{Z}{A| B}  + \rd{Z}{(A, B)}}_{\mut{Z}{A}}  + \underbrace{\uni{Z}{B| A} + \syn{Z}{(A, B)}}_{\mut{Z}{B\given A}} \\
& =\underbrace{\uni{Z}{B| A}  + \rd{Z}{(A, B)}}_{\mut{Z}{B}}  + \underbrace{\uni{Z}{A| B} + \syn{Z}{(A, B)}}_{\mut{Z}{A\given B}}.
\end{align}

Given three independent equations \eqref{eq:pid1}, \eqref{eq:pid2} and \eqref{eq:pid3} in four unknowns (the four PID terms), defining any one of the terms (e.g., $\uni{Z}{A| B}$) is sufficient to obtain the other three. For completeness, we include the definition of unique information from \cite{bertschinger2014quantifying} (that also allows for estimation via convex optimization~\cite{banerjee2018computing}) with the specific properties used in the proofs in Appendix~\ref{app:pid_properties}. To follow the paper, only an intuitive understanding is sufficient.
\begin{defn}[Unique Information~\cite{bertschinger2014quantifying}] Let $\Delta$ be the set of all joint distributions on $(Z,A,B)$ and $\Delta_p$ be the set of joint distributions with the same marginals on $(Z,A)$ and $(Z,B)$ as their true distribution, \textit{i.e.},
$\Delta_p=\{ Q \in \Delta : q(z,a){=}\Pr(Z{=}z,A{=}a) \text{ and } q(z,b){=}\Pr(Z{=}z,B{=}b) \}. $  Then, 
$\uni{Z}{A| B}=\min_{Q \in \Delta_p} \mathrm{I}_{Q}(Z;A\given B),$ where $\mathrm{I}_{Q}(Z;A\given B)$ is the conditional mutual information when $(Z,A,B)$ have joint distribution $Q$. \label{defn:uni} 
\end{defn} 
\noindent The key intuition behind this definition is that the unique information should only depend on the marginal distribution of the pairs $(Z, A)$ and $(Z, B)$. This is motivated from an \textbf{operational} perspective that if $A$ has unique information about $Z$ (with respect to $B$), then there must be a situation where one can predict $Z$ better using $A$ than $B$ (more details in \cite[Section 2]{bertschinger2014quantifying}). Therefore, all the joint distributions in the set $\Delta_p$ with the same marginals essentially have the same unique information, and the distribution $Q^*$ that minimizes $\mathrm{I}_{Q}(Z;A\given B)$ is the joint distribution that has no synergistic information leading to $\mathrm{I}_{Q^*}(Z;A\given B)= \uni{Z}{A| B}$. Definition~\ref{defn:uni} also defines $\rd{Z}{(A, B)} $  and $\syn{Z} {(A, B)}$ using \eqref{eq:pid2} and \eqref{eq:pid3}.

\subsection{System Model and Assumptions}
\label{subsec:sys_model}

Here, we introduce our system model and assumptions. We start with an introduction to Structural Causal Model (SCM).

\begin{defn}[Structural Causal Model: $\mathrm{SCM}(U,V,\mathcal{F})$ \cite{peters2017elements}] A structural causal model $(U,V,\mathcal{F})$ consists of a set of latent (unobserved) and mutually independent variables $U$ which are not caused by any variable in the set of observable variables $V$, and a collection of deterministic functions (structural assignments) $\mathcal{F}=(F_1,F_2,\ldots )$, one for each $V_i \in V$, such that: $V_i=F_i(V_{pa_i},U_i).$ Here $V_{pa_i} \subseteq V\backslash V_i$ are the parents of $V_i$, and $U_i\subseteq U$. The \textit{structural assignment graph} of $\mathrm{SCM}(U,V,\mathcal{F})$ has one vertex for each $V_i$, and directed edges  to $V_i$ from each parent in $V_{pa_i}$, and is always a directed acyclic graph.
\label{defn:scm}
\end{defn}


\begin{figure}
	\centering
	\includegraphics[height=3cm]{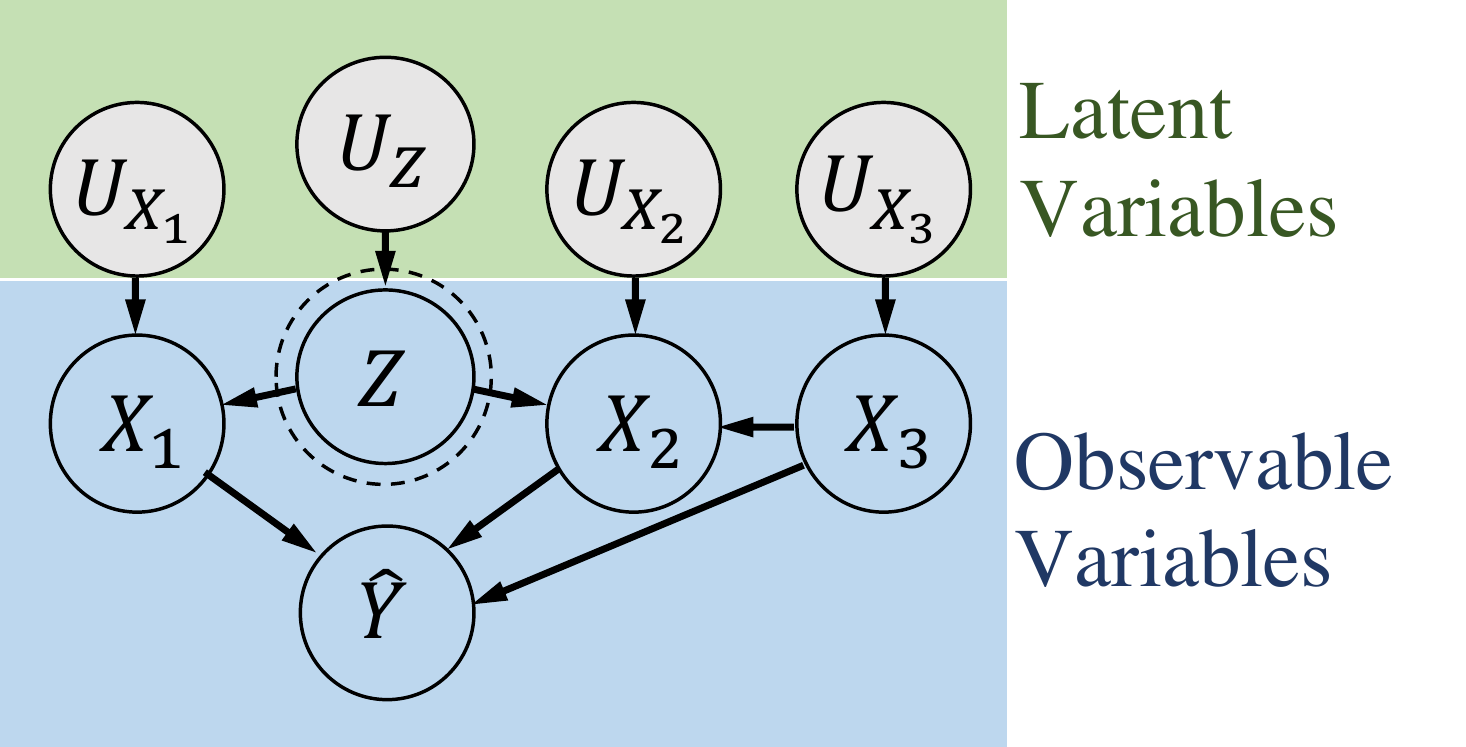}
	\caption{An SCM with protected attribute $Z$, features $X=(X_1, X_2,X_3)$, and output $\hat{Y}$. Here $X$ and $\hat{Y}$ are the observables, and $U_Z$ and $U_X=(U_{X_1},U_{X_2},U_{X_3})$ are the latent social factors. $Z$ does not have any parents in the SCM and $\hat{Y}$ is completely determined by $X=(X_1, X_2,X_3)$. \label{fig:background}}
\end{figure}

\noindent \textbf{Our System Model:} For our problem, consistent with several other works on fairness~\cite{chiappa2018path,kusner2017counterfactual,kilbertus2017avoiding}, the latent variables $U$ represent possibly unknown social factors. The observables $V$ consist of the protected attributes $Z$, the features $X$ and the output $\hat{Y}$ (see Fig.~\ref{fig:background}).  For simplicity, we assume ancestral closure of the protected attributes, \textit{i.e.},  the parents of any $V_i \in Z$ also lie in $Z$ and hence $Z$ is not caused by any of the features in $X$ ($V_i \in Z$ are source nodes in the graph). Therefore, $Z=f_z(U_Z)$ for $U_Z\subseteq U$. Any feature $X_j$ in $X$ is a function of its corresponding latent variable ($U_{X_j}$) and its parents, which are again functions of their own latent variables and parents. Therefore, each $X_j$ can also be written as $f_j(Z,U_{X})$ for some deterministic $f_j(\cdot)$, where $U_X=U\backslash U_Z$ denotes the latent factors in $U$ that do not cause $Z$ (see a formal proof in \cite[Proposition 6.3]{peters2017elements}). Here, $f_j(\cdot)$ may be constant in some of its arguments. This claim holds because the underlying graph is acyclic, and hence the structural assignments of the ancestors of $X_j$ can be substituted  recursively into one another until all observables except $Z$ are substituted by latent variables. Also note that, $Z\independent U_{X}$. A model takes $X$ (which consists of critical features $X_c$ and general features $X_g$) as its input and produces an output $\hat{Y}$ which is a deterministic function of $X$, i.e., $\hat{Y}=r(X)$ where $X$ is itself a deterministic function of $(Z,U_{X})$. Therefore,  $\hat{Y}=h(Z,U_{X})$ for some deterministic function $h(\cdot)$.

Next, we introduce the concept of Counterfactual Causal Influence (CCI) (\cite{kusner2017counterfactual,russell2017worlds,breiman2001random,datta2016algorithmic,koh2017understanding,adler2018auditing,henelius2014peek}), which will help us understand the well-known causal definition of fairness called \emph{counterfactual fairness}~\cite{kusner2017counterfactual}.

\begin{defn}[Counterfactual Causal Influence: $\ci{Z}{\hat{Y}}$] \label{defn:cci} Consider the aforementioned system model. Let $\hat{Y}=h(Z,U_{X})$ for some deterministic function $h(\cdot)$ where $U_{X}$ are latent variables that do not cause $Z$ in the true SCM. Then, 
\begin{equation}
\mathrm{CCI}(Z \rightarrow \hat{Y}) = \E{Z,Z',U_{X}}{| h(Z,U_{X})-h(Z',U_{X})| }  \text{ where }Z',Z \text{ are } \iid{}
\end{equation}
\end{defn}


\noindent Counterfactual causal influence quantifies the change in $\hat{Y}=h(Z,U_{X})$ if we only vary $Z$ while keeping the other latent factors ($U_X$) unchanged. A model is said to satisfy \emph{counterfactual fairness}~\cite{kusner2017counterfactual,russell2017worlds} if and only if the output $\hat{Y}$ has no counterfactual causal influence of $Z$ (we formally derive that $\mathrm{CCI}(Z \rightarrow \hat{Y})=0$ is equivalent to counterfactual fairness~\cite{kusner2017counterfactual} in Lemma~\ref{lem:cci_imply_cf} in Appendix~\ref{app:cci_additional}). What this means is that a model is \emph{counterfactually fair} if and only if the output $\hat{Y}=h(Z,U_{X})$ does not change with $Z$ while keeping the other latent factors ($U_X$) unchanged. It captures the intuitive notion that no virtual constituent or proxy of $Z$ influences the output (inspired from the work on proxy-use~\cite{datta2017use}). In other words, $\hat{Y}\independent Z|U_X$ (proved in Lemma~\ref{lem:cci}), i.e.,
\begin{equation}
\Pr(\hat{Y}=y|Z=z,U_X=u_x) =\Pr(\hat{Y}=y|Z=z',U_X=u_x)  \ \forall z,z',y,u_x.
\end{equation}

\noindent This notion of fairness also leads us to propose an information-theoretic quantification of total disparity (exempt and non-exempt) that is $0$ if and only if the counterfactual causal influence of $Z$ on $\hat{Y}$ is $0$ (equivalence is demonstrated in Lemma~\ref{lem:cci} with the proof in Appendix~\ref{app:cci}).

\begin{defn}[Total Disparity]
The total disparity in a model is defined as $\mathrm{I}(Z; (\hat{Y},U_X) )$. \label{defn:total_disparity}
\end{defn}

Notice that, 
\begin{equation}
\mathrm{I}(Z; (\hat{Y},U_X) )= \mathrm{I}(Z; \hat{Y}|U_X) + \underbrace{ \mathrm{I}(Z;U_X)}_{=0 \text{ since }Z \independent U_X}  = \mathrm{I}(Z; \hat{Y}|U_X).
\end{equation}

\begin{restatable}[Equivalences of CCI]{lem}{cci} \label{lem:cci} Consider the aforementioned system model. Let $\hat{Y}=h(Z,U_{X})$ for some deterministic function $h(\cdot)$ and $Z \independent U_X$. Then, $\mathrm{CCI}(Z \rightarrow \hat{Y}) = 0$ if and only if $ \mathrm{I}(Z; (\hat{Y},U_X) ) = 0 $.
\end{restatable}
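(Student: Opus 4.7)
The plan is to show that both $\mathrm{CCI}(Z \rightarrow \hat{Y}) = 0$ and $\mathrm{I}(Z; (\hat{Y}, U_X)) = 0$ are equivalent to a single structural condition, namely that the map $z \mapsto h(z, u_x)$ is constant on the support of $Z$ for almost every $u_x$ (under the marginal of $U_X$). Once this is established, the biconditional is immediate.

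First I would simplify the information-theoretic side. Using the chain rule,
\begin{equation*}
\mathrm{I}(Z; (\hat{Y}, U_X)) = \mathrm{I}(Z; U_X) + \mathrm{I}(Z; \hat{Y} \mid U_X),
\end{equation*}
and invoking the system-model assumption $Z \independent U_X$, the first term vanishes, so $\mathrm{I}(Z; (\hat{Y}, U_X)) = \mathrm{I}(Z; \hat{Y} \mid U_X)$. Thus the claim reduces to showing $\mathrm{CCI}(Z \rightarrow \hat{Y}) = 0 \iff \mathrm{I}(Z; \hat{Y} \mid U_X) = 0$.

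For the forward direction, suppose $\mathrm{CCI}(Z \rightarrow \hat{Y}) = 0$. Since $\mathrm{CCI}$ is the expectation of a non-negative quantity, $h(Z, U_X) = h(Z', U_X)$ almost surely when $Z, Z'$ are i.i.d.\ copies independent of $U_X$. Because $Z \independent U_X$ (and hence $Z, Z', U_X$ are mutually independent by construction), this forces $h(z, u_x) = h(z', u_x)$ for all $z, z'$ in the support of $Z$, for almost every $u_x$ in the support of $U_X$. Consequently, conditional on $U_X = u_x$ (a.s.), $\hat{Y} = h(Z, u_x)$ is deterministic in $u_x$ alone, so $\hat{Y} \independent Z \mid U_X$, giving $\mathrm{I}(Z; \hat{Y} \mid U_X) = 0$.

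For the reverse direction, suppose $\mathrm{I}(Z; \hat{Y} \mid U_X) = 0$, i.e., $\hat{Y} \independent Z \mid U_X$ almost surely. For almost every fixed $u_x$, conditioning on $U_X = u_x$ yields that $h(Z, u_x)$ is a deterministic function of $Z$ that is independent of $Z$; again using $Z \independent U_X$ so that the conditional law of $Z$ given $U_X = u_x$ equals the marginal, a deterministic function independent of its argument must be constant on the support of $Z$. Hence $h(z, u_x) = h(z', u_x)$ for all $z, z'$ in the support of $Z$, for almost every $u_x$. Plugging back into Definition~\ref{defn:cci}, the integrand $|h(Z, U_X) - h(Z', U_X)|$ is zero almost surely, so $\mathrm{CCI}(Z \rightarrow \hat{Y}) = 0$.

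The only delicate point is the careful handling of the ``almost sure'' qualifiers, especially when translating between ``$h(Z, u_x)$ is constant in its first argument'' and the independence statements; the independence $Z \independent U_X$ is what makes this translation clean, since it guarantees that the conditional support of $Z$ given $U_X$ does not shrink. I do not anticipate a serious obstacle beyond that bookkeeping.
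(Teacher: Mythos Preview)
Your proposal is correct and follows essentially the same route as the paper: both reduce to the structural condition that $h(z,u_x)$ is constant in $z$ on the support of $Z$ for (almost) every $u_x$, using the chain rule together with $Z\independent U_X$ to rewrite $\mathrm{I}(Z;(\hat{Y},U_X))$ as $\mathrm{I}(Z;\hat{Y}\mid U_X)$. The only cosmetic difference is that the paper simplifies one step further, observing that since $\hat{Y}$ is a deterministic function of $(Z,U_X)$ one has $\mathrm{I}(Z;\hat{Y}\mid U_X)=\mathrm{H}(\hat{Y}\mid U_X)$, which vanishes precisely under the same constancy condition; this lets them bypass the conditional-independence argument you use in the reverse direction, but the content is identical.
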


\begin{figure*}
\centering
\begin{subfigure}[b]{0.30\linewidth}
\centering
\includegraphics[height=3.5cm]{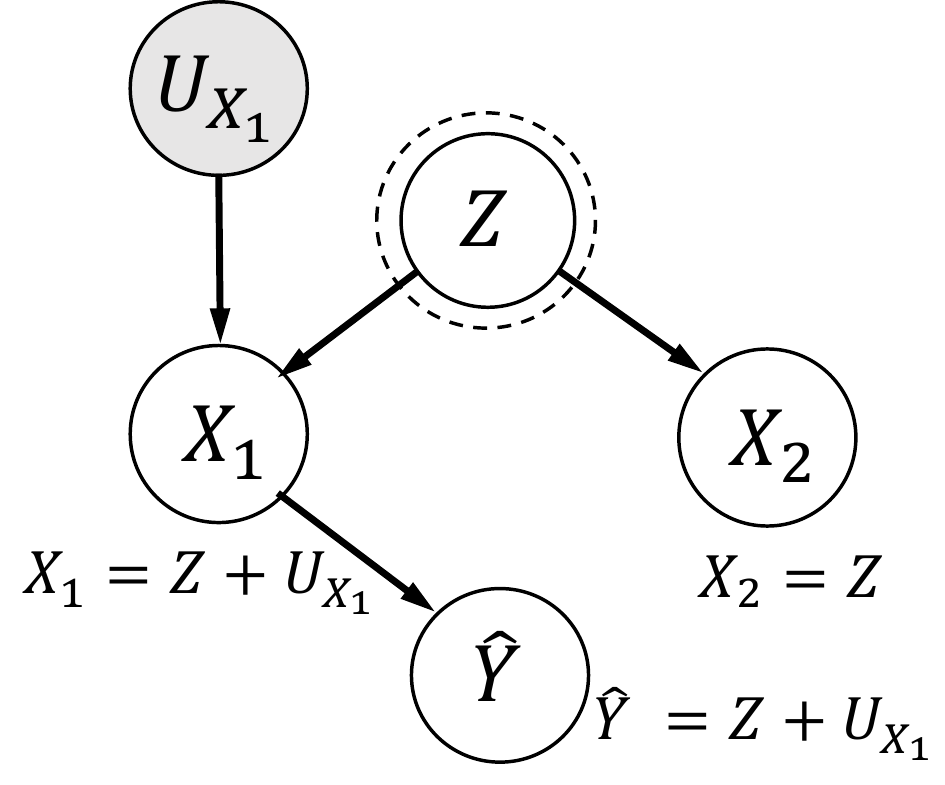}
\caption{Model is not counterfactually fair as $\hat{Y}$ has counterfactual causal influence of $Z$.}\label{fig:cf1}
\end{subfigure}
\hspace{0.1cm}
\begin{subfigure}[b]{0.30\linewidth}
\centering
\includegraphics[height=3.5cm]{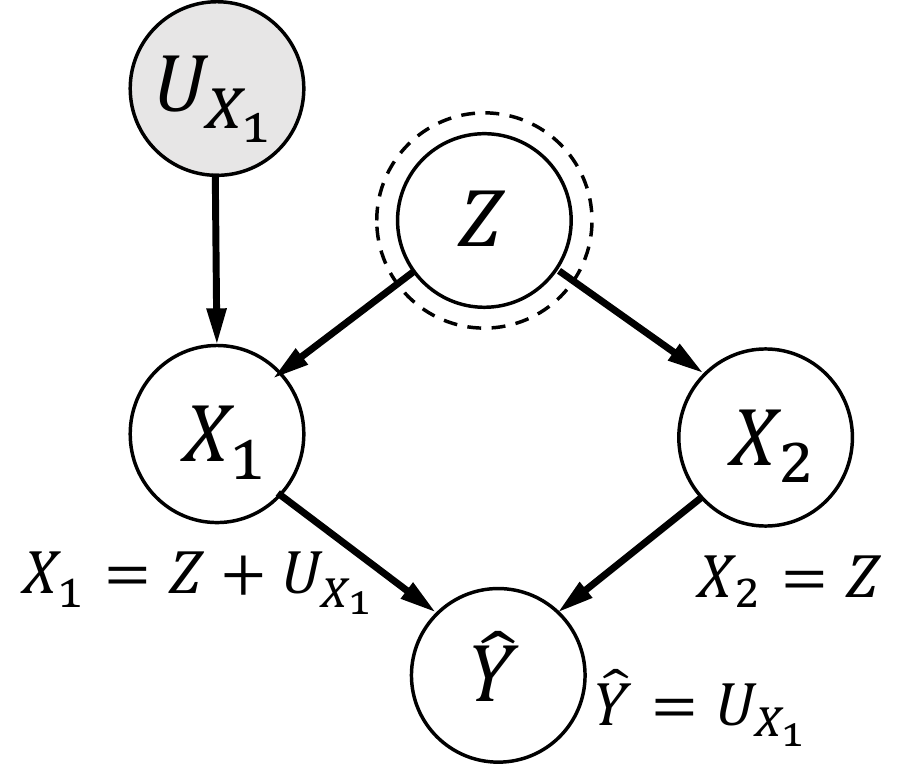}
\caption{Model is counterfactually fair after cancelling out the influence of $Z$ from $X_1$.}\label{fig:cf2}
\end{subfigure}
\hspace{0.1cm}
\begin{subfigure}[b]{0.30\linewidth}
\centering
\includegraphics[height=3.5cm]{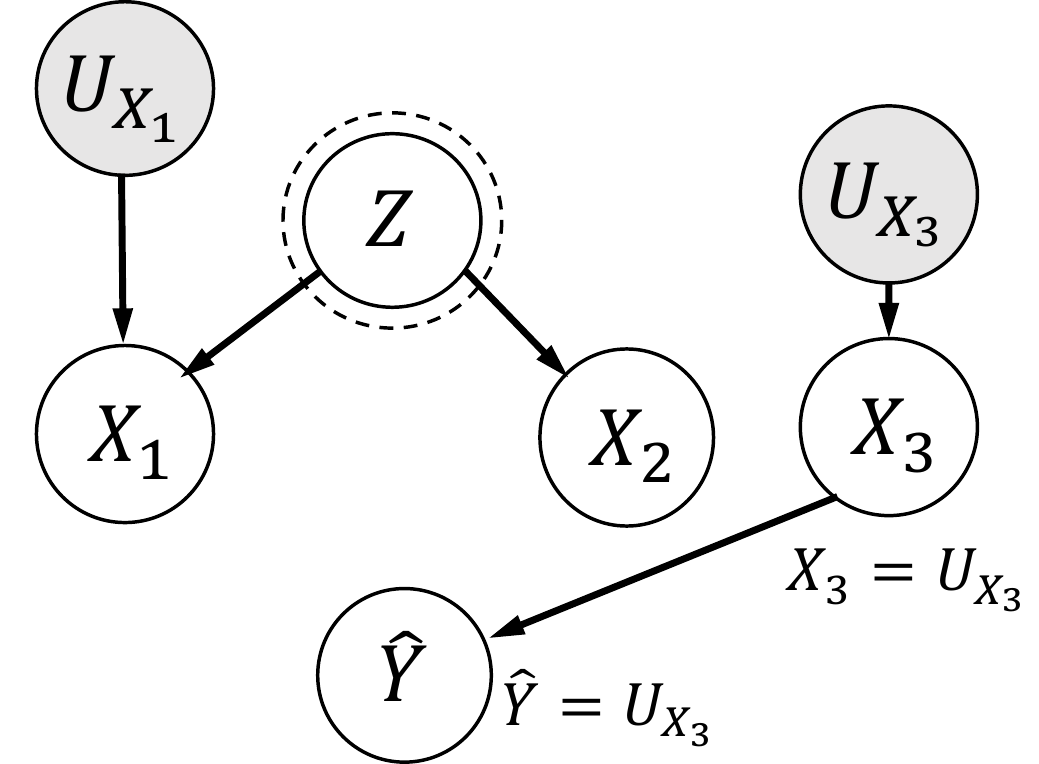}
\caption{Model is counterfactually fair even though it uses an entirely unrelated feature.}\label{fig:cf3}
\end{subfigure}
\caption{Illustration of Scenario~\ref{scenario:cf} for understanding the concept of counterfactual fairness: Different models are used to make hiring decisions on data corresponding to the same SCM with $Z$ denoting the protected attribute, $U_{X_1}$ denoting inner ability, $X_1=Z+U_{X_1}$ denoting interview score, and $X_3$ denoting an alternate feature, e.g., location. \label{fig:cf}}
\end{figure*}

\begin{rem}[Advantage of our Information-Theoretic Quantification]
\label{rem:why_info_theory}One might wonder why such an information-theoretic quantification of counterfactual causal influence (or, total disparity) is necessary. The information-theoretic quantification of total disparity enables analytical decomposition into exempt and non-exempt components that better satisfy our intuitive understanding. Our non-exempt disparity intuitively attempts to capture whether discriminatory proxies are formed inside the black-box model that cannot be entirely attributed to the critical features $X_c$.
The decomposition of counterfactual causal influence (Definition~\ref{defn:cci}) into exempt and non-exempt components is not straightforward. For instance, following the ideas of path-specific counterfactual fairness~\cite{chiappa2018path}, one might be tempted to examine specific causal paths from $Z$ to $\hat{Y}$ that pass (or do not pass) through $X_{c}$, and deem those influences as the two measures. However, as the PID literature notes, disparity can also arise from synergistic information about $Z$ in both $X_{c}$ and $X_{g}$, that cannot be attributed to any one of them alone, \textit{i.e.}, $\mut{Z}{X_{c}}$ and $\mut{Z}{X_{g}}$ may both be $0$ but $\mut{Z}{X_{c},X_{g}}$ may not be (see Canonical Example~\ref{cexample:synergy}). Purely causal measures can attribute such disparity entirely to $X_{c}$. We contend that such synergistic information, if influencing the decision, must be included in the \textit{non-exempt} component of disparity because both $X_{c}$ and $X_{g}$ are contributors to the proxy.  Information-theoretic equivalences of other existing notions of fairness, e.g., statistical parity, equalized odds, etc. have also been used in the broader literature on fairness~ \cite{ghassami2018fairness,liao2019robustness,kamishima2012fairness,calmon2017optimized,fairMI,xu2020algorithmic}.
\end{rem}

For a better understanding of counterfactual fairness, we now consider an intuitive scenario (inspired from \cite{kusner2017counterfactual}).
\begin{scenario}[Understanding Counterfactual Fairness] Suppose a company makes its decisions about hiring based on a feature $X_1$ which denotes an interview score. In the SCM, this feature $X_1=Z+U_{X_1}$ where $Z$ denotes the protected attribute and $U_{X_1}$ denotes the inner ability which is independent of $Z$. An output $\hat{Y}=X_1$ is not counterfactually fair because it has counterfactual causal influence of the protected attribute $Z$ (Fig.~\ref{fig:cf1}). The total disparity $\mut{Z}{(\hat{Y},U_X)}$ is also non-zero, capturing the intuitive notion that a proxy of $Z$ influences the output. On the other hand, suppose the model now uses another feature $X_2=Z$ and produces the output $\hat{Y}=X_1-X_2=U_{X_1}$. This model is now deemed counterfactually fair (Fig.~\ref{fig:cf2}), and its total disparity $\mut{Z}{(\hat{Y},U_X)}$ is zero. No proxy of $Z$ influences the output any longer. \label{scenario:cf} 
\end{scenario}

\begin{rem}[Accuracy vs Counterfactual Fairness] The goals of fairness and accuracy on a given dataset are not always aligned~\cite{menon2018cost,dutta2019information}. For instance, suppose the model in Scenario~\ref{scenario:cf} takes decisions only based on a new feature $X_3=U_{X_3}$ that is derived entirely from some latent factor that is unrelated with the ability to perform the job (see Fig.~\ref{fig:cf3}). Or, even worse, suppose a model is hiring based on a random coin flip. Such a model may be highly inaccurate and absurd but it is still counterfactually fair because it has no counterfactual causal influence of $Z$. In this work, we will assume that a model has absolutely no disparity (exempt or non-exempt) if and only if there is no counterfactual causal influence of $Z$ on $\hat{Y}$. We will also run into some toy examples that might have lower accuracy, but from a counterfactual-fairness-point-of-view, it will be desirable that they are deemed fair if there is no counterfactual causal influence of $Z$.
\end{rem}

\noindent Next, we propose two definitions, namely, statistically visible disparity and masked disparity. Statistically visible disparity is an information-theoretic quantification inspired from a well-known observational definition of fairness called \emph{statistical parity}~\cite{agarwal2018reductions}. 

\begin{defn}[Statistically Visible Disparity] The statistically visible disparity in a model is defined as $\mut{Z}{\hat{Y}}.$
\label{defn:visible_disparity}
\end{defn}

\noindent Statistical parity deems a model fair if and only if $Z\independent \hat{Y}$, i.e.,
$$  \Pr(\hat{Y}=y|Z=z)=\Pr(\hat{Y}=y|Z=z') \ \ \forall y,z,z'. $$ Thus, a model is said to be fair by statistical parity if and only if its statistically visible disparity $\mut{Z}{\hat{Y}}=0.$

\begin{rem} [Statistical Parity vs Counterfactual Fairness] Statistical parity (or independence) does not imply absence of causal effects. E.g., consider $\hat{Y}=Z \oplus U_{X}$ where $Z,U_{X}\sim$ \iid{} Bern(\nicefrac{1}{2}). Here,  $\hat{Y} \independent Z$, but $Z$ still has a causal effect on $\hat{Y}$. If we vary $Z$ keeping all other sources of randomness in $\hat{Y}$ constant (i.e., fixing $U_{X}=u_{x}$), then $\hat{Y}$ also varies. This is, in fact, an example of \textit{masked disparity}, where  $\mathrm{I}(Z;\hat{Y})=0$, but $Z$ has counterfactual causal influence on $\hat{Y}$. 
\label{rem:masked_cci}
\end{rem}

\begin{defn}[Masked Disparity] The masked disparity in a model is defined as $\mathrm{I}(Z;(\hat{Y},U_X))-\mathrm{I}(Z;\hat{Y}).$
\label{defn:masked_disparity}
\end{defn}

The masked disparity is the difference between the total disparity and the statistically visible disparity. Notice that, $\mathrm{I}(Z; \hat{Y},U_X)  - \mathrm{I}(Z;\hat{Y})=\mut{Z}{U_X\given  \hat{Y}}$, implying that masked disparity is non-negative. We will revisit masked disparity in Section~\ref{sec:measure_decomposition}.

\begin{table*}
\centering
\caption{Summary of Notations}
	\begin{tabular}{lll}
		\toprule
Symbol	  &   Description  & Observable or Not \\
		\midrule
		$X_c$ & Tuple of Critical features & Observable  \\
		$X_g$ & Tuple of Non-critical or general features & Observable \\
$X$	 & Tuple of all input features (critical and general) & Observable  \\	
$Z$   &  Protected attribute (s) &Observable   \\
$U_X$ (Note that, $Z \independent U_X$)& Tuple of latent social factors that do not cause $Z$ & Not observable in general\\
$\hat{Y}=r(X)=h(Z,U_X)$  & Model output & Observable \\
		\bottomrule
	\end{tabular}
	\label{table:notations}
\end{table*}

\noindent \textbf{Goal:}  In this work, $\mathrm{I}(Z; (\hat{Y},U_X) )$ will serve as our \textit{information-theoretic quantification of the total disparity (exempt and non-exempt)} as we discussed in Definition~\ref{defn:total_disparity} (also recall Lemma~\ref{lem:cci} and Remark~\ref{rem:why_info_theory}). Our \textit{goal} is to appropriately decompose the total disparity $\mathrm{I}(Z; (\hat{Y},U_X) )$ into an exempt component $(M_{E})$ and a non-exempt component $(M_{NE})$, which can and cannot be explained by the critical features $X_c$ (also see Fig.~\ref{fig:total_disparity}). Intuitively, the total disparity captures the idea of a virtual constituent or proxy of $Z$ that has a causal influence on the output $\hat{Y}$. We would like the exempt and non-exempt components of total disparity to be able to capture and mathematically quantify our intuitive notion of what part of the virtual constituent or proxy can and cannot be attributed to the critical features $X_c$ alone. 

Before proceeding further, we also clarify our terminology here. We say that there is \emph{no disparity} when $\mathrm{I}(Z; \hat{Y},U_X )=0$. Alternately, we call \emph{the disparity to be exempt} if only the non-exempt component is $0$, though $\mathrm{I}(Z;\hat{Y},U_X)$ may be zero or non-zero. Table~\ref{table:notations} summarizes all the important notations to help follow the rest of the paper.


\begin{figure*}
\centering
\includegraphics[height=3.5cm]{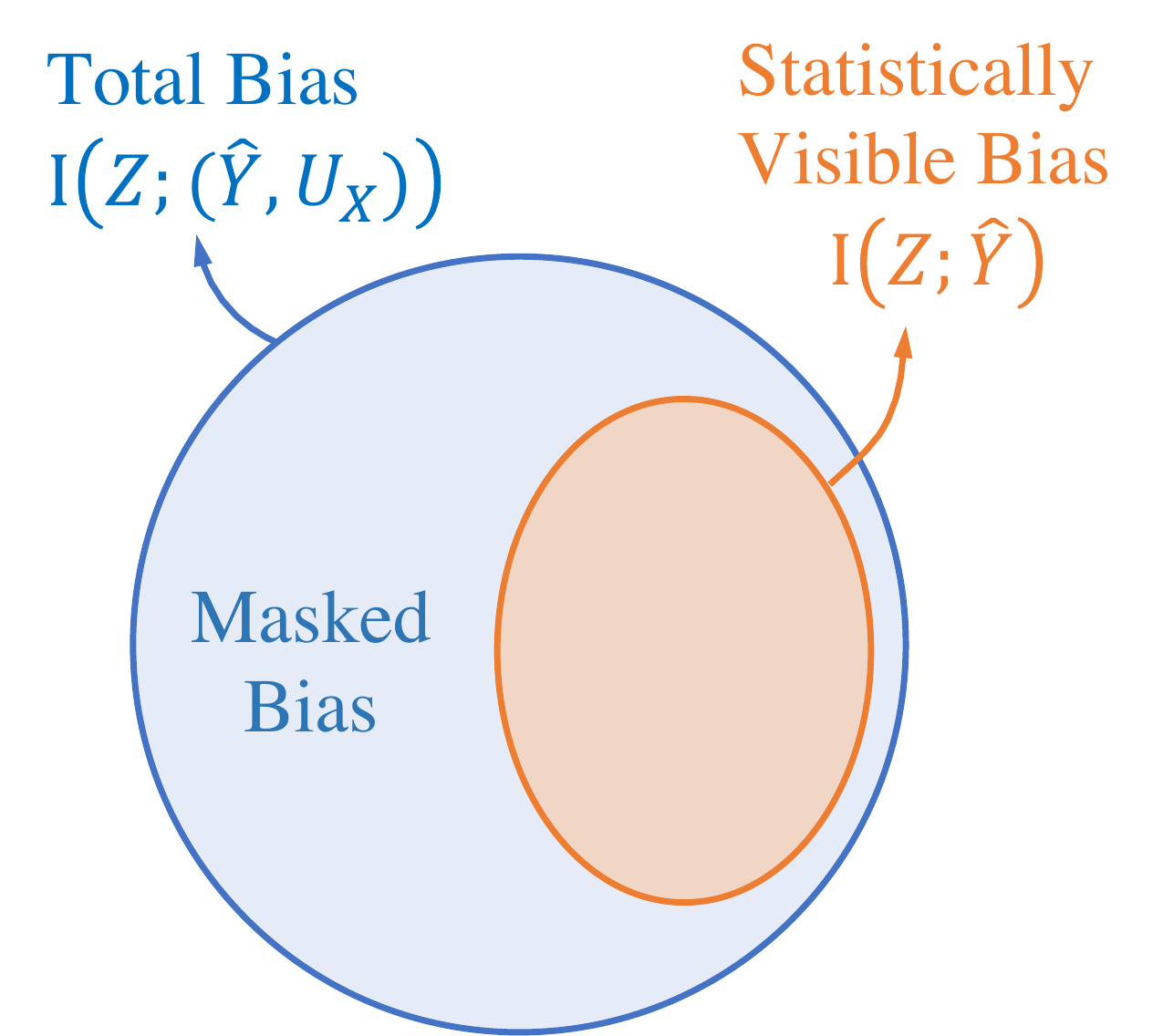}
\hspace{3cm}
\includegraphics[height=3.5cm]{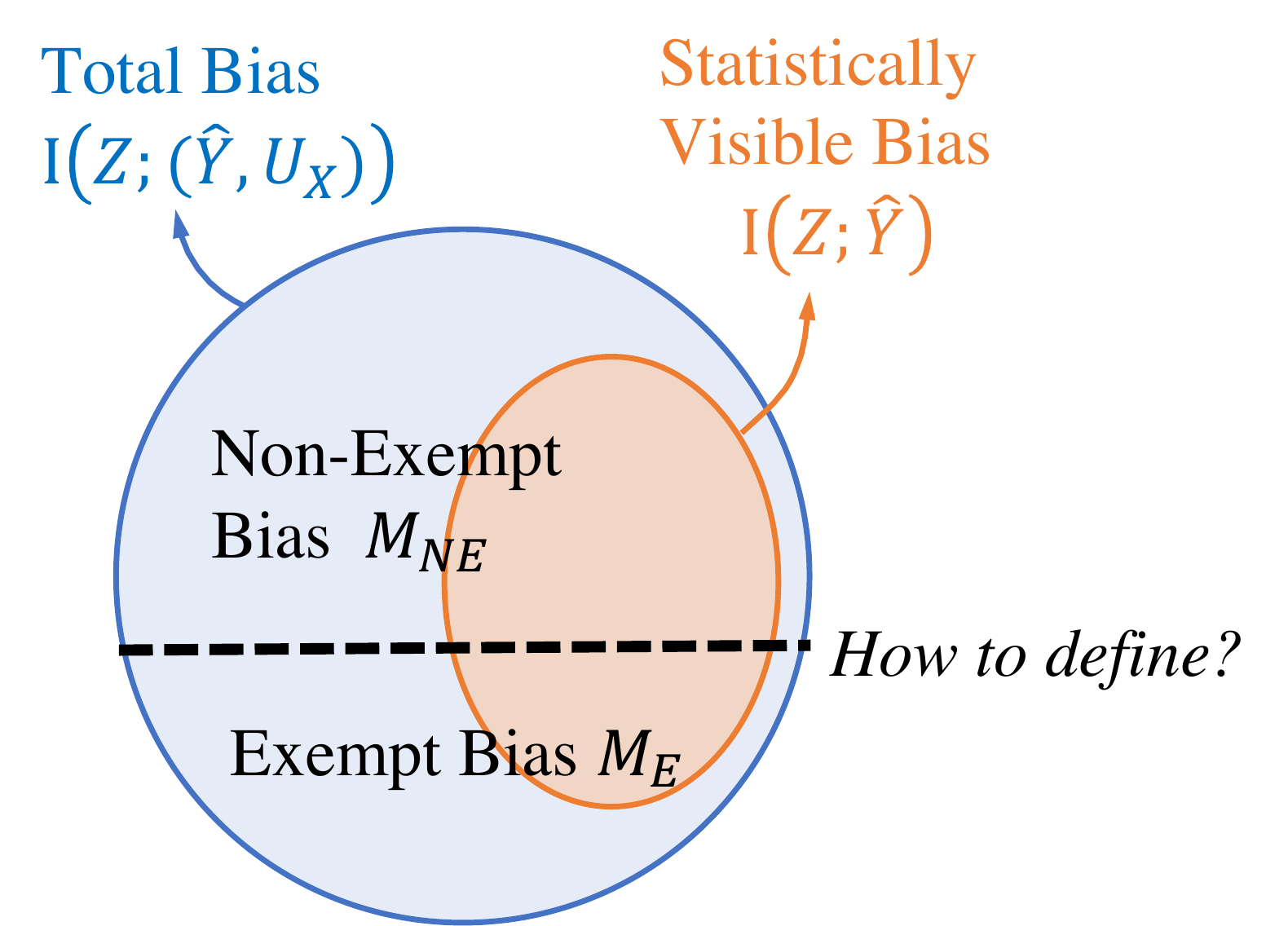}
\caption{Decomposition of Total Disparity: (Left) Total disparity (information-theoretic quantification of counterfactual causal influence) is shown in blue. The statistically visible disparity and masked disparity are two sub-components of the total disparity. (Right) Our goal is to decompose the total disparity into exempt and non-exempt components.\label{fig:total_disparity}}
\end{figure*}




\begin{figure*}
\begin{subfigure}[b]{0.31\linewidth}
\centering
\includegraphics[height=3.8cm]{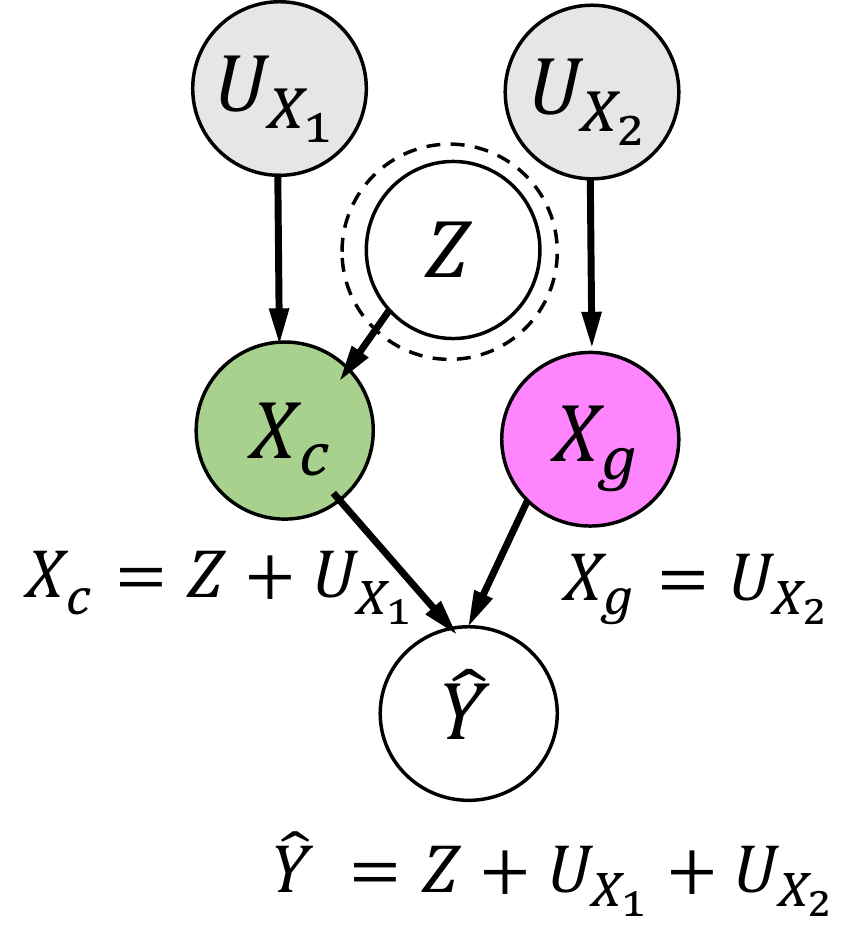}
\caption{Canonical Example~\ref{cexample:exemption}: Hiring with Biased Critical Feature (Desirable: $M_{NE}=0$)}\label{fig:firemen}
\end{subfigure}
\hspace{0.2cm}
\begin{subfigure}[b]{0.31\linewidth}
\centering
\includegraphics[height=3.8cm]{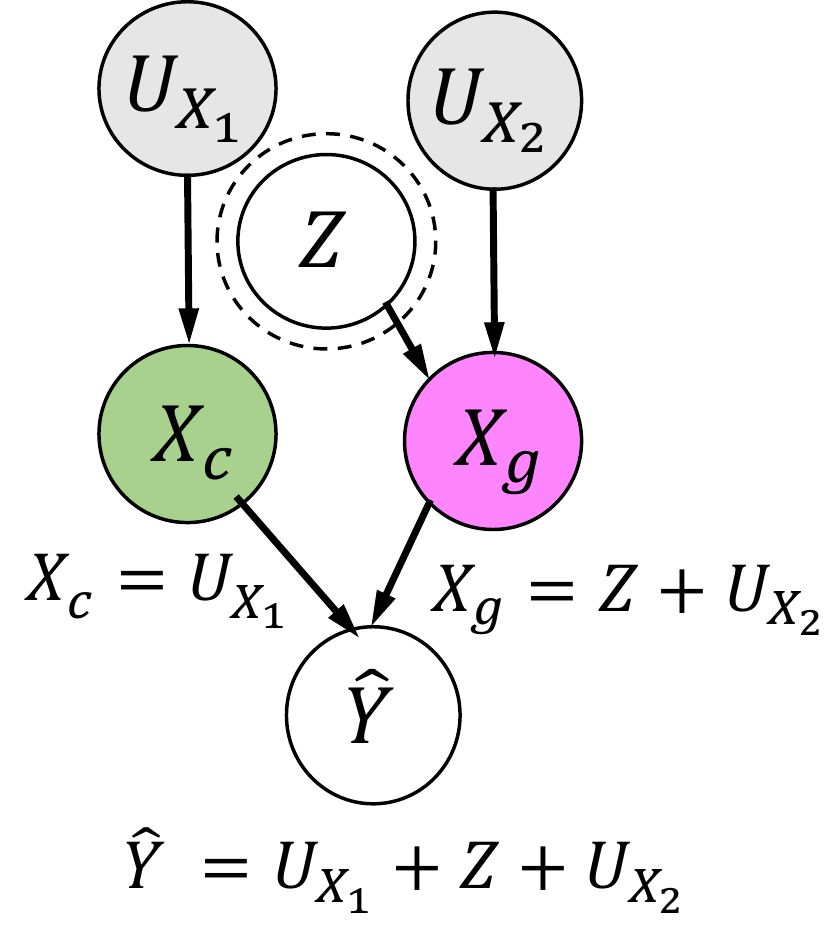}
\caption{Canonical Example~\ref{cexample:equalized_odds}: Hiring with Biased General Feature (Desirable: $M_{NE}>0$)}\label{fig:equalized_odds}
\end{subfigure}
\hspace{0.2cm}
\begin{subfigure}[b]{0.30\linewidth}
\centering
\includegraphics[height=3.8cm]{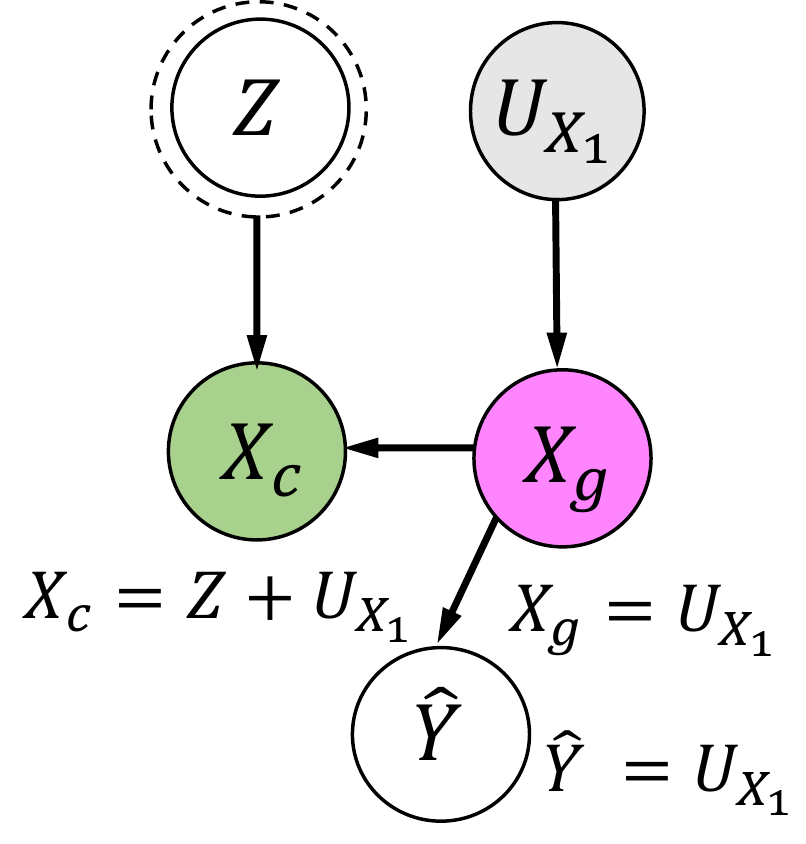}
\caption{Canonical Example~\ref{cexample:college}: Counterfactually Fair Hiring (Desirable: $M_{NE}=0$)}\label{fig:college}
\end{subfigure}

\begin{subfigure}[b]{0.31\linewidth}
\centering
\includegraphics[height=3.6cm]{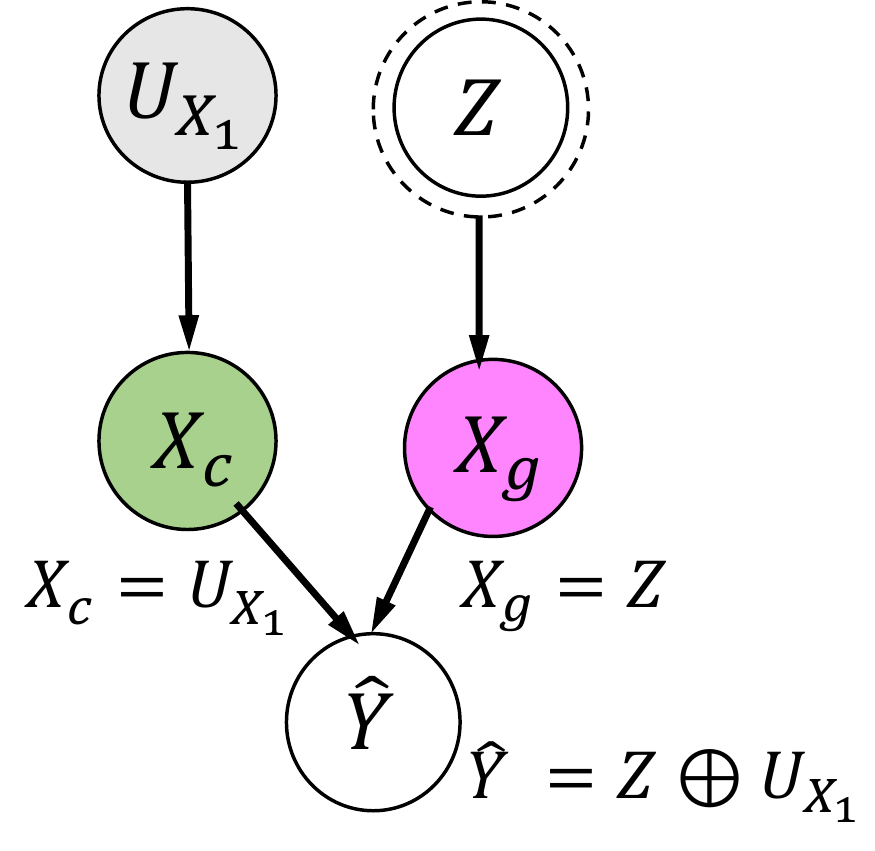}
\caption{Canonical Example~\ref{cexample:masking}: Non-Exempt Masked Disparity in Hiring Ads I (Desirable: $M_{NE}>0$)}\label{fig:masking}
\end{subfigure}
\hspace{0.2cm}
\begin{subfigure}[b]{0.30\linewidth}
\centering
\includegraphics[height=3.6cm]{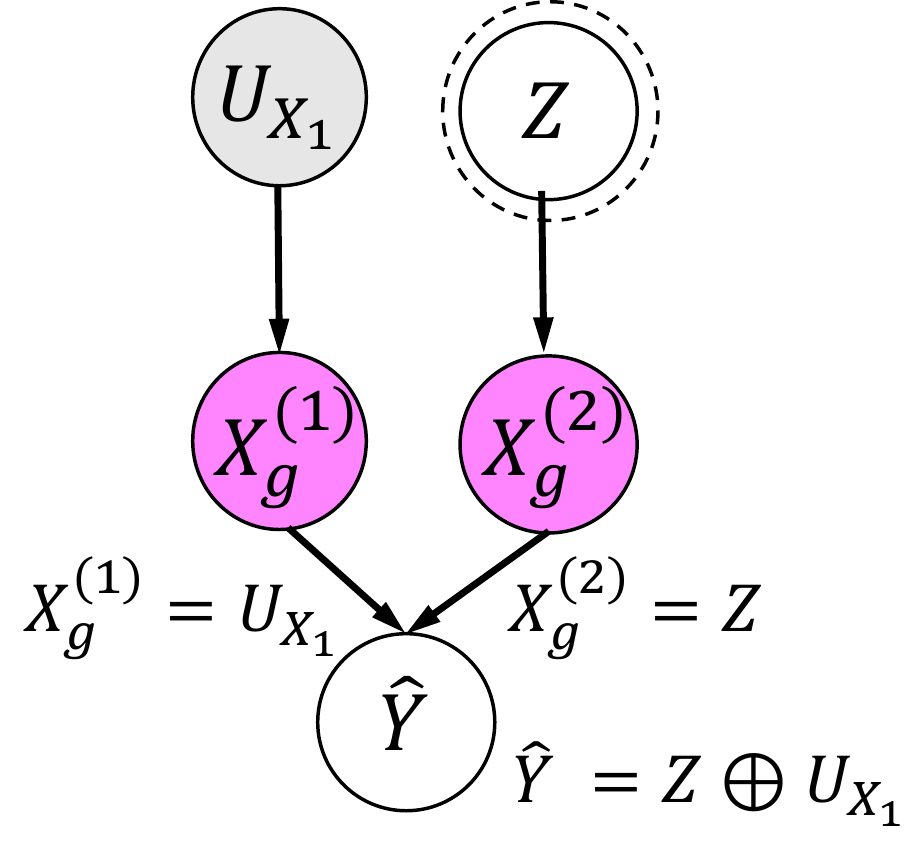}
\caption{Canonical Example~\ref{cexample:masking_general}: Non-Exempt Masked Disparity in Hiring Ads II (Desirable: $M_{NE}>0$)}\label{fig:masking_gen}
\end{subfigure}
\hspace{0.2cm}
\begin{subfigure}[b]{0.30\linewidth}
\centering
\includegraphics[height=3.8cm]{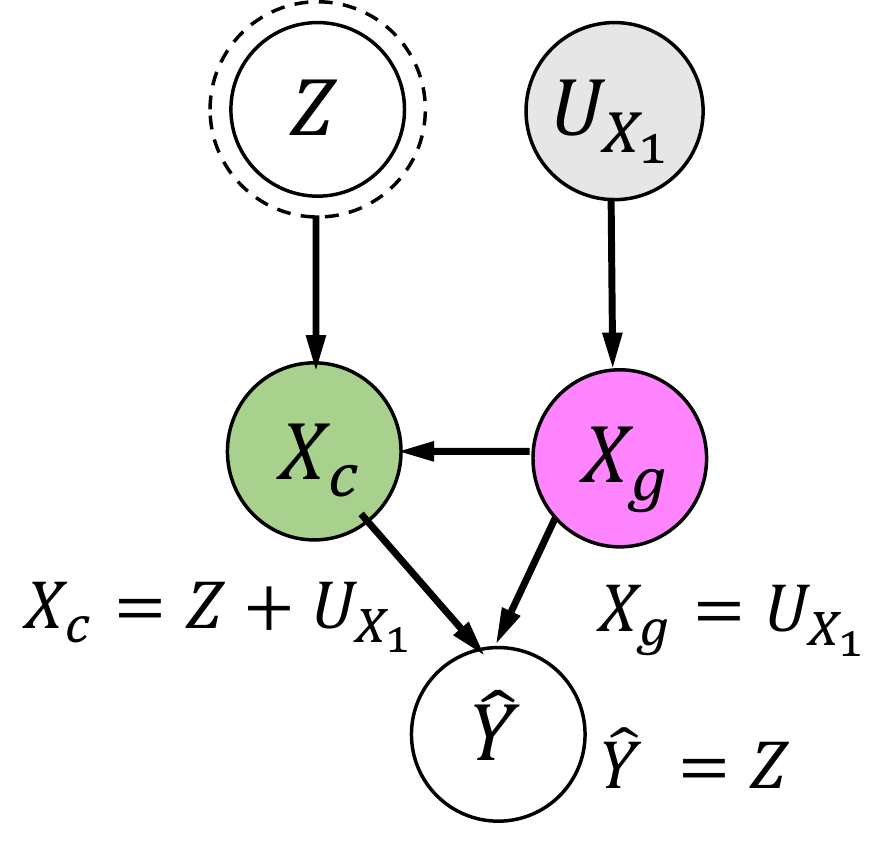}
\caption{Canonical Example~\ref{cexample:synergy}: Disparity Amplification by Unmasking (Desirable: $M_{NE}>0$)}\label{fig:synergy}
\end{subfigure}
\caption{Thought experiments to motivate desirable properties of non-exempt disparity: In all the figures, $Z$ denotes the protected attribute, e.g., gender, race, etc., and $U_{X_1},U_{X_2}$ denotes other latent social factors independent of $Z$. The critical feature is denoted by $X_c$, the non-critical/general feature is $X_g$, and the model output (hiring decision) is $\hat{Y}$. \label{fig:examples1}}
\end{figure*}

\section{Main Results}
\label{sec:properties}

In Section~\ref{subsec:main_results}, we first formally state the desirable properties that a measure of non-exempt disparity $(M_{NE})$ should satisfy. These properties were only intuitively stated in Section~\ref{sec:introduction}. Next, we introduce our proposed measure that satisfies all these  properties (Theorem~\ref{thm:satisfythm} in Section~\ref{subsec:main_results}). In Section~\ref{subsec:rationale}, we discuss in detail on how we arrive at these desirable properties through several canonical examples (summarized in Table~\ref{table:xyz} and Fig.~\ref{fig:examples1}), that helps us quantify our intuitive notion of non-exempt disparity. In Section~\ref{subsec:contrast}, we examine measures in existing literature that have some provision for exemptions, namely, path-specific counterfactual fairness~\cite{chiappa2018path},  conditional statistical parity~\cite{corbett2017}, and justifiable fairness~\cite{interventional_fairness}, and understand their limitations.

\subsection{Desirable Properties Leading to Our Proposed Measure of Non-Exempt Disparity}
\label{subsec:main_results}

It is desirable that our measure of \emph{non-exempt} disparity ($M_{NE}$) is able to capture the intuition of a virtual constituent or proxy of $Z$ being formed inside a given black-box model that a) causally influences the output $\hat{Y}$; and b) cannot be attributed to the critical features $X_c$ alone. To arrive at a set of desirable properties for a measure of \emph{non-exempt disparity} ($M_{NE}$), we examine candidate measures and examine their utility and limitations through canonical examples (see Fig.~\ref{fig:examples1}). While we discuss the rationale for each of these properties in more detail in Section~\ref{subsec:rationale}, here we state the properties and provide a brief intuition for each of them. For simplicity, assume that the protected attribute $Z$ as well as all the other independent latent variables $U_{X_1},U_{X_2},\ldots$ are \iid{} Bern(\nicefrac{1}{2}) in our canonical examples.

Our first candidate measure of non-exempt disparity is based on conditional mutual information, and is: $M_{NE}=\mut{Z}{\hat{Y}\given X_c}$ (Candidate Measure~\ref{candmeas:CMI} in Section~\ref{subsec:rationale}). Inspired from the concept of conditional statistical parity~\cite{corbett2017}, this measure assumes that there is no non-exempt disparity if and only if the hiring decision $\hat{Y}$ and the protected attribute $Z$ (e.g., gender) are independent, conditioned on the critical feature $X_c$ (e.g., coding-test score for a software engineering job). This measure might seem intuitively appealing at first. In Canonical Example~\ref{cexample:exemption} (Fig.~\ref{fig:firemen}), disparity only arises from the critical feature, namely, coding-test score in a software-engineering job, and the general/non-critical feature aptitude-test score contributes to the decision making without introducing disparity. Here, $M_{NE}=\mut{Z}{\hat{Y}\given X_c}=0$ as desired.  In Canonical Example~\ref{cexample:equalized_odds} (Fig.~\ref{fig:equalized_odds}), the disparity only arises from the general/non-critical feature aptitude-test score, which is non-exempt. Here, $M_{NE}=\mut{Z}{\hat{Y}\given X_c}>0$ as desired. 

However, this candidate measure has a limitation: it can sometimes \emph{falsely detect non-exempt disparity when there is none. E.g.}, consider a scenario where the model is counterfactually fair (Canonical Example~\ref{cexample:college} in Section~\ref{subsec:rationale};  Fig.~\ref{fig:college}), and hence there is no disparity (exempt or non-exempt). The critical feature, namely, the coding-test score for a software engineering job is biased, i.e., $X_c=Z+U_{X_1}$ with $U_{X_1}$ being the latent inner ability of a candidate. However, the model is able to distill out the latent inner ability $U_{X_1}$ using all the features and take hiring decisions entirely based on them, i.e., $\hat{Y}=U_{X_1}$. Here, $M_{NE}=\mut{Z}{\hat{Y}\given X_c}>0$ when it is desirable that $M_{NE}$ be $0$. This canonical example motivates the following property:

\begin{restatable}[Zero Influence]{propty}{cancellation}
$M_{NE}$ should be $0$ if $\mathrm{CCI}(Z\rightarrow \hat{Y})=0$ (or equivalently, $\mut{Z}{\hat{Y},U_X}=0$).\label{propty:cancellation}
\end{restatable}

This limitation of $\mut{Z}{\hat{Y}\given X_c}$ leads us to examine PID, decomposing $\mut{Z}{\hat{Y}\given X_c}$ into two components: unique information $\uni{Z}{\hat{Y}\given X_c}$ and synergistic information $\syn{Z}{(\hat{Y}, X_c)}$. The sub-component $\uni{Z}{\hat{Y}\given X_c}$ always satisfies Property~\ref{propty:cancellation} (proof in Lemma~\ref{lem:uni_cci} in Appendix~\ref{app:pid_properties}), even though $\mut{Z}{\hat{Y}\given X_c}$ sometimes may not do so because of the synergistic component (which caused false detection of non-exempt disparity in the previous scenario). This leads us to examine another candidate measure of non-exempt disparity, namely, $M_{NE}=\uni{Z}{\hat{Y}\given X_c}$ (Candidate Measure~\ref{candmeas:uni} in Section~\ref{subsec:rationale}). For example, consider hiring for a software-engineering job using coding-test score (critical feature) and aptitude-test score (non-critical/general feature). It is desirable that $M_{NE}$ be non-zero if $\hat{Y}$ has any unique information about $Z$ that is not present in $X_{c}$ (coding test) because then that information content is also attributed to $X_g$ (also see Section~\ref{subsubsec:uniq} to further motivate this property).

\begin{restatable}[Non-Exempt Statistically Visible Disparity]{propty}{synergy}
$M_{NE}$ should be strictly greater than $0$ if $\hat{Y}$ has any unique information about $Z$ not present in $X_c$. Thus,
$\uni{Z}{\hat{Y}| X_{c}} > 0$ should imply that $M_{NE}>0$.\label{propty:synergy}
\end{restatable}

However, this property alone does not capture all scenarios where $M_{NE}$ is desired to be non-zero. Statistical masking can sometimes prevent the entire non-exempt disparity from exhibiting itself in  $\uni{Z}{\hat{Y}| X_{c}}$ as demonstrated in the following scenario. Suppose an ad for a job is shown selectively to: a) men with high coding-test scores and b) women with low coding-test scores (Canonical Examples~\ref{cexample:masking} and \ref{cexample:masking_general} in Section~\ref{subsec:rationale}; see Fig.~\ref{fig:masking} and \ref{fig:masking_gen}). Such a model might seem ``statistically fair'', i.e., with no statistically visible dependence between $Z$ and $\hat{Y}$ ($\mut{Z}{\hat{Y}}=0$), but is clearly unfair to high-scoring women candidates. Since $\uni{Z}{\hat{Y}| X_{c}}\leq \mut{Z}{\hat{Y}}$ (recall \eqref{eq:pid2} in Section~\ref{subsec:background} and non-negativity of all PID terms), we have $\uni{Z}{\hat{Y}| X_{c}}=0$ for this canonical example, showing that it fails to capture such ``non-exempt masked disparity.'' In essence, $\uni{Z}{\hat{Y}| X_{c}}$ is therefore a lower bound for non-exempt disparity $M_{NE}$, i.e., $\uni{Z}{\hat{Y}| X_{c}}>0 \implies M_{NE}>0$ but not necessarily the other way round (making this candidate measure a ``lower bound'' for $M_{NE}$). The next property attempts to find an upper bound for $M_{NE}$.

Notice that, in the previous Canonical Examples~\ref{cexample:masking} and \ref{cexample:masking_general}, $\hat{Y}$ has a virtual constituent $Z$ influencing it, that is not due to the critical features $X_c$. However, the influence of $Z$ does not exhibit itself in the statistically visible disparity $\mut{Z}{\hat{Y}}$. To resolve this issue, we now consider a non-observational, causal candidate measure inspired from path-specific counterfactual fairness~\cite{chiappa2018path} that specifically examines causal paths from $Z$ to $\hat{Y}$ in the SCM (Candidate Measure~\ref{candmeas:path_specific} in Section~\ref{subsec:rationale}). This measure implies there is no non-exempt disparity if all paths from $Z$ to $\hat{Y}$ in the SCM pass through $X_c$. However, we identify scenarios where this approach can also fail to quantify non-exempt disparity, e.g., in Canonical Example~\ref{cexample:synergy} in Section~\ref{subsec:rationale} (Fig.~\ref{fig:synergy}). Here the critical feature, coding-test score is $X_c=Z+U_{X_1}$, and the non-critical feature, aptitude-test score is $X_g=U_{X_1}.$ The model amplifies the disparity in the hiring decision by cancelling $U_{X_1}$, i.e., $\hat{Y}=Z$. For this example, even though we have the causal path from $Z$ to $\hat{Y}$ passing through $X_c$, we contend that here both $X_c$ and $X_g$ jointly have information about $Z$ that cannot be attributed to $X_c$ alone. Therefore, it is desirable that we have a measure of non-exempt disparity $M_{NE}$ which is non-zero for this example ($\uni{Z}{\hat{Y}| X_{c}}$ and $\mut{Z}{\hat{Y}| X_{c}}$ are also non-zero for this example).

From a causal point of view, here $U_{X_1}$ is a ``confounder'' for both $X_c$ and $\hat{Y}$ (separately influences both $X_c$ and $\hat{Y}$ along different paths). Intuitively, a scenario when there is no non-exempt disparity would be: (i) All causal paths from $Z$ to $\hat{Y}$ in the SCM pass through $X_c$; and also (ii) No $U_{X_i}$ acts as a confounder for both $X_c$ and $\hat{Y}$ (also refer to Canonical Example~\ref{cexample:exemption} in Fig.~\ref{fig:firemen}). This leads to the intuition that to be able to say there is no non-exempt disparity, one might be able to split $U_X$ into two subsets $U_a$ and $U_b$ (further functional generalizations discussed in Section~\ref{sec:conclusion}), such that: (i) $U_a$ consists of the latent factors that do not influence $\hat{Y}$ at all, or influence it only through $X_c$ without acting as confounder; (ii) $U_b$ consists of the remaining latent factors, that only influence $\hat{Y}$ and not $X_c$; and (iii) The Markov chain $(Z,U_a)-X_c-(\hat{Y},U_b)$ holds\footnote{Notice that, this condition implies $Z-X_c-\hat{Y}$ but not the other way round.}. This leads to the following property (see Section~\ref{subsubsec:causality} to further motivate this property).

\begin{restatable}[Non-Exempt Masked Disparity]{propty}{nonexemptmasking} $M_{NE}$ should be non-zero in the canonical example of non-exempt masked disparity: $X_1=Z$, $X_2=U_X$, and $\hat{Y}=Z \oplus U_X$ with $Z,U_X\sim$ \iid{} Bern(\nicefrac{1}{2}) and $X_1 \in X_g$. However, $M_{NE}$ should be $0$ if $(Z,U_a)-X_c-(\hat{Y},U_b)$ form a Markov chain for some subsets $U_a,U_b \subseteq U_X$ such that $U_a=U_X\backslash U_b$.
\label{propty:masking}
\end{restatable}

 Properties~\ref{propty:synergy} and \ref{propty:masking} provide lower and upper bounds on our measure of non-exempt disparity, i.e., it is desirable that: 
\begin{equation}
\uni{Z}{\hat{Y}| X_{c}} \leq M_{NE}  \leq \min_{U_a,U_b\text{ s.t. } U_a=U_X\backslash U_b}\mut{(Z,U_a)}{(\hat{Y},U_b)\given X_c}.
\end{equation}
This observation is important in itself: the unique information measure, being a lower bound, never falsely detects non-exempt disparity when there is none, and thus can serve as a conservative estimate of non-exempt disparity. 

The next three properties are more intuitive. Consider the scenario where no feature is deemed critical (i.e., $X_c=\phi$) and all features are non-critical, e.g., hiring for a manager's role using aptitude-test and coding-test scores. Here, one would like $M_{NE}$ to be equal to the total disparity $\mut{Z}{(\hat{Y},U_X)}$, i.e., no disparity is exempt because no feature is deemed critical.

\begin{restatable}[Absence of Exemptions]{propty}{absence}  
If no feature is deemed critical ($X_c=\phi$), then a measure $M_{NE}$ should be equal to the total disparity, i.e., $\mut{Z}{(\hat{Y},U_X)}$.\label{propty:absence}  
\end{restatable}

Next, suppose that the same model is being used for a software-engineering role where coding-test score is deemed as a critical feature but aptitude-test score is not. For a fixed set of features and a fixed model $\hat{Y}=h(Z,U_X)$, it is desirable that $M_{NE}$ either decreases or stays the same as more features are removed from the set $X_g$ and added to $X_c$.

\begin{restatable}[Non-Increasing with More Exemptions]{propty}{nonincreasing}
For a fixed set of features $X$ and a fixed model $\hat{Y}=h(Z,U_X)$, a measure $M_{NE}$ should be non-increasing if a feature is  removed from $X_g$ and added to $X_{c}$. \label{propty:nonincreasing}
\end{restatable}

Lastly, suppose that the model is used for an even more specific role where both coding test and aptitude test are deemed as critical features. If all the features are in the exempt set $X_{c}$, we require the measure $M_{NE}$ to be $0$. 

\begin{restatable}[Complete Exemption]{propty}{completeexemption}
$M_{NE}$ should be $0$ if all features are exempt, i.e., $X_c=X$ and $X_g=\phi$.\label{propty:complete_exemption}
\end{restatable}

These six properties lead to a novel measure of non-exempt disparity that satisfies all of them (proved in Theorem~\ref{thm:satisfythm}).

\begin{defn}[Non-Exempt Disparity] Our proposed measure of non-exempt disparity is given by:
\begin{equation} M^*_{NE}= \min_{U_a,U_b} \uni{(Z,U_a)}{(\hat{Y},U_b)| X_c}    \text{such that } U_a=U_X\backslash U_b.
\end{equation}
\label{defn:non_exempt_disparity}
\end{defn}

Note that, for the rest of the paper, we use the notation $M_{NE}$ to denote any candidate measure of non-exempt disparity, and $M^*_{NE}$ to specifically denote our proposed measure in Definition~\ref{defn:non_exempt_disparity}.

\begin{restatable}[Properties]{thm}{satisfythm} Properties 1-6 are satisfied by $ M^*_{NE}=\min_{U_a,U_b} \uni{(Z,U_a)}{(\hat{Y},U_b)| X_c}   \ \text{such that } U_a=U_X\backslash U_b.$
 \label{thm:satisfythm}
\end{restatable}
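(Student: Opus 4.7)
The plan is to verify Properties 1--6 one at a time, using four building blocks: (a) the inequality $\uni{A}{B\given C} \le \mut{A}{B\given C}$, which follows from \eqref{eq:pid3} and non-negativity of synergy; (b) the fact from Definition~\ref{defn:uni} that $\uni{A}{B\given C}$ depends only on the marginals $(A,B)$ and $(A,C)$, so in particular $Z\independent (\hat{Y},U_X)$ forces $\uni{Z}{(\hat{Y},U_X)\given X_c}=0$ (the analogue of Lemma~\ref{lem:uni_cci}); (c) two monotonicity lemmas, namely that unique information is non-decreasing when the target or the informative argument is enlarged, and non-increasing when the conditioning argument is enlarged, both of which should already sit in Appendix~\ref{app:pid_properties}; and (d) the structural facts from Section~\ref{subsec:sys_model} that $Z\independent U_X$, that the components of $U_X$ are mutually independent, and that $\hat{Y}=h(Z,U_X)$ is deterministic.

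Property~1 follows by invoking Lemma~\ref{lem:cci} to turn $\mathrm{CCI}(Z\to\hat{Y})=0$ into $Z\independent(\hat{Y},U_X)$, then choosing the partition $U_a=\phi$, $U_b=U_X$ and applying (b). Property~2 is immediate from (c): for every partition, $\uni{(Z,U_a)}{(\hat{Y},U_b)\given X_c}\ge \uni{Z}{\hat{Y}\given X_c}>0$, so the minimum is positive. Property~3 splits in two: for the canonical masking case I would compute $\uni{Z}{(\hat{Y},U_X)\given X_c}$ directly and exhibit the surviving bit of $H(Z)$ using the XOR invertibility $Z=\hat{Y}\oplus U_X$; for the Markov-chain clause, $(Z,U_a)-X_c-(\hat{Y},U_b)$ gives $\mut{(Z,U_a)}{(\hat{Y},U_b)\given X_c}=0$ and (a) collapses the corresponding term, hence the minimum vanishes. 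Property~6 is the mirror of this: with $X_c=X$, determinism of $\hat{Y}=r(X_c)$ forces $\mut{(Z,U_X)}{\hat{Y}\given X_c}=0$, and (a) zeroes the partition $U_a=U_X,\,U_b=\phi$.

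Property~4 is the most computational piece. The identity $\uni{A}{B\given \phi}=\mut{A}{B}$ (from \eqref{eq:pid2} and $\rd{\cdot}{(\cdot,\phi)}=0$) reduces the measure to $\min_{U_b\subseteq U_X}\mut{(Z,U_X\setminus U_b)}{(\hat{Y},U_b)}$. Using (d), the joint independence of $Z$ with $U_X$ together with determinism of $\hat{Y}$ lets this mutual information collapse to $H(\hat{Y}\given U_b)$, which is monotonically non-increasing in the size of $U_b$ and is therefore minimized at $U_b=U_X$, giving $H(\hat{Y}\given U_X)=\mut{Z}{(\hat{Y},U_X)}$ as desired. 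Property~5 is obtained by applying the conditioning-monotonicity part of (c) pointwise to each partition $(U_a,U_b)$: enlarging $X_c$ to $(X_c,X_j)$ can only decrease the inner $\uni$ term, and since the minimization ranges over the same set of partitions before and after, the minimum itself cannot increase.

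The main obstacle I anticipate is Property~5: the claim that $\uni{A}{B\given C,D}\le\uni{A}{B\given C}$ is not one of the standard ``headline'' properties of the Bertschinger--Rauh--Olbrich--Jost--Ay decomposition, and the cleanest route probably goes through a monotonicity of $\rd{A}{(B,C)}$ in its second argument (so that \eqref{eq:pid2} yields the desired inequality). A secondary subtle point is Property~4: one must argue both the upper bound (achieved at $U_a=\phi$, $U_b=U_X$) and that no other partition can go strictly below $\mut{Z}{(\hat{Y},U_X)}$, which is where the collapse to $H(\hat{Y}\given U_b)$ and the containment $U_b\subseteq U_X$ do the real work. Everything else is bookkeeping that follows from (a)--(d).
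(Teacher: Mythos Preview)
Your overall plan matches the paper's proof almost exactly: the same partitions, the same monotonicity lemmas, the same collapse to $H(\hat{Y}\mid U_b)$ for Property~\ref{propty:absence}, and the same determinism argument for Property~\ref{propty:complete_exemption}. Two points deserve comment.

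\textbf{Property~\ref{propty:masking} has a genuine gap.} For the canonical masking example you propose to ``compute $\uni{Z}{(\hat{Y},U_X)\given X_c}$ directly and exhibit the surviving bit of $H(Z)$.'' That is only the partition $U_a=\phi$, $U_b=U_X$. But $M^*_{NE}$ is a \emph{minimum} over partitions, so exhibiting one positive term does not show the minimum is positive. With a single latent $U_X=U_{X_1}$ there is a second partition, $U_a=U_{X_1}$, $U_b=\phi$, and you must also show $\uni{(Z,U_{X_1})}{\hat{Y}\given X_c}>0$. The paper handles this second partition via Lemma~\ref{lem:zero_syn}: since $X_c=U_{X_1}$ is a deterministic function of $(Z,U_{X_1})$, the synergy vanishes and $\uni{(Z,U_{X_1})}{\hat{Y}\given U_{X_1}}=\mut{(Z,U_{X_1})}{\hat{Y}\given U_{X_1}}=1$ bit. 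Your Property~\ref{propty:synergy} lower bound $\uni{Z}{\hat{Y}\given X_c}$ does not rescue you here, because in the masking example $\mut{Z}{\hat{Y}}=0$ forces that lower bound to be zero. (The $X_c=\phi$ case of the canonical example is handled, as the paper does, by invoking Property~\ref{propty:absence}.)

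\textbf{Property~\ref{propty:nonincreasing} is not an obstacle.} Your anticipated difficulty dissolves: the needed inequality $\uni{A}{B\given (C,D)}\le\uni{A}{B\given C}$ is precisely Lemma~\ref{lem:monotonicity_eve} (monotonicity under adversarial side information, from \cite[Lemma~32]{banerjee2018unique}), already stated in Appendix~\ref{app:pid_properties}. There is no need to detour through redundancy monotonicity.
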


\noindent \textit{Proof Sketch:} A detailed proof is provided in Appendix~\ref{app:properties}. Here, we provide a brief proof sketch. For Property~\ref{propty:cancellation},
\begin{align}
M^*_{NE} \leq \uni{Z}{\hat{Y},U_X| X_c} \leq \mut{Z}{(\hat{Y},U_X)},
\end{align}
where the last step holds as unique information is also a component of mutual information (see \eqref{eq:pid2} in Section~\ref{subsec:background}). For Property~\ref{propty:synergy}, we show that $M^*_{NE}{\geq} \uni{Z}{\hat{Y}| X_c}$ using a monotonicity property of unique information~\cite[Lemma 31]{banerjee2018unique}. Lastly, for Property~\ref{propty:masking}, we have $\mut{Z,U_a}{\hat{Y},U_b|X_c}=0$ for some $U_a,U_b$, implying that  $\uni{Z,U_a}{\hat{Y},U_b| X_c}$ is also $0$ for those $U_a,U_b$ because unique information is a component of conditional mutual information (see \eqref{eq:pid3} in Section~\ref{subsec:background}).  For Property~\ref{propty:absence}, we show that when $X_c=\phi$, we have
$M^*_{NE}=\min_{U_a,U_b\text{ s.t. } U_a=U_X\backslash U_b}\mut{Z,U_a}{\hat{Y},U_b} = \mut{Z}{(\hat{Y},U_X)}.$ Property~\ref{propty:nonincreasing} is derived using another monotonicity property of unique information~\cite[Lemma 32]{banerjee2018unique}. For Property~\ref{propty:complete_exemption}, 
\begin{align}  
M^*_{NE} \leq \uni{Z,U_X}{\hat{Y}| X}  \overset{(a)}{\leq} \mut{Z,U_X}{\hat{Y}|X} \overset{(b)}{=}0,
\end{align}
where (a) holds because unique information is a component of conditional mutual information (see \eqref{eq:pid3} in Section~\ref{subsec:background}) and (b) holds as $\hat{Y}$ is a deterministic function of $X$.

\begin{rem}[On Exhaustive Set of Properties leading to a Unique Measure]
\label{rem:uniqueness} We note that our properties do not quantify how exactly the non-exempt disparity should ``scale'' when the measure is nonzero since they are only conditions on when this disparity is nonzero, or on the monotonicity of this disparity. Hence, these properties do not lead to a unique measure. Also, note that this is an issue with all measures of fairness in that they go to zero based on an intuitive notion of fairness but their exact scaling when they are non-zero is not unique. Neither do we claim that the proposed list of desirable properties (axioms) are exhaustive. In general, it is difficult to prove that a proposed set of properties (or, axioms) is exhaustive for a problem. E.g., Shannon established uniqueness of entropy with respect to \textbf{some} properties in \cite{shannon1948mathematical} but the needs of the application can still drive the use of alternate measures. E.g. Renyi measures~\cite{renyi1961measures,liao2019learning,issa2019operational,liao2019robustness,mironov2017renyi} have been found to be useful in security and privacy applications because they weigh outliers differently. Therefore, we believe, that there may be value in the measure not being unique so that it can be tuned to the needs of the application, as well as, motivate future work in this direction. Nonetheless, our properties do capture important aspects of the problem, e.g., non-exempt masked and non-exempt statistically visible disparities, as discussed in Section~\ref{sec:measure_decomposition} and also in Remark~\ref{rem:simplicity}.
\end{rem}

\begin{rem} We note that the proposed measure is counterfactual (non-observational/causal) in nature, i.e., it requires knowledge of the true SCM. While we are able to compute the measure in our case study on artificial datasets (known SCM) in Section~\ref{sec:case_study}, we acknowledge that even after knowledge of the true SCM, there may be computational challenges if the number of latent variables is large. However, one must note that it is important to arrive at measures that satisfy all desirable properties, however hard they might be to compute: (i) It makes the shortcomings of other measures more explicit, informing \emph{which} computable/estimable definition to choose in a given situation; (ii) It opens the avenue of obtaining relaxations that may be easier to estimate; (iii) One can begin exploring research directions to reduce the difficulty/complexity (statistical and/or computational) of estimating these measures.
\end{rem}

\begin{rem}[On Simplicity of Examples] We note that, at a first glance, our examples might seem simple, and real world models will only be more complex due to a mix of causal and statistical relationships. These simple examples help us isolate many of these individual causal and statistical relationships, and examine them carefully. E.g., scenarios where only one of non-exempt masked, non-exempt visible, exempt masked or exempt visible disparity is present or none of them is present (see Fig.~\ref{fig:exhaustive}). When both non-exempt masked and non-exempt statistically visible disparities are present together, we are able to quantify both of them appropriately (discussed further in Section~\ref{sec:measure_decomposition}). Thus, developing an axiomatic understanding of such simple examples is an essential first step in understanding the complex interplay of various relationships in a real dataset. Indeed, examining toy examples (thought experiments) is a common practice in several works in existing fairness literature~\cite{yeom2018discriminative,kusner2017counterfactual,kilbertus2017avoiding,kearns2018preventing,interventional_fairness}, some of which have also inspired our examples in this work. Furthermore, our quantification of non-exempt disparity is not limited to black-box models alone, but also applies to ``white-box'' models~\cite{datta2017use}, e.g., decision trees, linear classifiers, etc., and also to non-AI-based decisions as long as the decision is as a deterministic function of the input features, i.e., $\hat{Y}=h(X)$. \label{rem:simplicity}
\end{rem}

\begin{figure*}
\centering
\includegraphics[height=4.5cm]{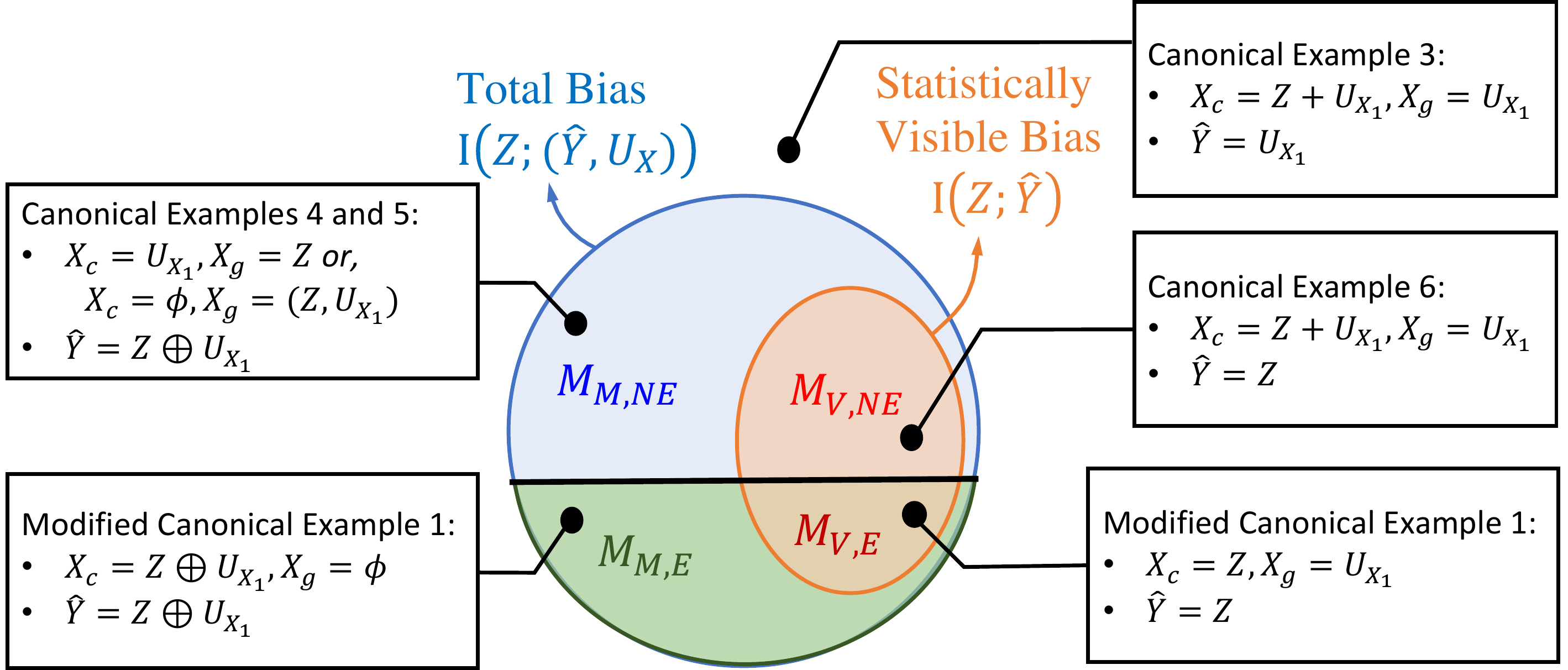}
\caption{Our examples isolate different kinds of scenarios, namely, masked non-exempt ($M_{M,NE}$),  masked exempt ($M_{M,E}$),  visible non-exempt ($M_{V,NE}$), and visible exempt ($M_{V,E}$), as well as scenarios where there is no total disparity(more in Section~\ref{sec:measure_decomposition}). \label{fig:exhaustive}}
\end{figure*}

\subsection{Detailed Rationale Behind the Desirable Properties Leading to A Measure of Non-Exempt Disparity}
\label{subsec:rationale}

Here we provide detailed rationale\footnote{Some of the arguments in this subsection have already been introduced briefly in Section~\ref{subsec:main_results}, and are being elaborated here.} behind all our desirable properties using canonical examples (summarized in Table~\ref{table:xyz}). We start by examining two canonical examples that help us motivate the basic intuition behind \emph{non-exempt disparity}. These examples also help us understand the limitations of \emph{statistical parity}~\cite{agarwal2018reductions,dwork2012fairness} and \emph{equalized odds}~\cite{hardt2016equality} which are two popular measures of fairness that do not have provision for critical feature exemptions.

\begin{table*}
	\centering
	\caption{Summary of Canonical Examples and Candidate Measures of Non-Exempt Disparity} \smallskip
	\label{table:xyz}
	\begin{tabular}{p{4.7cm}|R{1.4cm}|R{1.9cm}|R{2.2cm}|R{5cm}}
		\toprule
	Canonical Examples & Candidate Measure~\ref{candmeas:CMI}: $\mut{Z}{\hat{Y}|X_c}$ & Candidate Measure~\ref{candmeas:uni}: $\uni{Z}{\hat{Y}|X_c}$ 
	& Candidate Measure~\ref{candmeas:path_specific}: Path-Specific Causality & Proposed Measure: $\min_{U_a,U_b} \uni{(Z,U_a)}{(\hat{Y},U_b)| X_c}$ such that $U_a=U_X\backslash U_b.$ \\
		\midrule
	 \ref{cexample:exemption}. Hiring with Biased Critical Feature
	 \begin{itemize}
	 \item $X_c=Z+U_{X_1}$ and $X_g=U_{X_2}$.
	 \item $\hat{Y}=Z+U_{X_1}+U_{X_2}$.  
	 \end{itemize}
	 \textbf{Desirable: $M_{NE}=0$}
	 & \checkmark & \checkmark 
	 & \checkmark & \checkmark  \\
	 		\midrule 
	 \ref{cexample:equalized_odds}. Hiring with Biased General Feature
	 \begin{itemize}
	 \item $X_c=U_{X_1}$ and $X_g=Z+U_{X_2}$. 
	 \item $\hat{Y}=Z+U_{X_1}+U_{X_2}$. 
	 \end{itemize} 
	 \textbf{Desirable: $M_{NE}>0$}  & \checkmark 
	 & \checkmark & \checkmark & \checkmark  	\\
	 	\midrule 
	 		 \ref{cexample:college}. Counterfactually Fair Hiring 
	 		 \begin{itemize} \item $X_c=Z+U_{X_1}$ and $X_g=U_{X_1}$.
	 		 \item $\hat{Y}=U_{X_1}$. 
	 		 \end{itemize}
	 		 \textbf{Desirable: $M_{NE}=0$}   & $\times$ & \checkmark 
	 		 & \checkmark & \checkmark 	\\ 
	 		 \midrule 
		 \ref{cexample:masking}. Non-Exempt Masked Disparity in Hiring Ads I
		 \begin{itemize}
		 \item $X_c=U_{X_1}$ and $X_g=Z$.
		 \item $\hat{Y}=Z \oplus U_{X_1}.$ 
		 \end{itemize}
		 \textbf{Desirable: $M_{NE}>0$}  & \checkmark & $\times$ 
		 & \checkmark & \checkmark    \\
		 \midrule
		  \ref{cexample:masking_general}. Non-Exempt Masked Disparity in Hiring Ads II
		  \begin{itemize}
		  \item $X_c=\phi$ and $X_g=(Z,U_{X_1})$. 
		  \item $\hat{Y}=Z \oplus U_{X_1}$. 
		  \end{itemize}
		  \textbf{Desirable: $M_{NE}>0$} & $\times$ & $\times$ 
		  & \checkmark & \checkmark    \\
		 \midrule
	 \ref{cexample:synergy}. Disparity Amplification by Unmasking
	 \begin{itemize}
   \item	 $X_c=Z + U_{X_1}$ and $X_g=U_{X_1}$.
   \item  $\hat{Y}= Z$.
   \end{itemize}
   \textbf{Desirable: $M_{NE}>0$}  & \checkmark & \checkmark  
   & $\times$ & \checkmark  \\
		\bottomrule
	\end{tabular} 
\end{table*}

\subsubsection{Limitations of Statistical Parity}

As discussed in Section~\ref{sec:sys_model}, a model is deemed fair by statistical parity if $Z\independent \hat{Y}$, i.e., $\mut{Z}{\hat{Y}}=0$. However, the following example exposes some of its limitations.

\begin{cexample}[Hiring with Biased Critical Feature]
Let $X_c=Z+U_{X_1}$ be a coding-test score\footnote{The influence of $Z$ on score in the SCM can arise due to various factors, e.g., historical lack of opportunities or sampling bias due to candidates of one protected group not applying enough etc. For instance, there may be a hidden node representing opportunity such that $Z$ influences the score only though that hidden node, and the score becomes independent of $Z$ given opportunity. We adopt a simplistic representation here for ease of understanding (also see \cite{kilbertus2019sensitivity}).} and $X_g=U_{X_2}$ be an aptitude-test score. Here the protected attribute $Z\sim$ Bern(\nicefrac{1}{2}) denotes gender, $U_{X_1}\sim$ Bern(\nicefrac{1}{2}) denotes inner ability to code and $U_{X_2}\sim$ Bern(\nicefrac{1}{2}) denotes knowledge. An algorithm is deciding whether to hire software engineers based on a score $\hat{Y}=Z + U_{X_1}+ U_{X_2}$. This is shown in Fig.~\ref{fig:firemen}. Here $+$ denotes addition (not to be confused with the binary OR).
\label{cexample:exemption} 
\end{cexample}

First notice that this model will be deemed \emph{unfair} by both statistical parity and counterfactual fairness. Statistical parity is violated because $Z$ and $\hat{Y}$ are not independent, i.e., the statistically visible disparity 
$\mut{Z}{\hat{Y}}>0.$ Consequently, the total disparity $\mut{Z}{(\hat{Y},U_X)}$ is also non-zero since $\mut{Z}{(\hat{Y},U_X)}\geq \mut{Z}{\hat{Y}}>0$, violating counterfactual fairness. However, for this example, the coding-test score is a critical feature (bonafide requirement) for the job. Therefore, one may feel that any disparity in $\hat{Y}$ that is explainable by the coding-test score may be exempted. An attempt to ensure statistical parity for such an example, e.g., by reducing the importance (weight) of the critical feature in the decision making, violates the bonafide requirement of the job. Intuitively, even though the virtual constituent or proxy of $Z$, namely, $Z+U_{X_1}$, influences the output $\hat{Y}$, it is entirely explainable by $X_c$. Thus, for such an example, it is desirable that a measure of discrimination (non-exempt disparity $M_{NE}$) be $0$. 

\subsubsection{Limitations of Equalized Odds}

Equalized odds~\cite{hardt2016equality,ghassami2018fairness} is another popular measure of fairness that attempts to address this limitation of statistical parity by using the true labels (or true final-decision scores) to represent the job requirements. Equalized odds states that a model is fair if
\begin{equation} \Pr(\hat{Y}=y|Z=z,Y=\tilde{y})= \Pr(\hat{Y}=y|Z=z',Y=\tilde{y})  \forall z,z',y,\tilde{y}.
\end{equation}
This criterion is also equivalent to $\hat{Y}\independent Z|Y$, or, $\mut{Z}{\hat{Y}\given Y}=0$. Indeed, in the previous example (Canonical Example~\ref{cexample:exemption}), if the true final-decision scores already incorporate this critical requirement in them, e.g., $Y=Z+U_{X_1}+U_{X_2}$, then $\mut{Z}{\hat{Y}\given Y}=0$, and the model is deemed \emph{fair} by equalized odds. While equalized odds is a reasonable quantification in scenarios where the true label (or true final-decision score) is indeed a justified representation of the job requirements, the measure $\mut{Z}{\hat{Y}\given Y}$ has often been criticized to be affected by label bias, as we demonstrate through this example.

\begin{cexample}[Hiring with Biased General Feature]
Let $X_c=U_{X_1}$ denote the coding-test score and $X_g = \begin{cases} U_{X_2}+1, & Z=0 \\
U_{X_2}, & Z=1
\end{cases}$ denote the aptitude-test score (biased). This can be rewritten as $X_g= Z(U_{X_2}+1) + (1-Z)U_{X_2}= Z+U_{X_2}$, where $Z\sim$ Bern(\nicefrac{1}{2}) denotes gender, $U_{X_1}\sim$ Bern(\nicefrac{1}{2}) denotes the inner ability to code and $U_{X_2}\sim$ Bern(\nicefrac{1}{2}) denotes knowledge. Now suppose, the historic dataset has true decision scores given by $Y=U_{X_1}+Z+U_{X_2}$. This is shown in Fig.~\ref{fig:equalized_odds}. 
\label{cexample:equalized_odds}
\end{cexample}

In this scenario, suppose we choose a perfect predictor, i.e., $\hat{Y}=Y=U_{X_1}+Z+U_{X_2}$. The perfect predictor always satisfies equalized odds because $\mut{Z}{\hat{Y}\given Y}=0$ if $\hat{Y}=Y$. However, if examined deeply, this model is propagating disparity from aptitude-test score, a non-critical/general feature, which is discriminatory and non-exempt. Intuitively, a virtual constituent or proxy of $Z$, i.e., $Z+U_{X_2}$, is being formed from $X_g$ that is influencing the output $\hat{Y}$. For such an example\footnote{The example can be made more realistic if $U_{X_1},U_{X_2}$ are \iid{} $\mathcal{N}(0,1)$. Now suppose, the historic dataset has true labels given by $Y=\sgn{Z+U_{X_1}+U_{X_2}-0.5}$ which is binary. A perfect classifier $\hat{Y}=Y$, that satisfies equalized odds, is still discriminatory because it is influenced by $Z$ in its decision, that is arising from a non-critical feature.}, it is desirable that a measure of discrimination (non-exempt disparity $M_{NE}$) is not zero.

\subsubsection{Motivation for Conditional Mutual Information and its Limitations}

Next, we start out with the aim of finding a suitable measure of non-exempt disparity ($M_{NE}$) that resolves both these canonical examples. Notice that, both these examples can be resolved by a notion of \emph{conditional statistical parity}~\cite{corbett2017}, which deems a model as fair if and only if $Z\independent \hat{Y}|X_c$, i.e., \begin{equation}
\Pr(\hat{Y}=y|X_c=x_c,Z=z) \\ =\Pr(\hat{Y}=y|X_c=x_c,Z=z')  \ \forall y,x_c,z,z'.
\end{equation}
This idea also connects with Simpson's paradox~\cite{peters2017elements} which refers to a statistical trend that appears in several different groups of data but disappears or reverses when these groups are combined. In Canonical Example~\ref{cexample:exemption}, $Z$ and $\hat{Y}$ are not independent but they become so when conditioned on $X_c$, i.e., $\mut{Z}{\hat{Y}}> \mut{Z}{\hat{Y}\given X_c}$. In Canonical Example~\ref{cexample:equalized_odds}, $\mut{Z}{\hat{Y}}< \mut{Z}{\hat{Y}\given X_c}$.
This notion of \emph{conditional statistical parity} leads us to propose the following quantification of non-exempt disparity ($M_{NE}$). 

\begin{candmeas} $M_{NE}=\mut{Z}{\hat{Y} \given X_{c}}$.
\label{candmeas:CMI}
\end{candmeas}
This measure resolves both Canonical Examples~\ref{cexample:exemption} and \ref{cexample:equalized_odds}. However, the following example exposes some of its limitations.

\begin{cexample}[Counterfactually Fair Hiring] Let $Z\sim$ Bern(\nicefrac{1}{2}) be gender, $U_{X_1}\sim$ Bern(\nicefrac{1}{2}) be the inner ability of a candidate, and  $X_{c}=\begin{cases} U_{X_1}, & Z=0 \\
U_{X_1}+1, & Z=1
\end{cases}$ be the coding-test score (critical feature). This can be rewritten as $X_c= Z(U_{X_1}+1) + (1-Z)U_{X_1}= Z+U_{X_1}.$ However, instead of only using the biased test score, suppose the company chooses to conduct thorough evaluation of their online code samples, leading to another score that distills out their inner ability, i.e., $X_{g}=U_{X_1}$. Suppose the model for hiring that maximizes accuracy turns out to be $\hat{Y}=X_{g}=U_{X_1}$. This is shown in Fig.~\ref{fig:college}.
\label{cexample:college}
\end{cexample}

Notice that, this model is deemed \emph{fair} by counterfactual fairness because the total disparity $\mut{Z}{(\hat{Y},U_X)}=0.$ This means that the output $\hat{Y}$ has no counterfactual causal influence of $Z$. Even though the disparity from $X_c$ is legally exempt, the trained black-box model happens to base its decisions on another available non-critical/general feature that has no counterfactual causal influence of $Z$. Thus, there is no disparity in the outcome $\hat{Y}$ (this is true even if the features in $X_c$ were not exempt). Therefore, it is desirable that the non-exempt disparity $M_{NE}$ is also $0$. This is also consistent with the intuition that here no virtual constituent or proxy of $Z$ influences the output. However, the candidate measure $\mut{Z}{\hat{Y}\given X_c}=\mut{Z}{U_{X_1}\given Z+U_{X_1}}$ is non-zero here, leading to a false positive conclusion in detecting non-exempt disparity.

\begin{rem}[Cancellation of Paths]
A similar situation arises if $X_{c}= Z+U_{X_1}$, $X_g=Z$ and $\hat{Y}=X_c-X_g=U_{X_1}$. Even though the disparity from $X_c$ may be exempt, the trained model ends up removing the counterfactual causal influence of $Z$ from the decisions to make them counterfactually fair in a manner similar to the example of interviews (recall Scenario~\ref{scenario:cf} in Section~\ref{sec:sys_model}; also shown in Fig.~\ref{fig:cf2}). The influences of $Z$ along two different causal paths cancel each other in the final output, so that $\mathrm{CCI}(Z\rightarrow \hat{Y})=0$ (and, $\mut{Z}{(\hat{Y},U_X)}=0$). Since the total disparity $\mut{Z}{(\hat{Y},U_X)}=0$, the question of non-exempt or exempt disparity does not arise. However, the candidate measure $\mut{Z}{\hat{Y}\given X_c}$ is non-zero here. \label{rem:cancellation}
\end{rem}

This example also serves as a rationale for the property of zero influence, i.e., Property~\ref{propty:cancellation} which states that $M_{NE}$ should be $0$ if the total disparity is $0$. We aim to find a measure that resolves all of these examples (summarized in Fig.~\ref{fig:examples1}).

\subsubsection{Motivation for Unique Information and its Limitations}
\label{subsubsec:uniq}
We notice that conditioning on the critical feature $X_c$ can increase or decrease mutual information. For instance, in Canonical Example~\ref{cexample:exemption}, we have $\mut{Z}{\hat{Y}}>0$ but  $\mut{Z}{\hat{Y}\given X_c}=0$. In Canonical Example~\ref{cexample:college}, $\mut{Z}{\hat{Y}\given X_c}>0$ but $\mut{Z}{\hat{Y}}=0$. For both these examples, it is desirable that $M_{NE}=0.$ This motivates us to consider another candidate measure of non-exempt disparity that is equal to the information-theoretic sub-volume of intersection between $\mut{Z}{\hat{Y}}$ and $\mut{Z}{\hat{Y}\given X_c}$ (recall Fig.~\ref{fig:pid2}), that goes to $0$ when any one of them is $0$. This is a quantity that is derived from the PID literature, and is called the \emph{unique information} of $Z$ in $\hat{Y}$ that is not present in $X_c$.

\begin{candmeas} $M_{NE}=\uni{Z}{\hat{Y} | X_{c}}$. 
\label{candmeas:uni}
\end{candmeas}

This measure resolves the examples discussed so far, namely, Canonical Example~\ref{cexample:exemption} (Fig.~\ref{fig:firemen}), Canonical Example~\ref{cexample:equalized_odds} (Fig.~\ref{fig:equalized_odds}), Canonical Example~\ref{cexample:college} (Fig.~\ref{fig:college}) and a (similar) example in Remark~\ref{rem:cancellation}. We start with Canonical Example~\ref{cexample:exemption} (hiring with biased critical feature), where $\hat{Y}=Z+U_{X_1}+U_{X_2}$ and $X_c=Z+U_{X_1}$. Recall that the mutual information can be decomposed as follows: $\mut{Z}{\hat{Y}}= \uni{Z}{\hat{Y} | X_{c}}+ \rd{Z}{(\hat{Y}, X_c)} \text{ (from \eqref{eq:pid2} in Section~\ref{subsec:background})}.$
For this example, we notice that even though $\mut{Z}{\hat{Y}}>0$, we have $\uni{Z}{\hat{Y} | X_{c}}=0$. This is because, $\mut{Z}{\hat{Y}\given X_c}=\uni{Z}{\hat{Y} | X_{c}} + \syn{Z}{(\hat{Y},X_c)}\ \text{(from \eqref{eq:pid3} in Section~\ref{subsec:background})},$  and $\mut{Z}{\hat{Y}\given X_c}=0$ for Canonical Example~\ref{cexample:exemption}. In Canonical Example~\ref{cexample:exemption}, the entire statistically visible disparity $\mut{Z}{\hat{Y}}$ is essentially redundant information between $\hat{Y}$ and $X_c$ which is exempted. 

Next, we revisit Canonical Example~\ref{cexample:equalized_odds} ($\hat{Y}=U_{X_1}+Z+U_{X_2}$ and $X_c=U_{X_1}$) where it is intuitive that the measure of non-exempt disparity should be non-zero. $\uni{Z}{\hat{Y} | X_{c}}$ is non-zero here (see Supporting Derivation 1 in Appendix~\ref{app:properties_supporting}), consistent with our intuition. As a proof sketch, recall the tabular representation in Fig.~\ref{fig:pid2}. $\rd{Z}{(\hat{Y}, X_c)}$ is the sub-volume of intersection between $\mut{Z}{X_c}$ and $\mut{Z}{\hat{Y}},$ and hence goes to zero because $\mut{Z}{X_c}=0$. This leads to $\uni{Z}{\hat{Y} | X_{c}}=\mut{Z}{\hat{Y}}$ which is non-zero here.

Lastly, $\uni{Z}{\hat{Y} | X_{c}}$ is also $0$ in Canonical Example~\ref{cexample:college} (counterfactually fair hiring) and the (similar) example of cancellation of paths in Remark~\ref{rem:cancellation}. More importantly, we note that, while conditional mutual information $\mut{Z}{\hat{Y} \given X_{c}}$ may be non-zero even if the the total disparity or counterfactual causal influence is $0$ (as in Canonical Example~\ref{cexample:college}), unique information is not. \emph{In Lemma~\ref{lem:uni_cci} in Appendix~\ref{app:pid_properties}, we show that $\uni{Z}{\hat{Y} | X_{c}}$ is always zero if the total disparity or counterfactual causal influence is $0$, i.e.,} $\mut{Z}{(\hat{Y},U_X)}=0$. In fact, $\uni{Z}{\hat{Y} | X_{c}}$ is a sub-volume or component of the previous candidate measure $\mut{Z}{\hat{Y} \given X_{c}}$, that is guaranteed to be $0$ if the total disparity is zero.

These examples serve as our rationale for the property of non-exempt statistically visible disparity, i.e., Property~\ref{propty:synergy} which states that $M_{NE}$ should be $0$ if $\uni{Z}{\hat{Y} | X_{c}}>0$. 
$\uni{Z}{\hat{Y} | X_{c}}$, however, is not sufficient as a candidate measure as it fails to capture \emph{non-exempt masked disparity}, as we will demonstrate in Canonical Example~\ref{cexample:masking}. Thus, Property~\ref{propty:synergy} is only a lower bound, i.e., sometimes $M_{NE}$ may still need to be non-zero even when $\uni{Z}{\hat{Y} | X_{c}}=0$.  Property~\ref{propty:synergy} only captures the non-exempt  \emph{statistically visible disparity} that cannot be accounted for by $X_c$ alone.


\begin{cexample}[Non-Exempt Masked Disparity in Hiring Ads I] An ad for a software-engineering job is only presented to men $(Z=1)$ with a coding-test score above a threshold $(U_{X_1}=1)$, and to women $(Z=0)$ with a coding-test score below a threshold $(U_{X_1}=0)$ with $Z$ and $U_{X_1}$ being \iid{} \ Bern(\nicefrac{1}{2}). Here, $X_c= U_{X_1}$ and $X_g=Z$. The model output is given by $\hat{Y}=Z \oplus U_{X_1}$. This example is shown in Fig.~\ref{fig:masking}.
\label{cexample:masking}
\end{cexample}

This model discriminates against half of the population (high-scoring women) for whom the ad may be relevant. This is also supported by the fact that that the total disparity $\mut{Z}{(\hat{Y},U_X)}> 0$. Intuitively, here a virtual constituent or proxy ($Z$) is formed inside the black-box model that influences the output and that is derived entirely from $X_g$. For such an example, it is desirable that the non-exempt disparity $M_{NE}$ should not be $0$. In fact, this example demonstrates that there may be non-exempt disparity even when the statistically visible disparity $\mut{Z}{\hat{Y}}=0$. Here, $\uni{Z}{\hat{Y} | X_{c}}$ fails to capture the masked disparity because it has to be zero whenever $\mut{Z}{\hat{Y}}=0$ (using \eqref{eq:pid2} in Section~\ref{subsec:background}).

Let us revisit the candidate measure $\mut{Z}{\hat{Y}\given X_c}$. This measure resolves all the examples discussed so far (\ref{cexample:exemption}-\ref{cexample:masking}) except giving a false positive conclusion in Canonical Example~\ref{cexample:college}. Notice that, $\mut{Z}{\hat{Y}\given X_c}$ is zero if and only if $Z-X_c-\hat{Y}$ form a Markov chain. While the Markov chain $Z-X_c-\hat{Y}$ may not always hold even when it is desirable for $M_{NE}$ to be zero as in Canonical Example~\ref{cexample:college}, we have seen that in all the examples so far (\ref{cexample:exemption}-\ref{cexample:masking}) where the Markov chain $Z-X_c-\hat{Y}$ holds, it has been desirable that $M_{NE}$ be zero (possible one-way implication). Assuming that the Markov chain $Z-X_c-\hat{Y}$ is a sufficient condition for $M_{NE}$ to be zero, we proposed the following property of non-exempt masked disparity in our prior work~\cite{dutta2020information}. 

\noindent \textit{$M_{NE}$ should be non-zero in the example of non-exempt masked disparity, i.e., Canonical Example~\ref{cexample:masking} even if $\mut{Z}{\hat{Y}}=0$. But, $M_{NE}$ should be $0$ if the Markov chain $Z-X_c-\hat{Y}$ holds.}

\begin{rem}[Relation to our prior work~\cite{dutta2020information}]In our prior work~\cite{dutta2020information}, this property, in conjunction with Properties \ref{propty:cancellation}, \ref{propty:synergy} and~\ref{propty:complete_exemption}, leads to a measure that quantifies only a sub-volume of $\mut{Z}{\hat{Y}\given X_c}$ that no longer gives false positive conclusion in Canonical Example~\ref{cexample:college} while still resolving all the other examples discussed so far. The measure proposed in \cite{dutta2020information} is essentially the information-theoretic sub-volume of the intersection between $\mut{Z}{\hat{Y}\given X_c}$ and total disparity $\mut{Z}{(\hat{Y},U_X)}$, which goes to $0$ whenever either of them is $0$ (details are provided in Appendix~\ref{app:properties_others})\footnote{One might also wonder why a measure of the form of a product, i.e., $M_{NE}= \mut{Z}{\hat{Y}\given X_c}\times \mut{Z}{(\hat{Y},U_X)}$ does not work instead. We discuss a counterexample for such a product measure in \cite{dutta2020information} that we also include in Appendix~\ref{app:properties_others} here for completeness.}. 
\end{rem}


The property of non-exempt masked disparity stated in \cite{dutta2020information} is built on the rationale that in the example of non-exempt masked disparity in hiring ads (Canonical Example~\ref{cexample:masking} where $\hat{Y}=Z\oplus U_{X_1}$), instead of $U_{X_1}$ being the coding-test score, if $U_{X_1}$ is a random coin flip used to randomize the race, then this scenario may not necessarily be regarded as non-exempt. Then, we would have $X_c=\phi$ and $X_g=(Z,U_{X_1})$, and the Markov chain $Z-X_c-\hat{Y}$ would hold, deeming this example as \emph{exempt}. In \cite{dutta2020information}, the goal was to only account for non-exempt masked disparity in $M_{NE}$ when the ``mask'' is either a critical feature or arises exclusively from the critical features, e.g., Canonical Example~\ref{cexample:masking} while any mask from the non-critical/general features were viewed more like these random coin flips. But what if the user wishes to also account for masked disparity if the mask is arising from $X_g$ as well, as demonstrated in the following modified version of the example?

\begin{cexample}[Non-Exempt Masked Disparity in Hiring Ads II] 
An ad for a job is only presented to men $(Z=1)$ with a coding-test score above a threshold $(U_{X_1}=1)$, and to women $(Z=0)$ with a coding-test score below a threshold $(U_{X_1}=0)$ with $Z$ and $U_{X_1}$ being \iid{} \ Bern(\nicefrac{1}{2}). The model output is given by $\hat{Y}=Z \oplus U_{X_1}$. Here, $Z \in X_g$ but $U_{X_1}$ is not be a critical feature for the job.\label{cexample:masking_general} 
\end{cexample}

Canonical Example~\ref{cexample:masking_general} with $X_c=\phi$ and $X_g=(Z,U_{X_1})$ will be deemed  \emph{exempt} by \cite{dutta2020information} because the Markov chain $Z-X_c-\hat{Y}$ holds. However, here the virtual constituent or proxy $Z$ is arising from $X_g$ and is being masked by another feature of $X_g$, i.e., $U_{X_1}$. If $U_{X_1}$ denotes coding-test score and $\hat{Y}$ denotes the decision of showing hiring ads, then the model is again unfair to high-scoring women. This argument is also supported by the fact that the total disparity is non-zero (not counterfactually fair). Since $X_c=\phi$, no disparity is exempt, and a measure of non-exempt disparity should ideally capture the total disparity in this model.

In this work, we would like to arrive at an alternate criterion (modification of the property of non-exempt masked disparity in \cite{dutta2020information}) that can capture non-exempt masked disparity irrespective of whether the ``mask'' arises from the critical or general features. What this means is that any scenario deemed exempt by the property of non-exempt masked disparity in \cite{dutta2020information} will also be deemed exempt by our modified property\footnote{We show in Lemma~\ref{lem:markov_chain} that the Markov chain in our modified property, i.e., $(Z,U_a)-X_c-(\hat{Y},U_b)$ also implies $Z-X_c-\hat{Y}$, but the opposite implication is not true.} but it is desirable that our modified property also accounts for scenarios, such as Canonical Example~\ref{cexample:masking_general}, that is sometimes deemed exempt by the former property even though intuitively, it may not be reasonable to do so.


\subsubsection{Leveraging Latent Variables to Understand Non-Exempt Masked Disparity}
\label{subsubsec:causality}

One commonality that we notice in the examples so far (\ref{cexample:exemption}-\ref{cexample:masking_general}) is that whenever it is desirable that $M_{NE}$ be zero, either there is no counterfactual causal influence of $Z$ on $\hat{Y}$ (i.e., $\mathrm{CCI}(Z\rightarrow \hat{Y})=0$) or the influence of $Z$ on $\hat{Y}$ has propagated \emph{only} along paths that pass through $X_c$. In scenarios where $\mathrm{CCI}(Z\rightarrow \hat{Y})\neq0$, one may choose to define another candidate measure of non-exempt disparity that is inspired from the notion of \emph{path-specific counterfactual fairness}~\cite{chiappa2018path} (also see \cite{kusner2017counterfactual,kilbertus2017avoiding}). This candidate measure for quantifying non-exempt disparity is a causal, path-specific quantification by varying $Z$ only along the paths through $X_{g}$ that do not pass through $X_c$ and comparing if it causes any change in the model output (also see Fig.~\ref{fig:path_specific}).  

\begin{figure*}
\centering
\begin{subfigure}[b]{0.7\linewidth}
\includegraphics[height=3.5cm]{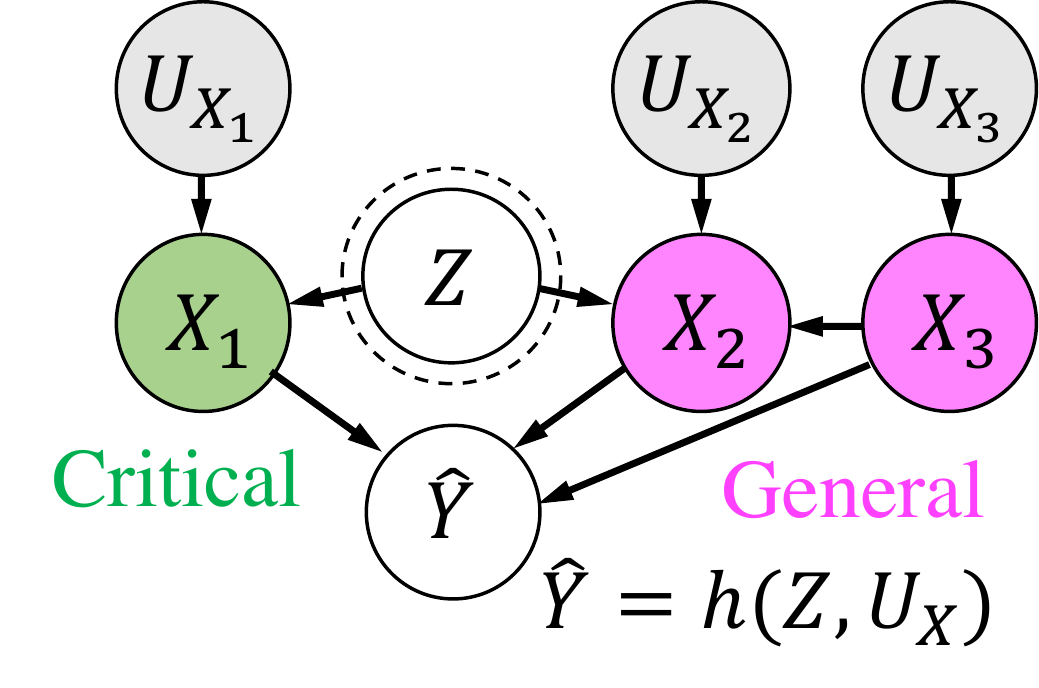}
\hspace{0.1cm}
\includegraphics[height=3.5cm]{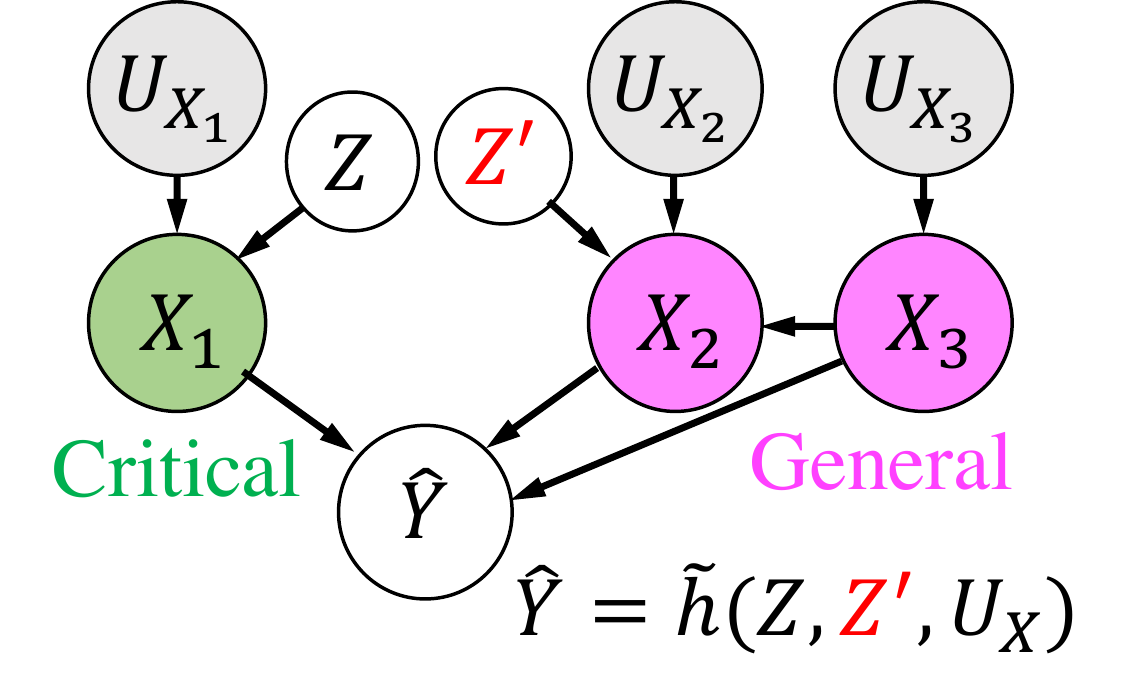}
\caption{Path-specific quantification of non-exempt disparity: (Left) Original model with output $h(Z,U_X)$. (Right) $Z$ is varied to $Z'$ along the direct paths through $X_g$ that do not pass through $X_c$ resulting in output $\tilde{h}(Z,Z',U_X)$. Candidate measure~\ref{candmeas:path_specific} quantifies the expected value of the change in output due to path-specific variation in $Z$.} \label{fig:path_specific}
\end{subfigure}
\hspace{0.2cm}
\begin{subfigure}[b]{0.25\linewidth}
\includegraphics[height=4cm]{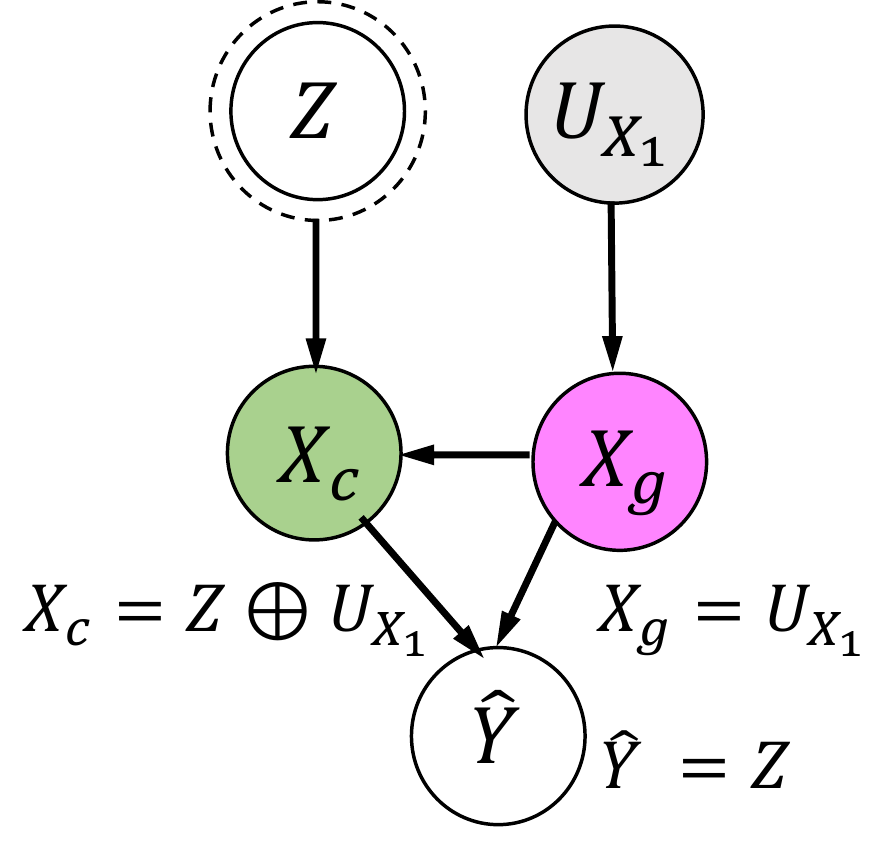}
\caption{Extreme Case of Disparity Amplification by Unmasking}\label{fig:path_specific_cexample}
\end{subfigure}
\caption{Path-specific quantification of non-exempt disparity (Candidate Measure~\ref{candmeas:path_specific}) and its limitation \label{fig:path}}
\end{figure*}

\begin{candmeas}
Let $\hat{Y}=h(Z,U_{X})$ in the true causal model. Assume a new causal graph with a new source node $Z'$ having an independent and identical distribution as $Z$ where we replace all relevant direct edges from $Z$ to $X_{g}$ with an edge from $Z'$ to $X_{g}$. Let $\hat{Y}=\tilde{h}(Z,Z',U_{X})$ in the new causal graph. A candidate measure is
$M_{NE} = \E{Z,Z',U_{X} }{|h(Z,U_{X}) - \tilde{h}(Z,Z',U_{X})| }.$
\label{candmeas:path_specific}
\end{candmeas}

This measure, when used in conjunction with $\mathrm{CCI}(Z\rightarrow \hat{Y})=0$, resolves the examples so far (\ref{cexample:exemption}-\ref{cexample:masking_general}). For Canonical Example~\ref{cexample:exemption}, it is zero and for Canonical Example~\ref{cexample:equalized_odds}, it is non-zero, as desired. For Canonical Example~\ref{cexample:college}, $\mathrm{CCI}(Z\rightarrow \hat{Y})=0$, and hence there is no need for a path-specific examination. For the example of non-exempt masked disparity (Canonical Examples~\ref{cexample:masking} and \ref{cexample:masking_general}), \emph{this measure is $0$ in spite of the statistically visible disparity $\mut{Z}{\hat{Y}}$ being $0$.} However, the following example exposes some of its limitations.

\begin{cexample}[Disparity Amplification by Unmasking]
Let $U_{X_1}$ be the inner ability of a candidate, and suppose that $X_{c}=Z + U_{X_1}$ denote the coding test score. Also let $X_{g}=U_{X_1}$ be the aptitude-test score where $Z$ and $U_{X_1}$ are \iid{} \  Bern(\nicefrac{1}{2}). Let the hiring decision be based on $\hat{Y}=X_{c}  - X_{g}=Z$. This is shown in Fig.~\ref{fig:synergy} with a more extreme modification in Fig.~\ref{fig:path_specific_cexample}. 
 \label{cexample:synergy}
 \end{cexample}

The disparity in this example will be deemed \emph{exempt} by a causal path-specific examination. However, this model has statistically visible disparity ($\mut{Z}{\hat{Y}}>0$) that cannot be attributed to $X_c$ alone. Following the PID literature, here $X_{c}$ and $X_{g}$ have synergistic information about $Z$ that ultimately appears in $\hat{Y}$ which in itself is the virtual constituent or proxy of $Z$ being formed in this model. This synergistic information cannot be attributed to $X_{c}$ alone because $\mut{Z}{X_c}$ is much smaller that $\mut{Z}{\hat{Y}}$. This is further supported by the argument that $X_{g}$ and $X_{c}$ together lead to a better estimate of $Z$ than $X_{c}$ alone which means $X_{g}$ is definitely a contributor to the disparity. Thus, $M_{NE}$ should be greater than $0$. Also, note that, here $\uni{Z}{\hat{Y} | X_{c}}>0$ (Supporting Derivation 2 in Appendix~\ref{app:properties_supporting}) because it is this ``joint'' information about $Z$ in $(X_c,X_g)$ that ultimately appears in $\hat{Y}$ that cannot be attributed to $X_c$ alone.

Ideally, we would like a property and a measure that captures the intuition in this example. From a causal perspective, here $U_{X_1}$ is a confounder~\cite{peters2017elements} to both $X_c$ and $\hat{Y}$, i.e., an extraneous variable that influences both of them along separate paths. A scenario when there is no non-exempt disparity would be: (i) All causal paths from $Z$ to $\hat{Y}$ in the SCM pass through $X_c$; and also (ii) No $U_{X_i}$ acts as a confounder for both $X_c$ and $\hat{Y}$. This leads to the intuition that to be able to say $M_{NE}=0$, one might be able to divide $U_X$ into two subsets $U_a$ and $U_b$ (further functional generalizations discussed in Section~\ref{sec:conclusion}), such that: (i) $U_a$ consists of the latent factors that do not influence $\hat{Y}$ at all, or influence it only through $X_c$ without acting as confounder; (ii) On the other hand, $U_b$ consists of the remaining latent factors, that only influence $\hat{Y}$ and not $X_c$; and (iii) The Markov chain $(Z,U_a)-X_c-(\hat{Y},U_b)$ holds.

To understand this better, we again revisit Canonical Example~\ref{cexample:exemption} (visualization in Fig.~\ref{fig:firemen}). Intuitively, the total disparity in this example is exempt because $Z$ was already masked by $U_{X_1}$ in $X_c$, and the mask remained untampered in the final output $\hat{Y}$ with only additional independent masks added inside the black-box model. Here, neither $Z-X_c-(\hat{Y},U_{X})$ nor $(Z,U_X)-X_c-\hat{Y}$ hold, but $(Z,U_{X_1})-X_c-(\hat{Y},U_{X_2})$ does. A Markov chain of the form $(Z,U_a)-X_c-(\hat{Y},U_b)$ also implies both the criterion $(Z,U_a)-X_c-\hat{Y}$ and $Z-X_c-(\hat{Y},U_b)$ (see Lemma~\ref{lem:markov_chain} with proof in Appendix~\ref{app:properties}). One can interpret $U_a$ as the latent variables that either do not influence $\hat{Y}$ at all or already mask $Z$ in $X_c$ and remain untampered in the final output $\hat{Y}$. On the other hand, $U_b$ consists of the remaining latent variables that contribute to ``additional masking inside the black-box model.''

This leads us to propose the following criterion for $M_{NE}$ that also serves as our main rationale for Property~\ref{propty:masking}: \textit{$M_{NE}$ should be $0$ if $(Z,U_a)-X_c-(\hat{Y},U_b)$ form a Markov chain for some subsets $U_a,U_b \subseteq U_X$ such that $U_a=U_X\backslash U_b$.}

\begin{restatable}{lem}{markovchain}
\label{lem:markov_chain}
The Markov chain $(Z,U_a)-X_c-(\hat{Y},U_b)$ implies that the following Markov chains also hold: (i) $Z-X_c-\hat{Y}$;  (ii)  $(Z,U_a)-X_c-\hat{Y}$; and (ii) $Z-X_c-(\hat{Y},U_b)$.
\end{restatable}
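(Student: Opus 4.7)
The plan is to derive each of the three Markov chains directly by projecting the joint conditional independence $(Z,U_a) \independent (\hat{Y},U_b) \mid X_c$ onto the relevant sub-tuples, using only the standard \emph{decomposition} property of conditional independence (if $A \independent (B,C) \mid D$ then $A \independent B \mid D$ and $A \independent C \mid D$). Since the hypothesis $(Z,U_a)-X_c-(\hat{Y},U_b)$ is by definition the statement $(Z,U_a) \independent (\hat{Y},U_b) \mid X_c$, no structural assumption about the SCM beyond this is needed; the lemma is purely a probabilistic triviality about conditional independence.

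Concretely, for claim (ii), I would apply decomposition on the right-hand side of the conditional independence: viewing $(\hat{Y},U_b)$ as a joint variable, its independence from $(Z,U_a)$ given $X_c$ immediately yields $(Z,U_a) \independent \hat{Y} \mid X_c$, which is exactly the Markov chain $(Z,U_a)-X_c-\hat{Y}$. For claim (iii), I would apply decomposition symmetrically on the left-hand side, treating $(Z,U_a)$ as a joint variable, to obtain $Z \independent (\hat{Y},U_b) \mid X_c$, i.e.\ $Z-X_c-(\hat{Y},U_b)$. Finally, claim (i) follows either by applying decomposition one more time to either of (ii) or (iii), giving $Z \independent \hat{Y} \mid X_c$.

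For completeness, I would prove the decomposition property in one line from first principles if desired: if $p(a,b,c \mid d) = p(a \mid d)\, p(b,c \mid d)$, then summing (or integrating) both sides over $c$ yields $p(a,b \mid d) = p(a \mid d)\, p(b \mid d)$, i.e.\ $A \independent B \mid D$. This is entirely routine.

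There is essentially no technical obstacle here: the lemma is a structural consequence of the hypothesis and the standard graphoid axioms. The only thing to be careful about is to apply decomposition on the correct side in each of the three cases and to ensure that the tuple notation conventions introduced in Section~\ref{sec:sys_model} are respected (e.g.\ that $(Z,U_a)$ really is treated as a single joint random object so that marginalization is well-defined). Given that, the proof should fit in just a few lines.
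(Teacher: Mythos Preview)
Your proposal is correct and essentially matches the paper's proof: the paper phrases the same argument in mutual-information language, noting that $\mut{Z}{\hat{Y}\given X_c}$, $\mut{Z}{(\hat{Y},U_b)\given X_c}$, and $\mut{(Z,U_a)}{\hat{Y}\given X_c}$ are each bounded above by $\mut{(Z,U_a)}{(\hat{Y},U_b)\given X_c}$ via the chain rule and non-negativity of conditional mutual information, which is exactly the decomposition graphoid axiom you invoke.
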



The Markov chain $(Z,U_a){-}X_c{-}(\hat{Y},U_b)$ holding implies $M_{NE}{=}0$, but the Markov chain not holding for all $U_a,U_b$ such that $U_a=U_X\backslash U_b$ does not necessarily imply that $M_{NE}\neq 0$. This criterion $(Z,U_a)-X_c-(\hat{Y},U_b)$ implying $M_{NE}=0$ only attempts to provide an upper bound on $M_{NE}$, i.e., it is desirable that 
$M_{NE}\leq \min_{U_a,U_b\text{ s.t. } U_a=U_X\backslash U_b} \mut{(Z,U_a)}{(\hat{Y},U_b)\given X_c} $ such that $U_a=U_X\backslash U_b.$ The measure $\min_{U_a,U_b\text{ s.t. } U_a=U_X\backslash U_b} \mut{(Z,U_a)}{(\hat{Y},U_b)\given X_c} $ does not suffice in itself as a measure of non-exempt disparity because it again does not satisfy Property~\ref{propty:cancellation}. To see this, notice that $\min_{U_a,U_b\text{ s.t. } U_a=U_X\backslash U_b} \mut{(Z,U_a)}{(\hat{Y},U_b)\given X_c} \geq \mut{Z}{\hat{Y}\given X_c} $ (see proof of Lemma~\ref{lem:markov_chain}), and thus, it also gives a false positive conclusion about non-exempt disparity in Canonical Example~\ref{cexample:college} (counterfactually fair hiring). Instead, $\uni{(Z,U_a)}{(\hat{Y},U_b)\given X_c}$ is a sub-component of $\mut{(Z,U_a)}{(\hat{Y},U_b)\given X_c}$ that satisfies Property~\ref{propty:cancellation}. Our desirable properties ultimately leads us to our proposed measure of non-exempt disparity, given by:
\begin{equation}
M^*_{NE}= \min_{U_a,U_b} \uni{(Z,U_a)}{(\hat{Y},U_b)| X_c}   \\ \text{such that } U_a=U_X\backslash U_b.
\end{equation}



\subsubsection{Our Proposed Measure Resolves all the Canonical Examples}

To develop intuition on what our proposed measure captures, we will now discuss how this measure resolves all of the examples in this work. We group ``similar'' examples together.

\begin{itemize}[leftmargin=*]

\item \textbf{Scenarios where total disparity $\mut{Z}{(\hat{Y},U_X)}$ is zero:}
This applies to Canonical Example~\ref{cexample:college} and the related example in Remark~\ref{rem:cancellation}. Because $\min_{U_a,U_b\text{ s.t. } U_a=U_X\backslash U_b} \uni{(Z,U_a)}{(\hat{Y},U_b)| X_c}\leq \uni{Z}{(\hat{Y},U_X)| X_c} \leq \mut{Z}{(\hat{Y},U_X)}$ (see proof of Theorem~\ref{thm:satisfythm} in Appendix~\ref{app:properties}), it satisfies Property~\ref{propty:cancellation} and goes to $0$ whenever total disparity is $0$.

\item \textbf{Scenarios where $Z$ is already masked in $X_c$ and remains so in the output (with or without additional independent masks):}
This applies to Canonical Example~\ref{cexample:exemption}. We will examine the value of $\uni{(Z,U_a)}{(\hat{Y},U_b)| X_c} $ for different choices of $U_a \subseteq U_X$ to find the minimum. First notice that, if $U_a=\phi$ (and $U_b=U_X$), we have 
\begin{equation}
\uni{(Z,U_a)}{(\hat{Y},U_b)| X_c}=\uni{Z}{(\hat{Y},U_X)| X_c}\\ \overset{(a)}{\geq} \uni{Z}{Z| X_c} > 0
\end{equation}
(see Supporting Derivation 3 in Appendix~\ref{app:properties_supporting}; (a) holds from a monotonicity property of unique information because $Z$ can be obtained from deterministic local operations on $(\hat{Y},U_X)$). This is in agreement with the intuition that $U_{X_1}$ should not belong to the set of candidate masks ($U_b$) that need to be accounted for. Next, if $U_a=U_{X_1}$ (and $U_b=U_{X_2}$), we have $\uni{(Z,U_a)}{(\hat{Y},U_b)| X_c}=0$ (implied from the Markov chain $(Z,U_{X_1})-X_c-(\hat{Y},U_{X_2})$). Since unique information is non-negative, we therefore have 
$\min_{U_a,U_b\text{ s.t. } U_a=U_X\backslash U_b} \uni{(Z,U_a)}{(\hat{Y},U_b)| X_c}=0.$ In essence, the pair $(U_a^*,U_b^*)$ that minimizes $\uni{(Z,U_a)}{(\hat{Y},U_b)| X_c}$ is such that $U^*_a= U_{X_1}$, and the candidate masks that need to be accounted for, i.e., $U^*_b=U_{X_2}.$ 


Now, what happens to the value of $\uni{(Z,U_a)}{(\hat{Y},U_b)| X_c}$ if the accountable mask $U_{X_2}$ is instead in $U_a$? We have 
\begin{align}
\uni{(Z,U_a)}{(\hat{Y},U_b)| X_c}  \overset{(a)}{\geq}  \uni{U_{X_2}}{\hat{Y}| X_c} \overset{(b)}{=}\mut{U_{X_2}}{\hat{Y}},
\end{align}
which is strictly greater than $0$. This agrees with the intuition that $U_{X_2}$ should belong to the candidate set of masks that one should account for ($U_b$). Here (a) holds using two monotonicity properties of unique information (see Properties~\ref{lem:monotonicity_bob} and \ref{lem:monotonicity_alice} in Appendix~\ref{app:pid_properties}) and (b) holds because $\mut{U_{X_2}}{X_c}=0,$ leading to $\rd{U_{X_2}}{(\hat{Y},X_c)}=0.$

\item \textbf{Scenarios where non-exempt statistically visible disparity is present, i.e., $\uni{Z}{\hat{Y} | X_c}>0$:}
This applies to Canonical Example~\ref{cexample:equalized_odds} and Canonical Example~\ref{cexample:synergy}. Because $\uni{(Z,U_a)}{(\hat{Y},U_b)| X_c} \geq \uni{Z}{\hat{Y} | X_c}$ (see proof of Theorem~\ref{thm:satisfythm} in Appendix~\ref{app:properties}), our proposed $M_{NE}^*$ satisfies Property~\ref{propty:synergy}, and is thus non-zero whenever $\uni{Z}{\hat{Y} | X_c}>0$.

\item \textbf{Scenarios where non-exempt masked disparity is present:} This applies to Canonical Example~\ref{cexample:masking} and Canonical Example~\ref{cexample:masking_general}. In the proof of Theorem~\ref{thm:satisfythm} in Appendix~\ref{app:properties}, we show that the proposed measure satisfies Property~\ref{propty:masking} (non-exempt masked disparity), and is thus non-zero for these canonical examples of non-exempt masked disparity.

We note that Canonical Example~\ref{cexample:equalized_odds} is an interesting case where both non-exempt statistically visible disparity and non-exempt masked disparity are present. Here, $M_{NE}^*$ is strictly greater than the non-exempt statistically visible disparity $(\uni{Z}{\hat{Y} | X_c}),$ and this difference can be interpreted as a quantification of the non-exempt masked disparity. First notice that,
\begin{align}
 \uni{Z}{\hat{Y} | X_c}  & \overset{(a)}{=} \mut{Z}{\hat{Y}} = \mathrm{H}(Z)-\mathrm{H}(Z|\hat{Y}) = \mathrm{H}(Z)-\mathrm{H}(Z|U_{X_1}+Z+U_{X_2}) = 1-\frac{3}{4}h_b(\nicefrac{1}{3}) \text{ bits}.
\end{align}
The full derivation is in Supporting Derivation 4 in Appendix~\ref{app:properties_supporting}. Here $h_b(\cdot)$ is the binary entropy function~\cite{cover2012elements} given by $h_b(p)=-p\log_2(p)-(1-p)\log_2(1-p)$ and (a) holds because $\mut{Z}{U_{X_1}}=0$, implying $\rd{Z}{(\hat{Y},U_{X_1})}=0$ as well. Now, we will examine the value of $\uni{(Z,U_a)}{(\hat{Y},U_b)| X_c}$ for different choices of $U_a$ to find the minimum. The full derivation for all of these cases is in Supporting Derivation 4 in Appendix~\ref{app:properties_supporting}. Here, we only mention the key step.
Let $U_a=\phi$ (and $U_b=U_X$). Then, 
\begin{align}
 \uni{(Z,U_a)}{(\hat{Y},U_b) | X_c} 
&= \uni{Z}{(\hat{Y},U_{X_1},U_{X_2}) | U_{X_1}} \overset{(a)}{=} \mut{Z}{U_{X_1}+Z+U_{X_2},U_{X_1},U_{X_2}}  = 1 \text{ bit}.
\end{align}
 Here (a) holds again because $\mut{Z}{U_{X_1}}=0$, implying the redundant information is $0$ as well (using \eqref{eq:pid2} in Section~\ref{subsec:background}). Next, for $U_a=U_{X_2}$ (and $U_b=U_{X_1}$), we have,
\begin{align}
\uni{(Z,U_a)}{(\hat{Y},U_b) | X_c} 
& = \uni{(Z,U_{X_2})}{(\hat{Y},U_{X_1}) | U_{X_1}}  \overset{(a)}{=} \mut{(Z,U_{X_2})}{(\hat{Y},U_{X_1})} = \nicefrac{3}{2} \text{ bit}.
\end{align}
Here (a) holds again because $\mut{(Z,U_{X_2})}{U_{X_1}}=0$, implying the redundant information is $0$ as well. Next, for $U_a=U_{X_1}$ (and $U_b=U_{X_2}$), we have,
\begin{align}
\uni{(Z,U_a)}{(\hat{Y},U_b) | X_c}
&= \uni{(Z,U_{X_1})}{(\hat{Y},U_{X_2}) | U_{X_1}} \overset{(b)}{=} \mut{(Z,U_{X_1})}{(\hat{Y},U_{X_2})\given U_{X_1}} = 1 \text{ bit}.
\end{align}
Here (b) holds because $\syn{(Z,U_{X_1})}{(A,B)}=0$ if one of the terms $A$ or $B$ is a deterministic function of $(Z,U_{X_1})$ (using Lemma~\ref{lem:zero_syn} in Appendix~\ref{app:pid_properties}) and hence unique information becomes equal to the conditional mutual information (see \eqref{eq:pid3} in Section~\ref{subsec:background}).  Lastly, for $U_a=U_{X}$ (and $U_b=\phi$), we have,
\begin{align}
\uni{(Z,U_a)}{(\hat{Y},U_b) | X_c} 
&= \uni{(Z,U_{X_1},U_{X_2})}{\hat{Y} | U_{X_1}}  \overset{(b)}{=} \mut{(Z,U_{X_1},U_{X_2})}{\hat{Y} \given U_{X_1}}= \nicefrac{3}{2} \text{ bit}.
\end{align}
Here (b) holds again using Lemma~\ref{lem:zero_syn} in Appendix~\ref{app:pid_properties}. Thus, we obtain that, 
\begin{equation}
M_{NE}^*= \min_{U_a,U_b\text{ s.t. } U_a=U_X\backslash U_b}\uni{(Z,U_a)}{(\hat{Y},U_b) | X_c} =1 \text{ bit},
\end{equation}
which is strictly greater than $\uni{Z}{\hat{Y} | X_c} =1-\frac{3}{4}h_b(\nicefrac{1}{3}) \text{ bits},$ accounting for both non-exempt statistically visible and non-exempt masked disparities.
\end{itemize}

As noted in Remark~\ref{rem:uniqueness}, our properties are insufficient to arrive at a unique functional form for the measure of non-exempt disparity. It is easiest to understand this issue by contrasting it with Shannon's discussion on entropy as a measure for uncertainty. First, we do not have a counterpart of ``additivity'' of entropy (see Property 3 in Section 6 of~\cite{shannon1948mathematical}) which allows Shannon to arrive at the logarithmic scaling in entropy. Second, we also do not provide an operational meaning for this measure (such as that provided by the lossless source coding theorem for entropy~\cite{cover2012elements}), which further supports the logarithmic scaling.  
 This is a direction of meaningful future work (further functional generalizations discussed in Section~\ref{sec:conclusion}). We note that this is the case with almost all existing measures of fairness (with the notable exceptions of~\cite{liao2019learning,issa2019operational,liao2019robustness}). Exploring more deeply the desirable attributes of the influence of a virtual constituent or proxy of $Z$ that influences the model output and that cannot be attributed to the critical features $X_c$ alone (inspired from the work on proxy-use~\cite{datta2017use}) could be a starting point towards deriving an exact operational meaning for our proposed measure. Nonetheless, our measure does satisfy all six desirable properties, and also captures important nuances of the problem, e.g., both non-exempt masked disparity and non-exempt statistically visible disparity when they are present together (revisited in Section~\ref{sec:measure_decomposition}). Our examples also help us understand the utility and limitations of some existing measures that have some provision for exemptions, as we discuss next.

\subsection{Understanding Existing Measures of Fairness with Provision for Exemptions}
\label{subsec:contrast}

\noindent \textbf{Conditional Statistical Parity:} This definition~\cite{corbett2017,debiasing} is equivalent to $\mut{Z}{\hat{Y}\given X_c}=0$. Therefore, it has similar utility and limitations as Candidate Measure~\ref{candmeas:CMI} ($\mut{Z}{\hat{Y}\given X_c}$). It resolves some limitations of both statistical parity and equalized odds. However, it gives a false positive conclusion in detecting non-exempt disparity in Canonical Example~\ref{cexample:college} (the example of counterfactually fair hiring), where there is no causal influence of $Z$ on $\hat{Y}$ but $\mut{Z}{\hat{Y}\given X_c}>0$. Because this is an observational measure, it is not able to distinguish between scenarios where there is causal influence of $Z$ on $\hat{Y}$ (non-exempt masked disparity in hiring ads; Canonical Example~\ref{cexample:masking}) and where there is not (Canonical Example~\ref{cexample:college}), even if $\mut{Z}{\hat{Y}\given X_c}>0$ in both (elaborated further in relation to our impossibility result in Remark~\ref{rem:impossibility} Section~\ref{sec:impossibility}). It also fails to capture non-exempt masked disparity when the mask arises from the general features as in Canonical Example~\ref{cexample:masking_general}.

\noindent  \textbf{Justifiable Fairness:} A model is said to be justifiably fair~\cite{interventional_fairness} if $\mut{Z}{\hat{Y}\given X_{s}}=0$ for all sets $X_s \subseteq X$ such that $X_c\subseteq X_s.$ This measure addresses several concerns of the previously stated measures, including capturing several forms of non-exempt masked disparity. However, it also gives false positive conclusion in Canonical Example~\ref{cexample:college} (counterfactually fair college admissions), which shows no causal influence of $Z$ on $\hat{Y}$ but $\mut{Z}{\hat{Y}\given X_c}>0$. Because this is an observational measure, it is not able to distinguish between scenarios where there is causal influence of $Z$ on $\hat{Y}$ and where there is not, even if $\mut{Z}{\hat{Y}\given X_c}>0$ in both (elaborated further in relation to our impossibility result in Remark~\ref{rem:impossibility} Section~\ref{sec:impossibility}).

Another limitation of such an individual feature-based conditioning arises when the causal effects of both $Z$ and an independent latent factor are present in the same feature, e.g., different digits of a zip-code, and it is not known in advance whether to condition on the entire zip-code or its sub-portions like the individual digits.

\begin{scenario}[Special Case of Canonical Example~\ref{cexample:masking_general}] Let $X_g=[Z,U_{X_1}]$ be a single multivariate feature, e.g., two bits of a number and $X_c=\phi$, and the output be $\hat{Y}=Z\oplus U_{X_1}$ where $Z$ and $U_{X_1}$ are \iid{} Bern(\nicefrac{1}{2}). 
\end{scenario}

In this example, as long as one treats $X_g$ as a single feature, the model will be deemed \emph{justifiably fair} because $\mut{Z}{\hat{Y}\given X_{g}}=0$ and $\mut{Z}{\hat{Y}}=0$. But, this is a case of non-exempt masked disparity. It is necessary to have an advance suspicion of this possible nature of the true SCM to be able to condition on the two bits of $X_g$ separately. This definition captures the non-exempt masked disparity in this example if the sub-portions of any single feature are defined in advance.

\noindent  \textbf{Path-Specific Counterfactual Fairness:} Path-specific counterfactual fairness~\cite{chiappa2018path} is a purely causal notion of fairness which exempts the causal influence of $Z$ along selected paths. Based on this idea, we proposed Candidate Measure~\ref{candmeas:path_specific} in Section~\ref{subsec:rationale}. However, Canonical Example~\ref{cexample:synergy} (the example of discrimination by unmasking) captures some of its limitations, when there is synergistic or joint information about $Z$ present in $X_c$ and $X_g$ that appears in $\hat{Y}$ that cannot be attributed to any one of them alone. Furthermore, sometimes the influence of $Z$ can cancel along two paths so that the final output has no influence of $Z$, e.g., the example in Remark~\ref{rem:cancellation}. For such scenarios, this measure alone can lead to false positive conclusions about non-exempt disparity, and might need to be used in conjunction with a measure of total disparity (e.g., $\ci{Z}{\hat{Y}}$).


\section{Understanding the overall decomposition}
\label{sec:measure_decomposition}

In this section, we demonstrate how our proposed quantification enables a \emph{non-negative} information-theoretic decomposition of the total disparity $\mut{Z}{(\hat{Y},U_X)}$ into four components, that can be interpreted as: statistically visible non-exempt disparity, statistically visible exempt disparity, masked non-exempt disparity and masked exempt disparity (also see Fig.~\ref{fig:decomp}). 
\begin{figure*}
\centering
\begin{subfigure}[b]{0.40\linewidth}
\centering
\includegraphics[height=4cm]{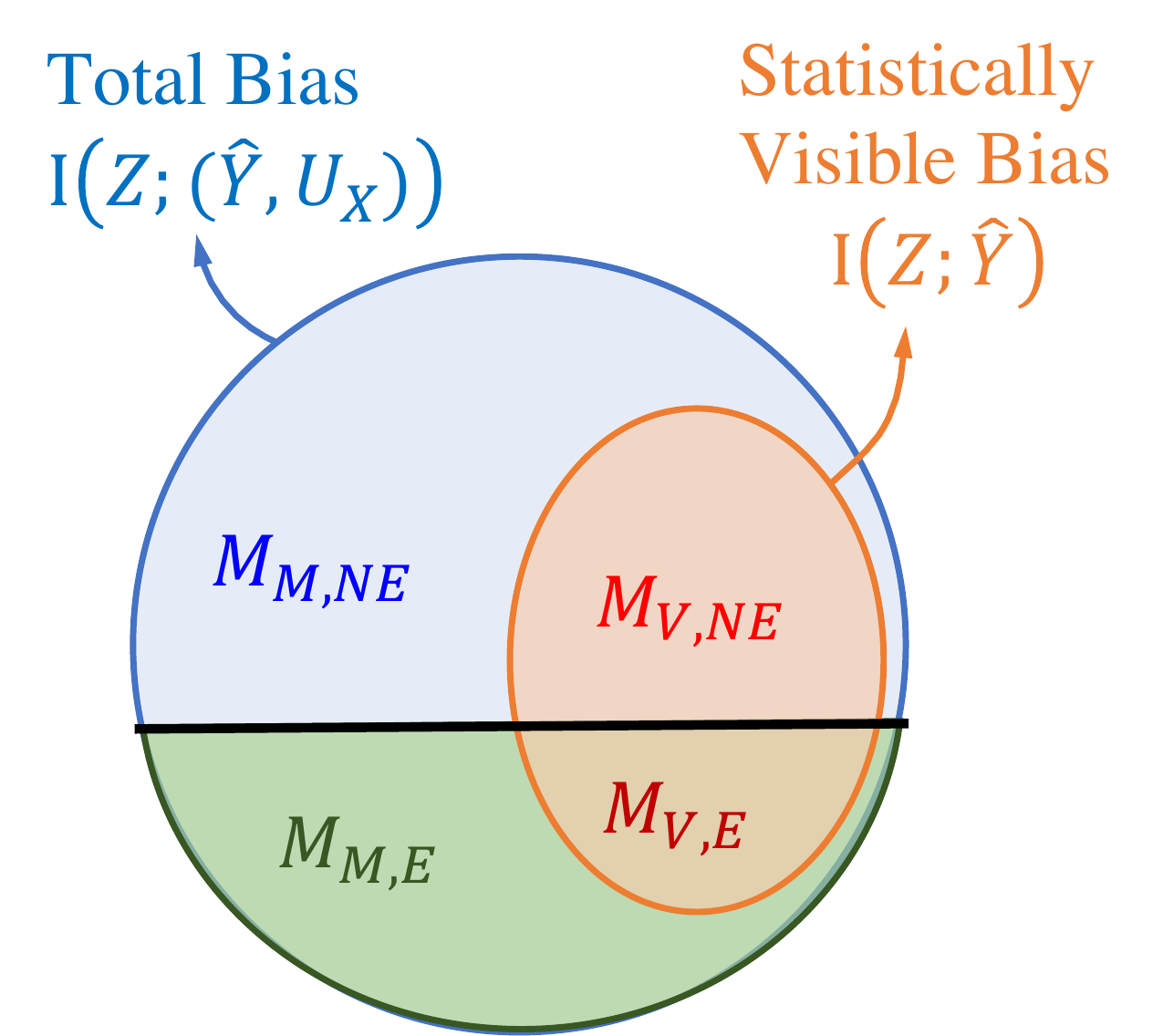}
\caption{Venn diagram representation of overall decomposition}
\end{subfigure}
\hspace{0.1cm}
\begin{subfigure}[b]{0.55\linewidth}
\centering
\includegraphics[height=4cm]{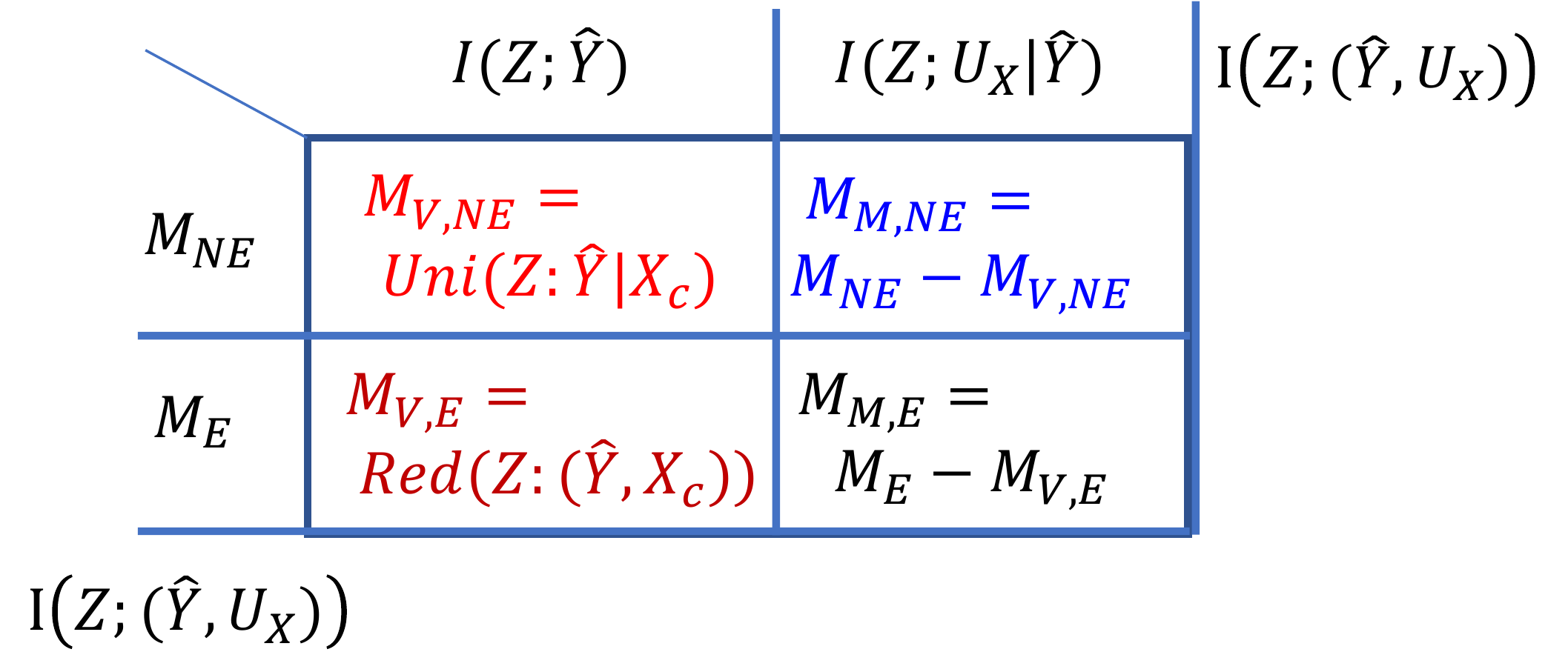}
\caption{Tabular representation of overall decomposition}
\end{subfigure}
\caption{Overall decomposition of total disparity $\mut{Z}{(\hat{Y},X_c)}$ into four non-negative components, namely, non-exempt visible disparity $M_{V,NE}$, exempt visible disparity $M_{V,E}$, non-exempt masked disparity $M_{M,NE}$ and exempt masked disparity $M_{M,E}$. \label{fig:decomp}}
\end{figure*}

\begin{restatable}[Non-negative Decomposition of Total Disparity]{thm}{decompositionthm}  
The total disparity can be decomposed into four components as follows:
\begin{align}
\mathrm{I}(Z; (\hat{Y},U_X))=M_{V,NE} + M_{V,E} + M_{M,NE}+M_{M,E}.
\end{align}
Here $M_{V,NE}=\uni{Z}{\hat{Y}| X_{c}}$ and $M_{V,E}=\rd{Z}{(\hat{Y}, X_{c})}$. These two terms add to form $\mut{Z}{\hat{Y}}$ which is the total statistically visible disparity. Next, $M_{M,NE}= M^*_{NE} - M_{V,NE}$ where $M^*_{NE}$ is our proposed measure of non-exempt disparity (Definition~\ref{defn:non_exempt_disparity}), and $M_{M,E}=\mathrm{I}(Z; \hat{Y},U_X)-\mathrm{I}(Z; \hat{Y})-M_{M,NE}$. All of these components are non-negative.
\label{thm:measure_decomposition}  
\end{restatable}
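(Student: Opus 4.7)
The plan is to first verify that the four-term identity is essentially tautological given the definitions, and then to show non-negativity of each component, with the last term being the only place that requires real work.

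\textbf{Step 1 (algebraic identity).} I would first substitute the definitions and observe that the sum telescopes. By the PID identity \eqref{eq:pid2} applied to $\mathrm{I}(Z;\hat Y)$ with the pair $(\hat Y,X_c)$, we have $\uni{Z}{\hat Y \given X_c} + \rd{Z}{(\hat Y, X_c)} = \mut{Z}{\hat Y}$, so $M_{V,NE}+M_{V,E} = \mut{Z}{\hat Y}$. Adding $M_{M,NE}+M_{M,E}$ and using the definition $M_{M,E} = \mathrm{I}(Z;(\hat Y,U_X)) - \mathrm{I}(Z;\hat Y) - M_{M,NE}$ immediately gives the desired identity.

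\textbf{Step 2 (easy non-negativities).} Non-negativity of $M_{V,NE}$ and $M_{V,E}$ is immediate from the non-negativity of all PID terms. Non-negativity of $M_{M,NE}=M^*_{NE}-\uni{Z}{\hat Y \given X_c}$ is exactly Property~\ref{propty:synergy}, which was established in Theorem~\ref{thm:satisfythm}; so I would simply cite that step from the previous proof.

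\textbf{Step 3 (the only nontrivial part: $M_{M,E}\geq 0$).} The goal is to show
\[
M^*_{NE} \;\leq\; \mathrm{I}(Z;(\hat Y,U_X))-\mathrm{I}(Z;\hat Y)+\uni{Z}{\hat Y\given X_c}.
\]
First, I would upper-bound $M^*_{NE}$ by choosing the specific split $U_a=\phi$, $U_b=U_X$ in the minimization of Definition~\ref{defn:non_exempt_disparity}, giving
$M^*_{NE}\leq \uni{Z}{(\hat Y,U_X)\given X_c}$. Next, I would apply the PID identity \eqref{eq:pid2} to both $\mathrm{I}(Z;(\hat Y,U_X))$ and $\mathrm{I}(Z;\hat Y)$, conditioned/paired against $X_c$, so that the target inequality reduces to
\[
\rd{Z}{((\hat Y,U_X),X_c)} \;\geq\; \rd{Z}{(\hat Y,X_c)}.
\]
This is precisely the monotonicity of redundant information in its first argument: since $\hat Y$ is a deterministic function of $(\hat Y,U_X)$, any information about $Z$ that is simultaneously extractable from $\hat Y$ and from $X_c$ is also simultaneously extractable from $(\hat Y,U_X)$ and from $X_c$. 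I would cite the corresponding monotonicity lemma from the PID properties appendix (this follows from the standard $\rd{}{}$-monotonicity property in \cite{bertschinger2014quantifying,banerjee2018unique}).

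\textbf{Expected main obstacle.} The main obstacle is the non-negativity of $M_{M,E}$, because it involves an upper bound on the minimum-over-splits measure $M^*_{NE}$ coordinated with a PID monotonicity statement for $\rd{}{}$. Everything else is either definitional, a direct non-negativity of PID atoms, or a reuse of Property~\ref{propty:synergy} from Theorem~\ref{thm:satisfythm}. Once the monotonicity of redundant information is invoked, the argument closes cleanly and the decomposition in Fig.~\ref{fig:decomp} is justified as a sum of four non-negative atoms.
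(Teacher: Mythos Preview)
Your proposal is correct. Steps~1 and~2 match the paper exactly (the identity is tautological, $M_{V,NE}$ and $M_{V,E}$ are PID atoms, and $M_{M,NE}\geq 0$ is the content of Property~\ref{propty:synergy} already proved in Theorem~\ref{thm:satisfythm}). The only substantive difference is how you close Step~3. The paper bounds $M^*_{NE}\leq \uni{Z}{(\hat Y,U_X)\given X_c}$ (same as you) but then invokes the \emph{triangle inequality} for unique information (Lemma~\ref{lem:triangle_inequality}), obtaining $\uni{Z}{(\hat Y,U_X)\given X_c}\leq \uni{Z}{(\hat Y,U_X)\given \hat Y}+\uni{Z}{\hat Y\given X_c}$ and then $\uni{Z}{(\hat Y,U_X)\given \hat Y}\leq \mut{Z}{U_X\given \hat Y}$. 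You instead reduce to $\rd{Z}{((\hat Y,U_X),X_c)}\geq \rd{Z}{(\hat Y,X_c)}$. Both routes prove the same inequality $\uni{Z}{(\hat Y,U_X)\given X_c}-\uni{Z}{\hat Y\given X_c}\leq \mut{Z}{U_X\given \hat Y}$, but via different PID lemmas. One small caveat: the appendix does not state a $\mathrm{Red}$-monotonicity lemma explicitly; you would need to derive it in one line from the symmetry of $\mathrm{Red}$ together with \eqref{eq:pid2} and Lemma~\ref{lem:monotonicity_eve} (since $\rd{Z}{(A,X_c)}=\mut{Z}{X_c}-\uni{Z}{X_c\given A}$, enlarging $A$ can only shrink the unique term). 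With that remark added, your argument is a clean alternative to the paper's triangle-inequality route.
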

The decomposition of total disparity into a summation of these four terms is trivial. What remains to be shown is that these four terms are non-negative (details provided in Appendix~\ref{app:measure_decomposition}). 


\noindent \textbf{Interpretation of the four components:} Here $M_{V,NE}=\uni{Z}{\hat{Y}| X_{c}}$ can be interpreted as the non-exempt statistically visible disparity (as also motivated in Section~\ref{subsec:rationale}). The remaining part of the statistically visible disparity (recall Definition~\ref{defn:visible_disparity}), i.e., $\mut{Z}{\hat{Y}}-\uni{Z}{\hat{Y}| X_{c}}=\rd{Z}{(\hat{Y}, X_{c})}$ then becomes the exempt statistically visible disparity ($M_{V,E}$). This also agrees with the intuition that redundant information about $Z$ visible in both $\hat{Y}$ and $Z$ represents the exempt statistically visible disparity. 

Now that we have a measure of non-exempt disparity ($M^*_{NE}$) and a measure of non-exempt statistically visible disparity ($M_{V,NE}$), we can interpret their difference as the non-exempt masked disparity, i.e., $M_{M,NE}= M^*_{NE} - M_{V,NE}=M^*_{NE}-\uni{Z}{\hat{Y}| X_{c}}$. It also agrees with the intuition that non-exempt masked disparity is the part of non-exempt disparity that $\uni{Z}{\hat{Y}| X_{c}}$ alone fails to capture. For instance, recall Canonical Example~\ref{cexample:masking} where $\hat{Y}=Z\oplus U_{X_1}$ and $X_c=U_{X_1}$. Here, $\mut{Z}{\hat{Y}}=0$, implying $M_{V,NE}=\uni{Z}{\hat{Y}| X_{c}}=0$. But, $M^*_{NE}=1$ bit (supporting derivation in Appendix~\ref{app:properties}; see the proof of Theorem~\ref{thm:satisfythm} under Property~\ref{propty:masking}). Therefore, the non-exempt masked disparity $M_{M,NE}= M^*_{NE} - M_{V,NE}=1$ bit here, which is in agreement with our intuition of non-exempt masked disparity.  Lastly, the remaining component $M_{M,E}=\mathrm{I}(Z; \hat{Y},U_X)-\mathrm{I}(Z; \hat{Y})-M_{M,NE}$ is interpreted as the exempt masked disparity. For instance, recall Canonical Example~\ref{cexample:exemption} where $\hat{Y}=X_c=Z+U_{X_1}+U_{X_2}$ with $Z,U_{X_1},U_{X_2}\sim \iid{}$ Bern(\nicefrac{1}{2}). Here, the total disparity $\mathrm{I}(Z; \hat{Y},U_X)=1$ bit, but the statistically visible disparity $\mathrm{I}(Z; \hat{Y})=0.5$ bits which means that there is masked disparity present. Our intuition is that this masked disparity should be entirely exempt because there is no non-exempt disparity in this example. This is in agreement with the value that we obtain, i.e., $M_{M,E}=\mathrm{I}(Z; \hat{Y},U_X)-\mathrm{I}(Z; \hat{Y})-M_{M,NE}=0.5$ bits. This is because $M_{M,NE}$ and $M_{V,NE}$ are both non-negative sub-components of $M^*_{NE}$, and $M^*_{NE}=0$ (from the Markov chain $(Z,U_{X_1},U_{X_2})-X_c-\hat{Y})$).

\begin{rem}[On conditioning to capture masked disparity] 
\label{rem:conditioning1} Conditioning on a random variable $G$ leading to $\mathrm{I}(Z;\hat{Y}\given G)>I(Z;\hat{Y})$ can sometimes detect masked disparity, if conditioning exposes more disparity than what was already visible. For example, $I(Z;\hat{Y}\given X_{c})$ can detect masked disparity if the mask is of the form $g(X_{c})$, e.g., in Canonical Example~\ref{cexample:masking} (a special case of the canonical example of masking with $X_c=U_{X_1}$ and $\hat{Y}=Z\oplus U_{X_1}$). However, conditioning on any random variable $G$ leading to $\mathrm{I}(Z;\hat{Y}\given G)>I(Z;\hat{Y})$  cannot always be interpreted as a case of masked disparity because this can sometimes lead to a false positive conclusion in detecting masked disparity,~e.g., in Canonical Example~\ref{cexample:college} where $\hat{Y}=U_{X_1}$ and $X_c= Z+U_{X_1}$. If $G$ is chosen as $X_c$, then $\mathrm{I}(Z;\hat{Y}\given X_c)>I(Z;\hat{Y})$ even though there is no disparity here at all (recall $\mathrm{CCI}(Z\rightarrow \hat{Y})=0$). For completeness, we therefore include another result here (Lemma~\ref{lem:maskingequivalence}) that clarifies when conditioning can correctly capture masked disparity.
\end{rem}

\begin{restatable}[Conditioning to Capture Masked Disparity]{lem}{maskingequivalence} The following two statements are equivalent:
\begin{itemize}[leftmargin=*,itemsep=0pt,topsep=0pt]
\item Masked disparity $\mathrm{I}(Z; (\hat{Y}, U_X))  - \mathrm{I}(Z;\hat{Y})>0$.
\item  $\exists$ a random variable $G$ of the form $G=g(U_{X})$ such that
$\mathrm{I}(Z;\hat{Y}\given G)- \mathrm{I}(Z;\hat{Y})>0$. 
\end{itemize}
\label{lem:maskingequivalence}
\end{restatable}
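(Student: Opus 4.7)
The plan is to first reduce the masked disparity to a single conditional-mutual-information expression, then dispatch the two implications separately. Using the chain rule together with the assumption $Z \independent U_X$ (so $\mathrm{I}(Z;U_X) = 0$), one has
\begin{equation}
\mathrm{I}(Z;(\hat{Y},U_X)) - \mathrm{I}(Z;\hat{Y}) \;=\; \mathrm{I}(Z;U_X \mid \hat{Y}) \;=\; \mathrm{I}(Z;\hat{Y}\mid U_X) - \mathrm{I}(Z;\hat{Y}),
\end{equation}
so statement (1) is equivalent to $\mathrm{I}(Z;\hat{Y}\mid U_X) > \mathrm{I}(Z;\hat{Y})$.

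The forward direction (1) $\Rightarrow$ (2) is then immediate: take $G = U_X$ (i.e., choose $g$ to be the identity), and the desired inequality in (2) follows directly from the reformulation above.

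For the converse (2) $\Rightarrow$ (1), suppose $G = g(U_X)$ satisfies $\mathrm{I}(Z;\hat{Y}\mid G) > \mathrm{I}(Z;\hat{Y})$. The key observation is that since $G$ is a deterministic function of $U_X$ and $Z\independent U_X$, one gets $\mathrm{I}(Z;(U_X,G)) = \mathrm{I}(Z;U_X) = 0$, which forces both $\mathrm{I}(Z;G) = 0$ and $\mathrm{I}(Z;U_X\mid G) = 0$. Now apply the chain rule to $\mathrm{I}(Z;(\hat{Y},U_X)\mid G)$ in two different ways:
\begin{equation}
\mathrm{I}(Z;\hat{Y}\mid G) + \mathrm{I}(Z;U_X\mid \hat{Y},G) \;=\; \mathrm{I}(Z;U_X\mid G) + \mathrm{I}(Z;\hat{Y}\mid U_X,G) \;=\; \mathrm{I}(Z;\hat{Y}\mid U_X),
\end{equation}
where the last equality uses that $G$ is a function of $U_X$ (so conditioning further on $G$ is redundant). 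Dropping the non-negative term $\mathrm{I}(Z;U_X\mid \hat{Y},G)$ gives $\mathrm{I}(Z;\hat{Y}\mid U_X) \geq \mathrm{I}(Z;\hat{Y}\mid G) > \mathrm{I}(Z;\hat{Y})$, which is statement (1).

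The main obstacle, such as it is, is not a deep technical one but careful bookkeeping: the two chain-rule expansions must be combined with the two consequences of $G = g(U_X)$ and $Z\independent U_X$ (namely $\mathrm{I}(Z;G) = 0$ and $\mathrm{I}(Z;U_X\mid G) = 0$) to collapse the identity to the desired monotonicity. Once that is set up, the conclusion is just the intuitive fact that conditioning on all of $U_X$ can only expose at least as much dependence between $Z$ and $\hat{Y}$ as conditioning on any function $G$ of $U_X$.
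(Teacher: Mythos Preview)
Your proof is correct and follows essentially the same approach as the paper: both reduce statement (1) to $\mathrm{I}(Z;\hat{Y}\mid U_X) > \mathrm{I}(Z;\hat{Y})$ via the chain rule and $Z\independent U_X$, take $G=U_X$ for the forward direction, and for the converse establish the key inequality $\mathrm{I}(Z;\hat{Y}\mid U_X) \geq \mathrm{I}(Z;\hat{Y}\mid g(U_X))$. The paper isolates this last inequality as a separate lemma proved by entropy manipulations, whereas you derive it inline via the two chain-rule expansions of $\mathrm{I}(Z;(\hat{Y},U_X)\mid G)$; these are two presentations of the same argument.
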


Without knowledge of the true causal model, such a $G=g(U_{X})$ may be difficult to determine from observational data alone, because the observational data can be a function of both $Z$ and $U_X$. This serves as the motivation behind our impossibility result on observational measures, that we state next.


\section{Impossibility Result}
\label{sec:impossibility}
\begin{restatable}[Impossibility of Observational Measures]{thm}{impossibilitythm} No observational measure of non-exempt disparity simultaneously satisfies all six desirable properties.
\label{thm:impossibility}
\end{restatable}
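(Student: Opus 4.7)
The plan is to exploit the core tension that an observational measure depends only on the joint distribution of the observables $(Z, X, \hat{Y})$, whereas several of the desired properties---in particular Properties~\ref{propty:cancellation}, \ref{propty:masking}, and \ref{propty:absence}---reference counterfactual quantities such as $\ci{Z}{\hat{Y}}$ and the total disparity $\mut{Z}{(\hat{Y}, U_X)}$ that additionally depend on the latent structure of the SCM. I would therefore construct two SCMs that induce identical observable joints on $(Z, X, \hat{Y})$ yet are forced, via one of these properties, to take different values of $M_{NE}$, giving the contradiction.

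Specifically, I would fix $X_c = \phi$ and $X_g = X_1$, a single binary feature, and consider two SCMs with $Z, U \sim \mathrm{iid}\ \mathrm{Bern}(\tfrac{1}{2})$:
\begin{itemize}
\item SCM~1 (counterfactually fair): $X_1 = U$ and $\hat{Y} = X_1$.
\item SCM~2 (direct leakage): $X_1 = Z \oplus U$ and $\hat{Y} = X_1$.
\end{itemize}
In both SCMs the model $\hat{Y} = r(X_1) = X_1$ is the same; a short check shows $X_1 \sim \mathrm{Bern}(\tfrac{1}{2})$ independently of $Z$ in either case (since $Z \oplus U$ is uniform whenever $U$ is uniform and independent of $Z$), and $\hat{Y} = X_1$ almost surely, so the joints of $(Z, X_1, \hat{Y})$ coincide. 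Yet the two SCMs differ at the latent level: in SCM~1 the structural assignment is $\hat{Y} = h(Z, U_X) = U$, giving $\ci{Z}{\hat{Y}} = 0$ and hence $\mut{Z}{(\hat{Y}, U_X)} = 0$ by Lemma~\ref{lem:cci}; in SCM~2 the assignment is $\hat{Y} = Z \oplus U$, for which $\mut{Z}{(\hat{Y}, U_X)} = \mut{Z}{(Z \oplus U, U)} = 1$ bit (since $(Z \oplus U, U)$ jointly determines $Z$).

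Invoking Property~\ref{propty:absence} closes the argument: since $X_c = \phi$ in both scenarios, any measure satisfying this property must output $M_{NE} = \mut{Z}{(\hat{Y}, U_X)}$, which equals $0$ in SCM~1 and $1$ in SCM~2. But any observational measure, being a function of the joint on $(Z, X, \hat{Y})$ alone, must return the same number on both SCMs. These two demands are incompatible, so no observational $M_{NE}$ can satisfy all six properties simultaneously.

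The hard part will be choosing the pair of SCMs so that the contradiction is clean and simultaneously consistent with all the other properties. A tempting alternative is to pit Property~\ref{propty:cancellation} (mandating $M_{NE} = 0$ whenever $\ci{Z}{\hat{Y}} = 0$, as in Canonical Example~\ref{cexample:college}) against Property~\ref{propty:masking} (mandating $M_{NE} > 0$ in the masking Canonical Example~\ref{cexample:masking}), but this runs into a structural obstruction: the observable joint induced by the masking example forces $X_g = Z$ almost surely, which in turn forces $\ci{Z}{\hat{Y}} > 0$ in any SCM that reproduces that joint with $\hat{Y}$ depending on $X_g$. Setting $X_c = \phi$ and leaning on Property~\ref{propty:absence} sidesteps these structural constraints and produces the contradiction in a single stroke.
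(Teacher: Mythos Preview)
Your argument is correct and, in fact, slightly more economical than the paper's main proof. The paper pits Property~\ref{propty:cancellation} (Zero Influence) against Property~\ref{propty:masking} (Non-Exempt Masked Disparity) using a pair with \emph{nonempty} $X_c$: in one SCM $X_c = Z \oplus U_{X_1}$, $X_g = Z$, $\hat{Y} = X_c \oplus X_g = U_{X_1}$ (so $\ci{Z}{\hat{Y}}=0$ forces $M_{NE}=0$), and in the other $X_c = U_{X_1}$, $X_g = Z$, $\hat{Y} = Z \oplus U_{X_1}$ (the canonical masking example forcing $M_{NE}>0$); both yield the same joint on $(Z,X_c,X_g,\hat{Y})$. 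You instead take $X_c=\phi$ and invoke only Property~\ref{propty:absence}, which pins $M_{NE}$ to the total disparity $\mut{Z}{(\hat{Y},U_X)}$ directly. Your route is more elementary---one feature, one property, and you get distinct numerical values ($0$ versus $1$ bit) rather than merely zero versus nonzero. The paper's route, on the other hand, isolates the tension specifically between masking and counterfactual cancellation, which is arguably the conceptually interesting conflict; it also shows the impossibility persists even when $X_c\neq\phi$.

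One small correction to your last paragraph: the ``structural obstruction'' you describe for the Property~\ref{propty:cancellation}-versus-\ref{propty:masking} route is not real. You worried that $X_g=Z$ a.s.\ forces $\ci{Z}{\hat{Y}}>0$ whenever $\hat{Y}$ depends on $X_g$, but the paper's Example~\ref{example:no_discrim} evades this by letting the two $Z$-dependencies cancel inside the model: with $X_c=Z\oplus U_{X_1}$ and $\hat{Y}=X_c\oplus X_g$, the model uses $X_g$ yet $\hat{Y}=U_{X_1}$ has zero counterfactual influence. So that alternative route does work; you simply found a cleaner one.
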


\begin{proof}[Proof of Theorem~\ref{thm:impossibility}]
Observe the two examples here:

\begin{example}[A Case of No Disparity] Let $X_{c}=Z \oplus U_{X_1}$, $X_{g}=Z$ and $\hat{Y}=X_{c} \oplus X_{g}= U_{X_1}$ where $Z$ and $U_{X_1}$ are both independent and identically distributed as Bern(\nicefrac{1}{2}).
\label{example:no_discrim}
\end{example}

\begin{example}[A Case of Non-Exempt Disparity] Let $X_{c}= U_{X_1}$, $X_{g}=Z$ and $\hat{Y}=X_{c} \oplus X_{g}= Z \oplus U_{X_1}$ where $Z$ and $U_{X_1}$ are both independent and identically distributed as Bern(\nicefrac{1}{2}).
\label{example:masked_discrim}
\end{example}

In Example~\ref{example:no_discrim}, the influences of $Z$ cancel each other and there is no total disparity. So, the non-exempt disparity should be zero by Property~\ref{propty:cancellation} (Zero Influence). However, Example~\ref{example:masked_discrim} is the canonical example of non-exempt masked disparity where there is non-exempt disparity present, and hence the non-exempt disparity should be non-zero by Property~\ref{propty:masking} (Non-Exempt Masked Disparity). But, for both of these examples, the joint distribution of the observables $(Z,X_{c},X_{g},\hat{Y})$ is the same which means that no observational measure can distinguish between these two cases. This proves the result.
\end{proof}

\begin{rem}[Alternative Examples] In fact, we can show that no observational measure can satisfy Property~\ref{propty:masking}. Consider a scenario of no disparity given by: $X_c=\phi$, $X_g=(Z\oplus U_{X_1}, Z)$ and $\hat{Y}=U_{X_1}$. For this example, the Markov chain $Z-X_c-(\hat{Y},U_{X_1})$ holds implying that $M_{NE}=0$ by Property~\ref{propty:masking}. Alternatively, consider a scenario of non-exempt disparity given by: $X_c=\phi$, $X_g=(U_{X_1}, Z)$ and $\hat{Y}=Z \oplus U_{X_1}$ which is again a variant of the canonical example of non-exempt masked discrimination. Let $Z$ and $U_{X_1}$ be independent and identically distributed as Bern(\nicefrac{1}{2}). Then, no purely observational measure can distinguish between these two scenarios because $(Z,X_c,X_g,\hat{Y})$ have the same joint distribution.
\end{rem}

\begin{rem}[Revisiting Conditional Statistical Parity and Justifiable Fairness]
For both Examples~\ref{example:no_discrim} and \ref{example:masked_discrim}, we observe that conditional mutual information $\mut{Z}{\hat{Y}\given X_c}>0$. Because $\mut{Z}{\hat{Y}\given X_c}$ is an observational measure, it fails to distinguish between whether there is causal influence of $Z$ or not in $\hat{Y}$. Existing observational definitions of fairness, e.g., conditional statistical parity and justifiable fairness would also not be able to distinguish between these two examples. One needs counterfactual measures to be able to distinguish between them, such as the counterfactual measure proposed in this work. 
\label{rem:impossibility}
\end{rem}

Nevertheless, because counterfactual measures are difficult to realize in practice, we examine the following observational measures of non-exempt disparity that satisfy only a few of Properties 1-6. 


\section{Observational Relaxations of our Proposed Counterfactual Measure: Utility and Limitations}
\label{sec:observational}

In this section, we propose three observational measures of non-exempt disparity and discuss their utility and limitations.

\noindent \textbf{Observational Measure 1.} $M_{NE}=\uni{Z}{\hat{Y}| X_{c}}.$

\noindent \textbf{Utility:} This measure satisfies several desirable properties as stated here: 
\begin{restatable}{lem}{unisatisfyproperties}[Fairness Properties of $\uni{Z}{\hat{Y} | X_{c}}$] The measure $\uni{Z}{\hat{Y} | X_{c}}$ satisfies Properties \ref{propty:cancellation}, \ref{propty:synergy}, \ref{propty:nonincreasing}, and \ref{propty:complete_exemption}.
\label{lem:uni_satisfy_properties}
\end{restatable}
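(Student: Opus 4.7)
The plan is to verify each of the four properties in turn, making heavy use of the PID identities \eqref{eq:pid1}--\eqref{eq:pid3} and the monotonicity properties of unique information from \cite{banerjee2018unique} that are recorded in Appendix~\ref{app:pid_properties}.

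For Property~\ref{propty:cancellation}, I would chain the two PID inequalities that express unique information as a sub-volume: from \eqref{eq:pid2} and non-negativity of all PID terms,
\begin{equation}
\uni{Z}{\hat{Y}\mid X_c} \;\leq\; \mut{Z}{\hat{Y}} \;\leq\; \mut{Z}{(\hat{Y},U_X)}.
\end{equation}
If $\mathrm{CCI}(Z\to\hat{Y})=0$ then by Lemma~\ref{lem:cci} the right-hand side is zero, forcing $\uni{Z}{\hat{Y}\mid X_c}=0$. (Alternatively, one can quote Lemma~\ref{lem:uni_cci} directly.) Property~\ref{propty:synergy} is immediate because $M_{NE}$ is defined to equal $\uni{Z}{\hat{Y}\mid X_c}$, so $\uni{Z}{\hat{Y}\mid X_c}>0 \Rightarrow M_{NE}>0$ holds tautologically.

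For Property~\ref{propty:complete_exemption}, suppose $X_c=X$ and $X_g=\phi$. Since $\hat{Y}=r(X)$ is a deterministic function of $X=X_c$, we have $H(\hat{Y}\mid X_c)=0$, hence $\mut{Z}{\hat{Y}\mid X_c}=0$. Combining with \eqref{eq:pid3} and non-negativity of $\syn{Z}{(\hat{Y},X_c)}$,
\begin{equation}
0 \;\leq\; \uni{Z}{\hat{Y}\mid X_c} \;\leq\; \mut{Z}{\hat{Y}\mid X_c} \;=\; 0,
\end{equation}
so $\uni{Z}{\hat{Y}\mid X_c}=0$, as required.

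The main obstacle — the only step that is not a one-line PID manipulation — is Property~\ref{propty:nonincreasing}, which asks that $\uni{Z}{\hat{Y}\mid X_c}$ not increase when a feature $X_j\in X_g$ is moved into $X_c$, so that the new critical tuple is $X_c'=(X_c,X_j)$. The plan is to invoke the monotonicity property of unique information in its second argument from \cite[Lemma 32]{banerjee2018unique} (the same lemma the authors cite in their proof sketch of Theorem~\ref{thm:satisfythm}): enlarging the ``side information'' $B$ in $\uni{Z}{A\mid B}$ can only reduce unique information, i.e.\ $\uni{Z}{A\mid (B,C)}\leq \uni{Z}{A\mid B}$ for any $C$. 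Applying this with $A=\hat{Y}$, $B=X_c$, $C=X_j$ yields
\begin{equation}
\uni{Z}{\hat{Y}\mid X_c'} \;=\; \uni{Z}{\hat{Y}\mid (X_c,X_j)} \;\leq\; \uni{Z}{\hat{Y}\mid X_c},
\end{equation}
which is exactly Property~\ref{propty:nonincreasing}. The subtlety I would double-check is the exact statement of the monotonicity lemma being cited, since unique information is only monotone in its second argument under the specific PID axiomatization of \cite{bertschinger2014quantifying}; this is the axiomatization already adopted in the paper (Definition~\ref{defn:uni}), so the application is direct. Combining the four verifications establishes the lemma.
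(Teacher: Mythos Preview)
Your proposal is correct and follows essentially the same route as the paper's own proof: Property~\ref{propty:cancellation} via $\uni{Z}{\hat{Y}\mid X_c}\leq \mut{Z}{\hat{Y}}$ and Lemma~\ref{lem:cci}, Property~\ref{propty:synergy} as a tautology, Property~\ref{propty:nonincreasing} via the monotonicity lemma \cite[Lemma~32]{banerjee2018unique} (Lemma~\ref{lem:monotonicity_eve} in the appendix), and Property~\ref{propty:complete_exemption} via $\mut{Z}{\hat{Y}\mid X}=0$ when $\hat{Y}=r(X)$. The only cosmetic difference is that the paper states $\mathrm{CCI}(Z\to\hat{Y})=0\Rightarrow \mut{Z}{\hat{Y}}=0$ in one step rather than passing through $\mut{Z}{(\hat{Y},U_X)}$, but this is the same argument.
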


The proof is in Appendix~\ref{app:observational}. Importantly, note that, $\uni{Z}{\hat{Y} | X_{c}}$ satisfies Property~\ref{propty:cancellation} which $\mut{Z}{\hat{Y}\given  X_{c}}$ does not (recall Canonical Example~\ref{cexample:college}). Thus, $\uni{Z}{\hat{Y} | X_{c}}$ does not give false positive conclusions in detecting non-exempt disparity if a model is counterfactually fair. 

This measure may be preferred over our other observational measures when one wants to prioritize avoiding false positive quantification of non-exempt disparity when a model is counterfactually fair. Recall that, $\uni{Z}{\hat{Y} | X_{c}}$ is a measure of non-exempt, statistically visible disparity. \emph{It correctly captures the entire non-exempt disparity when non-exempt masked disparity is absent.}

\noindent \textbf{Limitations:}
It does not quantify any non-exempt masked disparity (Property~\ref{propty:masking}). This is because $\uni{Z}{\hat{Y} | X_{c}}$ is a sub-component of the statistically visible disparity $\mut{Z}{\hat{Y}}$, and hence always goes to $0$ whenever the statistically visible disparity $\mut{Z}{\hat{Y}}=0$ (recall Canonical Examples~\ref{cexample:masking} and \ref{cexample:masking_general}). It also does not satisfy Property~\ref{propty:absence} because when $X_c=\phi$, we have $\uni{Z}{\hat{Y}| X_{c}}=\mut{Z}{\hat{Y}}$, which is only the statistically visible disparity but not the total disparity in a counterfactual sense (i.e., $\mut{Z}{\hat{Y},U_X}$).

\noindent \textbf{Observational Measure 2.} $M_{NE}=\mut{Z}{\hat{Y}\given  X_{c}}.$

\noindent \textbf{Utility:}
This measure also satisfies several desirable properties, as stated here:

\begin{restatable}{lem}{cmisatisfyproperties}[Fairness Properties of $\mathrm{I}(Z;\hat{Y} \given  X_{c})$] The measure $\mathrm{I}(Z;\hat{Y} \given  X_{c})$ satisfies Properties~\ref{propty:synergy} and \ref{propty:complete_exemption}. 
\label{lem:conMI1_satisfy_properties}
\end{restatable}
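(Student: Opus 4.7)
The plan is to verify each property separately using the identities and basic inequalities already laid out in the preliminaries on PID and the system model. Both properties are concise enough that the proof should be almost immediate once we invoke the right structural facts.

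For Property~\ref{propty:synergy}, I would start from the PID identity \eqref{eq:pid3}, namely
\begin{equation*}
\mut{Z}{\hat{Y} \given X_c} \;=\; \uni{Z}{\hat{Y}| X_c} + \syn{Z}{(\hat{Y}, X_c)}.
\end{equation*}
Since every PID term is non-negative, the synergy term contributes $\geq 0$, so $\mut{Z}{\hat{Y}\given X_c} \geq \uni{Z}{\hat{Y}| X_c}$. Hence if $\uni{Z}{\hat{Y}| X_c} > 0$, then $\mut{Z}{\hat{Y}\given X_c} > 0$, which is exactly Property~\ref{propty:synergy}. This step requires nothing beyond the PID decomposition already stated in Section~\ref{subsec:background}.

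For Property~\ref{propty:complete_exemption}, I would use the system-model assumption that $\hat{Y} = r(X)$ is a deterministic function of the full feature tuple $X$. When $X_c = X$ (and $X_g = \phi$), $\hat{Y}$ is a deterministic function of $X_c$, so $H(\hat{Y} \mid X_c) = 0$. Therefore
\begin{equation*}
0 \;\leq\; \mut{Z}{\hat{Y} \given X_c} \;\leq\; H(\hat{Y} \mid X_c) \;=\; 0,
\end{equation*}
which gives $\mut{Z}{\hat{Y} \given X_c} = 0$, establishing Property~\ref{propty:complete_exemption}.

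Neither step poses a real obstacle. The only mild subtlety is being explicit that Properties~\ref{propty:cancellation}, \ref{propty:masking}, \ref{propty:absence}, and \ref{propty:nonincreasing} are not claimed here, and indeed are not satisfied in general; the canonical examples already discussed in Section~\ref{subsec:rationale} (e.g., Canonical Example~\ref{cexample:college} witnessing failure of Property~\ref{propty:cancellation}) can be cited briefly to justify why the lemma only asserts Properties~\ref{propty:synergy} and \ref{propty:complete_exemption}.
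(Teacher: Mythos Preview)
Your proposal is correct and essentially matches the paper's proof: for Property~\ref{propty:synergy} the paper also invokes \eqref{eq:pid3} plus non-negativity of PID terms, and for Property~\ref{propty:complete_exemption} it uses that $\hat{Y}$ is a deterministic function of $X$ (phrased as the Markov chain $Z-X-\hat{Y}$, which is equivalent to your entropy argument).
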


The proof is in Appendix~\ref{app:observational}. We note that, while it does not satisfy Property~\ref{propty:masking} in its entirely, it does capture some scenarios of non-exempt masked disparity. E.g., it can detect the non-exempt masked disparity in Canonical Example~\ref{cexample:masking} which $\uni{Z}{\hat{Y}| X_{c}}$ is not able to, even though they both fail to detect the non-exempt masked disparity in Canonical Example~\ref{cexample:masking_general}. In general, $\mathrm{I}(Z;\hat{Y} \given  X_{c})$ can detect non-exempt masked disparity when the ``mask'' is entirely derived from the critical features, i.e., $G=g(X_{c})$.

\noindent \textbf{Limitations:} It can sometimes lead to false positive conclusion about non-exempt disparity, e.g., in Canonical Example~\ref{cexample:college} (does not satisfy Property~\ref{propty:cancellation}). It also does not satisfy Property~\ref{propty:nonincreasing} because clearly $\mut{Z}{\hat{Y}\given  X_{c}}$ may be greater or less that $\mut{Z}{\hat{Y}}$ (recall Canonical Example~\ref{cexample:masking}). It also does not satisfy Property~\ref{propty:absence} because when $X_c=\phi$, we have $\mut{Z}{\hat{Y}\given  X_{c}}=\mut{Z}{\hat{Y}}$, which is only the statistically visible disparity but not the total disparity in a counterfactual sense (i.e., $\mut{Z}{\hat{Y},U_X}$).

\noindent \textbf{Observational Measure 3.} $M_{NE}=\mut{Z}{\hat{Y} \given  X_{c}, X'}$ where $X'$  consists of certain features in $X_{g}$.

\noindent \textbf{Utility and Limitations:}
This is somewhat of a heuristic relaxation that only satisfies Property~\ref{propty:complete_exemption}. However, while it does not satisfy any of the other properties in their entirety, it can still lead to the desirable quantification in several examples where the previous two measures may not be successful if $X'$ is chosen appropriately. For example, recall Canonical Example~\ref{cexample:masking_general} where $\hat{Y}=Z \oplus U_{X_1}$ with $X_g=(Z,U_{X_1})$. With some partial knowledge or assumption about the SCM, if we choose $X'=U_{X_1}$, then $\mut{Z}{\hat{Y} \given  X_{c}, X'}>0$ for this example even though 
$\mut{Z}{\hat{Y}\given  X_{c}}=0$. Thus, this measure is able to detect some more scenarios of non-exempt masked disparity that $\mut{Z}{\hat{Y}\given  X_{c}}$ cannot, i.e., when the mask is of the form $G=g(X_{c},X')$. It can also sometimes avoid false positive quantification of non-exempt disparity if $X'$ is chosen appropriately, e.g., in Canonical Example~\ref{cexample:college} if $X'=U_{X_1}$. Thus, under partial knowledge or assumption about the true SCM, this measure can correctly capture the non-exempt disparity in many scenarios where the previous two measures may not be successful.





Lastly, one may also consider using various combinations of these measures, e.g., $\uni{Z}{\hat{Y}| X_{c}} + \mut{Z}{\hat{Y}\given X'}$, or $\mut{Z}{\hat{Y}\given X_{c}} + \mut{Z}{\hat{Y}\given X'}$, or $\uni{Z}{\hat{Y}| X_{c}} + \syn{Z}{(\hat{Y}, X')},$ that can also approximate our proposed measure in several scenarios if $X'$ is chosen appropriately based on partial knowledge or assumptions about the true SCM.

\section{Case Studies Demonstrating Practical Application in Auditing and Training}
\label{sec:case_study}

Here, we discuss some case studies to demonstrate application of our proposed techniques on both simulated and real data.

\subsection{Case Study on Simulated Data}
\label{subsec:case_study_sim}

We present our case study on simulated data first. The benefit of using simulated data is that the true causal model (SCM) is known. The knowledge of the SCM enables the following: (i) we can exactly compute our proposed causal measure of non-exempt disparity ($M_{NE}^*$), as well as, demonstrate the decomposition of total disparity into four components during auditing a pre-trained model; (ii) we can also compare the performance of different observational measure of non-exempt disparity when used as a regularizer during training. Assuming the SCM is not available during training (but available during auditing), we examine the tradeoff between accuracy and the actual causal non-exempt disparity ($M_{NE}^*$) when each of these observational measures are used as a regularizer, under various experimental scenarios.

In this case study, an algorithm has to decide whether to show ads for a job using a score generated from internet activity. We will consider four different experimental scenarios, each with a known SCM. To demonstrate application in \textbf{auditing}, we first train a Deep-Neural-Network (DNN) model with no fairness regularizer for each of the four scenarios, and then use our techniques for computing the total disparity ($\mut{Z}{(\hat{Y},U_X)}$), as well as, decompose the total disparity into four components, namely, visible and masked, exempt and non-exempt disparities. We use the \texttt{dit}~\cite{dit} package to compute all of these quantities from the empirical distribution of the test data after the model has been trained, and after appropriately discretizing continuous random variables as required. Note that, to compute unique information, the package solves an optimization problem~\cite{dit}.

To demonstrate application in \textbf{training}, we train a DNN model $\hat{Y}=h(X)$ for classification with different \textbf{observational regularizers} and examine the tradeoff between accuracy and the \textit{actual} non-exempt disparity (as measured by our causal measure of non-exempt disparity $M_{NE}*$), when each of these observational regularizers are used. For simplicity and ease of computation during training, we rely on simple correlation-based estimates (inspired from \cite{kamishima2012fairness}) of mutual information and conditional mutual information. Further, we introduce a novel regularizer for approximating unique information, leveraging a Gaussian approximation for PID in \cite{barrett2015exploration}. We train using the following loss functions:

\begin{itemize}[leftmargin=*]

\item Loss~$L_1$ (\textbf{Statistical Parity} using Mutual Information regularizer $\mut{Z}{\hat{Y}}$ (denoted as MI)):  $$\min_{w,b} L_{\text{Cross Entropy}}(Y,\hat{Y}) + \lambda \widetilde{\mathrm{I}}(Z;\hat{Y}),$$ where (i) $\lambda$ is the regularization constant; and (ii) $\widetilde{\mathrm{I}}(Z;\hat{Y})=-\frac{1}{2}\log{{(1-\rho^2_{Z,\hat{Y}})}}$ is an approximate expression of mutual information where $\rho_{Z,\hat{Y}}$ is the correlation between $Z$ and $\hat{Y}$. This approximation is exact if $Z$ and $\hat{Y}$ are jointly Gaussian~\cite{cover2012elements}.

\item Loss $L_2$ (Proposed Unique Information-based (observational) regularizer $\uni{Z}{\hat{Y}|X_c}$ (denoted as Uniq)): $$\min_{w,b} L_{\text{Cross Entropy}}(Y,\hat{Y}) {+} \lambda \widetilde{\mathrm{Uni}}(Z:\hat{Y}|X_{c}),$$ where  $\widetilde{\mathrm{Uni}}(Z:\hat{Y}| X_{c})$ is given by: 
\begin{align}
\widetilde{\mathrm{Uni}}(Z:\hat{Y}|X_{c})&= \widetilde{\mathrm{I}}(Z;\hat{Y})-\min\{\widetilde{\mathrm{I}}(Z;\hat{Y}),\widetilde{\mathrm{I}}(Z;X_c) \} \nonumber \\
&= -\frac{1}{2}\log{{(1-\rho^2_{Z,\hat{Y}})}}  - \min\{ -\frac{1}{2}\log{{(1-\rho^2_{Z,\hat{Y}})}}, -\frac{1}{2}\log{{(1-\rho^2_{Z,X_c})}} \} .
\end{align}
We note that, in general, $\mathrm{Uni}(Z:\hat{Y}| X_{c} \geq \mut{Z}{\hat{Y}}- \min \{\mut{Z}{\hat{Y}},\mut{Z}{X_c} \} ,$ where the lower bound is tight if all of the random variables are jointly Gaussian~\cite{barrett2015exploration}. Similarly, the correlation-based approximations are also exact under Gaussian assumptions~\cite{cover2012elements}.

\item Loss $L_3$ (Proposed Conditional Mutual Information regularizer $\mut{Z}{\hat{Y}|X_c}$ (denoted as CMI)): $$\min_{w,b} L_{\text{Cross Entropy}}(Y,\hat{Y}) {+} \lambda \widetilde{\mathrm{I}}(Z;\hat{Y}\given X_{c}),$$ where again (i) $\lambda$ is the regularization constant; and (ii) $\widetilde{\mathrm{I}}(Z;\hat{Y}\given X_{c})$ is given by: 
\begin{align}
\widetilde{\mathrm{I}}(Z;\hat{Y}\given X_{c}) &= {\sum_{i=1}^n}{\Pr}(X_{c} \in \text{Bin }i)  \widetilde{\mathrm{I}}(Z;\hat{Y}\given X_{c} \in \text{Bin }i) = {-}\frac{1}{2}\sum_{i=1}^n\Pr(X_{c} \in \text{Bin }i) \log{{(1-\rho^2_{Z,\hat{Y},i}) }},
\end{align}
where the range of $X_c$ is divided into $n$ discrete bins, and $\rho_{Z,\hat{Y},i }$ is the conditional correlation of $\hat{Y}$ and $Z$ given $X_{c}$ is in the $i$-th discrete bin.

\item Loss $L_4$ (Another Proposed Heuristic regularizer $\mut{Z}{\hat{Y}|X_c,X'}$ (denoted as CMI')): $$\min_{w,b} L_{\text{Cross Entropy}}(Y,\hat{Y}) {+} \lambda \widetilde{\mathrm{I}}(Z;\hat{Y}\given X_{c},X'),$$ where again (i) $\lambda$ is the regularization constant; and (ii) $\widetilde{\mathrm{I}}(Z;\hat{Y}\given X_{c},X')$ is given by: 
\begin{align}
\widetilde{\mathrm{I}}(Z;\hat{Y}\given X_{c},X')
&= {\sum_{i=1}^n}{\Pr}(X_{c},X' \in \text{Bin }i)  \widetilde{\mathrm{I}}(Z;\hat{Y}\given X_{c},X' \in \text{Bin }i) \nonumber \\
&= {-}\frac{1}{2}\sum_{i=1}^n\Pr(X_{c},X' \in \text{Bin }i) \log{{(1-\rho^2_{Z,\hat{Y},i}) }},
\end{align}
where the range of the joint random variables $(X_c,X')$ is divided into $n$ discrete bins, and $\rho_{Z,\hat{Y},i }$ is the conditional correlation of $\hat{Y}$ and $Z$ given $(X_{c},X')$ is in the $i$-th discrete bin. Note that, here $X'$ consists of certain features in $X_g$, as discussed in Section~\ref{sec:observational} (Observational Measure 3).

\item Loss $L_5$ (\textbf{Equalized Odds} using regularizer $\mut{Z}{\hat{Y}|Y}$ (denoted as EO)): $$\min_{w,b} L_{\text{Cross Entropy}}(Y,\hat{Y}) {+} \lambda \widetilde{\mathrm{I}}(Z;\hat{Y}\given Y),$$ where again (i) $\lambda$ is the regularization constant; and (ii) $\widetilde{\mathrm{I}}(Z;\hat{Y}\given Y)$ is given by: 
\begin{align}
\widetilde{\mathrm{I}}(Z;\hat{Y}\given Y) &= {\sum_{i=1}^n}{\Pr}(Y \in \text{Bin }i)  \widetilde{\mathrm{I}}(Z;\hat{Y}\given Y \in \text{Bin }i) \nonumber \\
& = {-}\frac{1}{2}\sum_{i=1}^n\Pr(Y \in \text{Bin }i) \log{{(1-\rho^2_{Z,\hat{Y},i}) }}.
\end{align}The range of $Y$ is divided into $n$ discrete bins, and $\rho_{Z,\hat{Y},i }$ is the correlation of $\hat{Y}$ and $Z$ given $Y$ is in the $i$-th bin.  

\end{itemize}

Now, we discuss the four scenarios (SCMs) and the corresponding results.

\noindent \textbf{Experimental Scenario 1 (All four disparities present):} The decision of showing ads for a reporter's job requiring English proficiency, is based on three features $X=(X_1,X_2,X_3)$: (i) $X_1$: a score based on online writing samples  (critical feature $X_c=X_1$); (ii) $X_2$: a score based on browsing history, e.g., interest in English websites as compared to websites of other languages; and (iii) $X_3$: a preference score based on geographical proximity. $Z$ is a protected attribute denoting whether a person is a native English speaker or not, distributed as Bern(\nicefrac{1}{2}). Suppose that the true SCM is as follows:
$X_1=Z + U_{X_1}$, $X_2=Z+U_{X_2}$, and $X_3=U_{X_3},$ where $U_{X_1},U_{X_2},U_{X_3}\sim \iid{}$ $\mathcal{N}(0,\sigma^2)$ denote latent writing ability, interests, and geographical proximity, respectively. The true labels, based on previous candidates, are given by $ Y = \mathbbm{1}(X_1 + X_2 + X_3 \geq 1) $. Here, the critical feature $X_{c}=X_1$ and the general features are $X_{g}= (X_2,X_3)$. The results are provided in Fig.~\ref{fig:audit1} and Fig.~\ref{fig:training1}. \\



\noindent \textbf{Experimental Scenario 2 (Masking by critical feature):} The decision of showing ads for an editor's job in a newspaper company is based on four features: (i) $X_1$: a relevant score based on online writing samples (critical feature $X_c=X_1$); (ii) $X_2$: a score based on browsing history, e.g., awareness of current events; (iii) $X_3$: a score based on proofreading and reviewing experience; and (iv) $X_4$: a preference score based on activity in social media, e.g., political and ideological alignment with the newspaper company. Let the protected attribute $Z$ be political inclination, distributed as Bern(\nicefrac{1}{2}). Suppose the true SCM is as follows: $X_1= U_{X_1} + U_{X_3}$, $X_2=U_{X_2}$,  $X_3=U_{X_3}$, and $X_4=U_{X_2}-Z$, where $U_{X_1}\sim$ Bern(\nicefrac{1}{2}) denotes if the writing ability is above a threshold, and $U_{X_2},U_{X_3}\sim \iid{}$ $\mathcal{N}(0,\sigma^2)$ denote interests and proofreading skill-level, respectively. Suppose that the historic true labels are given by $Y = \mathbbm{1}((X_1+X_4)^2 \geq 0.5)$, i.e., primarily high online-writing scores and high social-media-based-preference scores, but to appear ``facially neutral'' with respect to political inclination, the ad is also shown to candidates with low social-media-based-preference scores and low writing scores for whom the ad may be irrelevant. Here, the critical feature $X_{c}=X_1$ and the general features are $X_{g}= (X_2,X_3,X_4)$. The results are provided in Fig.~\ref{fig:audit2} and Fig.~\ref{fig:training2}.\\

\begin{figure*}
\begin{subfigure}[b]{0.5\linewidth}
\centering
\includegraphics[height=3.4cm]{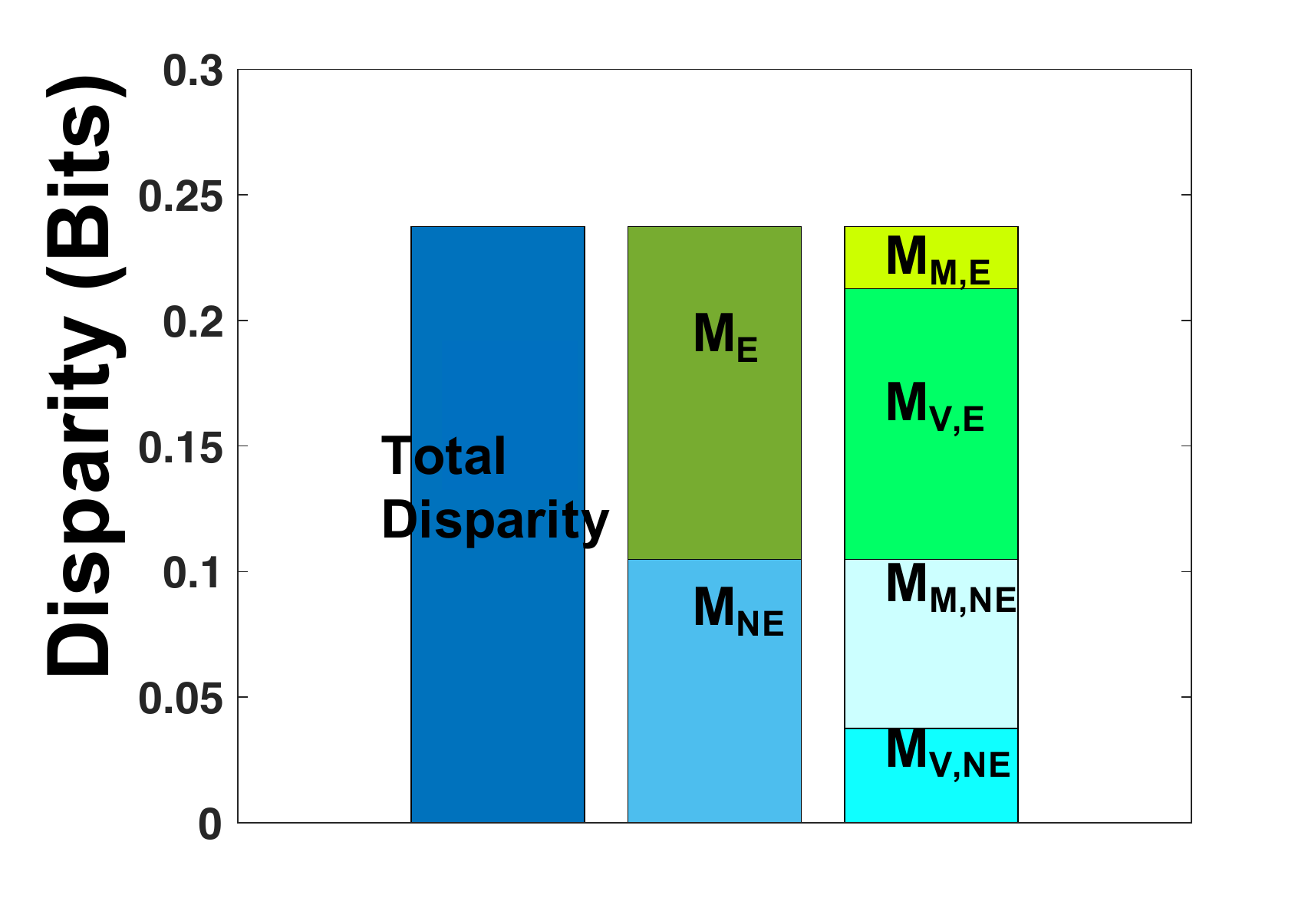}
\subcaption{Experimental Scenario 1 (All four disparities present)\label{fig:audit1} }
\end{subfigure}
\begin{subfigure}[b]{0.5\linewidth}
\centering
\includegraphics[height=3.4cm]{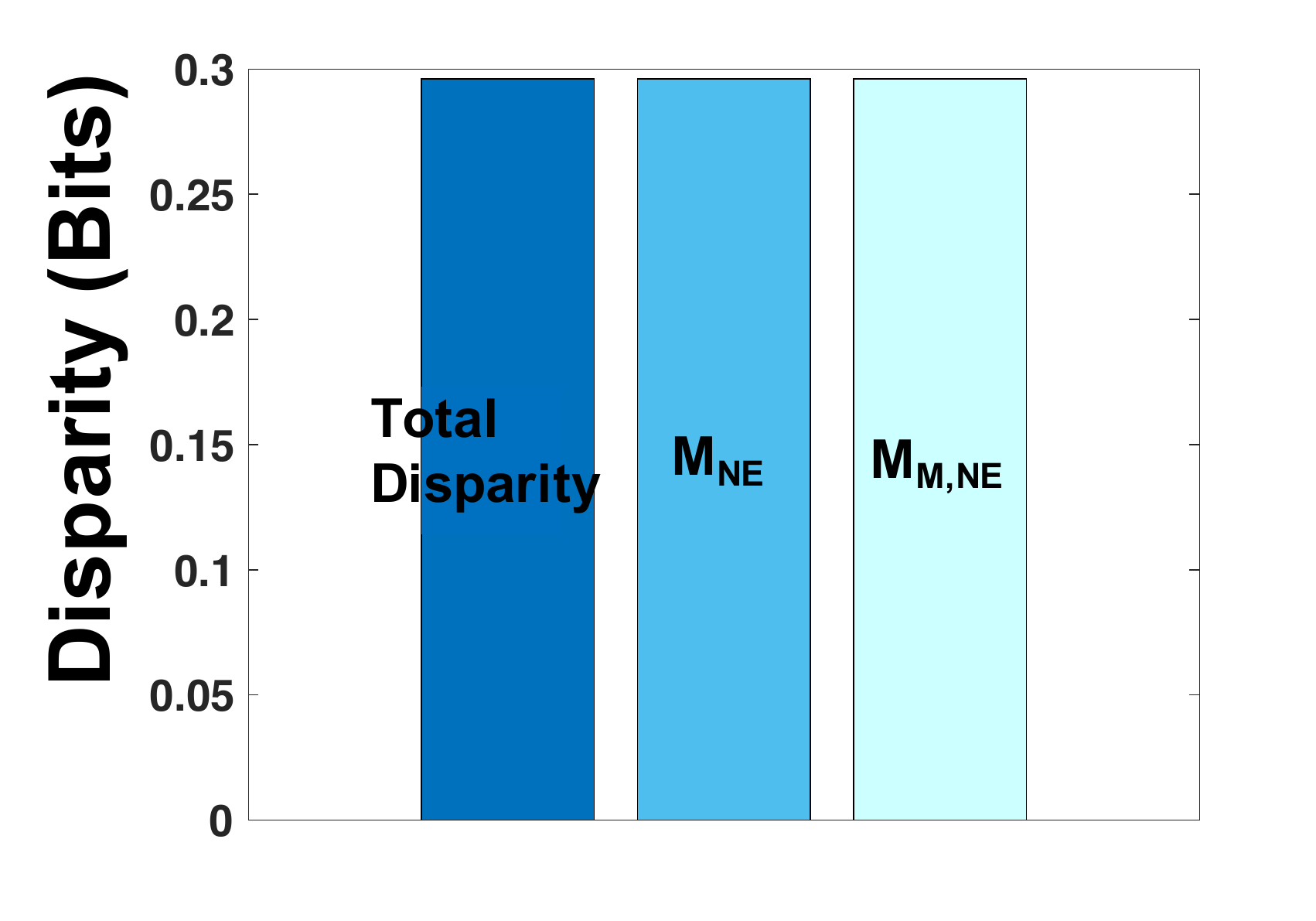}
\caption{Experimental Scenario 2 (Masking by critical feature)\label{fig:audit2} }
\end{subfigure}
\begin{subfigure}[b]{0.5\linewidth}
\centering
\includegraphics[height=3.4cm]{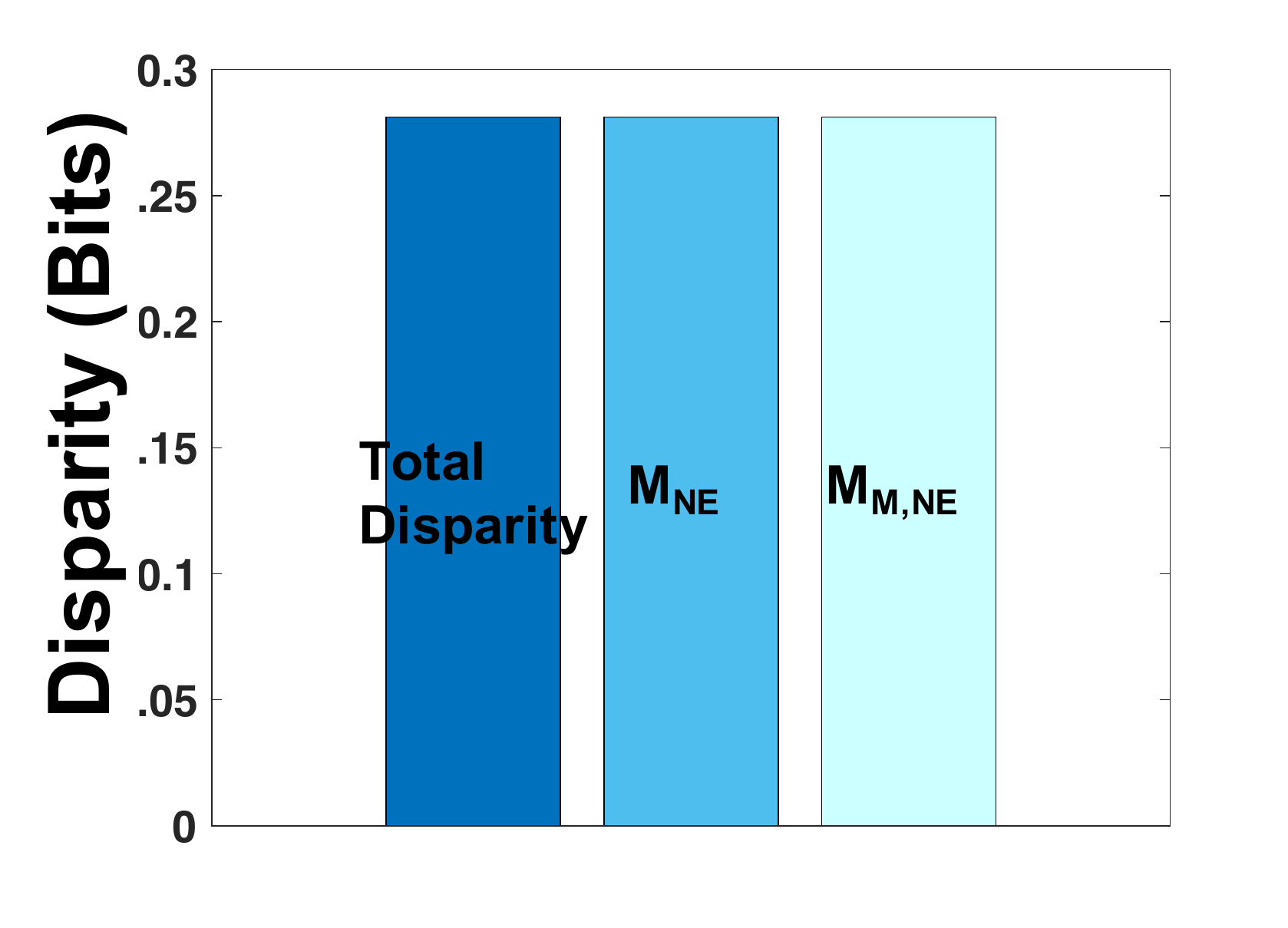}
\caption{Experimental Scenario 3 (Masking by general feature)\label{fig:audit3} }
\end{subfigure}
\begin{subfigure}[b]{0.5\linewidth}
\centering
\includegraphics[height=3.4cm]{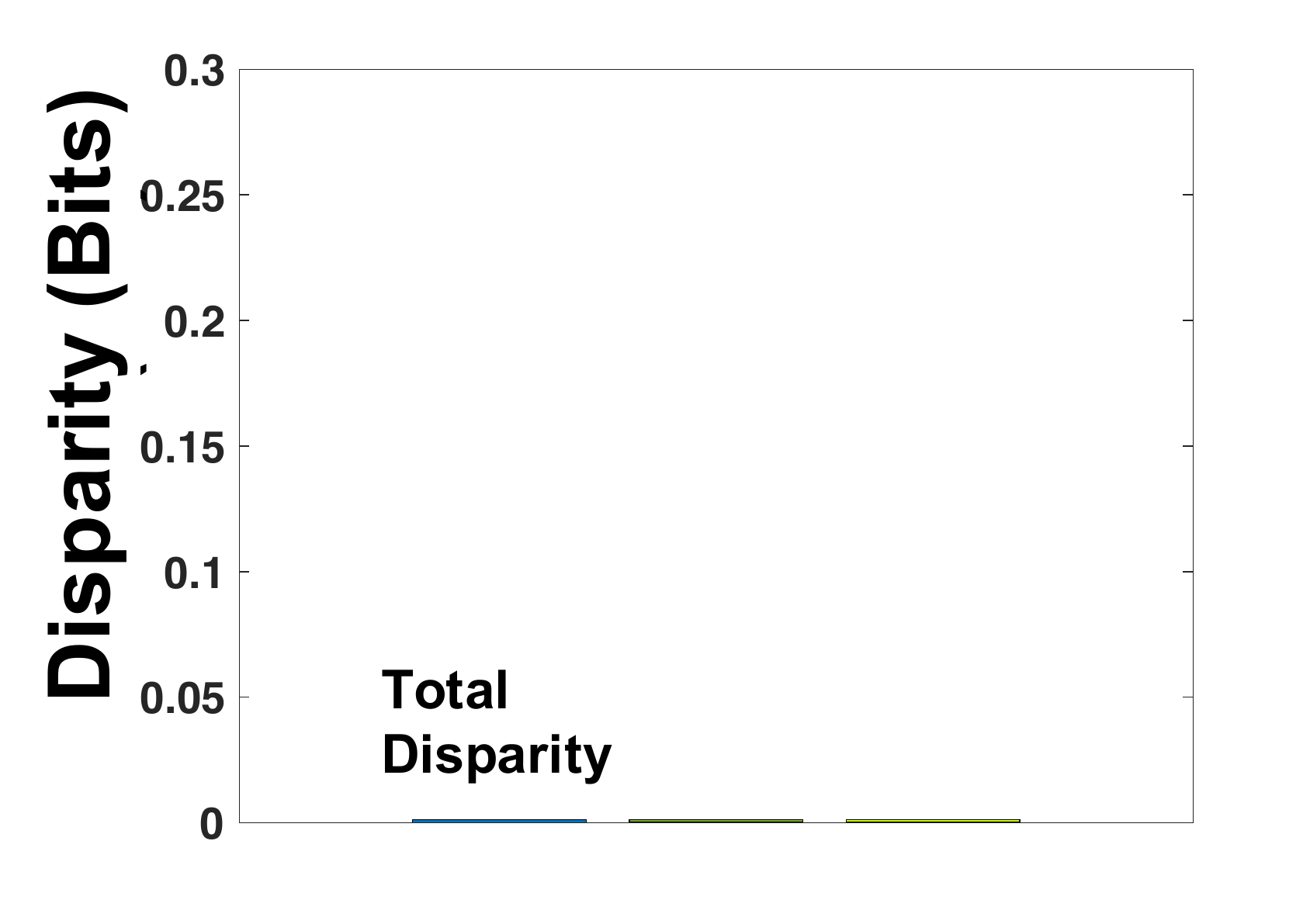}
\caption{Experimental Scenario 4 (No label bias)\label{fig:audit4} }
\end{subfigure}
\caption{\textbf{Observations from Auditing.} The different types of disparities after training a model with no fairness regularizer for all the experimental scenarios: $M_E$ and $M_{NE}(=M_{NE}*)$ denote the exempt and non-exempt disparities, respectively. $M_{V,E}$, $M_{M,E}$, $M_{V,NE}$, and $M_{M,NE}$ denote the visible and masked exempt disparity and visible and masked non-exempt disparity, respectively. Because the SCM is known, all of these quantities can be computed. For each of the four experimental scenarios, the test accuracy is close to $99\%$ (model output is very similar to the true label). We observe that the disparity decomposition for the model output $\hat{Y}$ is also quite similar to what one might intuitively expect for the true label $Y$. In \underline{Experimental Scenario 1}, biased critical and general features are used in the true label. We also observe all four disparities $M_{V,E}$, $M_{M,E}$, $M_{V,NE}$, and $M_{M,NE}$ are present in output $\hat{Y}$. In \underline{Experimental Scenarios 2 and 3}, the disparity in $\hat{Y}$ is dominated by non-exempt, masked disparity $M_{M,NE}$, and the other components are negligible. In \underline{Experimental Scenario 4}, the total disparity is significantly less in comparison to the other three scenarios (intuitively agrees with the fact that the true labels that have no bias at all). \label{fig:audit} }
\end{figure*}
\noindent \textbf{Experimental Scenario 3 (Masking by general feature):} Consider another example similar to the previous one. The decision of showing ads for a website-manager's job in a newspaper company is based on three features, none of them critical: (i) $X_1$: a  score based on online writing samples; (ii) $X_2$: a score based on browsing history, e.g., awareness of current events; and (iii) $X_3$: a preference score based on activity in social media, e.g., political alignment with the newspaper. The protected attribute $Z$ is political inclination, distributed Bern(\nicefrac{1}{2}). Suppose the true SCM is as follows:
$X_1= U_{X_1}+U'_{X_1}$, $X_2=U_{X_2}$, and $X_3=U_{X_2}-Z$, where $U_{X_1}\sim$ Bern(\nicefrac{1}{2}) denotes if writing ability is above a threshold, and $U'_{X_1}, U_{X_2}\sim \iid{}\; \mathcal{N}(0,\sigma^2)$ denote proofreading skill and interests. Suppose that the true labels are given by $Y = \mathbbm{1}( (X_1+X_3)^2 \geq 0.5) $, i.e., primarily high online-writing scores and high social-media-based-preference scores, but to appear ``facially neutral'' with respect to political inclination, the ad is also shown to candidates with low social-media-based-preference scores and low writing scores. Here, all the features are non-critical: $X_{g}= (X_1,X_2,X_3)$. The results are provided in Fig.~\ref{fig:audit3} and Fig.~\ref{fig:training3}.\\

\noindent \textbf{Experimental Scenario 4 (No label bias):} The decision of showing ads for an editor's job is based on four features: (i) $X_1$: a score based on online writing samples (critical feature $X_c=X_1$); (ii) $X_2$: a score based on browsing history, e.g., awareness of current events; (iii) $X_3$: a preference score based on geographical proximity; and (iv) $X_4$: a score based on browsing history, e.g., interest in English websites as compared to websites of other languages. Let $Z\sim$ Bern(\nicefrac{1}{2}) be the protected attribute denoting whether the candidate is a native English speaker. Suppose the true SCM is as follows:
$X_1= Z+U_{X_1}+U'_{X_1}$, $X_2=U_{X_2}$, $X_3=U_{X_3}$, and $X_4=Z+U_{X_2}$, where $U_{X_1} \sim$ Bern(\nicefrac{1}{2}) denotes whether writing skill is above a threshold, and  $U'_{X_1},U_{X_2},U_{X_3}\sim \iid{}$ $\mathcal{N}(0,\sigma^2)$ denote proofreading skill, interests, and proximity. Suppose that the true labels do not have label disparityand are given by $Y = \mathbbm{1}(U_{X_1}+U_{X_2} \geq 0.5) $. Here, the critical feature is $X_c=X_1$ and the general features are $X_{g}= (X_2,X_3,X_4)$. The results are provided in Fig.~\ref{fig:audit4} and Fig.~\ref{fig:training4}.

\begin{figure*}
\begin{subfigure}[b]{0.5\linewidth}
\centering
\includegraphics[height=4.5cm]{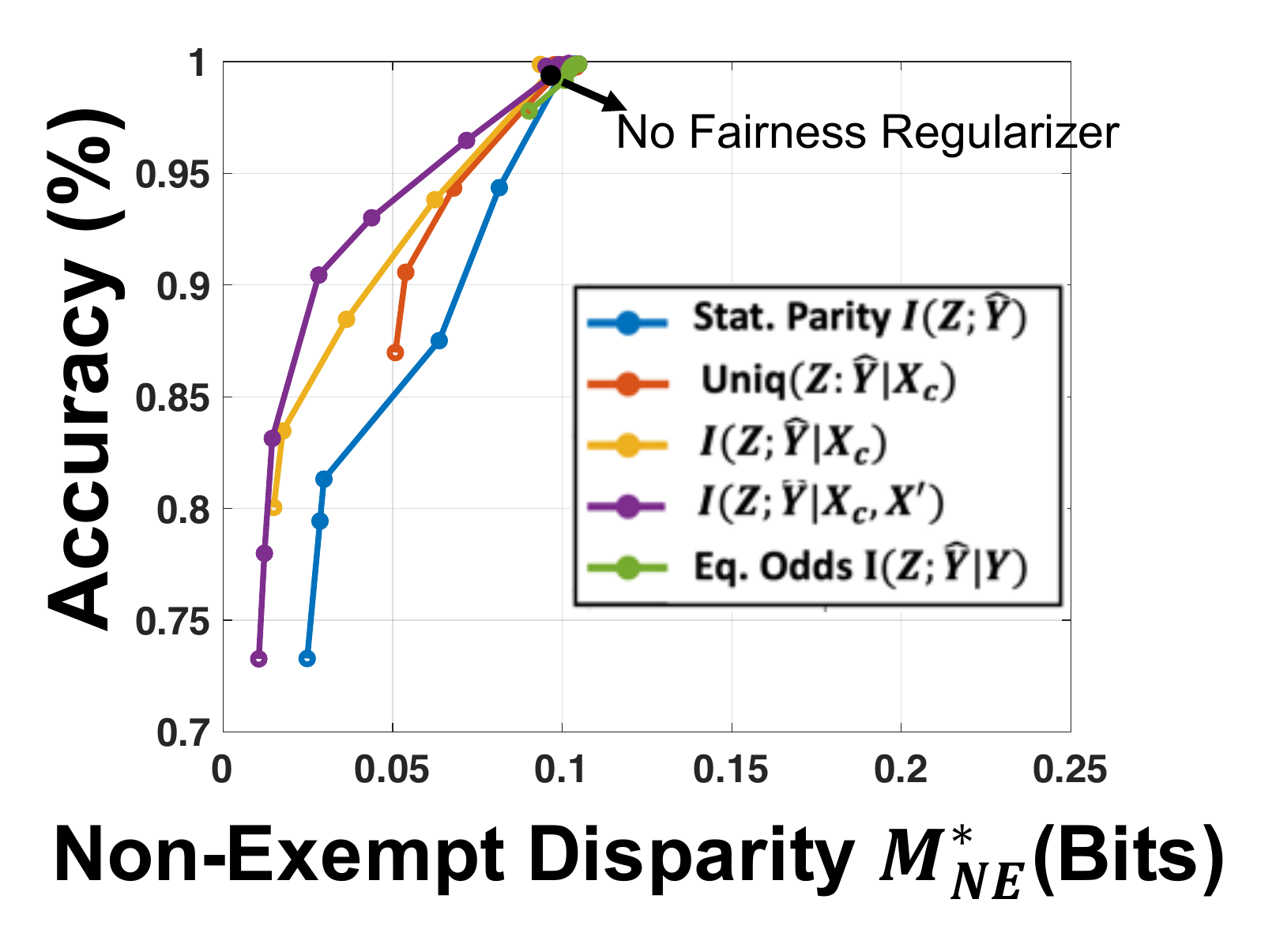}
\caption{Experimental Scenario 1 (All four disparities present)\label{fig:training1} }
\end{subfigure}
\begin{subfigure}[b]{0.5\linewidth}
\centering
\includegraphics[height=4.5cm]{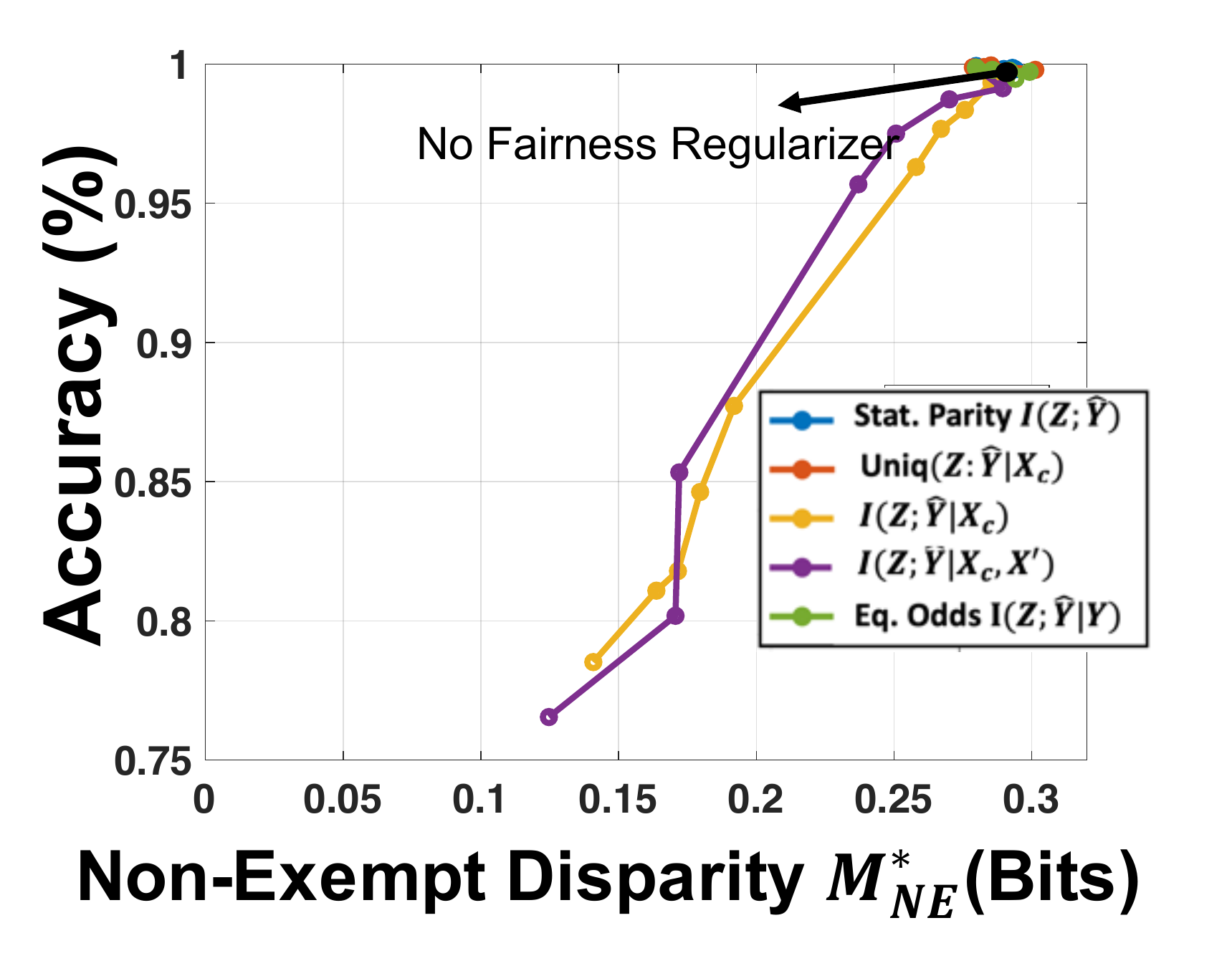}
\caption{Experimental Scenario 2 (Masking by critical feature) \label{fig:training2} }
\end{subfigure}\\
\begin{subfigure}[b]{0.5\linewidth}
\centering
\includegraphics[height=4.5cm]{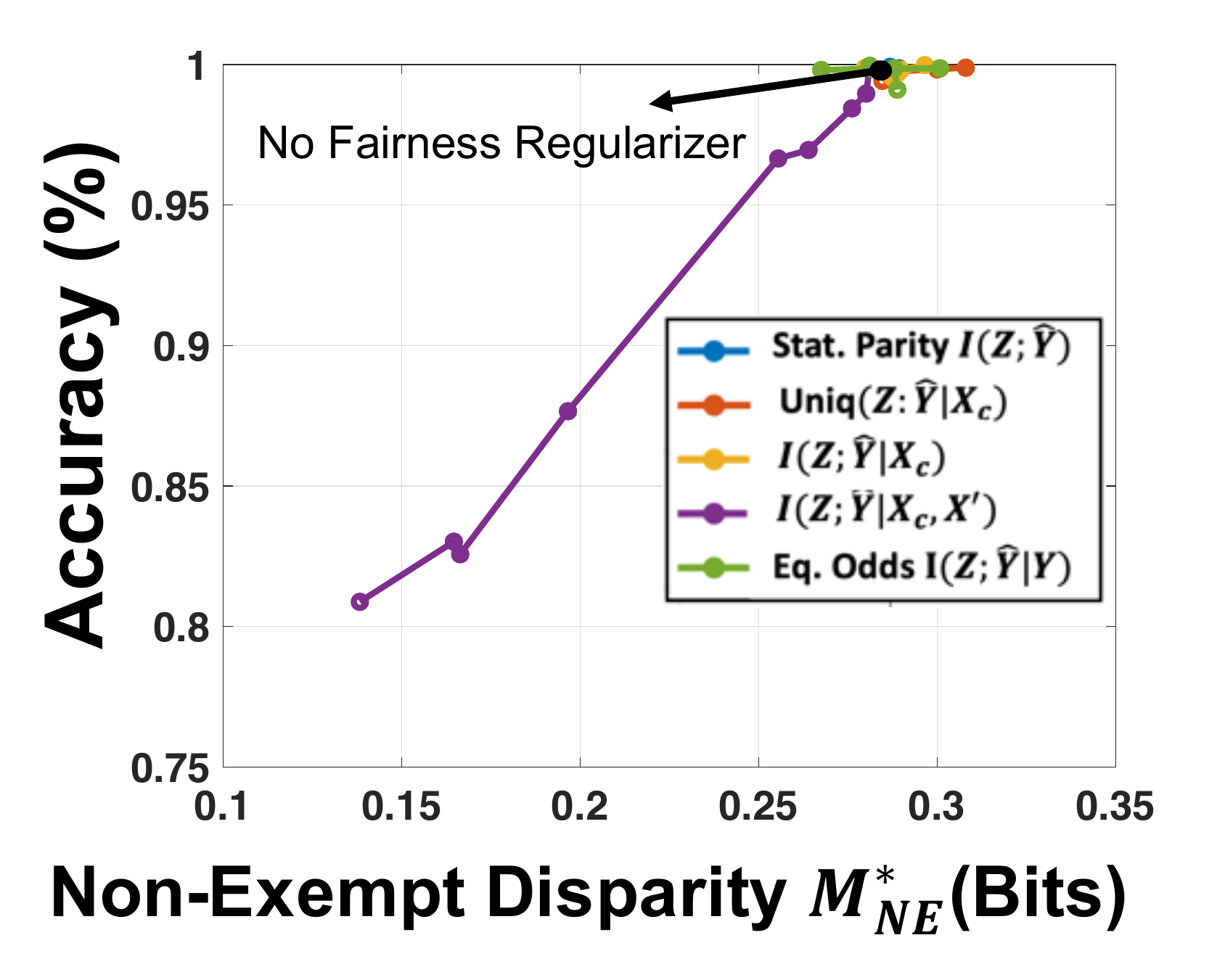}
\caption{Experimental Scenario 3 (Masking by general feature) \label{fig:training3} }
\end{subfigure}
\begin{subfigure}[b]{0.5\linewidth}
\centering
\includegraphics[height=4.5cm]{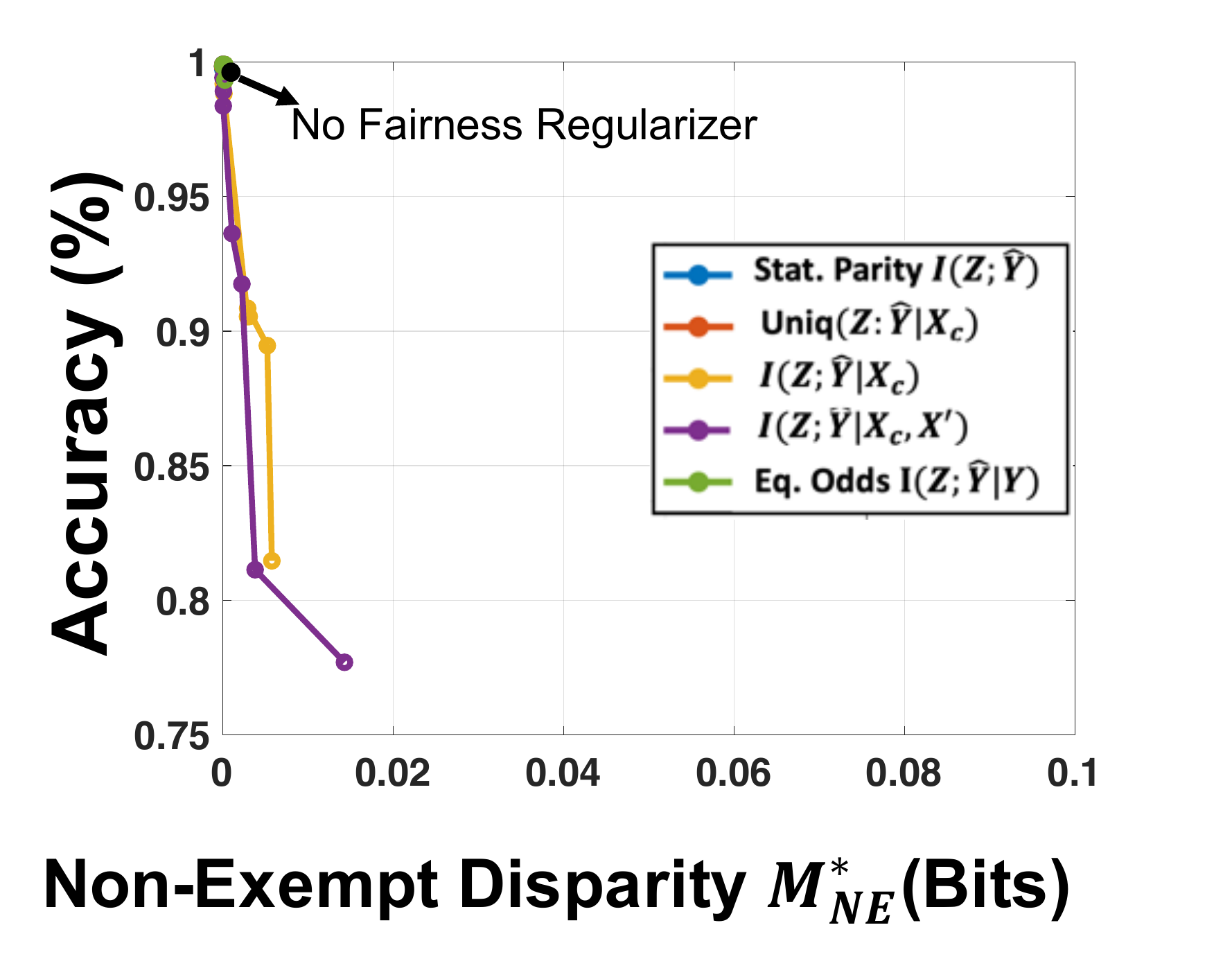}
\caption{Experimental Scenario 4 (No label bias) \label{fig:training4} }
\end{subfigure}
\caption{\textbf{Observations from training:} For each experimental scenario, we train a model using each of the five observational regularizers: MI (Statistical Parity), Uniq, CMI, CMI', and EO (Equalized Odds) for different values of regularization constant $\lambda$. The tradeoff between test accuracy and the actual non-exempt disparity ($M_{NE}*$) computed using the \texttt{dit} package is shown. In \underline{Experimental Scenario 1}, the model output (no fairness) has all four types of disparities, $M_{V,E}$, $M_{M,E}$, $M_{V,NE}$, and $M_{M,NE}$. We observe that, all three of Uniq, CMI, and CMI' attain better tradeoff between accuracy and non-exempt disparity as compared to EO (Equalized Odds) and MI (Statistical Parity). Equalized Odds does not affect the accuracy or the non-exempt disparity much, even for high values of the regularization constant. Statistical Parity attempts to reduce both exempt and non-exempt disparities, and ends up reducing accuracy a lot for same values of non-exempt disparity as compared to Uniq, CMI, and CMI'. CMI and CMI' are slightly better than Uniq because they also partially quantify non-exempt, masked disparity. For CMI'($=\mut{Z}{\hat{Y}|X_c,X'}$), we choose $X'=X_3$ (location, a general feature that has no causal influence of $Z$, but is suspected to ``mask'' $Z$ in the final output) which leads to a better trade-off than CMI. In \underline{Experimental Scenarios 2 and 3}, the disparity in the model output (no fairness) is dominated by non-exempt, masked disparity. This disparity is missed by MI (Statistical Parity), EO (Equalized Odds), and Uniq. Consequently, they do not affect the accuracy or the non-exempt disparity much, even for high values of regularization constant. For Experimental Scenario 2, only CMI and CMI' (with $X'= X_3$) are able to detect the non-exempt, masked disparity, and lead to alternate models with reduced accuracy and also reduced non-exempt disparity. For Experimental Scenario 3, only CMI' (with $X'=X_1$, the general feature that masks $Z$ in the final output) detects the non-exempt disparity, and reduces it. In \underline{Experimental Scenario 4}, the model output (no fairness) has almost negligible non-exempt disparity because the true labels do not have any bias at all. We observe that, MI, EO, and Uniq also do not affect the accuracy much even for high values of regularization constant (which is desirable). However, CMI, and CMI' (with $X'=X_2$) falsely detect disparity here, when there is no non-exempt disparity actually present. In an attempt to reduce the falsely detected disparity, they lead to alternate models with significantly reduced accuracy, and slightly increased non-exempt disparity.\label{fig:training}}
\end{figure*}

\textbf{Summary of Results:} We present results for auditing and training in Fig.~\ref{fig:audit} and Fig.~\ref{fig:training} with detailed explanations. Our proposed regularizers, namely, Uniq, CMI and CMI' attain better trade-off between accuracy and non-exempt disparity than MI (Statistical Parity) and EO (Equalized Odds) in Experimental Scenario 1. CMI and CMI' are also able to detect certain scenarios of non-exempt, masked disparity that Uniq, MI and EO fail to detect, e.g., in Experimental Scenario 2 where the masking is by the critical feature $X_c$. Experimental Scenario 3 demonstrates additional scenarios of non-exempt, masked disparity, e.g., masking by $X_g$, where even CMI is unable to detect this disparity, and only CMI' succeeds (by choosing $X'$ based on certain knowledge/suspicion of the causal model). However, Experimental Scenario 4 denotes a scenario of false detection of disparity by CMI and CMI'. In essence, Uniq is a somewhat conservative measure of non-exempt disparity which can miss non-exempt, masked disparity, but never does false detection of disparity. On the other hand, CMI and CMI' can sometimes detect certain scenarios of non-exempt, masked disparity, but can also sometimes falsely detect disparity. This is expected: these are observational measures attempting to approximate a causal measure, a fundamentally impossible task. However, these examples illustrate how knowledge of aspects of the SCM (e.g., whether the disparity is predominantly masked disparity) can be used to inform the choice of the observational measure.

\subsection{Case Study on Real Data: \texttt{Adult} Dataset}

The Adult dataset~\cite{UCI}, also known as the Census income dataset, consists of $14$ features (e.g. age, educational qualification), and the true labels denote whether the income is greater than $\$50$k. This dataset is widely used in existing fairness literature (e.g., \cite{zemel2013learning}), because it is representative of data used in highly consequential applications, such as, lending, showing expensive ads, etc. Here, we choose gender as the protected attribute $(Z)$ for analyzing the Adult dataset. Our set of input features $(X)$ consists of all the other features except gender, and our critical feature $(X_c)$ is working-hours per-week.

We train a deep neural network (multi-layer perceptron) on this dataset, with all features, except gender, as input (with one hot encoding of all categorical variables). The input layer is followed by three hidden layers, each having 32 neurons with ReLu activation and dropout probability $0.2$. Finally, the output layer consists of a single neuron with sigmoidal activation that produces an output value between $0$ and $1$ (likelihood of income being $>50$k possibly leading to a loan decision). 

Since the true causal model is not known, we cannot compute the exact value of the total disparity or non-exempt disparity $(M_{NE}^*)$ as in the previous case study. However, our observational measures can still provide valuable insights as we demonstrate here (see Fig.~\ref{fig:adult1}). We consider five setups for auditing: (i) No fairness: model trained with no fairness regularizer; (ii) Statistical Parity: model trained with $\mut{Z}{\hat{Y}}$ as regularizer; (iii) CMI Regularizer: model trained with $\mut{Z}{\hat{Y}|X_c}$ regularizer; (iv) Uniq Regularizer: model trained with $\uni{Z}{\hat{Y}|X_c}$ as regularizer; and (iv) Equalized Odds: model trained with $\mut{Z}{\hat{Y}|Y}$ regularizer. For each of these setups, we choose the same value of the regularization constant $\lambda=4$, and similar correlation-based estimates for the regularizers as in the previous case study.

After training these models, we audit/evaluate the trained models by computing the following observational quantities on the empirical distribution of the test data using the \texttt{dit}~\cite{dit} package: MI (statistically visible disparity:$\mut{Z}{\hat{Y}}$), CMI (conditional mutual information $\mut{Z}{\hat{Y}|X_c}$), as well as, the decomposition of CMI into Unique Information (Uniq) given by $\uni{Z}{\hat{Y}|X_c}$) and Synergistic Information (Syn) given by $\syn{Z}{(\hat{Y},X_c)}$). Recall that Uniq is the non-exempt statistically visible disparity, while Syn can correspond to either non-exempt masked disparity or false detection of disparity (recall our impossibility result; one might need some knowledge of the causal model to be certain). As discussed in the caption of Fig.~\ref{fig:adult1}, the correlation-based estimates serve as relatively good approximations and reduce the respective statistical dependences as one would intuitively expect to see. 
\begin{figure*}
\centering
	\includegraphics[height=3.2cm]{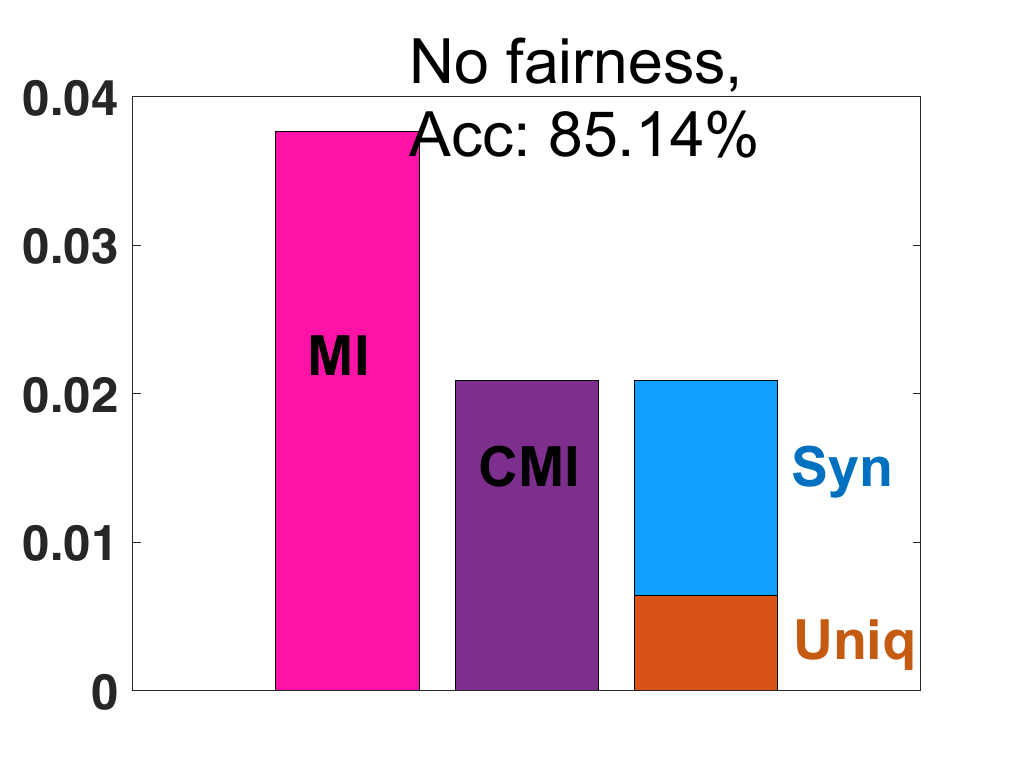}\\
	\includegraphics[height=3.2cm]{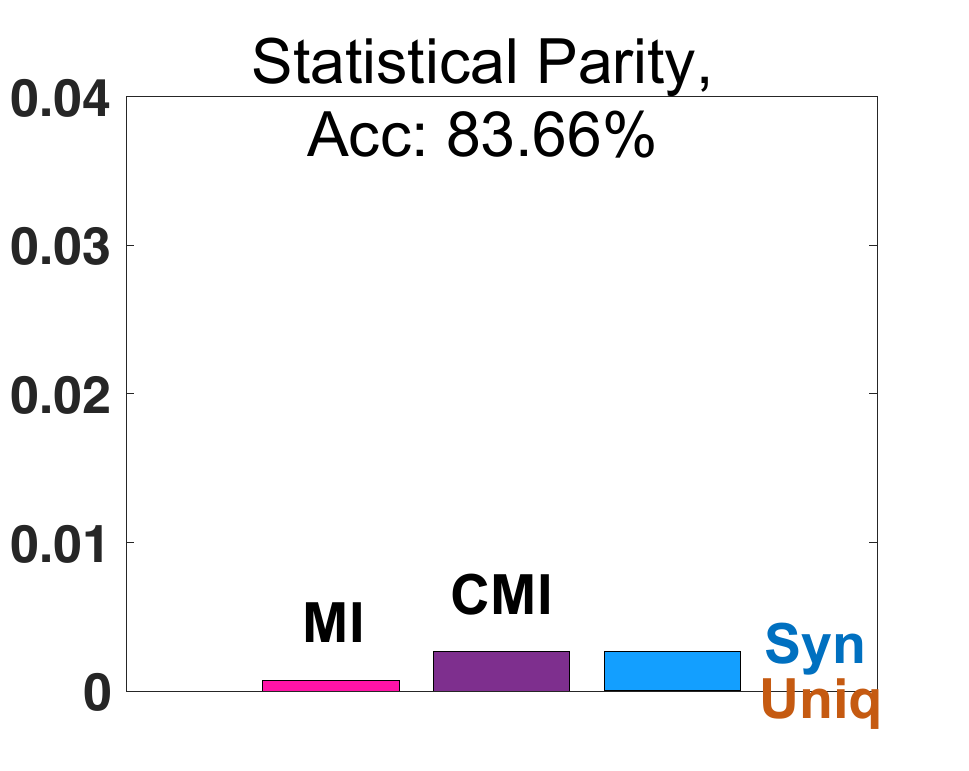}
	\includegraphics[height=3.2cm]{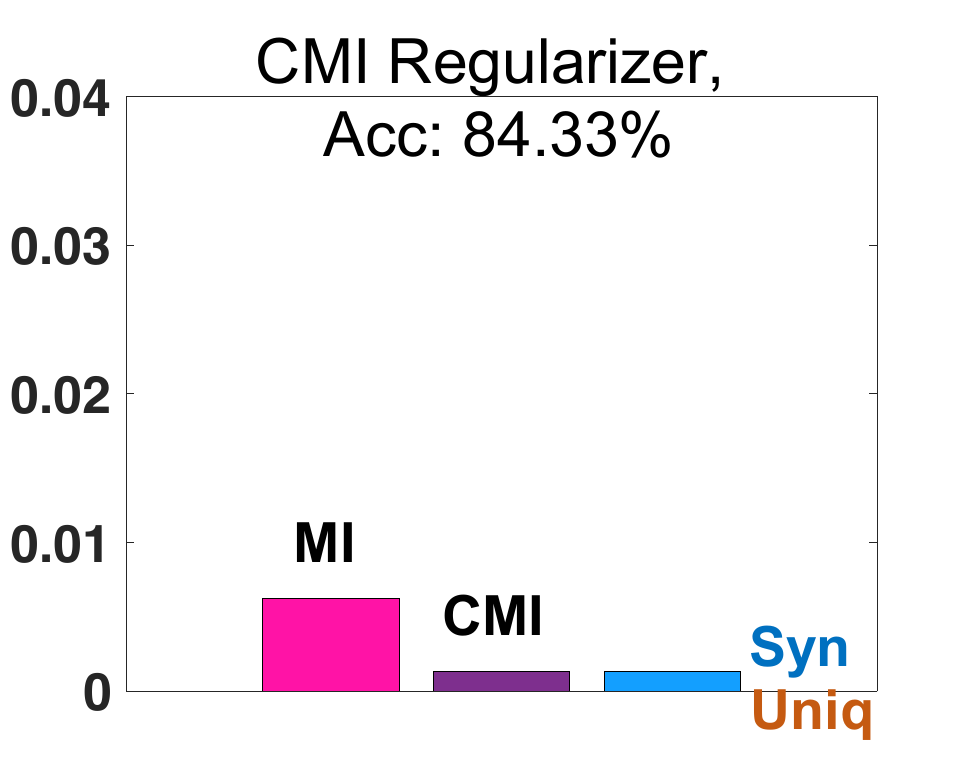}
	\includegraphics[height=3.2cm]{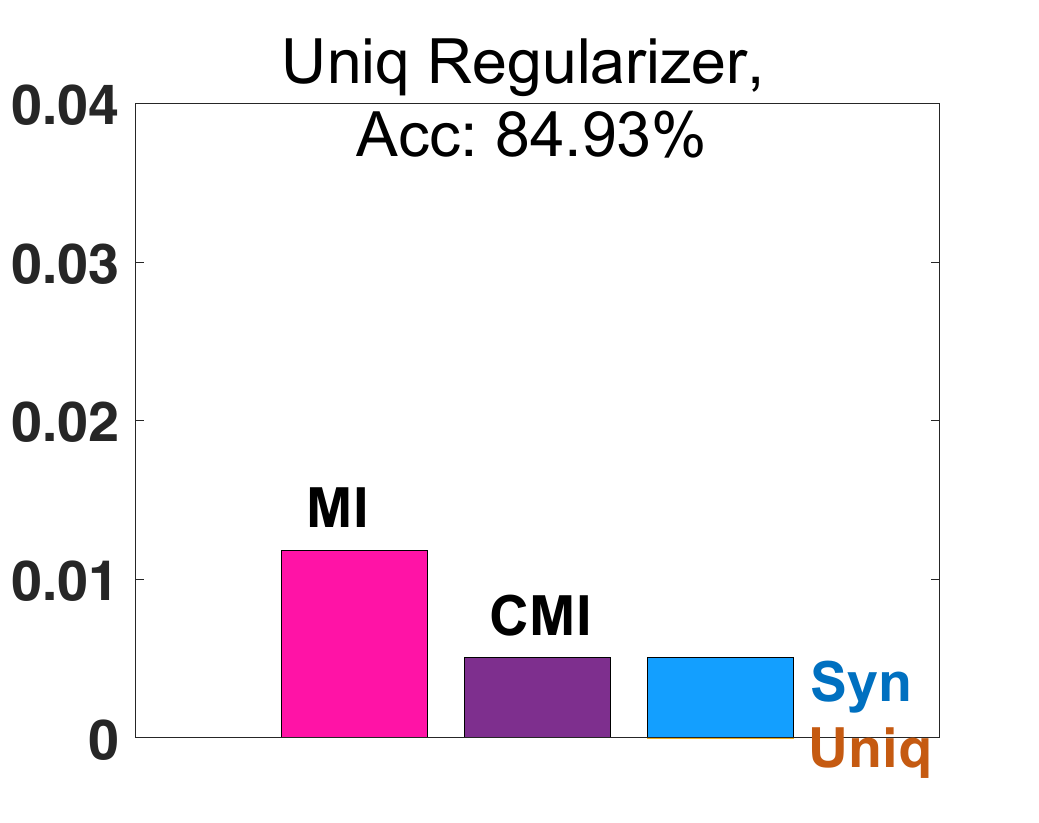}
	\includegraphics[height=3.2cm]{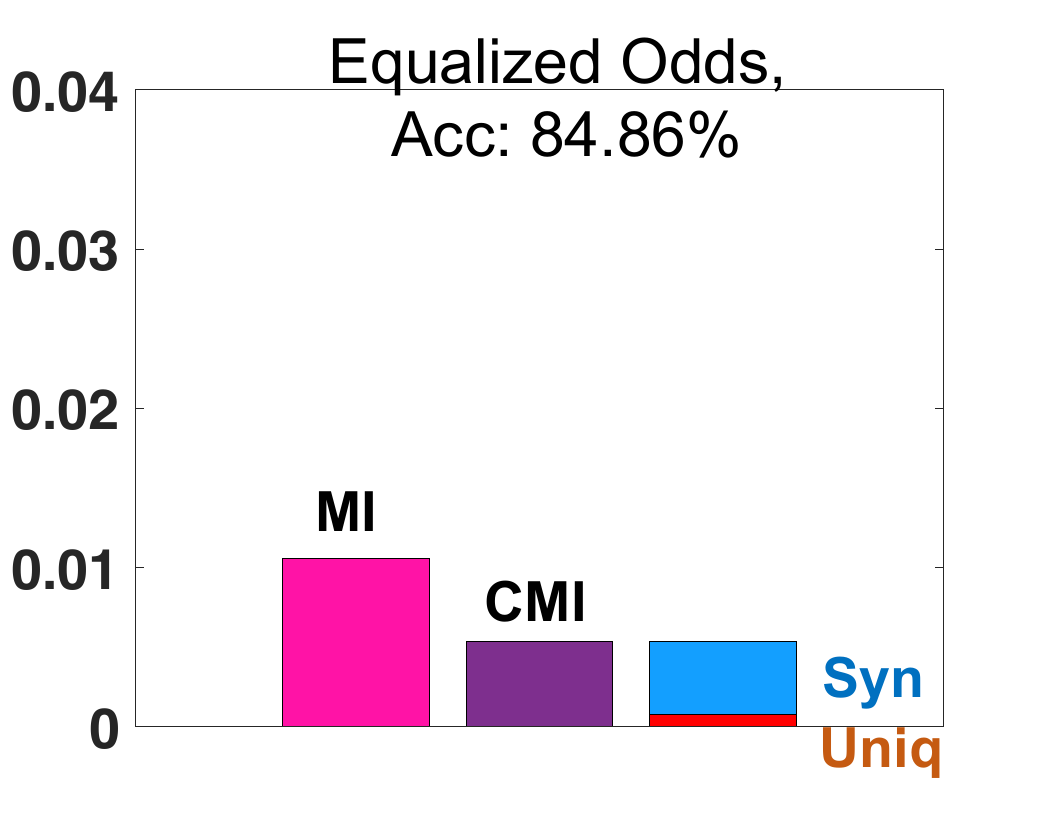}\\
	\includegraphics[height=3cm]{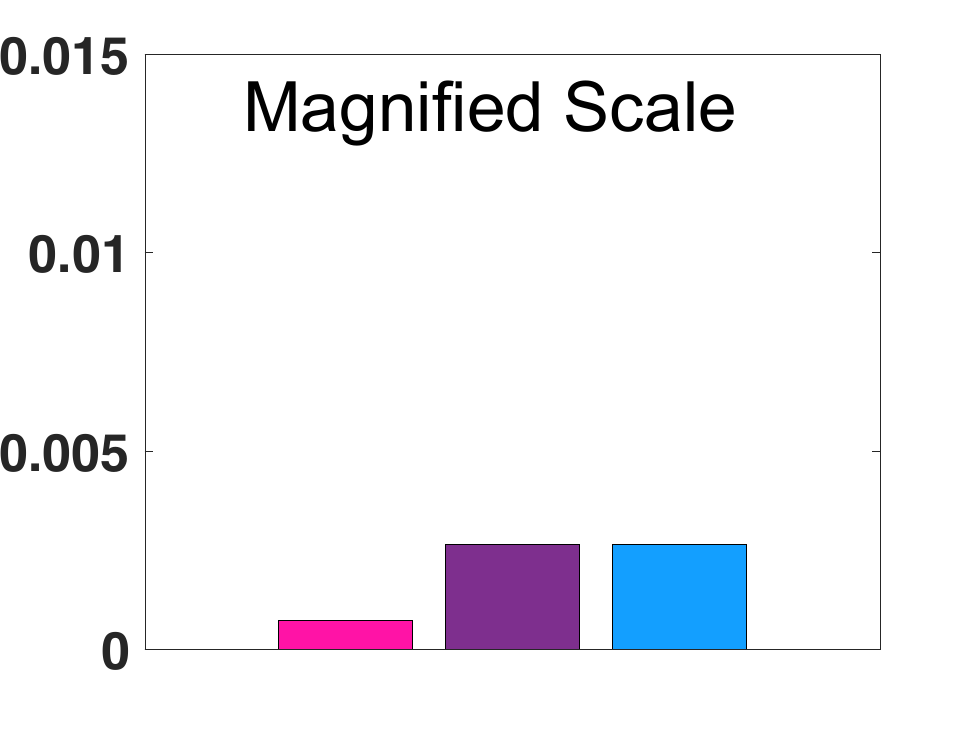}
	\includegraphics[height=3cm]{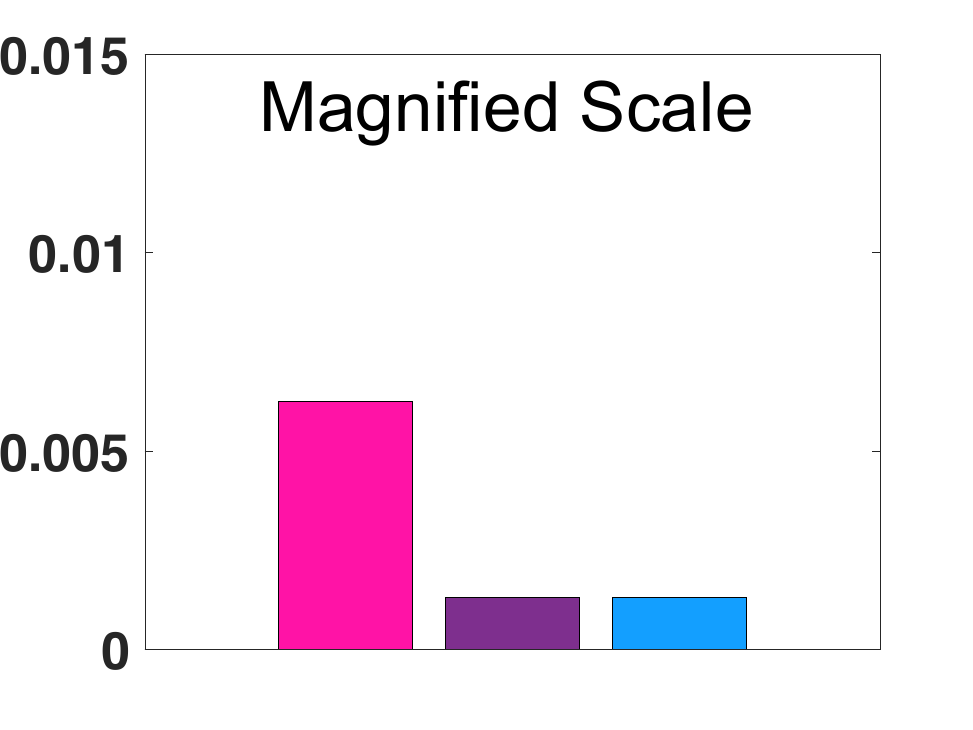}
	\includegraphics[height=3cm]{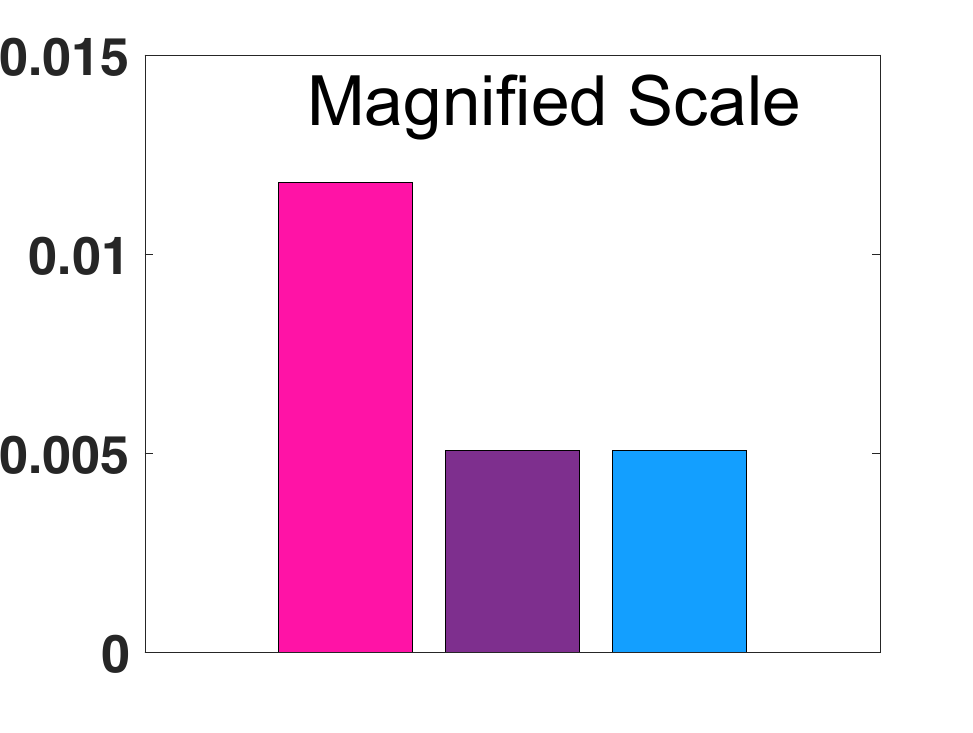}
	\includegraphics[height=3cm]{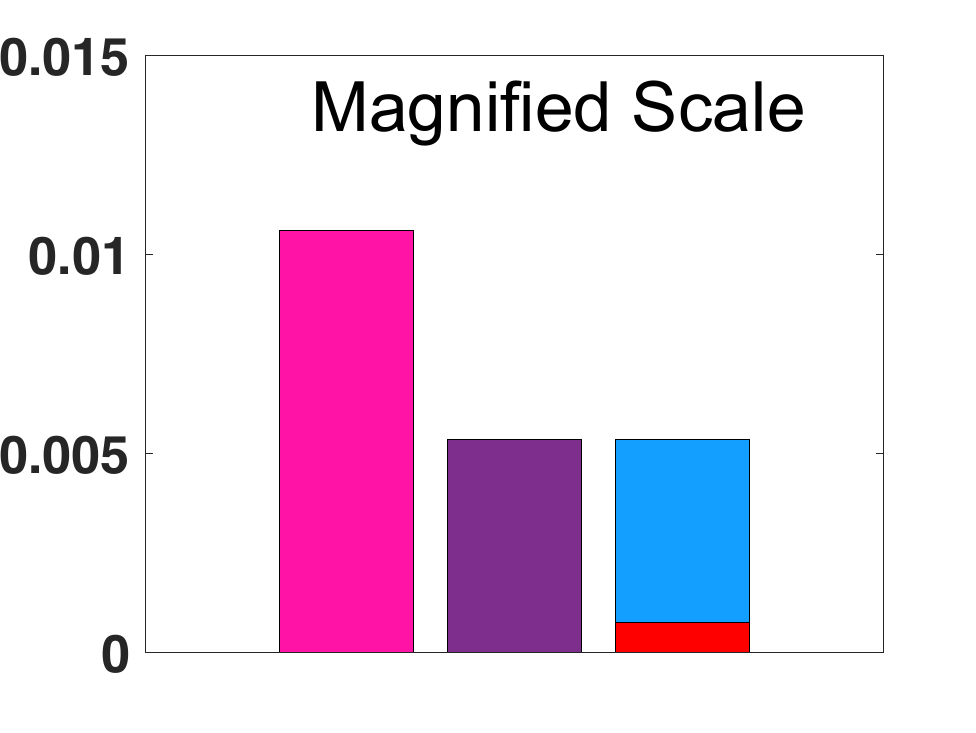}
	\caption{For the model with no fairness, we see a high value of MI as well as CMI (consisting of both Uniq and Syn). When the model is trained for statistical parity, the MI reduces as expected, but interestingly CMI is now higher than MI. Next, when CMI is used as a regularizer, we notice that CMI (and its sub-components Uniq and Syn) reduce as expected, but MI is higher than CMI. For Uniq as a regularizer, we notice that MI or CMI are not reduced that much, but only Uniq is minimized selectively. Lastly, for equalized odds, we observe that the trained model still has some Uniq (non-exempt, visible disparity). These experiments also demonstrate that the correlation-based estimates for the regularizers are relatively good approximations for this real dataset and actually reduce the respective statistical dependences as one would intuitively expect.  \label{fig:adult1}}
\end{figure*}

\subsection{Case Study on Real Data: \texttt{German Credit} Dataset}

We also perform a similar case study on the German Credit Dataset~\cite{UCI}. This dataset consists of $20$ features (e.g., status of a checking account, credit amount, present employment, etc.), and the true labels denote whether a customer is good or bad. Our critical feature ($X_c$) is the number of existing credits at this bank, and the protected attribute ($Z$) is gender. Our set of all features $(X)$ consist of all features except gender and marital status.

We train a deep neural network (multi-layer perceptron) on this dataset, with all features, except gender and marital status as input (with one hot encoding of categorical variables). The input layer is followed by two hidden layers, each having $124$ neurons with ReLu activation and dropout probability $0.5$. Finally, the output layer consists of a single neuron with sigmoidal activation that produces an output value between 0 and 1 (likelihood of being a good customer).

The causal model is again not known, similar to the previous case. However, similar to the case study on the Adult dataset, we train the model using different observational regularizers, and audit/evaluate the trained models. As discussed in the caption of Fig.~\ref{fig:german}, the correlation-based estimates reduce the respective statistical dependences as one would intuitively expect to see.

\begin{figure*}
\centering
	\includegraphics[height=3.5cm]{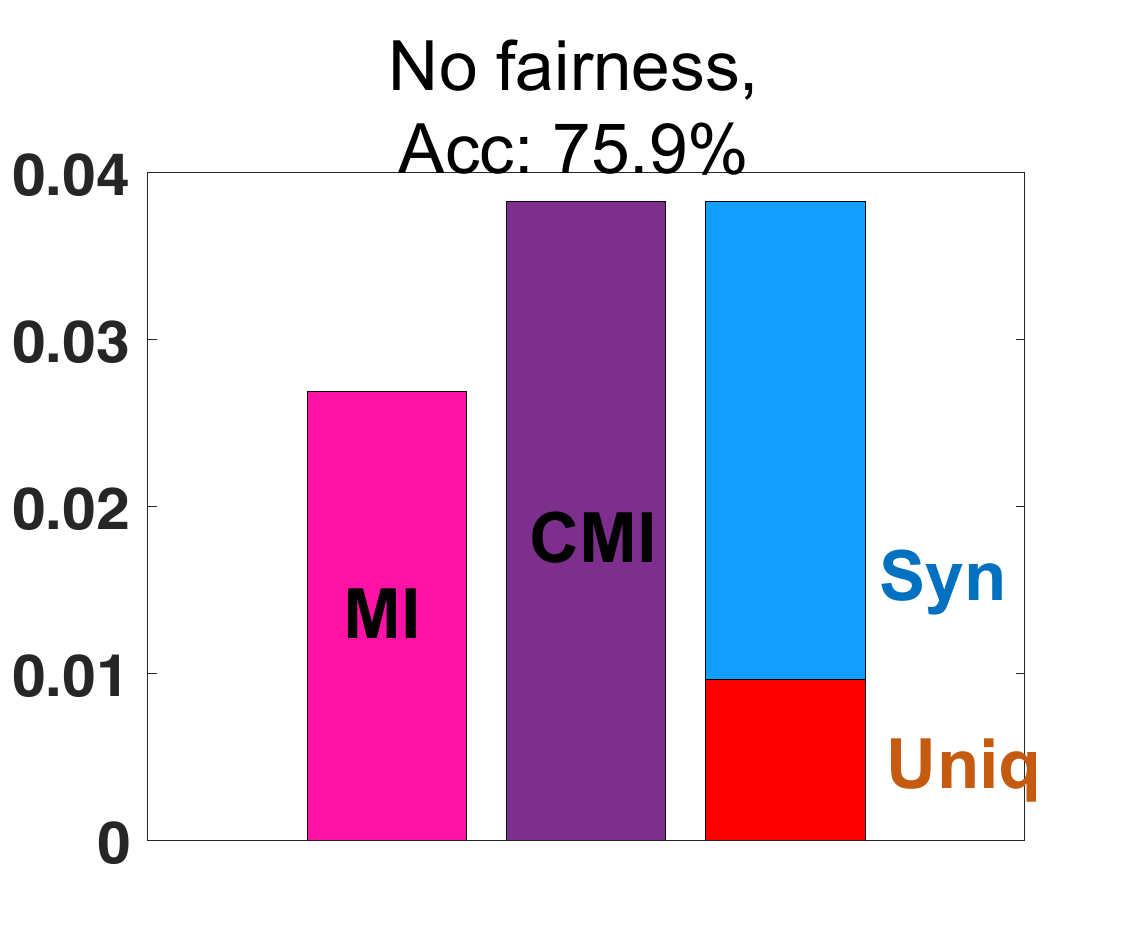}\\
	\includegraphics[height=3.5cm]{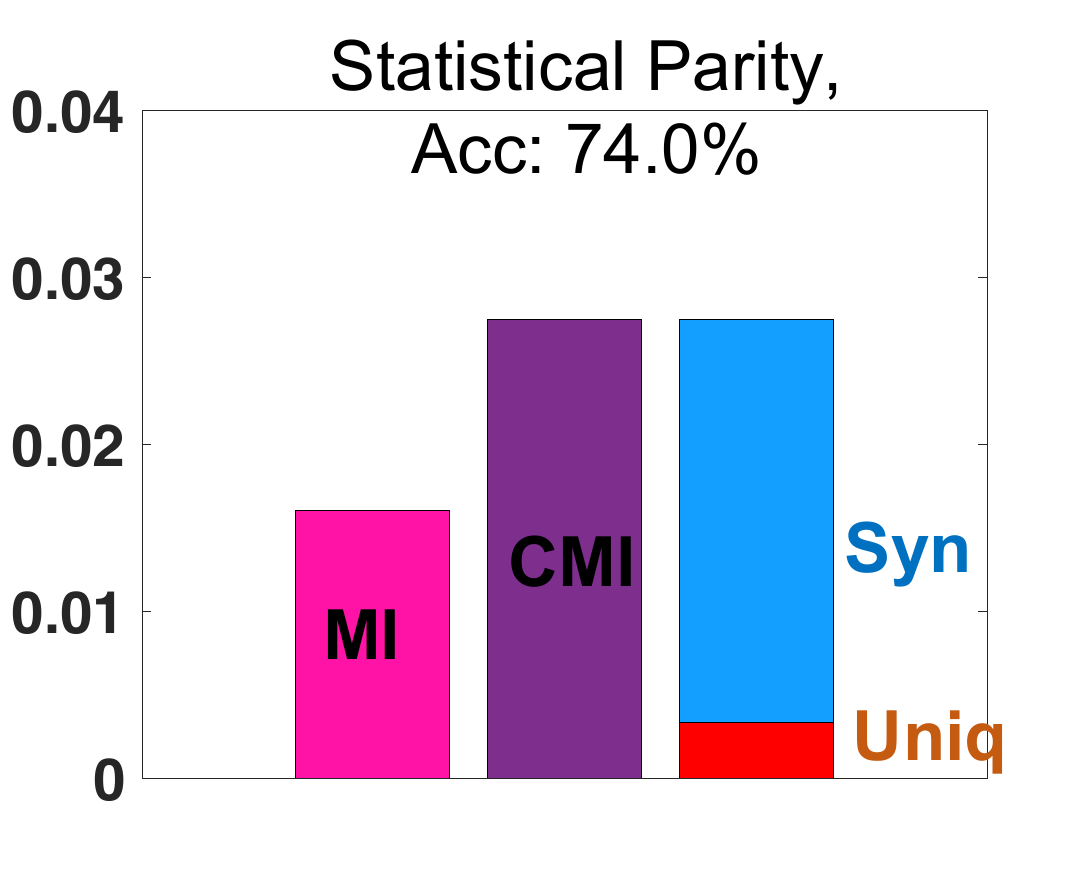}
	\includegraphics[height=3.5cm]{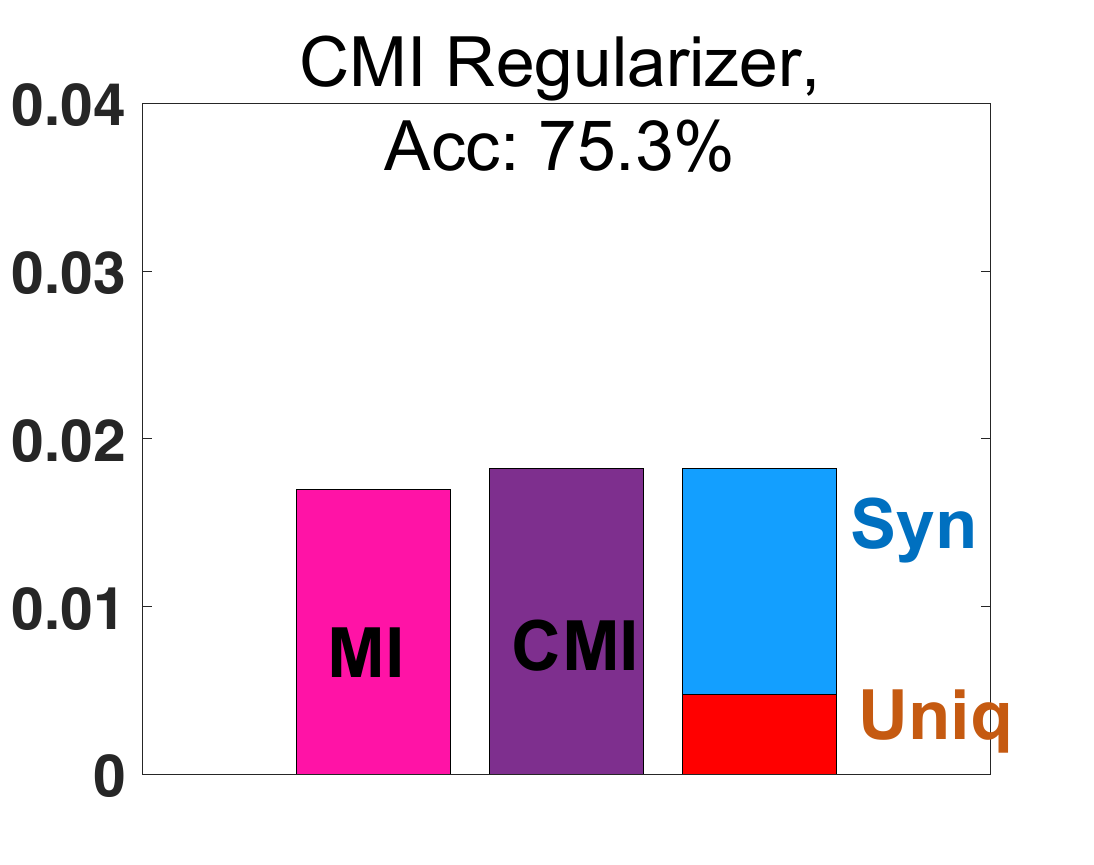}
	\includegraphics[height=3.5cm]{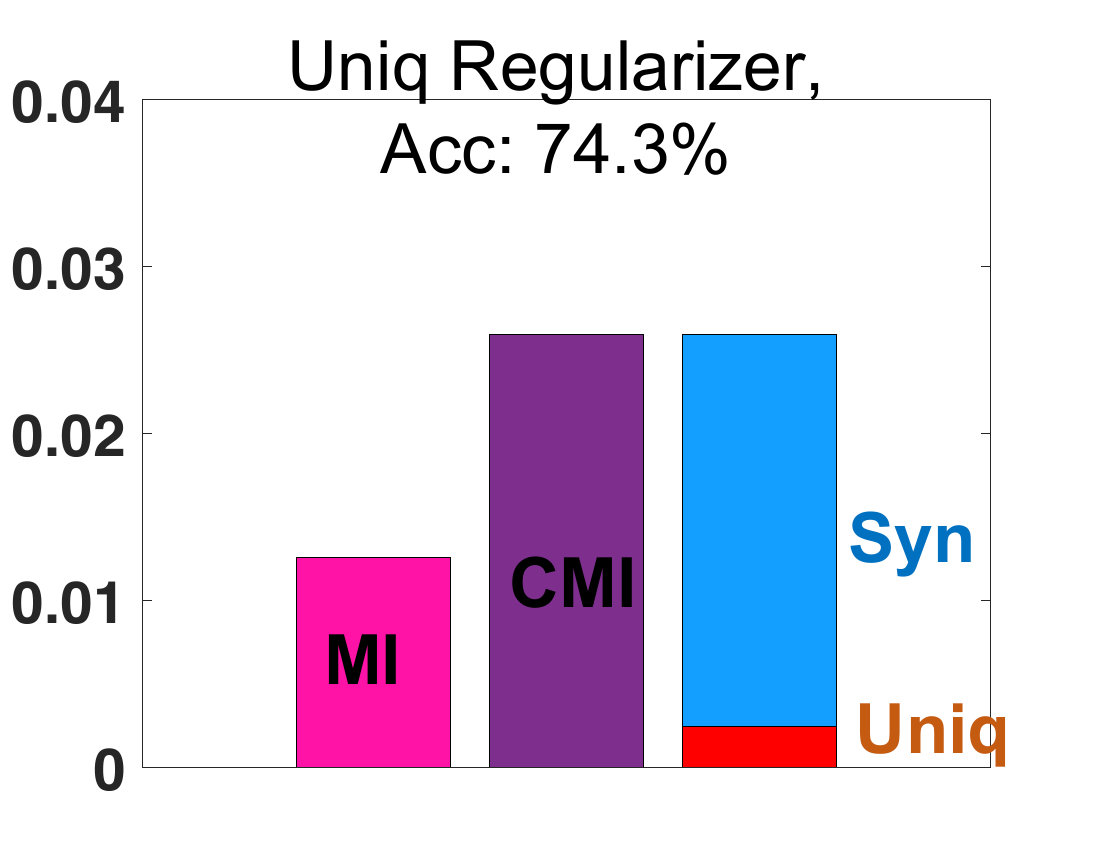}
	\includegraphics[height=3.5cm]{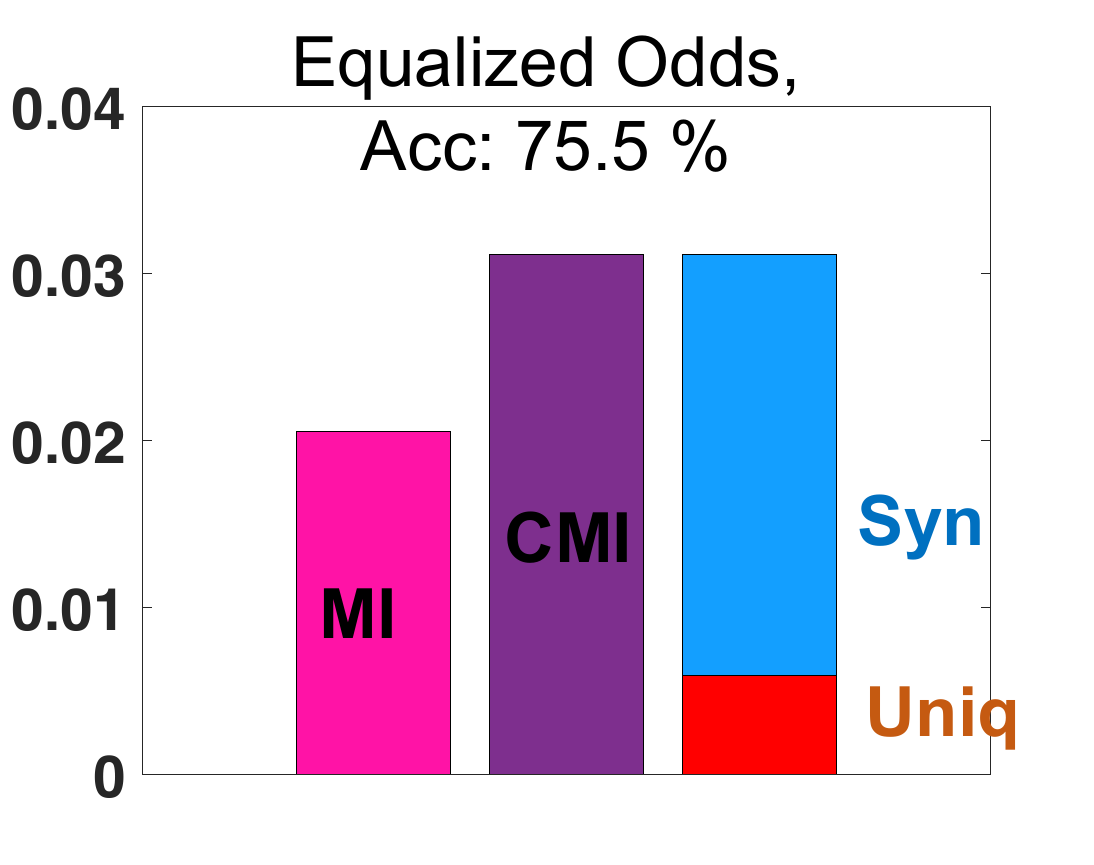}
	\caption{The experimental results demonstrate that the correlation-based estimates for the regularizers behave as expected. When the model is trained for statistical parity, MI reduces as expected without significantly reducing CMI. Next, when CMI is used as a regularizer, we notice that CMI (and its sub-components Uniq and Syn) reduce as expected. For Uniq as a regularizer, we notice that MI or CMI are not reduced that much, but only Uniq is minimized selectively. Lastly, for equalized odds, we observe that the trained model still retains quite a bit of MI, CMI, Uniq and Syn, as compared to the model with no fairness. \label{fig:german}}
\end{figure*}

\section{Discussion and Conclusion}
\label{sec:conclusion}

{\color{black}
\textbf{On Choice of Critical Features and Connections with Explainability:} In this work, as also in some existing works on fairness~\cite{kamiran2013quantifying,kilbertus2017avoiding}, we assume that the critical features are known. We adopt a viewpoint stated in \cite{pandeyblog} which suggests that ``We can't just rely on the math; we still need a human person applying human judgements.'' Since most of these exemptions are embedded in law and social science~\cite{barocas2016big,USEqualPay,grover1995business}, we believe that fairness researchers need to collaborate with social scientists and lawyers in order to determine which set of features can be designated as critical for a particular application. 

This work also shares close connections with the field of \emph{explainability} in machine learning~\cite{datta2016algorithmic,varshney2019trustworthy,kim2018interpretability}, and motivates several related research problems, e.g., how to check or explain if certain features contributed to the disparity in a model, or how to incorporate exemptions in applications, such as, image processing, where certain neurons in an intermediate hidden layer might need to be exempted instead of the input layer because they often have more interpretability~\cite{kim2018interpretability}.

\textbf{On Better Understanding of Observational Measures:} Our proposed counterfactual measure and the desirable properties help in evaluation of observational measures in practice, and understand their utility and limitation, i.e., what they capture and miss. Finally, in applications where when the true SCM is known or can be evaluated from the data \cite[Chapters 4,7]{peters2017elements}, the proposed measure exactly captures the non-exempt disparity.

\textbf{On Uniqueness, Operational Meaning and Further Generalizations:} We acknowledge that we do not prove uniqueness of our measure with respect to the desirable properties, and neither do we show that the properties are exhaustive (recall Remark~\ref{rem:uniqueness} in Section~\ref{subsec:rationale}). This is an interesting direction of future work. However, there may also be value in the fact that the properties do not yield a unique measure: this allows for tuning the measure based on the application. E.g., Shannon established uniqueness on entropy with respect to \textbf{some} properties in \cite{shannon1948mathematical} but subsequent applications have still led to the use of modified measures, e.g. Renyi entropy~\cite{renyi1961measures,liao2019learning,issa2019operational,liao2019robustness}.

Deriving the exact operational meaning of our proposed counterfactual measure is also an interesting direction of future work. Nonetheless, the proposed measure does satisfy our stated desirable properties and capture important aspects of the problem, e.g., statistically visible and masked disparities. Furthermore, our measure can also be modified to account for further functional generalizations. First notice, that our proposed Property~\ref{propty:masking} is a special case of the following statement: 

\textit{If $(Z,f_a(U_X)){-}X_c{-}(\hat{Y},f_b(U_X))$ form a Markov chain for any deterministic functions $f_a(\cdot)$ and $f_b(\cdot)$ such that $f_a(U_X)\independent f_b(U_X)$ and $\mathrm{H}(U_X){=}\mathrm{H}(f_a(U_X))+ \mathrm{H}(f_b(U_X))$, then $M_{NE}{=}0$. } 

To account for this more general property, our proposed measure might be modified as follows:
\begin{align}
\min_{f_a(U_X), f_b(U_X)} \uni{(Z,f_a(U_X))}{(\hat{Y},f_b(U_X)  )| X_c},   
\end{align}
such that $f_a(U_X)\independent f_b(U_X)$ and $\mathrm{H}(U_X)=\mathrm{H}(f_a(U_X))+ \mathrm{H}(f_b(U_X))$. This measure also satisfies all the other desirable properties. In this work, we restrict ourselves to $f_a(U_X)$ and $f_b(U_X)$ being disjoint subsets of $U_X$ for simplicity, computability and ease of understanding. Future work will explore how different assumptions on the SCM restrict the class of $f_a$ and $f_b$.

\textbf{On Understanding Other Forms of Masked Disparity:}  Let us revisit the discussion from Section~\ref{subsec:rationale} that not all forms of masked discrimination are necessarily undesirable. E.g., if  $U_{X_1}$ is a random coin flip in Canonical Example~\ref{cexample:masking_general}, then performing $\hat{Y}=Z \oplus U_{X_1}$ randomizes the race, and can even be regarded as a preventive measure against discrimination.  However, keeping the mathematics of the example same, if  $U_{X_1}$ instead denotes whether one's income is above a threshold, then the model is unfair. It is an interesting future direction to examine how to quantify non-exempt discrimination while allowing the user with more flexibility on what latent factors are allowed to mask $Z$.

\textbf{On Estimation of Mutual Information, Conditional Mutual Information and Unique Information:} In general, it is difficult to directly incorporate these information-theoretic measures as a regularizer with the loss function (see \cite{pal2010estimation,mukherjee2019ccmi} and the references therein). Examining alternate methods of incorporating our proposed measures as regularizer (using or building upon techniques proposed in \cite{mukherjee2019ccmi,fairMI,debiasing,liao2019learning,liao2019robustness,galhotra2020fair,xu2020algorithmic}) is an interesting direction of future work.
}

\ifCLASSOPTIONcaptionsoff
  \newpage
\fi





\appendices

\section{Counterfactual Causal Influence (CCI) and its connection to Counterfactual Fairness}

\subsection{Proof of Lemma~\ref{lem:cci}}
\label{app:cci}
Here, we first provide a proof of Lemma~\ref{lem:cci} which shows that our proposed quantification of total disparity is zero if and only if $\mathrm{CCI}(Z \rightarrow \hat{Y})=0$. For ease of reading, we repeat the statement of the lemma here again.

\cci*

\begin{proof}[Proof of Lemma~\ref{lem:cci}]
From the definition of CCI (Definition~\ref{defn:cci} in Section~\ref{subsec:sys_model}),
\begin{align}
\mathrm{CCI}(Z \rightarrow \hat{Y})  &=\E{Z,Z',U_{X}}{|  h(Z,U_{X})-h(Z',U_{X})| } \nonumber \\
&= \sum_{z_1,z_2,u_x} \Pr(Z=z_1,Z'=z_2, U_{X}=u_x )  | h(z_1,u_x)-h(z_2,u_x)|  \nonumber  \\
& = \sum_{z_1,z_2,u_x} \Pr(Z=z_1)\Pr(Z'=z_2) \Pr(U_{X}=u_x)
 | h(z_1,u_x)-h(z_2,u_x)|. 
\end{align}
Here, the last line holds due to independence. The summation consist of non-negative terms. Therefore, $\mathrm{CCI}(Z \rightarrow \hat{Y})= 0$, \emph{if and only if} all the terms in the summation are zero, \textit{i.e.}, for all $z_1$, $z_2$ and $u_x$ with $\Pr(Z=z_1), \Pr(Z=z_2),\Pr(U_{X}=u_x) >0 $, $|h(z_1,u_x) - h(z_2,u_x)|=0$. This is  equivalent to $h(z,u_x)$ being constant over all possible values of $z$ with $\Pr(Z=z)>0$ given a fixed value of $u_x$, and this should happen over all values of $u_x$ with $\Pr(U_{X}=u_x)$.

Now, observe that,
\begin{align}
\mut{Z}{(\hat{Y},U_X)}  &=\mut{Z}{\hat{Y}\given U_X} + \mut{Z}{U_X} \\
& =\mut{Z}{\hat{Y}\given U_X} &&  [Z \independent U_X]\\
& = \mathrm{H}(\hat{Y}\given U_X) - \mathrm{H}(\hat{Y}\given U_X,Z) && \text{[By Definition]} \\
& = \mathrm{H}(\hat{Y}\given U_X). && \text{[$\hat{Y}$ determined by $Z,U_X$]}
\end{align}
$\mathrm{H}(\hat{Y}\given U_X)$ can be $0$ \emph{if and only if} $h(z,u_x)$ is constant over all possible values of $z$ with $\Pr(Z=z)>0$ given a fixed value of $u_x$, and this should happen over all $u_x$ with $\Pr(U_{X}=u_x)>0$. Thus, $\mathrm{CCI}(Z \rightarrow \hat{Y})= 0$ if and only if $\mut{Z}{(\hat{Y},U_X)}=0.$
\end{proof}

\subsection{Connections to Counterfactual Fairness}
\label{app:cci_additional}

We note that the concept of counterfactual causal influence (often referred to as only ``influence'') is derived from a separate body of work~\cite{breiman2001random,datta2016algorithmic,koh2017understanding,adler2018auditing,henelius2014peek}) outside the fairness literature. The original definition of counterfactual fairness in \cite{kusner2017counterfactual} was stated differently (without using CCI), although the connection with CCI has been hinted at in \cite{russell2017worlds}. Here, for the sake of completeness, we will formally show in Lemma~\ref{lem:cci_imply_cf} that $\mathrm{CCI}(Z\rightarrow \hat{Y})=0$ is equivalent to the counterfactual fairness criterion proposed in \cite{kusner2017counterfactual}. What this means is that, our proposed quantification of total disparity is also $0$ if and only if a model is counterfactually fair.

First, we clarify the differences in notation between our work and \cite{kusner2017counterfactual}. In our work, $X=f(Z,U_{X})$ and $\hat{Y}=r(X)=r\circ f(Z,U_{X})=h(Z,U_{X})$ where $h=r\circ f$. In \cite{kusner2017counterfactual}, $\hat{Y}_{Z\leftarrow z_1} (U)$ denotes the random variable $\hat{Y}$ when the value of $Z$ is fixed as $z_1$ by an intervention, i.e., $\hat{Y}_{Z\leftarrow z_1} (U)=h(z_1,U_X)$. Alongside, we also clarify that the event that $X$ takes the value $x$ when $Z$ is fixed as $z_1$ refers to the event that $U_X$ takes a value from the set $\mathcal{S}(x,z_1)=\{u_x :\  x=f(z_1,u_x),\ \Pr(U_X=u_x)>0\}$ because $X=f(Z,U_X)$. 
\begin{defn}[Counterfactual Fairness given $X=x$ and $Z=z_1$~\cite{kusner2017counterfactual}]
\label{defn:cf}
A predictor $\hat{Y}$ is counterfactually fair given the protected attribute $Z=z_1$ and the observed variable $X=x$, if we have, 
\begin{align}
&\Pr(\hat{Y}_{Z\leftarrow z_1} (U) = y | \text{$X$ takes value $x$ when $Z$ fixed as $z_1$} ) \nonumber \\
&= \Pr(\hat{Y}_{Z\leftarrow z_2}  (U) = y | \text{$X$ takes value $x$ when $Z$ fixed as $z_1$}), 
\end{align}
for all attainable $y$ and $z_2$. In our notations, this definition is equivalent to the following:
Given the sensitive attribute $Z=z_1$ and the observed variable $X=x$, 
\begin{equation}
\Pr(h(z_1,U_{X}) = y \given  \ U_{X} \in \mathcal{S}(x,z_1) )  \\ =\Pr(h(z_2,U_{X})) = y \given  \ \ U_{X} \in \mathcal{S}(x,z_1) ), \label{eq:cf}
\end{equation}
for all attainable $y$ and $z_2$, where $\mathcal{S}(x,z_1)=\{u_x :\  x=f(z_1,u_x),\ \Pr(U_X=u_x)>0\}.$
\end{defn}

Next, we show that $\mathrm{CCI}(Z \rightarrow \hat{Y})=0$ is equivalent to the counterfactual fairness criterion of \cite{kusner2017counterfactual}.

\begin{lem}
\label{lem:cci_imply_cf} $\mathrm{CCI}(Z\rightarrow \hat{Y})=0$ is equivalent to counterfactual fairness (Definition~\ref{defn:cf}) for all $X=x$ and $Z=z_1$ with $\Pr(X=x,Z=z_1)>0$.
\end{lem}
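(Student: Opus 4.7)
The plan is to prove both directions directly by appealing to the pointwise characterization already extracted in the proof of Lemma~\ref{lem:cci}: namely, $\mathrm{CCI}(Z\to\hat{Y})=0$ if and only if $h(z,u_x)$ is constant in $z$ for every fixed $u_x$ with $\Pr(U_X=u_x)>0$ (as $z$ ranges over values with $\Pr(Z=z)>0$). With this in hand, each implication reduces to a statement about when a pointwise equality of $h(z_1,\cdot)$ and $h(z_2,\cdot)$ on the event $\{U_X\in\mathcal{S}(x,z_1)\}$ is equivalent to the equality of conditional distributions of $h(z_1,U_X)$ and $h(z_2,U_X)$ on that event.

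For the forward direction ($\mathrm{CCI}=0 \Rightarrow $ counterfactual fairness), I would fix $(x,z_1)$ with $\Pr(X=x,Z=z_1)>0$ and any $z_2$. Since every $u_x\in\mathcal{S}(x,z_1)$ already satisfies $\Pr(U_X=u_x)>0$ by definition, the CCI characterization yields $h(z_1,u_x)=h(z_2,u_x)$ pointwise on $\mathcal{S}(x,z_1)$; the two conditional distributions appearing in \eqref{eq:cf} then coincide immediately, giving counterfactual fairness at $(x,z_1)$.

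For the converse, I would fix any $u_x$ with $\Pr(U_X=u_x)>0$ and any $z_1,z_2$ with positive marginals, and set $x:=f(z_1,u_x)$, so that $u_x\in\mathcal{S}(x,z_1)$. Because $Z\independent U_X$, $\Pr(X=x,Z=z_1)\geq \Pr(Z=z_1)\Pr(U_X=u_x)>0$, so the hypothesis applies at $(x,z_1)$. The key observation is that on the event $\{U_X\in\mathcal{S}(x,z_1)\}$, the variable $h(z_1,U_X)$ is almost-surely equal to the single constant $r(x)$, because every $u\in\mathcal{S}(x,z_1)$ satisfies $f(z_1,u)=x$ and hence $h(z_1,u)=r(f(z_1,u))=r(x)$. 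Counterfactual fairness then forces $h(z_2,U_X)$ to also take the value $r(x)$ almost surely on this event; evaluating at the particular atom $u_x$, which itself has positive mass, yields $h(z_2,u_x)=r(x)=h(z_1,u_x)$. Applying the Lemma~\ref{lem:cci} characterization then gives $\mathrm{CCI}(Z\to\hat{Y})=0$.

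The main obstacle is the converse direction, where one must combine the determinism of $X=f(Z,U_X)$ with the fact that $r(x)$ is a single value to pass from the distributional statement in Definition~\ref{defn:cf} to the pointwise equality on individual atoms $u_x$ demanded by the CCI characterization; the forward direction is essentially immediate. No delicate measure-theoretic issues should arise, provided we remain in the discrete/atomic setting implicit throughout the excerpt.
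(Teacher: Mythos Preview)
The proposal is correct and follows essentially the same approach as the paper: both directions rely on the pointwise characterization from Lemma~\ref{lem:cci}, together with the key observation that $h(z_1,\cdot)=r(x)$ is constant on $\mathcal{S}(x,z_1)$, and the use of $Z\independent U_X$ to guarantee $\Pr(X=x,Z=z_1)>0$ for the relevant $(x,z_1)$. Your converse is organized slightly more directly (starting from a fixed $u_x$ and constructing $x=f(z_1,u_x)$, rather than fixing $(x,z_1)$ first and then taking a union over all such pairs), but the mathematical content is identical.
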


\begin{proof}[Proof of Lemma~\ref{lem:cci_imply_cf}]
Suppose that, $\mathrm{CCI}(Z\rightarrow \hat{Y})=0$.
 Recall from Lemma~\ref{lem:cci}, that $\mathrm{CCI}(Z\rightarrow \hat{Y})=0$ is equivalent to the criterion that $h(z_1,u_x)=h(z_2,u_x)$ for all attainable $z_1$, $z_2$ given a particular value of $u_x$, and this should hold for all $u_x$ with $\Pr(U_{X}=u_x)>0$. Therefore, for any particular $X=x$ and $Z=z_1$ with $\Pr(X=x,Z=z_1)>0$,
\begin{equation}
\Pr(h(z_1,U_{X}) = y \given  \ U_{X} \in \mathcal{S}(x,z_1) )  \\ =\Pr(h(z_2,U_{X})) = y \given  \ \ U_{X} \in \mathcal{S}(x,z_1) ),
\end{equation}
because $h(z_1,u_{x})=h(z_2,u_x)$ for all $u_{x} \in \mathcal{S}(x,z_1)$. Thus, we show that $\mathrm{CCI}(Z\rightarrow \hat{Y})=0$ implies counterfactual fairness. 

Now, we prove the implication in the other direction. Suppose that the counterfactual fairness criterion \eqref{eq:cf} holds for all $X=x$ and $Z=z_1$ with $\Pr(X=x,Z=z_1)>0$.

First consider any particular $X=x$ and $Z=z_1$ with $\Pr(X=x,Z=z_1)>0$. Since $\Pr(X=x,Z=z_1)>0$, there exists at least one $u_x$ with $\Pr(U_X=u_x)>0$ such that $x=f(z_1,u_x)$. So, the set $\mathcal{S}(x,z_1)$ is non-empty. Equation \eqref{eq:cf} implies that,

\begin{equation}
\Pr(h(z_1,U_{X}) = y |  \ U_{X} \in \mathcal{S}(x,z_1) )  \\ =\Pr(h(z_2,U_{X})) = y |  \ \ U_{X} \in \mathcal{S}(x,z_1) ) \forall \text{attainable } y,z_2.
\end{equation}

This leads to, 
\begin{equation}
\Pr(h(z_1,U_{X}) = y , \ U_{X} \in \mathcal{S}(x,z_1) )  \\=\Pr(h(z_2,U_{X}) = y , \ \ U_{X} \in \mathcal{S}(x,z_1) )\ \forall \text{ attainable } y,z_2. 
\end{equation}
Or, 
\begin{equation}
\sum_{u_x \in \mathcal{S}(x,z_1)}\Pr(U_X=u_x) \mathbbm{1}(h(z_1,u_{x}) = y ) 
\\ =\sum_{u_x \in \mathcal{S}(x,z_1)}\Pr(U_X=u_x) \mathbbm{1}(h(z_2,u_{x}) = y ). 
 \label{eq:cf_implies_cci}
\end{equation}

Now, observe that, $f(z_1,u_x)=x$ for all $u_x \in \mathcal{S}(x,z_1)$, and thus $h(z_1,u_{x})=r \circ f(z_1,u_{x})$ takes the same value for all $u_x \in \mathcal{S}(x,z_1)$. Let $h(z_1,u_{x})=\tilde{y}$ for all $u_x \in \mathcal{S}(x,z_1)$. Then, for \eqref{eq:cf_implies_cci} to hold, we need, 
\begin{equation}
\sum_{u_x \in \mathcal{S}(x,z_1)}\Pr(U_X=u_x)(1-\mathbbm{1}(h(z_2,u_{x}) \\ = \tilde{y} )) =0 \ \forall \text{ attainable } z_2.
\end{equation}
This holds if and only if $\mathbbm{1}(h(z_2,u_{x}) = \tilde{y})=1$ for all $u_x \in \mathcal{S}(x,z_1)$ and for all attainable $z_2$. Thus, the counterfactual fairness criterion \eqref{eq:cf} \emph{for a particular} $X=x,Z=z_1$ with $\Pr(X=x,Z=z_1)>0$ implies that for all $u_x \in \mathcal{S}(x,z_1)$,  
\begin{equation}
h(z_2,u_{x})=h(z_1,u_{x}) \ \ \forall \text{ attainable } z_2. \label{eq:cf_cci}
\end{equation}

Because the counterfactual criterion \eqref{eq:cf} holds for all $X=x,Z=z_1$ with $\Pr(X=x,Z=z_1)>0$, we therefore have \eqref{eq:cf_cci} hold for all $$u_x \in \cup_{\{x,z_1: \Pr(X=x,Z=z_1)>0\}} \mathcal{S}(x,z_1).$$

Now, because $U_X$ is independent of $Z$, for any $u^*_x$ with  $\Pr(U_X=u^*_x)>0,$ there always exists some $x^*$ such that $x^*=f(z_1,u^*_x)$, and $\Pr(X=x^*,Z=z_1)\geq \Pr(U_X=u_x^*,Z=z_1)>0$. Thus, $u^*_x \in S(x^*,z_1)$ for some $(x^*,z_1)$ with $\Pr(X=x^*,Z=z_1)>0.$ Thus,
$$ \{u_x:\Pr(U_X=u_x)>0\} \subseteq \cup_{\{x,z_1: \Pr(X=x,Z=z_1)>0\}} \mathcal{S}(x,z_1),$$
implying that $h(z_2,u_{x})=h(z_1,u_{x})$ for all attainable $z_1,z_2$ given a particular value of $u_x$, and this holds for all $u_x$ with $\Pr(U_X=u_x)>0.$ This is equivalent to  $\mathrm{CCI}(Z\rightarrow \hat{Y})=0$ (recall Lemma~\ref{lem:cci}).



\end{proof}

\section{Relevant Information-Theoretic Properties}
\label{app:pid_properties}

\begin{lem}[Conditional DPI]
For all $(A,A',B,X_c)$ such that $(B,X_c)-A-A'$ form a Markov chain, we have the following conditional form of the Data Processing Inequality (DPI): $\mut{A}{B\given X_c}\geq \mut{A'}{B\given X_c}.$
\label{lem:cDPI}
\end{lem}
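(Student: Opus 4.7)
The plan is to reduce the conditional DPI to the standard (unconditional) DPI applied pointwise in the conditioning variable $X_c$. The hypothesis $(B,X_c) - A - A'$ states that $A' \independent (B,X_c) \mid A$. The key observation I would exploit is that this single conditional independence implies the ``pointwise'' Markov chain $B - A - A'$ holds under the conditional distribution given $X_c = x_c$, for every $x_c$ with $\Pr(X_c = x_c) > 0$.

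First I would make that reduction precise. By the weak union property of conditional independence (or directly by manipulating the factorization $p(a,a',b,x_c) = p(x_c)\,p(a \mid x_c)\,p(b \mid a,x_c)\,p(a' \mid a)$ implied by $A' \independent (B,X_c) \mid A$), one obtains $A' \independent B \mid (A, X_c)$. Combined with the original Markov chain, this yields $p(a' \mid a, b, x_c) = p(a' \mid a, x_c)$, which is exactly the statement that $B - A - A'$ is a Markov chain under the conditional law $\Pr(\cdot \mid X_c = x_c)$.

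Next I would invoke the standard Data Processing Inequality applied to this conditional law: for each $x_c$,
\begin{equation}
\mathrm{I}(A; B \mid X_c = x_c) \geq \mathrm{I}(A'; B \mid X_c = x_c).
\end{equation}
Taking expectation over $X_c$ on both sides gives $\mut{A}{B \given X_c} \geq \mut{A'}{B \given X_c}$, which is the claim.

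The proof is essentially routine; the only subtlety to handle carefully is the first step, namely justifying that the given Markov chain $(B,X_c) - A - A'$ transfers to a Markov chain $B - A - A'$ under each conditional distribution indexed by $X_c$. I would write this out either via the weak union axiom or by a direct factorization argument, both of which are standard and short.
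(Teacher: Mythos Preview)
Your proposal is correct. Both your approach and the paper's hinge on the same intermediate fact, namely $\mathrm{I}(A';B\mid A,X_c)=0$, which you obtain via weak union and the paper obtains via the chain rule applied to $\mathrm{I}(A';(B,X_c)\mid A)=0$. The difference is in the final step: you interpret this as the pointwise Markov chain $B-A-A'$ under each conditional law $\Pr(\cdot\mid X_c=x_c)$, invoke the standard (unconditional) DPI pointwise, and average; the paper instead stays at the level of information quantities and uses the two-way chain-rule expansion $\mathrm{I}(A';B\mid X_c)+\mathrm{I}(A;B\mid A',X_c)=\mathrm{I}(A;B\mid X_c)+\mathrm{I}(A';B\mid A,X_c)$ together with nonnegativity of $\mathrm{I}(A;B\mid A',X_c)$. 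Your route is conceptually clean in that it reduces to a known black-box result, while the paper's route is fully self-contained and avoids any pointwise argument; both are short and standard.
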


\begin{proof}[Proof of Lemma~\ref{lem:cDPI}]
From the Markov chain, we have $\mut{A'}{(B,X_c)\given A}=0.$ Because, $\mut{A'}{(B,X_c)\given A}= \mut{A'}{X_c\given A}+ \mut{A'}{B\given A,X_c}$ by chain rule and mutual information is non-negative, we also have $\mut{A'}{B\given A,X_c}=0.$ Now, similar to the proof of DPI, we have:
\begin{align}
\mut{A'}{B\given X_c}+\mut{A}{B\given A',X_c}  =
\mut{A}{B\given X_c}+\mut{A'}{B\given A,X_c} =\mut{A}{B\given X_c},
\end{align}
because $\mut{A'}{B\given A,X_c}=0.$ This leads to $\mut{A}{B\given X_c}\geq \mut{A'}{B\given X_c}.$
\end{proof}

\begin{lem}[Triangle Inequality of Unique Information]
For all $(Z,B,A,X_c)$, we have:
$$\uni{Z}{A| X_c} \leq \uni{Z}{A| B} + \uni{Z}{B| X_c}   .$$
\label{lem:triangle_inequality}
\end{lem}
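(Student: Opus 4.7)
The approach is a coupling construction that reduces the triangle inequality to two applications of the chain rule for conditional mutual information. Let $Q_1 \in \Delta_p(Z,A,B)$ be a distribution achieving $\uni{Z}{A| B} = I_{Q_1}(Z;A\mid B)$, and let $Q_2 \in \Delta_p(Z,B,X_c)$ be one achieving $\uni{Z}{B| X_c} = I_{Q_2}(Z;B\mid X_c)$. By Definition~\ref{defn:uni}, both have $(Z,B)$-marginals equal to the true $p(Z,B)$, so I can glue them along $(Z,B)$ via
\[
  Q(z,a,b,x_c) \;:=\; Q_1(z,a,b)\, Q_2(x_c\mid z,b),
\]
which is a well-defined joint distribution. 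By construction the $(Z,A)$, $(Z,B)$, and $(Z,X_c)$ marginals of $Q$ all coincide with the true marginals, so $Q(z,a,x_c) \in \Delta_p(Z,A,X_c)$; therefore $\uni{Z}{A| X_c} \leq I_Q(Z;A\mid X_c)$. A key bonus of this gluing is the conditional independence $A \perp X_c \mid (Z,B)$ under $Q$, since $Q(a,x_c\mid z,b) = Q_1(a\mid z,b)\, Q_2(x_c\mid z,b)$.

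Next I would apply the chain rule on $Q$:
\[
  I_Q(Z;A\mid X_c) \;\leq\; I_Q(Z; A,B \mid X_c) \;=\; I_Q(Z;B\mid X_c) + I_Q(Z;A\mid B, X_c).
\]
The first summand equals $I_{Q_2}(Z;B\mid X_c) = \uni{Z}{B| X_c}$ because the $(Z,B,X_c)$-marginal of $Q$ is exactly $Q_2$. For the second summand, the conditional independence $A \perp X_c \mid (Z,B)$ gives $I_Q(A;X_c\mid Z,B)=0$, and the two-way expansion $I_Q(A;Z,X_c\mid B) = I_Q(A;Z\mid B) + I_Q(A;X_c\mid Z,B) = I_Q(A;X_c\mid B) + I_Q(A;Z\mid B,X_c)$ then forces $I_Q(Z;A\mid B,X_c) = I_Q(Z;A\mid B) - I_Q(A;X_c\mid B) \leq I_Q(Z;A\mid B)$. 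Since the $(Z,A,B)$-marginal of $Q$ is $Q_1$, the right-hand side equals $I_{Q_1}(Z;A\mid B) = \uni{Z}{A| B}$. Chaining everything yields $\uni{Z}{A| X_c} \leq \uni{Z}{A| B} + \uni{Z}{B| X_c}$.

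\textbf{Main obstacle.} The subtle step is verifying that the glued distribution $Q$ is \emph{simultaneously} admissible for all three unique-information optimizations. This reduces to (i) the consistency of $Q_1$ and $Q_2$ on $(Z,B)$, which is immediate from the definition of $\Delta_p$, and (ii) ensuring the coupling introduces exactly the conditional independence $A \perp X_c \mid (Z,B)$ needed to kill the cross term $I_Q(A;X_c\mid Z,B)$ in the chain-rule expansion. Everything else is bookkeeping with marginals. I expect minor care to be needed on the measure-zero set where $Q_1(z,b)=Q_2(z,b)=0$, where the conditional $Q_2(x_c\mid z,b)$ is defined arbitrarily without affecting any of the information quantities; this is a standard issue and not a real obstacle.
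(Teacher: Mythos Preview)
Your proof is correct. The paper itself does not supply a proof for this lemma; it merely cites \cite[Proposition~2]{rauh2019unique}, and your coupling construction---gluing the two optimizers $Q_1$ and $Q_2$ along their common $(Z,B)$-marginal, then using the induced conditional independence $A\perp X_c\mid (Z,B)$ to control the cross term in the chain rule---is precisely the standard argument from that reference. All the marginal checks you outline are correct, and the measure-zero caveat you flag is indeed harmless.
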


This result is derived in \cite[Proposition  2]{rauh2019unique}.

\begin{lem}[Monotonicity under local operations on $Z$]
Let $Z'=f(Z)$ where $f(\cdot)$ is a deterministic function. Then, we have:
$$\uni{Z}{B| X_c} \geq \uni{Z'}{B| X_c}   .$$
\label{lem:monotonicity_alice}
\end{lem}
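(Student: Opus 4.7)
The plan is to use the operational definition of unique information from Definition~\ref{defn:uni} directly, combined with the conditional data processing inequality (Lemma~\ref{lem:cDPI}). Recall that $\uni{Z}{B\given X_c}=\min_{Q\in\Delta_Z}\mathrm{I}_{Q}(Z;B\given X_c)$ where $\Delta_Z$ is the set of joint distributions on $(Z,B,X_c)$ that share the marginals $\Pr(Z,B)$ and $\Pr(Z,X_c)$ of the true distribution. An analogous definition holds for $\uni{Z'}{B\given X_c}$ with $\Delta_{Z'}$.

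First, I would set up a lifting map $\Phi:\Delta_{Z}\to\Delta_{Z'}$ defined by pushing any $Q\in\Delta_Z$ forward through the deterministic map $(Z,B,X_c)\mapsto(f(Z),B,X_c)$. I would then verify that $\Phi(Q)\in\Delta_{Z'}$, i.e., that the pushed distribution $Q'=\Phi(Q)$ has the correct marginals on $(Z',B)$ and $(Z',X_c)$. This is the key bookkeeping step: since $Q\in\Delta_Z$ already shares the marginals $\Pr(Z,B)$ and $\Pr(Z,X_c)$ with the truth, and the true marginals on $(Z',B)$ and $(Z',X_c)$ are themselves obtained from $\Pr(Z,B)$ and $\Pr(Z,X_c)$ by applying $f$ to the first coordinate, the pushforward of $Q$'s $(Z,B)$-marginal must coincide with the true $(Z',B)$-marginal, and similarly for $(Z',X_c)$.

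Next, I would apply the conditional DPI (Lemma~\ref{lem:cDPI}) to bound $\mathrm{I}_{Q}(Z;B\given X_c)$ from below by $\mathrm{I}_{Q'}(Z';B\given X_c)$. Under $Q$, the chain $B - Z - Z'$ holds given $X_c$ because $Z'=f(Z)$ is determined by $Z$ alone (no additional randomness is introduced), so the Markov condition $(B,X_c)-Z-Z'$ holds and Lemma~\ref{lem:cDPI} gives $\mathrm{I}_Q(Z;B\given X_c)\geq \mathrm{I}_Q(Z';B\given X_c)=\mathrm{I}_{Q'}(Z';B\given X_c)$.

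Finally, I would combine these two observations. For every $Q\in\Delta_Z$ we have $\mathrm{I}_Q(Z;B\given X_c)\geq \mathrm{I}_{\Phi(Q)}(Z';B\given X_c)\geq \min_{\tilde Q\in\Delta_{Z'}}\mathrm{I}_{\tilde Q}(Z';B\given X_c)=\uni{Z'}{B\given X_c}$, since $\Phi(Q)\in\Delta_{Z'}$. Taking the minimum over $Q\in\Delta_Z$ on the left yields $\uni{Z}{B\given X_c}\geq \uni{Z'}{B\given X_c}$, as desired. The main obstacle is the verification that $\Phi(Q)\in\Delta_{Z'}$; everything else is a clean application of DPI and the min-over-subset principle.
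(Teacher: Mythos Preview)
Your proposal is correct and follows essentially the same approach as the paper's proof: both push a distribution in $\Delta_Z$ forward through $z\mapsto f(z)$, verify that the resulting distribution lies in $\Delta_{Z'}$ (preservation of the $(Z',B)$ and $(Z',X_c)$ marginals), and then apply the conditional DPI (Lemma~\ref{lem:cDPI}) via the Markov chain $(B,X_c)-Z-Z'$. The only cosmetic difference is that the paper works directly with the minimizer $Q^*\in\Delta_Z$ rather than defining the pushforward map $\Phi$ on all of $\Delta_Z$ before taking the minimum.
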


This result is derived in \cite[Lemma 31]{banerjee2018unique}. We include a proof for completeness.
\begin{proof}[Proof of Lemma~\ref{lem:monotonicity_alice}] Let $P'$ be the true joint distribution of $(Z',B,X_c)$ and $P$ be the true joint distribution of $(Z,B,X_c)$. Also let $Q^*= \arg \min_{Q\in \Delta_P} \mathrm{I}_{Q}(Z;B\given X_c)$ where $\Delta_P$ is the set of all joint distributions of $(Z,B,X_c)$ with the same marginals between $(Z,B)$ and $(Z,X_c)$ as the true joint distribution $P$. Let us also define
$$Q'^{*}(z',b,x_c)=\sum_{z}\Pr(z'\given z) Q^*(z,b,x_c),$$ where $\Pr(z'\given z)$ is the true conditional distribution of $Z'=f(Z)$ given $Z$.

Now, observe that,
\begin{align}
 \uni{Z}{B| X_c} 
&= \min_{Q\in \Delta_P} \mathrm{I}_{Q}(Z;B\given X_c) &&[\text{By Definition}]\nonumber \\
 & =  \mathrm{I}_{Q^*}(Z;B\given X_c) &&[\text{By Definition of }Q^*] \nonumber \\
& \overset{(a)}{\geq} \mathrm{I}_{Q'^{*}}(Z';B\given X_c) \nonumber \\
& \overset{(b)}{\geq} \min_{Q' \in \Delta_{P'}}\mathrm{I}_{Q'}(Z';B\given X_c)  \nonumber \\
& = \uni{Z'}{B| X_c} &&[\text{By Definition}].
\end{align}
Here (a) holds using the conditional form of the Data Processing inequality (Lemma~\ref{lem:cDPI}) as follows. Consider the random variables $(Z,B,X_c)$ following distribution $Q^*$ and $Z'=f(Z)$. Then, $(B,X_c)-Z-Z'$ form a Markov chain. Also note that (b) holds because $Q'^*$ belongs to $\Delta_{P'}$ which is the set of all joint distributions of $(Z',B,X_c)$ with the same marginals between $(Z',B)$ and $(Z',X_c)$ as the true joint distribution $P'.$
\end{proof}

\begin{lem}[Monotonicity under local operations on $B$] Let $B'=f(B)$ where $f(\cdot)$ is a deterministic function. Then, we have:
$$\uni{Z}{B| X_c} \geq \uni{Z}{B'| X_c}   .$$
\label{lem:monotonicity_bob}
\end{lem}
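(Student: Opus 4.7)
The plan is to mirror exactly the structure of the proof of Lemma~\ref{lem:monotonicity_alice}, but now pushing the deterministic map $f$ through the $B$-slot instead of the $Z$-slot. Let $P$ denote the true joint distribution of $(Z,B,X_c)$ and $P'$ the true joint distribution of $(Z,B',X_c)$ induced by $B'=f(B)$. Let $\Delta_P$ (resp.\ $\Delta_{P'}$) be the set of joint distributions with the same $(Z,B)$ and $(Z,X_c)$ marginals as $P$ (resp.\ the same $(Z,B')$ and $(Z,X_c)$ marginals as $P'$). By Definition~\ref{defn:uni}, pick an optimizer $Q^{*}\in\arg\min_{Q\in\Delta_P}\mathrm{I}_Q(Z;B\given X_c)$, so that $\uni{Z}{B| X_c}=\mathrm{I}_{Q^{*}}(Z;B\given X_c)$.

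Next, I would push $Q^{*}$ through the deterministic channel $b\mapsto f(b)$ to define
\[
Q'^{*}(z,b',x_c)\;=\;\sum_{b}\mathbbm{1}(b'=f(b))\,Q^{*}(z,b,x_c).
\]
The key verification is that $Q'^{*}\in\Delta_{P'}$. The $(Z,X_c)$ marginal of $Q'^{*}$ is clearly that of $Q^{*}$, which matches $P$ and hence $P'$. For the $(Z,B')$ marginal, summing over $x_c$ gives $\sum_b \mathbbm{1}(b'=f(b))Q^{*}(z,b)=\sum_b\mathbbm{1}(b'=f(b))P(z,b)=P'(z,b')$, using that $Q^{*}$ preserves the $(Z,B)$ marginal. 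So $Q'^{*}$ is admissible in the optimization defining $\uni{Z}{B'| X_c}$.

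Finally, view $Q^{*}$ and $Q'^{*}$ as the $(Z,B,X_c)$ and $(Z,B',X_c)$ marginals of a single joint distribution on $(Z,B,B',X_c)$ in which $B'=f(B)$ deterministically. Under this joint law, $(Z,X_c)-B-B'$ is a Markov chain, so the conditional data processing inequality (Lemma~\ref{lem:cDPI}) yields $\mathrm{I}_{Q^{*}}(Z;B\given X_c)\geq \mathrm{I}_{Q'^{*}}(Z;B'\given X_c)$. Chaining this with $\mathrm{I}_{Q'^{*}}(Z;B'\given X_c)\geq \min_{Q'\in\Delta_{P'}}\mathrm{I}_{Q'}(Z;B'\given X_c)=\uni{Z}{B'| X_c}$ completes the argument.

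The only step requiring real care is the marginal check in the second paragraph; everything else is bookkeeping plus one invocation of conditional DPI. I do not anticipate any genuine obstacle, since $f$ acting on $B$ is even cleaner than $f$ acting on $Z$: the $(Z,X_c)$ marginal is untouched, and the $(Z,B')$ marginal falls out immediately from $B'$ being a deterministic function of $B$.
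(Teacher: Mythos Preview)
Your proposal is correct and essentially identical to the paper's proof: both pick the optimizer $Q^{*}\in\Delta_P$, push it through the deterministic map $b\mapsto f(b)$ to obtain $Q'^{*}$, verify $Q'^{*}\in\Delta_{P'}$, and then apply the conditional DPI (Lemma~\ref{lem:cDPI}) via the Markov chain $(Z,X_c)-B-B'$. Your explicit marginal check for $Q'^{*}\in\Delta_{P'}$ is a bit more detailed than the paper's, but the structure and key steps are the same.
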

This result is derived in \cite[Lemma 31]{banerjee2018unique}. We include a proof for completeness.

\begin{proof}[Proof of Lemma~\ref{lem:monotonicity_bob}] Let $P'$ be the true joint distribution of $(Z,B',X_c)$ and $P$ be the true joint distribution of $(Z,B,X_c)$. Also let $Q^*= \arg \min_{Q\in \Delta_P} \mathrm{I}_{Q}(Z;B\given X_c)$ where $\Delta_P$ is the set of all joint distributions of $(Z,B,X_c)$ with the same marginals between $(Z,B)$ and $(Z,X_c)$ as the true joint distribution $P$. Let us also define
$$Q'^{*}(z,b',x_c)=\sum_{b}\Pr(b'\given b) Q^*(z,b,x_c),$$ where $\Pr(b'\given b)$ is the true conditional distribution of $B'=f(B)$ given $B$.

Now, observe that,
\begin{align}
 \uni{Z}{B| X_c} 
 &= \min_{Q\in \Delta_P} \mathrm{I}_{Q}(Z;B\given X_c) &&[\text{By Definition}]\nonumber \\
 & =  \mathrm{I}_{Q^*}(Z;B\given X_c) &&[\text{By Definition of }Q^*] \nonumber \\
& \overset{(a)}{\geq} \mathrm{I}_{Q'^{*}}(Z;B'\given X_c) \nonumber \\
& \overset{(b)}{\geq} \min_{Q' \in \Delta_{P'}}\mathrm{I}_{Q'}(Z;B'\given X_c)  \nonumber \\
& = \uni{Z}{B'| X_c} &&[\text{By Definition}].
\end{align}
Here (a) holds using the conditional form of the Data Processing inequality (Lemma~\ref{lem:cDPI}) as follows. Consider the random variables $(Z,B,X_c)$ following distribution $Q^*$ and $B'=f(B)$. Then, $(Z,X_c)-B-B'$ form a Markov chain. Also note that (b) holds because $Q'^*$ belongs to $\Delta_{P'}$ which is the set of all joint distributions of $(Z,B',X_c)$ with the same marginals between $(Z,B')$ and $(Z,X_c)$ as the true joint distribution $P'.$
\end{proof}

\begin{lem}[Monotonicity under adversarial side information]
For all $(A,B,X_c,X_c')$, we have:
$$\uni{A}{B| (X_c,X_c')} \leq \uni{A}{B| X_c}   .$$
\label{lem:monotonicity_eve}
\end{lem}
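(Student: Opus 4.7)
The plan is to prove this directly from the definitional minimization in Definition~\ref{defn:uni}, by ``lifting'' any admissible distribution for the simpler problem defining $\uni{A}{B|X_c}$ to an admissible distribution for the larger problem defining $\uni{A}{B|(X_c,X_c')}$, in a way that can only decrease the conditional mutual information.

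Let $P$ denote the true joint distribution of $(A,B,X_c,X_c')$, and let $\Delta_{P_1}$, $\Delta_{P_2}$ be the feasible sets appearing in the definitions of $\uni{A}{B|X_c}$ and $\uni{A}{B|(X_c,X_c')}$ respectively (so $\Delta_{P_1}$ preserves the $(A,B)$ and $(A,X_c)$ marginals of $P$, while $\Delta_{P_2}$ preserves the $(A,B)$ and $(A,X_c,X_c')$ marginals). Given any $Q_1\in\Delta_{P_1}$, I would define a candidate $Q_2$ on $(A,B,X_c,X_c')$ by
\begin{equation}
Q_2(a,b,x_c,x_c') \;=\; Q_1(a,b,x_c)\,\Pr(x_c'\mid a,x_c),
\end{equation}
where $\Pr(x_c'\mid a,x_c)$ is taken from the true distribution $P$. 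A short bookkeeping step (summing out $B$ and summing out $(X_c,X_c')$, using $\sum_{x_c'}\Pr(x_c'\mid a,x_c)=1$ and the marginal constraints on $Q_1$) shows that the $(A,B)$-marginal of $Q_2$ equals $P(a,b)$ and the $(A,X_c,X_c')$-marginal equals $P(a,x_c,x_c')$, so $Q_2\in\Delta_{P_2}$.

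The key structural feature of the construction is that by design $X_c'$ depends on $(A,B,X_c)$ only through $(A,X_c)$ under $Q_2$: one checks that $Q_2(b,x_c'\mid a,x_c)=Q_1(b\mid a,x_c)\,\Pr(x_c'\mid a,x_c)$ factorizes, so the Markov chain $B-(A,X_c)-X_c'$ holds and $\mathrm{I}_{Q_2}(B;X_c'\mid A,X_c)=0$. Applying the chain rule to $\mathrm{I}_{Q_2}(A;B,X_c'\mid X_c)$ in two ways, this Markov relation collapses into the identity
\begin{equation}
\mathrm{I}_{Q_2}(A;B\mid X_c,X_c') \;=\; \mathrm{I}_{Q_2}(A;B\mid X_c) \;-\; \mathrm{I}_{Q_2}(B;X_c'\mid X_c),
\end{equation}
so in particular $\mathrm{I}_{Q_2}(A;B\mid X_c,X_c')\leq \mathrm{I}_{Q_2}(A;B\mid X_c)$. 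Since the $(A,B,X_c)$-marginal of $Q_2$ is exactly $Q_1$, the right-hand side equals $\mathrm{I}_{Q_1}(A;B\mid X_c)$.

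Putting the pieces together, for every $Q_1\in\Delta_{P_1}$,
\begin{equation}
\uni{A}{B| (X_c,X_c')} \;\leq\; \mathrm{I}_{Q_2}(A;B\mid X_c,X_c') \;\leq\; \mathrm{I}_{Q_1}(A;B\mid X_c),
\end{equation}
and minimizing over $Q_1\in\Delta_{P_1}$ on the right gives the claim $\uni{A}{B|(X_c,X_c')}\leq\uni{A}{B|X_c}$. The only step requiring any real care is the verification of the two marginal constraints of $Q_2$ and the chain-rule identity invoking the induced Markov property; I expect this to be mechanical rather than a genuine obstacle, and the overall argument mirrors in spirit the lifting strategy used in the proofs of Lemmas~\ref{lem:monotonicity_alice} and~\ref{lem:monotonicity_bob}.
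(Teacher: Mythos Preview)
Your proof is correct. The paper itself does not give a proof of this lemma; it simply cites \cite[Lemma~32]{banerjee2018unique}. Your argument supplies the missing details and is entirely in the spirit of the proofs the paper \emph{does} provide for Lemmas~\ref{lem:monotonicity_alice} and~\ref{lem:monotonicity_bob}: construct a feasible distribution for one optimization from an optimizer of the other, then use a data-processing/chain-rule step to compare conditional mutual informations. The marginal checks for $Q_2$ are routine, and your chain-rule identity is exactly the two-way expansion of $\mathrm{I}_{Q_2}(B;A,X_c'\mid X_c)$ together with the induced Markov relation $B-(A,X_c)-X_c'$, which kills $\mathrm{I}_{Q_2}(B;X_c'\mid A,X_c)$.
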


This result is derived in  \cite[Lemma 32]{banerjee2018unique}. 

\begin{lem}[Maximal conditional mutual information] Let $A=f(Z,U_{X})$ where $Z \independent U_{X}$ and $B=g(U_{X})$ for some deterministic functions $f(\cdot)$ and $g(\cdot)$ respectively. Then,
\begin{equation}
\mathrm{I}(Z;A\given U_{X}) \geq \mathrm{I}(Z;A\given B).
\end{equation}
\label{lem:maxCMI}
\end{lem}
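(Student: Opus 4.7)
The approach is to use the chain rule of mutual information on the joint pairs $(A,U_X)$ and $(A,B)$, coupled with the observation that $B$ is a deterministic function of $U_X$ (so the data processing inequality applies) and the fact that $Z \independent U_X$ forces $Z \independent B$.

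First I would expand $\mathrm{I}(Z;(A,U_X))$ two ways via the chain rule:
\begin{align}
\mathrm{I}(Z;(A,U_X)) = \mathrm{I}(Z;U_X) + \mathrm{I}(Z;A\given U_X) = \mathrm{I}(Z;A\given U_X),
\end{align}
since $\mathrm{I}(Z;U_X)=0$ by independence. Similarly,
\begin{align}
\mathrm{I}(Z;(A,B)) = \mathrm{I}(Z;B) + \mathrm{I}(Z;A\given B) = \mathrm{I}(Z;A\given B),
\end{align}
because $B=g(U_X)$ is a deterministic function of $U_X$, and $Z\independent U_X$ implies $Z\independent B$, hence $\mathrm{I}(Z;B)=0$.

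Next I would apply the data processing inequality to conclude $\mathrm{I}(Z;(A,B)) \leq \mathrm{I}(Z;(A,U_X))$. Concretely, since $B=g(U_X)$, the pair $(A,B)$ is a deterministic function of $(A,U_X)$, so the Markov chain $Z-(A,U_X)-(A,B)$ holds, and DPI yields the inequality. Combining the three displayed relations gives $\mathrm{I}(Z;A\given B) \leq \mathrm{I}(Z;A\given U_X)$, which is the desired conclusion.

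No step here looks like a serious obstacle; the lemma is essentially a chain-rule-plus-DPI computation. The only subtlety worth double-checking is the $\mathrm{I}(Z;B)=0$ step, which relies on the assumption $Z\independent U_X$ together with $B$ being a deterministic function of $U_X$ alone (crucially, $B$ does not depend on $Z$ through $f$). With that noted, the proof is a short, direct calculation.
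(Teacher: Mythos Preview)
Your proof is correct and follows essentially the same idea as the paper's: both reduce to comparing $\mathrm{I}(Z;(A,U_X))$ with $\mathrm{I}(Z;(A,B))$ using the chain rule plus the independence $Z\independent U_X$ (hence $Z\independent B$), and both rely on the same underlying inequality (you invoke it as DPI via the Markov chain $Z-(A,U_X)-(A,B)$, while the paper writes it out as $\mathrm{I}(Z;U_X\given A,B)\geq 0$ and expands in entropies). The two arguments are equivalent, with yours packaged slightly more cleanly at the level of named inequalities.
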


\begin{proof}[Proof of Lemma~\ref{lem:maxCMI}]
Observe that,
\begin{align}
& \mathrm{I}(Z;U_{X}\given  A,B) ) \geq 0 && \text{[non-negativity property]} \nonumber \\
& \implies \mathrm{H}(Z\given A,B)-\mathrm{H}(Z\given A,B,U_{X} ) \geq 0 && \text{[by definition]} \nonumber \\
& \implies \mathrm{H}(Z\given A,B)-\mathrm{H}(Z\given A,U_{X} ) \geq 0 && \text{[}B=g(U_{X})\text{]} \nonumber \\
& \implies \mathrm{H}(Z)-\mathrm{H}(Z\given A,U_{X} )  \geq \mathrm{H}(Z)-\mathrm{H}(Z\given A,B) \nonumber   \\
& \implies \mathrm{H}(Z| U_{X})-\mathrm{H}(Z| A,U_{X} ) \geq \mathrm{H}(Z| B)-\mathrm{H}(Z|A,B) && \text{[$Z \independent U_{X}$  and  $Z \independent B$]} \nonumber \\
& \implies \mathrm{I}(Z;A\given U_{X}) \geq \mathrm{I}(Z;A\given B).
\end{align}

\end{proof}

\begin{lem}[Absence of counterfactual causal influence] Let $\hat{Y}=h(Z,U_{X})$ where $Z \independent U_{X}$ and $X_c=g(Z,U_{X})$ for some deterministic functions $h(\cdot)$ and $g(\cdot)$ respectively. Then $\mathrm{CCI}(Z\rightarrow \hat{Y})=0$ implies $\uni{Z}{(\hat{Y},U_X)| X_c}=0$ and also $\uni{Z}{\hat{Y}| X_c}=0$. \label{lem:uni_cci}
\end{lem}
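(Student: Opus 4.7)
The plan is to reduce the statement directly to Lemma~\ref{lem:cci} together with the basic non-negativity of the PID components. By Lemma~\ref{lem:cci} (which I would invoke as a black box), the hypothesis $\mathrm{CCI}(Z\rightarrow \hat{Y})=0$ is equivalent to $\mut{Z}{(\hat{Y},U_X)}=0$. So after the first line of the proof, the problem becomes: given that a certain mutual information vanishes, show that two unique informations also vanish.

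The key tool is the PID identity \eqref{eq:pid2}, namely $\mut{Z}{A}=\uni{Z}{A\mid B}+\rd{Z}{(A,B)}$, combined with the non-negativity of all PID atoms. Instantiating with $A=(\hat{Y},U_X)$ and $B=X_c$ gives
\begin{equation*}
\uni{Z}{(\hat{Y},U_X)\mid X_c}\;\le\;\mut{Z}{(\hat{Y},U_X)}\;=\;0,
\end{equation*}
which, together with non-negativity of unique information, yields the first claim $\uni{Z}{(\hat{Y},U_X)\mid X_c}=0$.

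For the second claim, I would give two equally short routes and pick whichever reads more cleanly. Route one: note that $\mut{Z}{\hat{Y}}\le \mut{Z}{(\hat{Y},U_X)}=0$ by the chain rule and non-negativity, then apply the same PID identity with $A=\hat{Y}$, $B=X_c$ to get $\uni{Z}{\hat{Y}\mid X_c}\le \mut{Z}{\hat{Y}}=0$. Route two: apply Lemma~\ref{lem:monotonicity_bob} (monotonicity of unique information under local operations on the ``Bob'' side) with the deterministic projection $(\hat{Y},U_X)\mapsto \hat{Y}$ to get $\uni{Z}{\hat{Y}\mid X_c}\le \uni{Z}{(\hat{Y},U_X)\mid X_c}=0$.

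There is no real obstacle here: the hypotheses on $h(\cdot)$, $g(\cdot)$, and the independence $Z\independent U_X$ are only used implicitly, through Lemma~\ref{lem:cci}, to justify the equivalence $\mathrm{CCI}(Z\rightarrow\hat{Y})=0 \iff \mut{Z}{(\hat{Y},U_X)}=0$. Once that equivalence is in hand, both conclusions are immediate from the fact that unique information is a non-negative summand of the relevant mutual information. The entire proof should fit in a few lines.
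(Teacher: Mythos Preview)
Your proposal is correct and essentially identical to the paper's proof: the paper invokes Lemma~\ref{lem:cci} to obtain $\mut{Z}{(\hat{Y},U_X)}=0$, then uses \eqref{eq:pid2} with non-negativity of PID terms for the first claim, and for the second claim uses exactly your Route one (chain rule to get $\mut{Z}{\hat{Y}}\leq \mut{Z}{(\hat{Y},U_X)}=0$, then \eqref{eq:pid2} again). Your Route two via Lemma~\ref{lem:monotonicity_bob} is a valid alternative that the paper does not mention.
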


\begin{proof}[Proof of Lemma~\ref{lem:uni_cci}]
$\mathrm{CCI}(Z\rightarrow \hat{Y})=0$  is equivalent to $\mut{Z}{(\hat{Y},U_X)}=0$ (using Lemma~\ref{lem:cci}).
Now, 

$$\uni{Z}{(\hat{Y},U_X)| X_c} \overset{(a)}{\leq} \mut{Z}{(\hat{Y},U_X)} =0,$$
where (a) holds from \eqref{eq:pid2} in Section~\ref{subsec:background} and non-negativity of PID.
Also,
$$\uni{Z}{\hat{Y}| X_c} \overset{(a)}{\leq} \mut{Z}{\hat{Y}}  \overset{(b)}{\leq} \mut{Z}{(\hat{Y},U_X)} =0,$$
where (a) holds from \eqref{eq:pid2} in Section~\ref{subsec:background} and non-negativity of PID terms, and (b) holds from the chain rule and non-negativity of mutual information.

\end{proof}

\begin{lem}[Zero-synergy property of deterministic functions] Let $f(Z)$ be any deterministic function of $Z$, and let $X_c$ be any random variable. Then, 
\begin{equation}
 \mathrm{Syn}(Z:(f(Z), X_c))=\mathrm{Syn}(Z:(X_c, f(Z)))=0.
\end{equation}
This leads to $\mathrm{Uni}(Z:f(Z)| X_c)= \mathrm{I}(Z;f(Z)|X_c)$ and $\mathrm{Uni}(Z:X_c| f(Z))= \mathrm{I}(Z;X_c|f(Z))$.
\label{lem:zero_syn}
\end{lem}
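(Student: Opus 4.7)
The plan is to leverage the operational definition of unique information in Definition \ref{defn:uni}, which writes $\uni{Z}{A| B} = \min_{Q \in \Delta_p} \mathrm{I}_Q(Z; A | B)$ over the set $\Delta_p$ of joint distributions on $(Z, A, B)$ that share the true $(Z, A)$ and $(Z, B)$ marginals. The strategy is to show that when $A = f(Z)$, the marginal constraints force $\Delta_p$ to collapse (on the support) to the singleton $\{P\}$, so that the minimum is attained at the true distribution and equals $\mathrm{I}(Z; f(Z) \given X_c)$; the synergy then vanishes by \eqref{eq:pid3}.

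First, I would set $A = f(Z)$ and $B = X_c$ and observe that under the true distribution $P(Z = z, A = a) = P(Z=z)\,\mathbb{1}[a = f(z)]$. Any $Q \in \Delta_p$ must match this marginal, which immediately gives $Q(A = a \given Z = z) = \mathbb{1}[a = f(z)]$, i.e., $A$ remains a deterministic function of $Z$ under $Q$ with the same function $f$. Next, I would combine this with the second marginal constraint $Q(Z, X_c) = P(Z, X_c)$ and argue that $Q(Z, A, X_c) = P(Z, X_c)\,\mathbb{1}[A = f(Z)]$, which agrees with $P(Z, A, X_c)$ on the support. Hence $\Delta_p = \{P\}$ effectively, and the minimization is trivial.

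It then follows that $\uni{Z}{f(Z) \given X_c} = \mathrm{I}(Z; f(Z) \given X_c)$. Applying \eqref{eq:pid3} with $A = f(Z)$, $B = X_c$ gives $\syn{Z}{(f(Z), X_c)} = \mathrm{I}(Z; f(Z) \given X_c) - \uni{Z}{f(Z) \given X_c} = 0$. The symmetry of synergy in its last two arguments, highlighted in Section \ref{subsec:background}, delivers $\syn{Z}{(X_c, f(Z))} = 0$ as well. For the final equality, I would invoke \eqref{eq:pid3} once more with the roles swapped: $\mathrm{I}(Z; X_c \given f(Z)) = \uni{Z}{X_c \given f(Z)} + \syn{Z}{(X_c, f(Z))} = \uni{Z}{X_c \given f(Z)}$.

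The main obstacle is the uniqueness step in the second paragraph: I need to argue cleanly that the two marginal constraints, combined with the deterministic relation $A = f(Z)$, genuinely pin down the full joint on $(Z, A, X_c)$ rather than leaving residual freedom through zero-probability cells. Concretely, after establishing $Q(A = f(z) \given Z = z) = 1$, one must verify that for every $(z, x_c)$ with $Q(Z=z, X_c=x_c) > 0$, marginalizing $Q(A \given Z = z, X_c = x_c)$ over $A$ and requiring consistency with $Q(A \given Z = z)$ forces $Q(A = f(z) \given Z = z, X_c = x_c) = 1$. Once this elementary (but slightly tedious) check is dispensed with, the rest of the proof is a direct application of the PID identities already recorded in \eqref{eq:pid2}--\eqref{eq:pid3}.
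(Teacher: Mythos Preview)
Your proposal is correct and follows essentially the same approach as the paper: both arguments show that when one of the two variables is $f(Z)$, the marginal constraints in Definition~\ref{defn:uni} force $\Delta_p$ to collapse to the true distribution, so that $\uni{Z}{f(Z)\given X_c}=\mut{Z}{f(Z)\given X_c}$ and the synergy vanishes via \eqref{eq:pid3} and its symmetry. The paper handles the uniqueness step you flag by writing $q(z,b,x_c)=q(z)q(b\mid z)q(x_c\mid b,z)$ and observing that $q(b\mid z)=\mathbb{1}[b=f(z)]$ forces $q(x_c\mid b,z)=q(x_c\mid z)=\Pr(X_c=x_c\mid Z=z)$, which is exactly the check you outline.
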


\begin{proof}[Proof of Lemma~\ref{lem:zero_syn}:] Recall from the definition of $\uni{Z}{B| X_c}$ that $\Delta$ denotes the set of all joint distributions of $(Z,B,X_c)$ and 
$\Delta_p$ is the set of all such joint distributions that have the same marginals for $(Z,B)$ and $(Z,X_c)$ as the true distribution, \textit{i.e.}, 
\begin{equation}
\Delta_p=\{ Q \in \Delta : \ q(z,b)=\Pr(Z=z,B=b) \text{ and } \\ q(z,x_c)=\Pr(Z=z,X_c=x_c) \}.
\end{equation}
We first show that if $B=f(Z)$, then $\Delta_p$ is only a singleton set which only consists of the true distribution. Observe that, for any $Q \in \Delta_p$,
\begin{align}
&q(z,b,x_c)=q(z)q(b|z)q(x_c|b,z) &&\text{[chain rule of probability]} \nonumber \\
& =\Pr(Z=z)\Pr(B=b|Z=z)q(x_c|b,z) &&\text{[$q(z,b)=\Pr(Z=z,B=b)$]} \nonumber \\
&= \begin{cases} \Pr(Z=z)q(x_c|b,z), &\text{if } b=f(z) \\
0, &\text{otherwise} 
\end{cases} &&\text{[$\Pr(B=b|Z=z)=1$ only if $b=f(z)$]} \nonumber \\
& = \begin{cases} \Pr(Z=z)q(x_c|z), &\text{if } b=f(z)  \\
0, &\text{otherwise} 
\end{cases} &&\text{[$b$ is entirely determined by $z$]}\nonumber \\
& = \begin{cases} \Pr(Z=z)\Pr(X_c=x_c|Z=z), &\text{if } y=f(z) \\
0, &\text{otherwise} 
\end{cases} &&\text{[}q(x_c|z)=\Pr(X_c=x_c|Z=z)\text{]}  \nonumber \\
& = \Pr(Z=z,B=b,X_c=x_c).
\end{align}

Thus, for $B=f(Z)$,
\begin{equation}
\mathrm{Uni}(Z:B| X_c)  = \min_{Q \in \Delta_p} \mathrm{I}_{Q}(Z;B|X_c) = \mathrm{I}(Z;B|X_c).
\end{equation}
This leads to $\mathrm{Syn}(Z:(f(Z), X_c))=\mathrm{I}(Z;f(Z)|X_c)-\mathrm{Uni}(Z:f(Z)| X_c)=0$ (using \eqref{eq:pid3} in Section~\ref{subsec:background}). Note that, $\mathrm{Syn}(Z:(f(Z), X_c))$ is symmetric between $f(Z)$ and $X_c$.
\end{proof}

\section{Appendix to Section~\ref{sec:properties}}

Here, we provide the proofs of the results as well as additional discussion to supplement Section~\ref{sec:properties}. For convenience, we repeat the statements of the results.

\subsection{Proof of Theorem~\ref{thm:satisfythm} and Lemma~\ref{lem:markov_chain}}
\label{app:properties}

\satisfythm*

\begin{proof}[Proof of Theorem~\ref{thm:satisfythm}]
Here, we formally show that our proposed measure satisfies all the four desirable properties. We restate each of the properties again and then show that they are is satisfied.

\cancellation*
\begin{align}
M^*_{NE} & = \min_{U_a,U_b\text{ s.t. } U_a=U_X\backslash U_b} \uni{(Z,U_a)}{(\hat{Y},U_b)| X_c} \nonumber \\
& \leq \uni{Z}{(\hat{Y},U_X)| X_c} \nonumber \\
& \leq \mut{Z}{(\hat{Y},U_X)}.  &&  \text{[\eqref{eq:pid2} in Section~\ref{subsec:background} and non-negativity of PID terms]}
\end{align}
Thus, $\mut{Z}{(\hat{Y},U_X)}=0$ implies $M_{NE}=0$.

\synergy*

\begin{align}
M^*_{NE}& = \min_{U_a,U_b\text{ s.t. } U_a=U_X\backslash U_b} \uni{(Z,U_a)}{(\hat{Y},U_b)| X_c} \nonumber
\\& = \uni{(Z,U^*_a)}{(\hat{Y},U^*_b)| X_c} && \text{[for some $(U_a^*,U_b^*)$]} \nonumber
\\ & \geq \uni{Z}{(\hat{Y},U^*_b)| X_c} && \text{[Using Lemma~\ref{lem:monotonicity_alice}]} \nonumber \\
& \geq \uni{Z}{\hat{Y}| X_c}. && \text{[Using Lemma~\ref{lem:monotonicity_bob}]}
\end{align}
Thus, $\uni{Z}{\hat{Y}| X_c}>0$ implies that $M_{NE}>0$.

\nonexemptmasking*

First we will show that $M_{NE}^*>0$ for the canonical example of non-exempt disparity where $\hat{Y}=Z \oplus U_{X_1}$ where $Z$ lies in the non-critical/general features and $U_{X_1}$ can be either critical or non-critical.

\textbf{Case 1:} $X_c=U_{X_1}$, $X_g=Z$ and $\hat{Y}=Z\oplus U_{X_1}$ with $Z,U_{X_1}\sim \iid$ Bern(\nicefrac{1}{2}).

We will check the value of $\uni{(Z,U_a)}{(\hat{Y},U_b)| X_c}$ for different choices of $U_a$ to find the minimum.

For $U_a=\phi$ and $U_b=U_{X_1}$, we have
\begin{align}
&\uni{(Z,U_a)}{(\hat{Y},U_b)| X_c}\nonumber \\
&= \uni{Z}{(\hat{Y},U_{X_1})| X_c} &&[\text{Substituting the variables}]\nonumber \\
& = \mut{Z}{(\hat{Y},U_{X_1})} -\rd{Z}{((\hat{Y},U_{X_1}),X_c)} \nonumber &&[\text{Using \eqref{eq:pid2} in Section~\ref{subsec:background}}]\nonumber\\
& \overset{(a)}{=} \mut{Z}{(\hat{Y},U_{X_1})} \nonumber \\
& = 1 \text{ bit}.
\end{align}
Here (a) holds because $\rd{Z}{((\hat{Y},U_{X_1}),X_c)} \leq \mut{Z}{X_c}$ (using \eqref{eq:pid2} in Section~\ref{subsec:background} and non-negativity of PID terms), and here $\mut{Z}{X_c}=0$.

For $U_a=U_{X_1}$ and $U_b=\phi$, we have
\begin{align}
\uni{(Z,U_a)}{(\hat{Y},U_b)| X_c} 
& = \uni{(Z,U_{X_1})}{\hat{Y}| X_c} &&[\text{Substituting the variables}]\nonumber \\
& = \mut{(Z,U_{X_1})}{\hat{Y}\given X_c} && [\text{Lemma~\ref{lem:zero_syn} as $\hat{Y}$ is deterministic in $Z,U_{X_1}$}]\nonumber \\
&= 1 \text{ bit}.
\end{align}

Thus, $M_{NE}^*
=\min_{U_a,U_b\text{s.t. } U_a=U_X\backslash U_b}\uni{(Z,U_a)}{(\hat{Y},U_b)| X_c}=1\text{ bit},$ which is strictly greater than $0$.

\textbf{Case 2:} $X_c=\phi$, $X_g=(Z,U_{X_1})$ and $\hat{Y}=Z\oplus U_{X_1}$ with $Z,U_{X_1}\sim \iid$ Bern(\nicefrac{1}{2}).

Since $X_c=\phi$, we can use Property~\ref{propty:absence} (proved above) to compute 
$$M_{NE}^*= \mut{Z}{(\hat{Y},U_X)}=1 \text{ bit},$$
which is strictly greater than $0$.
Thus, our proposed measure is non-zero in the canonical example of non-exempt masked disparity. Now, we move on to the proof of the next part of this property.

Suppose that $(Z,U_a)-X_c-(\hat{Y},U_b)$ form a Markov chain for some subsets $U_a,U_b \subseteq U_X$ such that $U_a=U_X\backslash U_b$. Then, $\mut{(Z,U_a)}{(\hat{Y},U_b)\given  X_c}=0$, implying that $\uni{(Z,U_a)}{(\hat{Y},U_b)| X_c}=0$ for those subsets $U_a,U_b \subseteq U_X$ because unique information is a sub-component of conditional mutual information. Therefore, 
$$M^*_{NE} = \min_{U_a,U_b\text{ s.t. } U_a=U_X\backslash U_b} \uni{(Z,U_a)}{(\hat{Y},U_b)| X_c} \leq 0.$$
Again, using the fact that unique information is non-negative, we have,
$$M^*_{NE} = \min_{U_a,U_b\text{ s.t. } U_a=U_X\backslash U_b} \uni{(Z,U_a)}{(\hat{Y},U_b)| X_c} \geq 0.$$
Thus, $M^*_{NE}=0$.

\absence*

When $X_c=\phi$, we have $\uni{Z,U_a}{\hat{Y},U_b| X_c}=\mut{Z,U_a}{\hat{Y},U_b}.$ We are required to show that $$\min_{U_a,U_b\text{ s.t. } U_a=U_X\backslash U_b}\mut{Z,U_a}{\hat{Y},U_b}$$ is equal to $\mut{Z}{(\hat{Y},U_X)}.$ Note that,
\begin{align}
& \mut{Z,U_a}{\hat{Y},U_b} = \mathrm{H}(\hat{Y},U_b)-\mathrm{H}(\hat{Y},U_b\given Z,U_a) && \text{[By Definition]} \nonumber \\
& = \mathrm{H}(\hat{Y}\given U_b)+ \mathrm{H}(U_b)-\mathrm{H}(U_b\given Z,U_a)-\mathrm{H}(\hat{Y}\given U_b,Z,U_a) && \text{[Chain Rule]} \nonumber \\
& = \mathrm{H}(\hat{Y}\given U_b)+ \mathrm{H}(U_b)-\mathrm{H}(U_b\given Z,U_a) && \text{[$\hat{Y}$ is entirely determined by $Z,U_a,U_b$]} \nonumber \\
& = \mathrm{H}(\hat{Y}\given U_b) && \text{[$Z,U_a,U_b$ are mutually independent]} \nonumber \\
& \geq \mathrm{H}(\hat{Y}\given U_X) && \text{[conditioning reduces entropy]} \nonumber \\
& = \mathrm{H}(\hat{Y}\given U_X) - \mathrm{H}(\hat{Y}\given Z,U_X) + \mut{Z}{U_X} && \text{[$\hat{Y}$ entirely determined by $Z,U_X$, and $Z\independent U_X$]} \nonumber \\
& = \mut{Z}{\hat{Y}\given U_X} + \mut{Z}{U_X} && \text{[By Definition]} \nonumber \\
& = \mut{Z}{(\hat{Y},U_X)}. && \text{[By Chain Rule]}
\end{align}
Thus, 
$\mut{Z,U_a}{\hat{Y},U_b}\geq \mut{Z}{(\hat{Y},U_X)}$ with equality when $U_b=U_X,U_a=\phi$.

\nonincreasing*

Let $X_c'$ denote the additional feature that is to be removed from $X_g$ and is to be added to $X_c$.
From Lemma~\ref{lem:monotonicity_eve}, we have,
\begin{align}
\uni{(Z,U_a)}{(\hat{Y},U_b)| (X_c,X_c')}  \leq \uni{(Z,U_a)}{(\hat{Y},U_b)| X_c},
\end{align}
for any $U_a,U_b$. Thus,
\begin{align}
\min_{U_a,U_b\text{ s.t. } U_a=U_X\backslash U_b}\uni{(Z,U_a)}{(\hat{Y},U_b)| (X_c,X_c')} 
\leq \min_{U_a,U_b\text{ s.t. } U_a=U_X\backslash U_b} \uni{(Z,U_a)}{(\hat{Y},U_b)| X_c}.
\end{align}

\completeexemption*

Observe that, when $X=X_c$,
\begin{align}
M^*_{NE}& = \min_{U_a,U_b\text{ s.t. } U_a=U_X\backslash U_b} \uni{(Z,U_a)}{(\hat{Y},U_b)| X} \nonumber
\\&\leq \uni{Z,U_X}{\hat{Y}| X} \nonumber\\
& \leq \mut{Z,U_X}{\hat{Y}\given X}  && \text{[\eqref{eq:pid3} in Section~\ref{subsec:background} and non-negativity of PID terms]}\nonumber\\
& = \mathrm{H}(\hat{Y}\given X)- \mathrm{H}(\hat{Y}\given  Z,U_X,X) && \text{[By Definition]}\nonumber\\
& = 0. && \text{[$\hat{Y}$ is a deterministic function of $X$]}
\end{align}

\end{proof}

\markovchain*

\begin{proof}[Proof of Lemma~\ref{lem:markov_chain}]
We note that the terms $\mut{Z}{\hat{Y}\given X_c}$, $\mut{Z}{(\hat{Y},U_b)\given X_c}$ and $\mut{(Z,U_a)}{\hat{Y}\given X_c}$ are all less than or equal to $\mut{(Z,U_a)}{(\hat{Y},U_b)\given X_c} $ using the chain rule and non-negativity of conditional mutual information. 

Thus, if $\mut{(Z,U_a)}{(\hat{Y},U_b)\given X_c}=0$, then all those three terms are also $0$.
\end{proof}

\subsection{Supporting Derivations}
\label{app:properties_supporting}

Here, we include the supporting derivations for some of our statements in Section~\ref{subsec:main_results} and Section~\ref{subsec:rationale}.

\noindent \textbf{Supporting Derivation 1: $\uni{Z}{\hat{Y}| X_c}>0$ for Canonical Example~\ref{cexample:equalized_odds} (discrimination in admissions).}

\begin{proof}
Recall that for this example, $X_c=U_{X_1}$, $X_g=Z\oplus U_{X_2}$,  and $\hat{Y}=U_{X_1}+Z+U_{X_2}$ with $Z,U_{X_1},U_{X_2}\sim$ \iid{} Bern(\nicefrac{1}{2}).
The claim can be verified as follows:
\begin{align*}
\uni{Z}{\hat{Y}| X_c}&=\mut{Z}{\hat{Y}} -\rd{Z}{(\hat{Y}, X_c)} &&[\text{using \eqref{eq:pid2} in Section~\ref{subsec:background}}]\\
&\overset{(a)}{\geq}\mut{Z}{\hat{Y}} - \mut{Z}{X_c} \\
& \overset{(b)}{=} \mut{Z}{\hat{Y}}\\
& \overset{(c)}{>} 0,
\end{align*}
where (a) holds because $\rd{Z}{(\hat{Y}, X_c)} \leq \mut{Z}{X_c}$ (using \eqref{eq:pid2} in Section~\ref{subsec:background} and non-negativity of all PID terms) and (b) holds because $\mut{Z}{X_c}=0$. Lastly, (c) holds because $\hat{Y}$ and $Z$ are not independent of each other for this specific example.
\end{proof}

\noindent \textbf{Supporting Derivation 2: $\uni{Z}{\hat{Y}| X_c}>0$ for Canonical Example~\ref{cexample:synergy} (discrimination by unmasking).}

\begin{proof}
Recall that for this example, $X_c=Z\oplus U_{X_1}$, $X_g=U_{X_1}$ and $\hat{Y}=Z$ with $Z,U_{X_1}\sim$ \iid{} Bern(\nicefrac{1}{2}).

The claim can be verified as follows:
\begin{align*}
\uni{Z}{\hat{Y}| X_c}&=\mut{Z}{\hat{Y}} -\rd{Z}{(\hat{Y}, X_c)} &&[\text{using \eqref{eq:pid2} in Section~\ref{subsec:background}}]\\
&\overset{(a)}{\geq}\mut{Z}{\hat{Y}} - \mut{Z}{X_c} \\
& \overset{(b)}{=} 1 \text{ bit},
\end{align*}
where (a) holds because $\rd{Z}{(\hat{Y}, X_c)} \leq \mut{Z}{X_c}$ (using \eqref{eq:pid2} in Section~\ref{subsec:background} and non-negativity of all PID terms) and (b) holds because $\mut{Z}{X_c}=0$. 
\end{proof}

\noindent \textbf{Supporting Derivation 3: $\uni{Z}{(\hat{Y},U_X)| X_c}>0$ in Canonical Example~\ref{cexample:exemption}.}

\begin{proof}
Consider Canonical Example~\ref{cexample:exemption}.
\begin{align*}
\uni{Z}{(\hat{Y},U_X)| X_c} 
&= \uni{Z}{(Z+U_{X_1}+U_{X_2},U_X)| Z+U_{X_1}} &&[\text{Substituting the variables}]\nonumber \\
& \overset{(a)}{\geq} \uni{Z}{Z| Z+U_{X_1}} \nonumber \\
& \overset{(b)}{=} \mut{Z}{Z\given Z+U_{X_1}}\nonumber \\
& \overset{(c)}{>}0.
\end{align*}
Here, (a) holds because $Z$ is a deterministic function of $(Z+U_{X_1}+U_{X_2},U_X)$ and unique information is non-increasing under local operations of $B$ (see Lemma~\ref{lem:monotonicity_bob} in Appendix~\ref{app:pid_properties}). Next, (b) holds because if we consider $\Delta_p$, the set of joint distributions of $(Z,Z,Z+U_{X_1})$, such that the marginals $(Z,Z)$ and $(Z,Z+U_{X_1})$ are the same as the marginals of the true joint distribution, we find that there is only one distribution in this set, which is exactly the true distribution. Thus, $\uni{Z}{Z| Z+U_{X_1}}=\min_{Q\in \Delta_p}\mathrm{I}_{Q}(Z;Z\given Z+U_{X_1})=\mut{Z}{Z\given Z+U_{X_1}}.$ Lastly (c) holds because,  
\begin{align}
 \mut{Z}{Z\given Z+U_{X_1}}
& =\mathrm{H}(Z|Z+U_{X_1})-\mathrm{H}(Z|Z,Z+U_{X_1})\nonumber \\
& =\mathrm{H}(Z|Z+U_{X_1})\nonumber \\
& =\sum_{t=0,1,2}\mathrm{H}(Z|Z+U_{X_1}=t)\Pr(Z+U_{X_1}{=}t).
\end{align}
Using the fact that $Z,U_{X_1}\sim \iid{}$  Bern(\nicefrac{1}{2}), we can compute $\mathrm{H}(Z|Z+U_{X_1}=0)=0$, $\mathrm{H}(Z|Z+U_{X_1}=1)=h_b(\nicefrac{1}{2})=1$, and $\mathrm{H}(Z|Z+U_{X_1}=2)=0$. Here, $h_b(\cdot)$ is the binary entropy function~\cite{cover2012elements} given by $h_b(p)=-p\log_2(p)-(1-p)\log_2(1-p)$. Also note that, $\Pr(Z+U_{X_1}=1)=\nicefrac{1}{2}.$ So, $\mut{Z}{Z\given Z+U_{X_1}}=0.5$ bits.

\end{proof}

\noindent \textbf{Supporting Derivation 4: Exact computation of $\uni{Z}{\hat{Y} | X_c}$ and $M_{NE}^*$ for Canonical Example~\ref{cexample:equalized_odds}.}

\begin{align}
\uni{Z}{\hat{Y} | X_c} 
& \overset{(a)}{=} \mut{Z}{\hat{Y}} \nonumber \\
& = \mathrm{H}(Z)-\mathrm{H}(Z|\hat{Y}) \nonumber \\
& = \mathrm{H}(Z)-\mathrm{H}(Z|U_{X_1}+Z+U_{X_2}) \nonumber \\
& = \mathrm{H}(Z)-\sum_{t=0,1,2,3}\mathrm{H}(Z|U_{X_1}+Z+U_{X_2}=t)\Pr(U_{X_1}+Z+U_{X_2}=t) \nonumber \\
& \overset{(b)}{=} 1-\nicefrac{3}{4}h_b(\nicefrac{1}{3}) \text{ bits}.
\end{align}
Here (a) holds because $\mut{Z}{U_{X_1}}=0$, implying $\rd{Z}{(\hat{Y},U_{X_1})}=0$ as well (using \eqref{eq:pid2} in Section~\ref{subsec:background} and non-negativity of PID terms). Lastly, (b) holds because $Z,U_{X_1},U_{X_2}\sim \iid{} $  Bern(\nicefrac{1}{2}). So, we can exactly compute $\mathrm{H}(Z|U_{X_1}+Z+U_{X_2}=0)=0$, $\mathrm{H}(Z|U_{X_1}+Z+U_{X_2}=1)=h_b(\nicefrac{1}{3})$, $\mathrm{H}(Z|U_{X_1}+Z+U_{X_2}=2)=h_b(\nicefrac{1}{3})$, and $\mathrm{H}(Z|U_{X_1}+Z+U_{X_2}=3)=0$. Here, $h_b(\cdot)$ is the binary entropy function~\cite{cover2012elements} given by $h_b(p)=-p\log_2(p)-(1-p)\log_2(1-p)$. Also note that, $\Pr(U_{X_1}+Z+U_{X_2}=1)=\Pr(U_{X_1}+Z+U_{X_2}=2)=\nicefrac{3}{8}.$

Now, we will examine the value of $\uni{(Z,U_a)}{(\hat{Y},U_b)| X_c}$ for different choices of $U_a$ to find the minimum.

Let $U_a=\phi$ (and $U_b=U_X$). Then, 
\begin{align}
& \uni{(Z,U_a)}{(\hat{Y},U_b) | X_c} = \uni{Z}{(\hat{Y},U_{X_1},U_{X_2}) | U_{X_1}} \nonumber \\
& \overset{(a)}{=} \mut{Z}{U_{X_1}+Z+U_{X_2},U_{X_1},U_{X_2}} \nonumber \\
& = \mut{Z}{U_{X_1},U_{X_2}} + \mut{Z}{U_{X_1}+Z+U_{X_2}\given U_{X_1},U_{X_2}} [\text{Chain Rule}] \nonumber \\
& = \mut{Z}{U_{X_1}+Z+U_{X_2}\given U_{X_1},U_{X_2}} &&[Z \text{ is independent of } U_{X_1},U_{X_2}] \nonumber \\
& = \mathrm{H}(U_{X_1}+Z+U_{X_2}\given U_{X_1},U_{X_2}) \nonumber \\
&\hspace{2cm}- \mathrm{H}(U_{X_1}+Z+U_{X_2}\given Z, U_{X_1},U_{X_2}) &&[\text{By Definition}] \nonumber \\
& = \mathrm{H}(U_{X_1}+Z+U_{X_2}\given U_{X_1},U_{X_2}) &&[\text{Deterministic Function}] \nonumber \\
& = \sum_{u_1,u_2 \in \{0,1\}} \mathrm{H}(U_{X_1}+Z+U_{X_2}\given U_{X_1}=u_1,U_{X_2}=u_2)\Pr(U_{X_1}=u_1,U_{X_2}=u_2) \nonumber \\
& = \sum_{u_1,u_2 \in \{0,1\}}h_b(\nicefrac{1}{2})\Pr(U_{X_1}=u_1,U_{X_2}=u_2) \nonumber \\
& = 1 \text{ bit}.
\end{align}
Here (a) holds again because $\mut{Z}{U_{X_1}}=0$, implying the redundant information is $0$ as well (using \eqref{eq:pid2} in Section~\ref{subsec:background}).

Next, for $U_a=U_{X_2}$ (and $U_b=U_{X_1}$), we have,
\begin{align}
&\uni{(Z,U_a)}{(\hat{Y},U_b) | X_c} = \uni{(Z,U_{X_2})}{(\hat{Y},U_{X_1}) | U_{X_1}} \nonumber \\
&\overset{(a)}{=} \mut{(Z,U_{X_2})}{(\hat{Y},U_{X_1})} \nonumber \\
& = \mut{(Z,U_{X_2})}{U_{X_1}} + \mut{(Z,U_{X_2})}{\hat{Y}\given U_{X_1}} &&[\text{Chain Rule}] \nonumber \\
& = \mut{(Z,U_{X_2})}{\hat{Y}\given U_{X_1}} &&[Z,U_{X_2} \text{ is independent of } U_{X_1}] \nonumber \\
& = \mathrm{H}(U_{X_1}+Z+U_{X_2}\given U_{X_1})-\mathrm{H}(U_{X_1}+Z+U_{X_2}\given U_{X_1},(Z,U_{X_2})) && [\text{By Definition}] \nonumber \\
& = \mathrm{H}(U_{X_1}+Z+U_{X_2}\given U_{X_1}) && [\text{Deterministic Function}] \nonumber \\
& = \sum_{u_1=0,1}\mathrm{H}(U_{X_1}+Z+U_{X_2}\given U_{X_1}=u_1)\Pr(U_{X_1}=u_1)\nonumber \\
& = \nicefrac{1}{4}\log_2{4}+\nicefrac{1}{2}\log_2{2}+\nicefrac{1}{4}\log_2{4}\nonumber \\
&= \nicefrac{3}{2} \text{ bit}.
\end{align}
Here (a) holds again because $\mut{(Z,U_{X_2})}{U_{X_1}}=0$, implying the redundant information is $0$ as well (using \eqref{eq:pid2} in Section~\ref{subsec:background}).

Next, for $U_a=U_{X_1}$ (and $U_b=U_{X_2}$), we have,
\begin{align}
\uni{(Z,U_a)}{(\hat{Y},U_b) | X_c} 
&= \uni{(Z,U_{X_1})}{(\hat{Y},U_{X_2}) | U_{X_1}}\nonumber \\
& \overset{(b)}{=} \mut{(Z,U_{X_1})}{(\hat{Y},U_{X_2})\given U_{X_1}} \nonumber \\
& =\mut{(Z,U_{X_1})}{U_{X_2}\given U_{X_1}}+ \mut{(Z,U_{X_1})}{\hat{Y}\given U_{X_1},U_{X_2}} && [\text{Chain Rule}] \nonumber \\
& = \mut{(Z,U_{X_1})}{\hat{Y}\given U_{X_1},U_{X_2}} && [\text{Mutual Independence}] \nonumber \\
& = \mathrm{H}(\hat{Y}\given U_{X_1},U_{X_2}) - \mathrm{H}(\hat{Y}\given (Z,U_{X_1}), U_{X_1},U_{X_2}) &&[\text{By Definition}] \nonumber \\
& = \mathrm{H}(\hat{Y}\given U_{X_1},U_{X_2}) &&[\text{Deterministic Function}] \nonumber \\
& = \mathrm{H}(U_{X_1}+Z+U_{X_2}\given U_{X_1},U_{X_2}) \nonumber \\
&= 1 \text{ bit}.
\end{align}
Here (b) holds because $\syn{(Z,U_{X_1})}{(A,B)}=0$ if one of the terms $A$ or $B$ is a deterministic function of $(Z,U_{X_1})$ (using Lemma~\ref{lem:zero_syn} in Appendix~\ref{app:pid_properties}) and hence unique information becomes equal to the conditional mutual information (see \eqref{eq:pid3} in Section~\ref{subsec:background}).

Lastly, for $U_a=U_{X}$ (and $U_b=\phi$), we have,
\begin{align}
\uni{(Z,U_a)}{(\hat{Y},U_b) | X_c} & = \uni{(Z,U_{X_1},U_{X_2})}{\hat{Y} | U_{X_1}} \nonumber \\
&\overset{(b)}{=} \mut{(Z,U_{X_1},U_{X_2})}{\hat{Y} \given U_{X_1}}\nonumber \\
& = \mathrm{H}(\hat{Y}\given U_{X_1}) - \mathrm{H}(\hat{Y}\given (Z,U_{X_1},U_{X_2}),U_{X_1}) && [\text{By Definition}] \nonumber \\
& = \mathrm{H}(\hat{Y}\given U_{X_1}) && [\text{Deterministic Function}] \nonumber \\
& = \nicefrac{1}{4}\log_2{4}+\nicefrac{1}{2}\log_2{2}+\nicefrac{1}{4}\log_2{4} \nonumber \\
&= \nicefrac{3}{2} \text{ bit}.
\end{align}
Here (b) holds again using Lemma~\ref{lem:zero_syn} in Appendix~\ref{app:pid_properties}. 

Thus, we obtain that, 
\begin{align}
M_{NE}^* = \min_{U_a,U_b\text{ s.t. } U_a=U_X\backslash U_b}\uni{(Z,U_a)}{(\hat{Y},U_b) | X_c} =1\text{ bit}.
\end{align}

This is strictly greater than $\uni{Z}{\hat{Y} | X_c} =1-\frac{3}{4}h_b(\nicefrac{1}{3}) \text{ bits},$ accounting for both non-exempt statistically visible and non-exempt masked disparities.

\subsection{Discussion on Other Candidate Measures}
\label{app:properties_others}

\noindent \textbf{Why the product of the two measures $\mut{Z}{\hat{Y}\given  X_c}$ and $\mut{Z}{(\hat{Y},U_X)}$ does not work?}

One might recall that the measure $\mut{Z}{\hat{Y}\given  X_c}$ resolved most of the examples except in Canonical Example~\ref{cexample:college} where the output $\hat{Y}$ had no counterfactual causal influence of $Z$ and yet this measure gave a false positive conclusion about non-exempt disparity. This leads us to examine another candidate measure, i.e., product of $\mut{Z}{\hat{Y}\given  X_c}$ and $\mut{Z}{(\hat{Y},U_X)}$ where the latter is always $0$ whenever there is no counterfactual causal influence of $Z$ on $\hat{Y}$.

\begin{candmeas} $M_{NE}=\mut{Z}{\hat{Y}\given  X_c}\times \mut{Z}{(\hat{Y},U_X)}$.
\label{candmeas:measure_mult}
\end{candmeas}

\begin{cexample}
Let $Z=(Z_1,Z_2)$, $X_c=(Z_1 \oplus U_{X_1},Z_2)$, $X_g=(Z_1,U_{X_2})$ and $\hat{Y}=(U_{X_1},Z_2 \oplus U_{X_2} )$ where $Z_1,Z_2,U_{X_1},U_{X_2}$ are  \iid{} Bern(\nicefrac{1}{2}). 
\label{cexample:measure_mult}
\end{cexample}

This example should be exempt because $Z_2$ already appears in $X_{c}$, and is hence exempt. However, both $\mut{Z}{(\hat{Y},U_X)}$ and $\mut{Z}{\hat{Y}\given X_c}$ are non-zero for this example. This leads us to examine another candidate measure, which is essentially the common information-theoretic volume between  $\mut{Z}{(\hat{Y},U_X)}$ and $\mut{Z}{\hat{Y}\given X_c}$, i.e., a measure of the common reason that can make both $\mut{Z}{(\hat{Y},U_X)}>0$ and  $\mut{Z}{\hat{Y}\given X_c}>0$ (overlapping volume). \\

\noindent \textbf{Measure proposed in \cite{dutta2020information}: Information-theoretic sub-volume of the intersection between $\mut{Z}{\hat{Y}\given  X_c}$ and $\mut{Z}{(\hat{Y},U_X)}$:}

The previous Canonical Example demonstrates that both these measures $\mut{Z}{\hat{Y}\given  X_c}$ and $\mut{Z}{(\hat{Y},U_X)}$ can be non-zero for different reasons leading to a false positive conclusion using Candidate Measure~\ref{candmeas:measure_mult}. Intuitively, we need to identify the common reason that makes them non-zero, if any. This motivates us to examine another candidate (Candidate Measure~\ref{candmeas:intersection}) which is the information-theoretic sub-volume of the intersection between these two measures, as shown in Fig.~\ref{fig:inter}.

\begin{candmeas}
\label{candmeas:intersection}
 $M_{NE}=\uni{Z}{(\hat{Y},U_X)| X_c}-\uni{Z}{(\hat{Y},U_X)| (X_c,\hat{Y})}.$
 \end{candmeas}

\begin{figure}
\centering
\includegraphics[width=5cm]{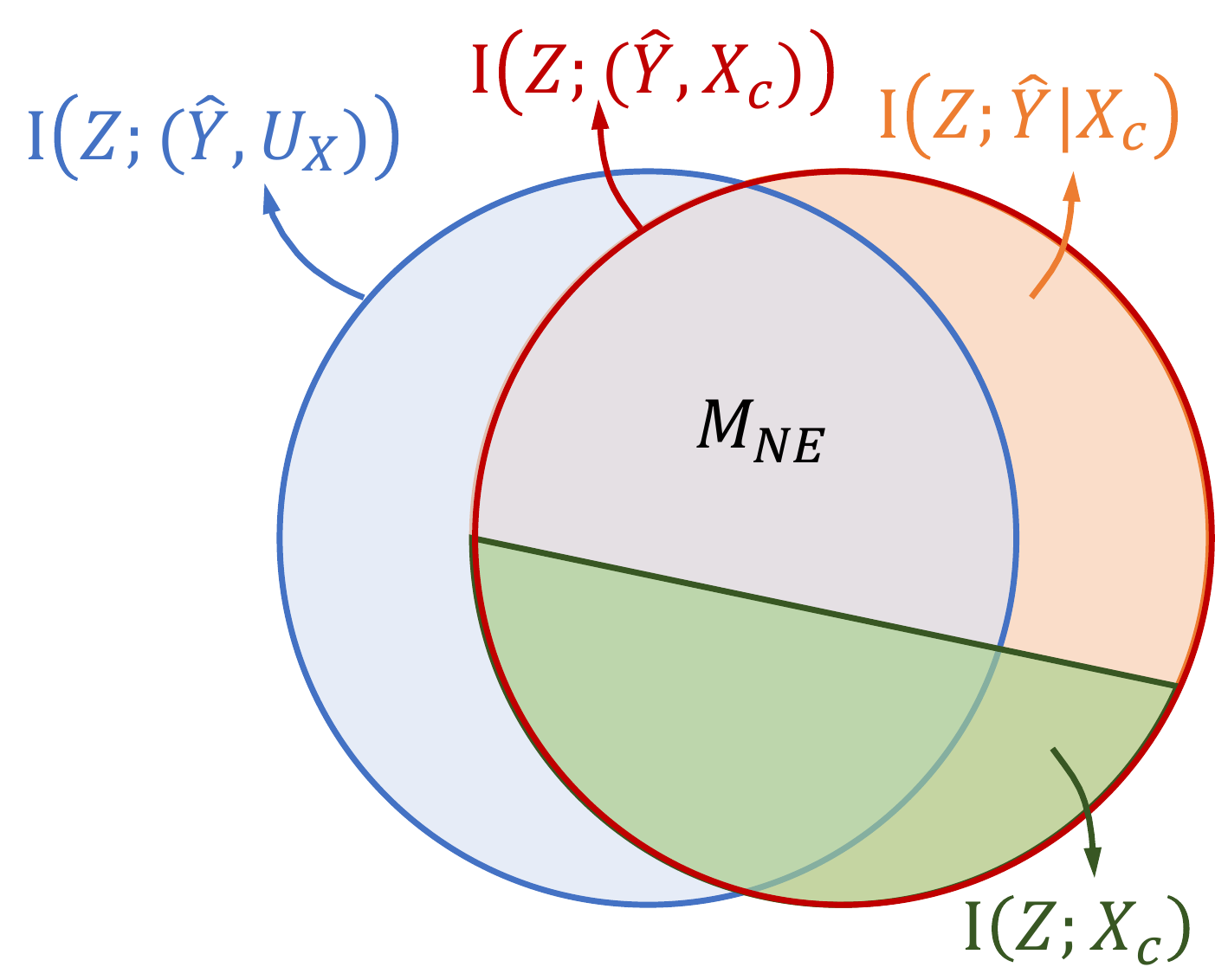}
\includegraphics[width=5cm]{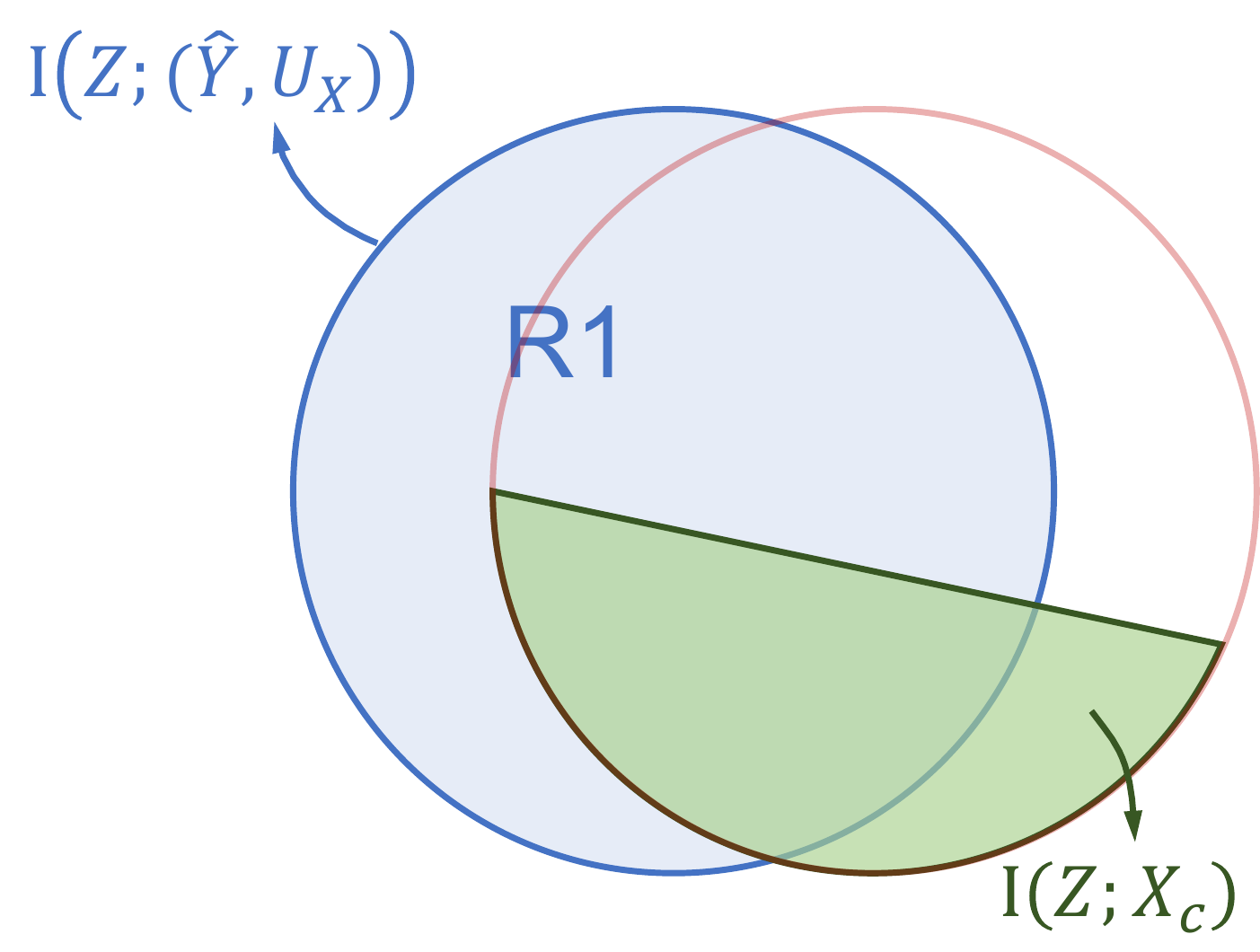}
\includegraphics[width=5cm]{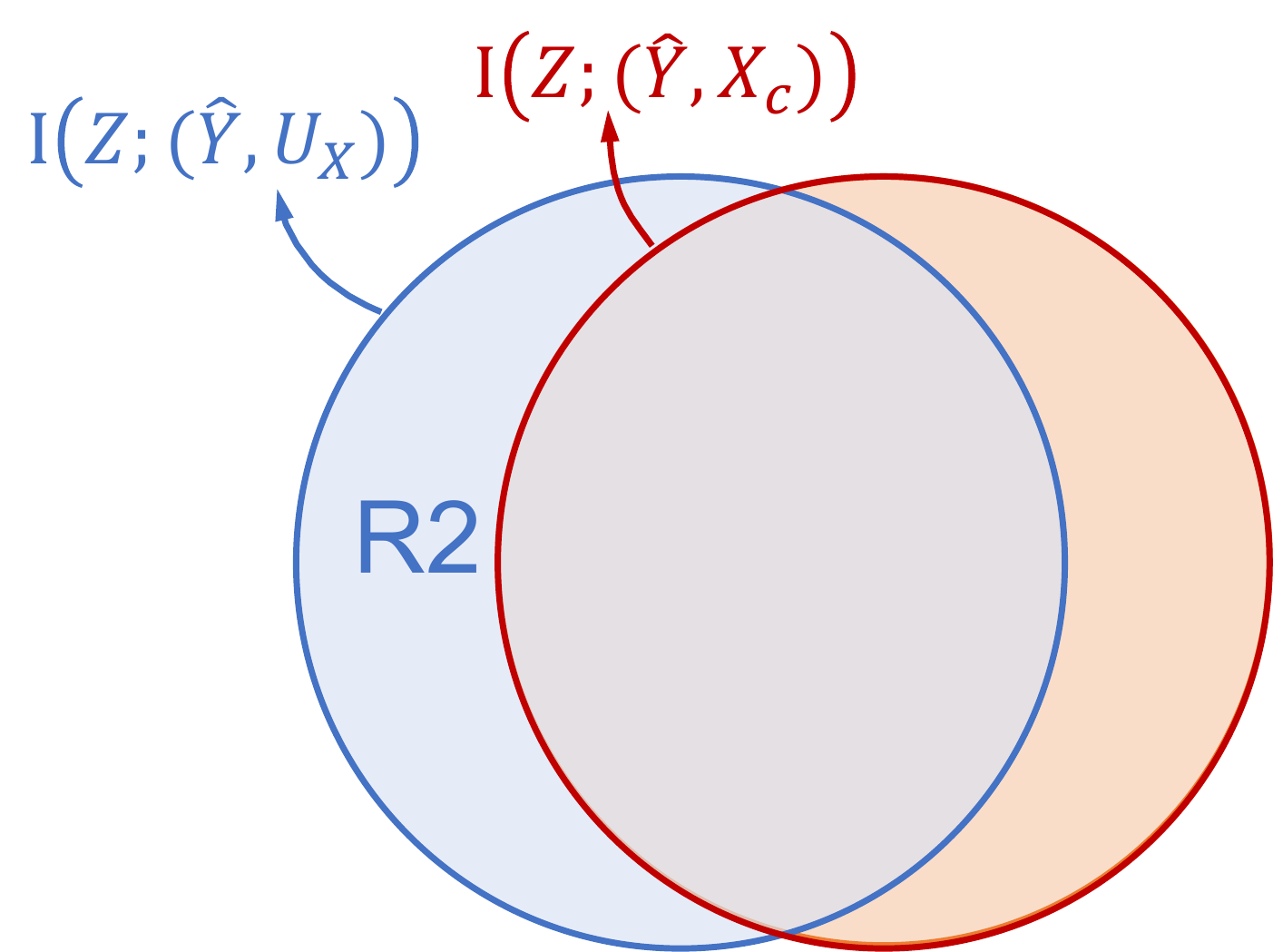}
\caption{(Top) Notice that the blue full-circle denotes $\mut{Z}{(\hat{Y},U_X)}$ and the red full-circle denotes $\mut{Z}{(\hat{Y},X_c)}$. The term $\mut{Z}{(\hat{Y},X_c)}$ is equal to the sum of $\mut{Z}{X_c}$ (green half-circle) and $\mut{Z}{\hat{Y}\given X_c}$ (orange half-circle). The candidate measure $(M_{NE})$ is the intersecting volume between $\mut{Z}{(\hat{Y},U_X)}$ and $\mut{Z}{\hat{Y}\given X_c}$. Next, we show pictorially that this intersecting volume is given by $R1-R2$ where $R1$ is shown in the middle figure and $R2$ is shown in the rightmost figure. (Middle) Notice that $R1= \uni{Z}{(\hat{Y},U_X)| X_c}$. (Bottom) Notice that $R2=\uni{Z}{(\hat{Y},U_X)| (\hat{Y},X_c)}.$ \label{fig:inter} }
\end{figure}

\noindent 
\textbf{Limitations of Candidate Measure~\ref{candmeas:intersection}:} This measure does resolve many of the examples and satisfies several desirable properties (discussed more in \cite{dutta2020information}). However, it fails to capture certain types of non-exempt masked disparity when the mask arises from $X_g$, e.g., scenarios like Canonical Example~\ref{cexample:masking_general} in Section~\ref{subsec:rationale}, where non-exempt masked disparity is present even though $Z-X_c-\hat{Y}$ form a Markov chain.


\section{Appendix to Section~\ref{sec:measure_decomposition}}

\subsection{Proof of Theorem~\ref{thm:measure_decomposition} and Lemma~\ref{lem:maskingequivalence}}
\label{app:measure_decomposition}
\decompositionthm*

\begin{proof}[Proof of Theorem~\ref{thm:measure_decomposition}] 

First consider $M_{V,NE}=\uni{Z}{\hat{Y}| X_c}$ and $M_{V,E}=\rd{Z}{(\hat{Y}, X_c)}$. Because all PID terms are non-negative by definition, both $M_{V,NE}$ and $M_{V,E}$ are non-negative. 

Now, consider $M_{M,E}$. Observe that,
\begin{align}
 &M_{M,E} = \mut{Z}{(\hat{Y},U_X)} - \mut{Z}{\hat{Y}}-M_{M,NE} \nonumber \\
& = \mut{Z}{\hat{Y}}+ \mut{Z}{U_X\given \hat{Y}}- \mut{Z}{\hat{Y}}-M_{M,NE} \nonumber && \text{[Chain Rule for mutual information]} \nonumber \\
& =  \mut{Z}{U_X\given \hat{Y}}-M_{M,NE} \nonumber \\
& = \mut{Z}{U_X\given \hat{Y}}- M^*_{NE}+M_{V,NE} && \text{[By Definition]}\nonumber \\
& = \mut{Z}{U_X\given \hat{Y}}  -\min_{U_a,U_b\text{ s.t. } U_a=U_X\backslash U_b}\mathrm{Uni}((Z,U_a):(\hat{Y},U_b)| X_{c}) + \uni{Z}{\hat{Y} | X_{c}}  && \text{[By Definition]} \nonumber \\
& \geq \mut{Z}{U_X\given \hat{Y}} -\mathrm{Uni}(Z:(\hat{Y},U_X)| X_{c})  + \uni{Z}{\hat{Y} | X_{c}}    \nonumber \\
& \geq \mut{Z}{U_X\given \hat{Y}} -\mathrm{Uni}(Z:(\hat{Y},U_X)| \hat{Y})    && \text{[Triangle Inequality (Lemma~\ref{lem:triangle_inequality})]} \nonumber \\
& \geq \mut{Z}{U_X\given \hat{Y}} -\mut{Z}{(\hat{Y},U_X)\given  \hat{Y}} \ \text{[\eqref{eq:pid3} in Section~\ref{subsec:background}]} \nonumber \\
& = \mut{Z}{U_X\given \hat{Y}} -\mut{Z}{U_X\given  \hat{Y}} - \mut{Z}{\hat{Y}\given U_X, \hat{Y}}  && \text{[Chain Rule for mutual information]} \nonumber \\
& =0.
\end{align}

Lastly, we consider $M_{M,NE}$.

\begin{align}
&M_{NE} = \min_{U_a,U_b\text{ s.t. } U_a=U_X\backslash U_b} \uni{(Z,U_a)}{(\hat{Y},U_b)| X_c} -\uni{Z}{\hat{Y}| X_c}\nonumber
\\& = \uni{(Z,U^*_a)}{(\hat{Y},U^*_b)| X_c}-\uni{Z}{\hat{Y}| X_c} && \text{[for some $(U_a^*,U_b^*)$]} \nonumber
\\ & \geq \uni{Z}{(\hat{Y},U^*_b)| X_c}-\uni{Z}{\hat{Y}| X_c} && \text{[Using Lemma~\ref{lem:monotonicity_alice}]} \nonumber \\
& \geq \uni{Z}{\hat{Y}| X_c}-\uni{Z}{\hat{Y}| X_c} && \text{[Using Lemma~\ref{lem:monotonicity_bob}]}\nonumber \\
& =0.
\end{align}

\end{proof}




\maskingequivalence*


\begin{proof}[Proof of Lemma~\ref{lem:maskingequivalence}]

Before proceeding, note that, $\mathrm{I}(Z; \hat{Y},U_X)=\mathrm{I}(Z; U_X) + \mathrm{I}(Z; \hat{Y}\given U_X) = \mathrm{I}(Z; \hat{Y}\given U_X)$ because $Z$ is independent of $U_X$. This also leads to the masked disparity being equal to $\mathrm{I}(Z; \hat{Y}\given U_X)  - \mathrm{I}(Z;\hat{Y})$.

First, we show that the first statement implies the second statement. Suppose that, masked disparity $\mathrm{I}(Z; \hat{Y}\given U_X)  - \mathrm{I}(Z;\hat{Y})>0$. Then, we can choose the function $G=U_X$ such that $\mathrm{I}(Z; \hat{Y}\given G)  - \mathrm{I}(Z;\hat{Y})>0$. Thus, the implication holds.

We will now show that the second statement also implies the first statement. First note that, using Lemma~\ref{lem:maxCMI}, for any deterministic $g(\cdot)$, we always have $\mathrm{I}(Z;\hat{Y}\given U_X)\geq \mathrm{I}(Z;\hat{Y}\given g(U_{X})).$ Now, suppose there exists a $G=g(U_{X})$ such that $\mathrm{I}(Z;\hat{Y}\given G)>\mathrm{I}(Z;\hat{Y})$. Then, 
$\mathrm{I}(Z;\hat{Y}\given U_X)\geq \mathrm{I}(Z;\hat{Y}\given g(U_{X})) >\mathrm{I}(Z;\hat{Y}),$ implying masked disparity is present.

Thus, we prove that the first and second statements are equivalent.

\end{proof}


\section{Appendix to Section~\ref{sec:observational}}
\label{app:observational}

\unisatisfyproperties*

\begin{proof}[Proof of Lemma~\ref{lem:uni_satisfy_properties}]For Property~\ref{propty:cancellation}, observe that, 
\begin{align}
& \mathrm{CCI}(Z\rightarrow \hat{Y})=0  \nonumber\\
&\implies \mathrm{I}(Z;\hat{Y})=0  \nonumber\\
&\implies \uni{Z}{\hat{Y} | X_{c}}+\mathrm{Red}(Z:(\hat{Y}, X_{c})) = 0 && \text{[Using \eqref{eq:pid2} in Section~\ref{subsec:background}]}  \nonumber \\
&\implies \uni{Z}{\hat{Y} | X_{c}}=0 && \text{[Non-negativity of PID terms]}.
\end{align}

Property~\ref{propty:synergy} is trivially satisfied because the property itself requires that $\uni{Z}{\hat{Y} | X_{c}}>0$.

Property~\ref{propty:nonincreasing} is satisfied using Lemma~\ref{lem:monotonicity_eve} in Appendix~\ref{app:pid_properties} (originally derived in  \cite[Lemma 32]{banerjee2018unique}).

Property~\ref{propty:complete_exemption} is satisfied because $\hat{Y}$ is a deterministic function of the entire $X$, and hence the Markov chain $Z-X-\hat{Y}$ holds. Thus $\mut{Z}{\hat{Y}\given X_c}=0$, also implying $\uni{Z}{\hat{Y}| X_c}=0$.

\end{proof}

\cmisatisfyproperties*

\begin{proof}[Proof of Lemma~\ref{lem:conMI1_satisfy_properties}]

For Property~\ref{propty:synergy}, observe that
\begin{align}
& \uni{Z}{\hat{Y} | X_{c}} > 0  \nonumber \\
& \implies \mathrm{I}(Z;\hat{Y}\given X_{c}) > 0 && \text{[Using \eqref{eq:pid3} in Section~\ref{subsec:background} and non-negativity of PID terms]}.
\end{align}

Property~\ref{propty:complete_exemption} is satisfied because $\hat{Y}$ is a deterministic function of the entire $X$, and hence the Markov chain $Z-X-\hat{Y}$ holds.

\end{proof}

\section*{Acknowledgment}

This work was supported by an NSF Career Award and NSF grant CNS-1704845 as well as by DARPA and the Air Force Research Laboratory under agreement number FA8750-15-2-0277. The U.S. Government is authorized to reproduce and distribute reprints for Governmental purposes not withstanding any copyright notation thereon. The views, opinions, and/or findings expressed are those of the author(s) and should not be interpreted as representing the official views or policies of DARPA, the Air Force Research Laboratory, the National Science Foundation, or the U.S. Government.  S. Dutta was supported by the Cylab Presidential Fellowship 2020, K\&L Gates Presidential Fellowship in Ethics and Computational Technologies 2019 and the Axel Berny Graduate Fellowship 2019. P. Venkatesh was supported by a Fellowship in Digital Health from the Center for Machine Learning and Health at Carnegie Mellon University.

\bibliographystyle{IEEEtran}
\bibliography{main}

\end{document}